\definecolor{shadecolor}{rgb}{0.95, 0.95, 0.86}
\renewcommand{\d}{{\mathrm d}}
\newcommand{\im}{\mathrm{i}}
\def\res{\mathop{\mathrm {res}}\limits}
\numberwithin{equation}{section}
\newtheorem{theo}{Theorem}[section]
\newtheorem{lem}[theo]{Lemma}
\newtheorem{rem}[theo]{Remark}
\newtheorem{problem}[theo]{Riemann-Hilbert Problem}
\newtheorem{remark}[theo]{Remark}
\newtheorem{prop}[theo]{Proposition} 
\newtheorem{cor}[theo]{Corollary}
\def\d{{\rm d}}
\def\z{\zeta}
\def\k{{\rm k}}
\begin{document}

\title[Transition asymptotics for the Painlev\'e II transcendent]{Transition asymptotics for the Painlev\'e II transcendent}

\author{Thomas Bothner}
\address{Centre de recherches math\'ematiques,
Universit\'e de Montr\'eal, Pavillon Andr\'e-Aisenstadt, 2920 Chemin de la tour, Montr\'eal, Qu\'ebec H3T 1J4, Canada}
\email{bothner@crm.umontreal.ca}


\keywords{Second Painlev\'e equation, Riemann-Hilbert problem, transition asymptotics, Deift-Zhou nonlinear steepest descent method}

\subjclass[2010]{Primary 33E17; Secondary  34M50, 34E05, 33C10}

\thanks{The author would like to thank A. Its and P. Deift for invaluable discussions about this project. Part of this work was completed during the author's visit of SISSA, Trieste in February 2015 and we acknowledge T. Grava and M. Bertola for providing excellent working conditions.}

\begin{abstract}
We consider real-valued solutions $u=u(x|s),x\in\mathbb{R}$ of the second Painlev\'e equation $u_{xx}=xu+2u^3$ which are parametrized in terms of the monodromy data $s\equiv(s_1,s_2,s_3)\subset\mathbb{C}^3$ of the associated Flaschka-Newell system of rational differential equations.  Our analysis describes the transition, as $x\rightarrow-\infty$, between the oscillatory power-like decay asymptotics for $|s_1|<1$ (Ablowitz-Segur) to the power-like growth behavior for $|s_1|=1$ (Hastings-McLeod) and from the latter to the singular oscillatory power-like growth for $|s_1|>1$ (Kapaev). It is shown that the transition asymptotics are of Boutroux type, i.e. they are expressed in terms of Jacobi elliptic functions. As applications of our results we obtain asymptotics for the Airy kernel determinant $\det(I-\gamma K_{\textnormal{Ai}})|_{L^2(x,\infty)}$ in a double scaling limit $x\rightarrow-\infty,\gamma\uparrow 1$ as well as asymptotics for the spectrum of $K_{\textnormal{Ai}}$.
\end{abstract}

\date{\today}
\maketitle

\section{Introduction and statement of results}
One of the most impressive occurrences of Painlev\'e transcendents in non-linear mathematical physics stems from their applicability in random matrix theory. In this field Painlev\'e functions describe, for instance (cf. \cite{TW2,ASM,FW}), eigenvalue distribution functions for classical finite $n$ ensembles, they appear in the description of universal distribution functions in the large $n$ limit and are also used in the computation of gap probabilities in the large $n$ limit. One concrete example for the last two cases is given by the celebrated Tracy-Widom distribution \cite{TW1}: the distribution function for the largest eigenvalue $\lambda_{\max}$ of a $n\times n$ random matrix drawn from the Gaussian Unitary Ensemble (GUE) in the large $n$ limit equals
\begin{equation}\label{TW}
	\textnormal{F}_{\textnormal{TW}}(x)=\lim_{n\rightarrow\infty}\textnormal{Prob}\left(\lambda_{\max}\leq\sqrt{2n}+\frac{x}{\sqrt{2}\,n^{\frac{1}{6}}}\right)=\exp\left[-\int_x^{\infty}(y-x)u^2_{\scriptscriptstyle{\textnormal{HM}}}(y)\d y\right],\ \ \ x\in\mathbb{R},
\end{equation}
where $u_{\scriptscriptstyle{\textnormal{HM}}}=u_{\scriptscriptstyle{\textnormal{HM}}}(x),x\in\mathbb{R}$ is the Hastings-McLeod solution of the second Painlev\'e equation,
\begin{equation}\label{PII0}
	u_{\scriptscriptstyle{\textnormal{HM}}}''=xu_{\scriptscriptstyle{\textnormal{HM}}}+2u^3_{\scriptscriptstyle{\textnormal{HM}}},\ \ \ \ \ (')=\frac{\d}{\d x};\hspace{1cm}u_{\scriptscriptstyle{\textnormal{HM}}}(x)=\frac{x^{-\frac{1}{4}}}{2\sqrt{\pi}}\,e^{-\frac{2}{3}x^{\frac{3}{2}}}\big(1+o(1)\big),\ \ x\rightarrow+\infty.
\end{equation}
This special solution was first analyzed in the works of Hastings and McLeod \cite{HM} who showed that the boundary value problem \eqref{PII0} has a unique, monotonically decreasing, smooth solution. In addition,
\begin{equation*}
	u_{\scriptscriptstyle{\textnormal{HM}}}(x)\sim\sqrt{-\frac{x}{2}},\ \ \ x\rightarrow-\infty.
\end{equation*}
Although $\textnormal{F}_{\textnormal{TW}}(x)$ in \eqref{TW} is written as a distribution function, it is equivalent to a gap probability for the rescaled eigenvalues $\mu_j=\sqrt{2}\,n^{\frac{1}{6}}(\lambda_j-\sqrt{2n}\,)$, and as such, expressible as Fredholm determinant. More precisely,
\begin{equation}\label{Fred}
	\textnormal{F}_{\textnormal{TW}}(x)=\lim_{n\rightarrow\infty}\textnormal{Prob}\Big(\sharp\big\{j:\ \mu_j\in[x,\infty)\big\}=0\Big)=\det\big(I-K_{\textnormal{Ai}}\big)\Big|_{L^2(x,\infty)},
\end{equation}
where $K_{\textnormal{Ai}}$ is the trace-class operator in $L^2((x,\infty);\d\lambda)$ with kernel
\begin{equation*}
	K_{\textnormal{Ai}}(\lambda,\mu)=\frac{\textnormal{Ai}(\lambda)\textnormal{Ai}'(\mu)-\textnormal{Ai}'(\lambda)\textnormal{Ai}(\mu)}{\lambda-\mu},\ \ \ \lambda,\mu\in(x,\infty)
\end{equation*}
and $\textnormal{Ai}(\lambda)$ the classical Airy function. Besides the gap probability \eqref{Fred} other spectral properties of large random matrices from the GUE can also be computed through Painlev\'e transcendents, for instance for $k\in\mathbb{Z}_{\geq 0}$ we have (cf. \cite{M})
\begin{equation*}
	\lim_{n\rightarrow\infty}\textnormal{Prob}\Big(\sharp\big\{j:\ \mu_j\in[x,\infty)\big\}=k\Big) = \frac{1}{k!}\left(-\frac{\partial}{\partial\gamma}\right)^k\left[\det\big(I-\gamma K_{\textnormal{Ai}}\big)\Big|_{L^2(x,\infty)}\right]\bigg|_{\gamma=1}
\end{equation*}
and
\begin{equation}\label{Meh}
	\det\big(I-\gamma K_{\textnormal{Ai}}\big)\Big|_{L^2(x,\infty)} = \exp\left[-\int_x^{\infty}(y-x)u^2_{\scriptscriptstyle{\textnormal{AS}}}(y;\gamma)\d y\right],\ \ \gamma\geq 0.
\end{equation}
Here, $u_{\scriptscriptstyle{\textnormal{AS}}}=u_{\scriptscriptstyle{\textnormal{AS}}}(x;\gamma),x\in\mathbb{R}$ is the Ablowitz-Segur solution of the second Painlev\'e equation,
\begin{equation}\label{AS}
	u_{\scriptscriptstyle{\textnormal{AS}}}'' = xu_{\scriptscriptstyle{\textnormal{AS}}}+2u_{\scriptscriptstyle{\textnormal{AS}}}^3,\ \ \ \ \ (')=\frac{\d}{\d x};\hspace{1cm}u_{\scriptscriptstyle{\textnormal{AS}}}(x;\gamma)=\sqrt{\gamma}\,\frac{x^{-\frac{1}{4}}}{2\sqrt{\pi}}\,e^{-\frac{2}{3}x^{\frac{3}{2}}}\big(1+o(1)\big),\ \ x\rightarrow+\infty.
\end{equation}
This one-parameter family of solutions has different analytical and asymptotical properties depending on the values of $\gamma$: namely, for the values of $\gamma\in(0,1)$ fixed, the boundary value problem \eqref{AS} has a unique, bounded, smooth solution with oscillatory behavior as $x\rightarrow-\infty$, compare \eqref{known:1} below. On the other hand, if $\gamma>1$, smoothness is destroyed at finite $x$ and the solution blows up, see \eqref{known:3}. 
\begin{rem} Historically, solutions of the boundary value problem \eqref{AS} for fixed $\gamma\in(0,1)$ were first analyzed, both analytically and asymptotically, in the late 1970's, cf. \cite{AS1,AS2}. In particular for the bounded solutions, Ablowitz and Segur solved the important connection problem, i.e. determine the complete asymptotic description of $u_{\scriptscriptstyle{\textnormal{AS}}}(x;\gamma)$ as $x\rightarrow-\infty$ provided the same description is given as $x\rightarrow+\infty$ (or vice versa). The same problem was subsequently also solved for the unbounded solution $u_{\scriptscriptstyle{\textnormal{HM}}}(x)$ of \eqref{PII0}, see \cite{HM}. However, the singular asymptotic structure for $\gamma>1$ as $x\rightarrow-\infty$ remained unknown until the work of Kapaev \cite{K} in 1992. 
\end{rem}

Our modest goal in this paper is to place Ablowitz-Segur (for $\gamma\in(0,1)$ or $\gamma\in(1,\infty)$) and Hastings-McLeod solutions	on equal asymptotic footing as $x\rightarrow-\infty$. This problem is of interest to the asymptotics of the Fredholm determinant
\begin{equation*}
	\det\left(I-\gamma K_{\textnormal{Ai}}\right)\Big|_{L^2(x,\infty)}
\end{equation*}
as $x\rightarrow-\infty$ and thus to the asymptotics of the Tracy-Widom distribution itself.\smallskip

 For the gap probability let us use the refined asymptotic behavior (compare Remark \ref{Deirem} below),
\begin{equation}\label{HM:ref}
	u_{\scriptscriptstyle{\textnormal{HM}}}(x)=\sqrt{-\frac{x}{2}}\left(1+\frac{1}{8x^3}+\mathcal{O}\left(x^{-6}\right)\right),\ \ x\rightarrow-\infty
\end{equation}
and the identity
\begin{equation*}
	\int_x^{\infty}u_{\scriptscriptstyle{\textnormal{HM}}}^2(y)\d y=\big(u_{\scriptscriptstyle{\textnormal{HM}}}'(x)\big)^2-xu_{\scriptscriptstyle{\textnormal{HM}}}^2(x)-u_{\scriptscriptstyle{\textnormal{HM}}}^4(x),\ \ x\in\mathbb{R}
\end{equation*}
from which we learn in \eqref{TW} that
\begin{equation}\label{TW:asy1}
	\textnormal{F}_{\textnormal{TW}}(x) = \exp\left[\frac{x^3}{12}\right]|x|^{-\frac{1}{8}}c_0\left(1+\mathcal{O}\left(x^{-3}\right)\right),\ \ \ x\rightarrow-\infty
\end{equation}
with some constant $c_0\in\mathbb{R}$. The constant factor was first derived by Deift, Its and Krasovsky in \cite{DIK} using scaling limits for a finite Laguerre ensemble combined with universality results for the Airy kernel,
\begin{equation}\label{TW:asy2}
	c_0=\exp\left[\frac{1}{24}\ln 2+\zeta'(-1)\right],
\end{equation}
with $\zeta=\zeta(s)$ the Riemann-zeta function. Shortly afterwards Baik, Buckingham and DiFranco \cite{BBdF} gave another derivation of \eqref{TW:asy1}, \eqref{TW:asy2} based on exact integral representations of $\textnormal{F}_{\textnormal{TW}}(x)$ which, opposed to \eqref{TW}, involve an integration from $-\infty$ to $x$. In either approach the regularization of the integral in \eqref{TW} is important and thus the asymptotics \eqref{HM:ref} relevant as they are needed in the evaluation of a large gap probability in random matrix theory.\footnote{Another way of deriving \eqref{TW:asy1} without the Painlev\'e II connection was presented in \cite{BEMN}.}\smallskip

The very same asymptotic question for the behavior of $\det(I-\gamma K_{\textnormal{Ai}})$ with fixed $\gamma\neq 1$ is still open and our focus does not lie on this interesting problem. However, considering the different qualitative behaviors of $u_{\scriptscriptstyle{\textnormal{AS}}}(x;\gamma)$ as $x\rightarrow-\infty$ (see \eqref{known:1}, \eqref{known:3} below for further detail),
\begin{align*}
	u_{\scriptscriptstyle{\textnormal{AS}}}(x;\gamma)=\frac{\sqrt{-2\beta}}{(-x)^{\frac{1}{4}}}\cos&\left(\frac{2}{3}(-x)^{\frac{3}{2}}+\beta\ln\big(8(-x)^{\frac{3}{2}}\big)+\phi\right)+\mathcal{O}\left((-x)^{-\frac{7}{10}}\right),\ \ \ \gamma\in(0,1)\\
	&\beta=\frac{1}{2\pi}\ln(1-\gamma),\ \ \ \ \phi=\frac{\pi}{4}-\textnormal{arg}\,\Gamma(\im\beta);
\end{align*}
\begin{equation*}
	\ \ \ \ \ u_{\scriptscriptstyle{\textnormal{AS}}}(x;\gamma)=\frac{\sqrt{-x}}{\sin\big(\frac{2}{3}(-x)^{\frac{3}{2}}+\widehat{\beta}\ln(8(-x)^{\frac{3}{2}})+\varphi\big)+\mathcal{O}\big((-x)^{-\frac{3}{2}}\big)}+\mathcal{O}\left((-x)^{-1}\right),\ \ \ \gamma\in(1,\infty)
\end{equation*}
\begin{equation*}
	\ \ \ \ \ \ \ \ \widehat{\beta}=\frac{1}{2\pi}\ln(\gamma-1),\ \ \ \ \ \varphi=\frac{\pi}{2}-\textnormal{arg}\,\Gamma\left(\frac{1}{2}+\im\widehat{\beta}\right);
\end{equation*}
it becomes evident that $\det(I-\gamma K_{\textnormal{Ai}})|_{L^2(x,\infty)}$ experiences a phase transition near $\gamma=1$ as $x\rightarrow-\infty$: for $\gamma\in(0,1)$ the determinant will be strictly positive, decaying exponentially fast but with a slower decay rate than \eqref{TW:asy1}. Opposed to that, for $\gamma\in(1,\infty)$ the determinant will display oscillations with decreasing amplitudes and its zeros are accumulating at $x=-\infty$. This qualitative change leads to a challenging problem: Determine the asymptotic behavior of
\begin{equation}\label{prob:1}
	\exp\left[-\int_x^{\infty}(y-x)u^2_{\scriptscriptstyle{\textnormal{AS}}}(y;\gamma)\d y\right]=\det\big(I-\gamma K_{\textnormal{Ai}}\big)\Big|_{L^2(x,\infty)}\ \ \ \ \textnormal{as}\ \ x\rightarrow-\infty\ \ \ \textnormal{and simultaneously}\ \ \gamma\rightarrow 1.
\end{equation}
As a direct application of Theorem \ref{res:4} below on the transition asymptotics of $u_{\scriptscriptstyle{\textnormal{AS}}}(x;\gamma)$ we will prove a short result which shows that \eqref{TW:asy1} to leading order also describes one particular case of the transition asymptotics for $\det(I-\gamma K_{\textnormal{Ai}})$.
\begin{cor}\label{nice} As $x\rightarrow-\infty$ and $\gamma\uparrow 1$, with $c_0$ as in \eqref{TW:asy2},
\begin{equation*}
	\det\big(I-\gamma K_{\textnormal{Ai}}\big)\Big|_{L^2(x,\infty)}\!\!=\exp\left[\frac{x^3}{12}\right]|x|^{-\frac{1}{8}}c_0\big(1+o(1)\big),\ 
\end{equation*}
uniformly for 
\begin{equation*}
	\varkappa\equiv-\frac{\ln(1-\gamma)}{(-x)^{\frac{3}{2}}}>\frac{2}{3}\sqrt{2}.
\end{equation*}	
\end{cor}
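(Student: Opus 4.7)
The strategy is to substitute the transition asymptotics for $u_{\scriptscriptstyle{\textnormal{AS}}}(x;\gamma)$ provided by Theorem \ref{res:4} into the Fredholm determinant identity \eqref{Meh}. Setting
$$J(x;\gamma) := \int_{x}^{\infty}(y-x)\,u_{\scriptscriptstyle{\textnormal{AS}}}^{2}(y;\gamma)\,\d y,$$
the task reduces to proving
$$J(x;\gamma) = -\frac{x^{3}}{12} + \frac{1}{8}\ln|x| - \ln c_{0} + o(1), \qquad x\to -\infty,\ \gamma\uparrow 1,\ \varkappa>\tfrac{2\sqrt{2}}{3},$$
since exponentiation then produces the stated formula. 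The right-hand side is precisely the quantity extracted from the Hastings-McLeod computation \eqref{HM:ref}--\eqref{TW:asy1}, i.e.\ $J(x;1)$, so the real content is to show that the Ablowitz-Segur integral agrees with its Hastings-McLeod analogue up to $o(1)$ uniformly on $\{\varkappa>\tfrac{2\sqrt{2}}{3}\}$.

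To achieve this I would split $J(x;\gamma)=J_{1}+J_{2}$ at an intermediate point $y_{\ast}=y_{\ast}(x,\gamma)$ chosen where the Boutroux/Jacobi-elliptic description of Theorem \ref{res:4} crosses over into the standard Hastings-McLeod tail \eqref{HM:ref}. On $[y_{\ast},\infty)$ a direct comparison using the AS asymptotics at $+\infty$ from \eqref{AS} together with smoothness and monotonicity of the Ablowitz-Segur family for $\gamma\in(0,1)$ bounds the tail by its HM counterpart plus a vanishing correction. On $[x,y_{\ast}]$ I would plug in the Boutroux form of Theorem \ref{res:4}; the threshold $\varkappa>\tfrac{2\sqrt{2}}{3}$ is precisely what places the effective elliptic modulus on the Hastings-McLeod degenerate branch, so that the Jacobi modulation of $\sqrt{-y/2}$ reduces to $1+o(1)$ and the bulk integral reproduces $\int_{x}^{y_{\ast}}(y-x)(-y/2)\,\d y$ up to $o(1)$.

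The key technical difficulty is quantitative: the weight $(y-x)$ integrated over a window of length $\sim|x|$ can amplify pointwise errors by a factor of up to $|x|^{2}$, so the $o(1)$ correction inherent to Theorem \ref{res:4} must be controlled in a sufficiently strong, uniform-in-$\gamma$ sense across the entire transition zone to integrate to $o(1)$ in the exponent. Extra care is needed at the splicing point $y_{\ast}$: the Boutroux description and the HM tail must be matched without producing spurious boundary contributions, and any residual rapid oscillations of the Jacobi functions must be shown to average against the slowly varying weight $(y-x)$ to an admissible remainder. This uniform bookkeeping, rather than the substitution strategy itself, is the main obstacle; once completed, exponentiation through \eqref{Meh} yields the corollary.
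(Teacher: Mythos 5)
Your strategy of directly estimating $\int_x^\infty(y-x)\,u^2\,\d y$ by substituting the Painlev\'e asymptotics does not match the paper's proof, and as you partially anticipate it runs into a quantitative obstruction that is in fact fatal rather than merely delicate. Theorem~\ref{res:4} gives a relative error $|J_2|\le ct^{-1}=c(-y)^{-3/2}$ in $u$, hence $u^2(y;\gamma)=-\tfrac{y}{2}+\mathcal{O}\big((-y)^{-1/2}\big)$ pointwise. Integrating this error against $(y-x)$ over $(x,0)$ produces a contribution of order
\begin{equation*}
\int_x^0(y-x)(-y)^{-1/2}\,\d y=\tfrac{4}{3}|x|^{3/2},
\end{equation*}
which diverges and is not $o(1)$ in the exponent. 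Even the sharper expansion \eqref{improv}, $u=-\epsilon\sqrt{-x/2}\,(1+\tfrac{1}{8x^3}+\mathcal{O}(x^{-6}))$, which the paper establishes via the ODE \eqref{PII}, does not make the direct integration straightforward: the correction $-\tfrac{1}{8y^2}$ in $u^2$ integrated against $(y-x)$ generates both the expected $\tfrac18\ln|x|$ and a spurious term linear in $|x|$, whose cancellation against the tail contribution $-x\int_{y_1}^{\infty}u^2\,\d y$ is not at all visible from the pointwise asymptotics. Your sketch neither supplies the improved $\mathcal{O}(x^{-6})$ input nor addresses this cancellation, so the argument as written would fail.

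The paper avoids all of this by never integrating $u^2$ against the weight $(y-x)$. Instead it exploits the exact Painlev\'e identity $\int_x^\infty u^2(y)\,\d y=(u'(x))^2-xu^2(x)-u^4(x)$, which collapses the inner integral to a pointwise expression. Substituting \eqref{improv} (together with the fact, coming from the Riemann--Hilbert analysis, that the expansion can be differentiated in $x$) gives $\partial_x\ln\det(I-\gamma K_{\textnormal{Ai}})=\tfrac{x^2}{4}-\tfrac{1}{8x}+\mathcal{O}(x^{-4})$ with $\gamma$-independent coefficients; a single antiderivative then yields $\ln\det=\tfrac{x^3}{12}-\tfrac18\ln|x|+c(\gamma)+r(x,\gamma)$ with $|r|\le d|x|^{-3}$ and no linear term. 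The constant of integration $c(\gamma)$ is then identified not by computation but by fixing $t\ge t_0$, letting $\gamma\uparrow1$, and matching against the known Tracy--Widom asymptotics \eqref{TW:asy1}--\eqref{TW:asy2}. This is the key idea your proposal is missing: the differential identity eliminates the weight and the problematic bulk integral entirely, and the constant is pinned down by a limiting comparison rather than by tracking cancellations across a splitting point.
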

For the values of $x$ and $\gamma$ such that $0<\varkappa<\frac{2}{3}\sqrt{2}$, our results in Theorem \ref{bet:1} and \ref{res:5} do in general not allow us to state an analogous expansion;  the integration in \eqref{prob:1} simply becomes too complicated. Still we can make a qualitative prediction
\begin{quote}
	The transition asymptotics of $\det(I-\gamma K_{\textnormal{Ai}})|_{L^2(x,\infty)}$ as $x\rightarrow-\infty,\gamma\uparrow 1$ and $0<\varkappa<\frac{2}{3}\sqrt{2}$ is modeled by quasi-periodic functions, to leading order.
\end{quote}
\begin{rem}\label{recentpap} We would like to point out that an asymptotic analysis similar to \eqref{prob:1} was recently carried out for the case of another prominent Fredholm determinant in random matrix theory,
\begin{equation*}
	P(s,\gamma)=\det(I-\gamma K_{\sin})\Big|_{L^2(-1,1)}\ \ \ \textnormal{with}\ \ \ \  K_{\sin}(\lambda,\mu)=\frac{\sin s(\lambda-\mu)}{\pi(\lambda-\mu)},\ \ \ s,\gamma>0.
\end{equation*}
It is well known \cite{M,TW2} that $P(s,1)$ equals a gap probability: the probability of finding no eigenvalues in the interval $(-\frac{s}{\pi},\frac{s}{\pi})$ for a random matrix chosen from the GUE, in the bulk scaling limit with mean spacing one. Similar to the Airy kernel determinant, $P(s,\gamma)$ admits a representation in terms of Painlev\'e functions, this time involving a special solution of the fifth Painlev\'e equation \cite{JMMS}. Building on previous work of Dyson \cite{Dy1}, the analysis of $P(s,\gamma)$ as $s\rightarrow+\infty,\gamma\uparrow 1$,  was essentially completed in \cite{BDIK}, however without using Painlev\'e asymptotic analysis. This lead to results completely analogous to Corollary \ref{nice} and the qualitative prediction, compare Theorems $1.4$ and $1.12$ as well as Corollary $1.13$ in \cite{BDIK}.
\end{rem}

We now move ahead and present the setup for our results on the transition asymptotics as $x\rightarrow-\infty$ between $u_{\scriptscriptstyle{\textnormal{AS}}}(x;\gamma)$ and $u_{\scriptscriptstyle{\textnormal{HM}}}(x)$. Opposed to introducing Painlev\'e II functions as solutions to boundary value problems as in \eqref{PII0} and \eqref{AS}, we choose to follow a Riemann-Hilbert point of view. In this approach \cite{FN,JMU} solutions $u=u(x)$ to
\begin{equation}\label{PII}
	u_{xx}=xu+2u^3
\end{equation}
are parametrized through the monodromy data of an associated linear system of ordinary differential equations in the complex plane. The details are as follows, compare \cite{FIKN}: Given six complex numbers $\{s_k\}_{k=1}^6$ which satisfy the relations
\begin{equation}\label{cyclic}
  s_1-s_2+s_3+s_1s_2s_3 = 0,\hspace{1cm} s_{k+3}=-s_k,\hspace{0.75cm} s_1 = \bar{s}_3,\hspace{0.5cm} s_2=\bar{s}_2,
\end{equation}
we introduce the triangular matrices
\begin{equation*}
  S_k = \begin{pmatrix}
	  1 & 0 \\
	  s_k & 1\\
        \end{pmatrix}\ \ \ \textnormal{for}\ \ k\equiv 1\mod 2,\hspace{1cm} S_k= \begin{pmatrix}
  1 & s_k\\
  0 & 1\\
  \end{pmatrix}\ \ \ \textnormal{for}\ \ k\equiv 0 \mod 2,
\end{equation*}
and the sectors
\begin{equation*}
	\Omega_k=\left\{\lambda\in\mathbb{C}:\ \frac{\pi}{6}(2k-3)<\textnormal{arg}\,\lambda<\frac{\pi}{6}(2k-1)\right\},\ \ k=1,\ldots,6
\end{equation*}
with oriented boundary rays
\begin{equation*}
	\Gamma_k =\left\{\lambda\in\mathbb{C}:\ \textnormal{arg}\,\lambda=\frac{\pi}{6}+\frac{\pi}{3}(k-1)\right\},\hspace{0.5cm}k=1,\ldots,6
\end{equation*}
as shown in Figure \ref{figure1} below. Consider now the following Riemann-Hilbert problem (RHP). 
\begin{problem}\label{masterRHP}
Determine the piecewise analytic $2\times 2$ matrix-valued function $Y(\lambda)=Y(\lambda;x,s\equiv(s_1,s_2,s_3))$ such that
\begin{itemize}
	 \item $Y(\lambda)$ is analytic for $\lambda\in\mathbb{C}\backslash \bigcup_1^6\Gamma_k$ and for every $k$, the function $Y_k(\lambda)=Y(\lambda)\big|_{\Omega_k}$ has a continuous extension on the closure $\overline{\Omega}_k$. 
\begin{figure}[tbh]
\begin{center}
\resizebox{0.6\textwidth}{!}{\includegraphics{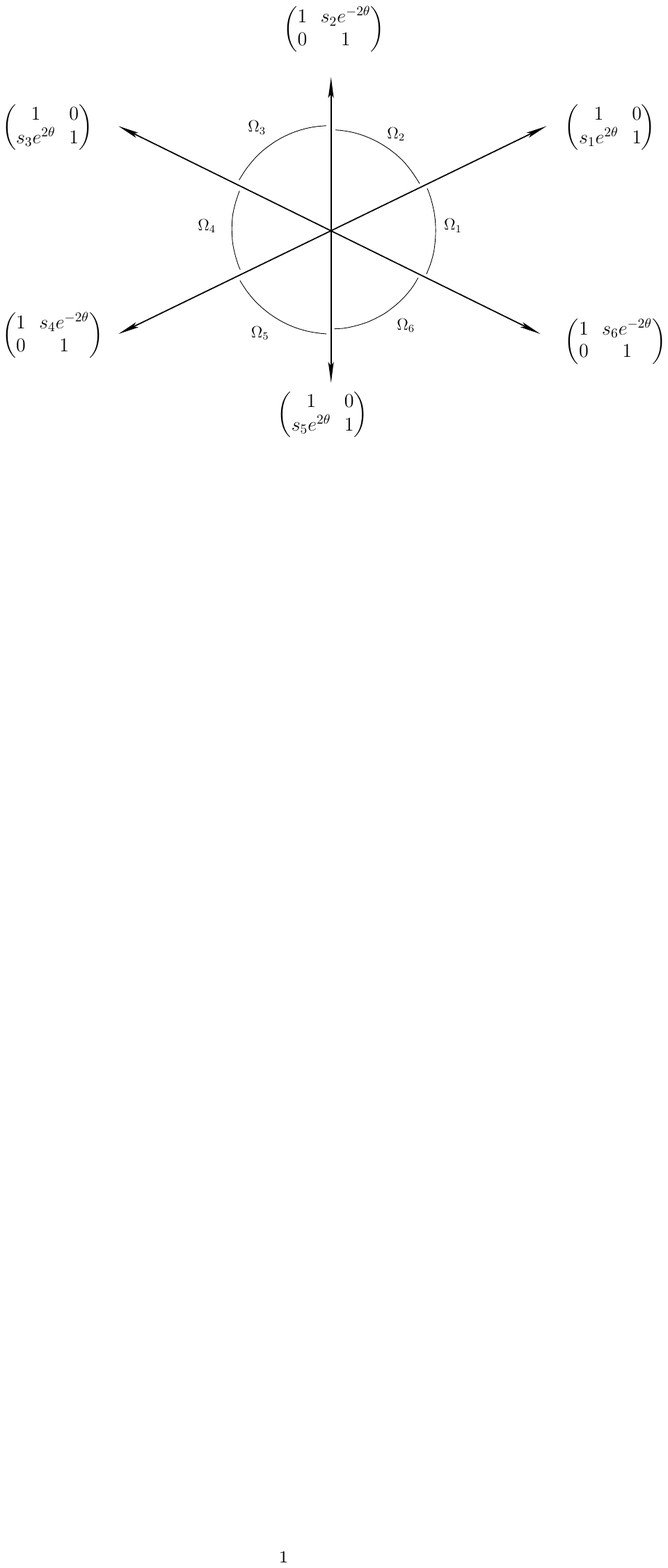}}
\caption{Jumps in the Painlev\'e II Riemann-Hilbert problem}
\label{figure1}
\end{center}
\end{figure}

  \item The boundary values $Y_+(\lambda)$ (resp. $Y_-(\lambda)$) from the left (resp. right) side of the oriented contour $\Gamma_k$ are related via the jump condition
  \begin{equation}\label{masterJ}
    Y_+(\lambda) = Y_-(\lambda) e^{-\theta(\lambda,x)\sigma_3}S_ke^{\theta(\lambda,x)\sigma_3},\hspace{0.5cm}\lambda\in\Gamma_k
  \end{equation}
  where
  \begin{equation*}
    \theta(\lambda,x) = \im\left(\frac{4}{3}\lambda^3+x\lambda\right),\hspace{0.5cm} \sigma_3=\begin{pmatrix}
                                                                                             1 & 0\\
0 & -1\\
                                                                                            \end{pmatrix}.
  \end{equation*}
  \item As $\lambda$ tends to infinity, the function $Y(\lambda)$ is normalized as follows,
  \begin{equation}\label{masterN}
    Y(\lambda) = I+\mathcal O\left(\lambda^{-1}\right),\hspace{0.5cm}\lambda\rightarrow\infty.
  \end{equation}
\end{itemize}
\end{problem}
\begin{remark} The first two of the relations in \eqref{cyclic} ensure consistency of the jump conditions \eqref{masterJ} with the continuity of $Y_k(\lambda)$ on $\overline{\Omega}_k$. Indeed, conditions \eqref{masterJ} can be rewritten as
\begin{align*}
	Y_{k+1}(\lambda)&=Y_k(\lambda)e^{-\theta(\lambda,x)\sigma_3}S_k e^{\theta(\lambda,x)\sigma_3},\ \ \lambda\in\Gamma_k,\ \ \ k=1,\ldots,5\\
	Y_1(\lambda)&=Y_6(\lambda)e^{-\theta(\lambda,x)\sigma_3}S_6e^{\theta(\lambda,x)\sigma_3},\ \ \lambda\in\Gamma_6
\end{align*}
and thus 
\begin{equation}\label{cons:1}
	Y_1(\lambda)=Y_1(\lambda)e^{-\theta(\lambda,x)\sigma_3}S_1S_2\cdot\ldots\cdot S_6e^{\theta(\lambda,x)\sigma_3},\ \ \ \lambda\in\Gamma_1.
\end{equation}
But every solution $Y(\lambda)$ of the RHP \ref{masterRHP} is invertible since $\det Y(\lambda)$ is an entire function normalized to unity at infinity. Hence \eqref{cons:1} yields upon cancellation of $Y_1(\lambda)$ the cyclic constraint
\begin{equation}\label{cyclic:2}
	S_1S_2\cdot\ldots\cdot S_6=I
\end{equation}
which is equivalent to the first two conditions in \eqref{cyclic}. We also mention that by triangularity of $S_k$, we have $e^{-\theta(\lambda,x)\sigma_3}S_ke^{\theta(\lambda,x)\sigma_3}\rightarrow I$ as $\lambda\rightarrow\infty$ along the rays $\Gamma_k$, i.e. \eqref{masterJ} is consistent with \eqref{masterN}.
\end{remark}

The connection between the RHP \ref{masterRHP} and \eqref{PII} is as follows. It is known \cite{BIK} that for any set of parameters $s\equiv (s_1,s_2,s_3)$ satisfying \eqref{cyclic} the RHP for $Y(\lambda)$ is meromorphically (with respect to $x$) solvable. Moreover its solution determines
the Painlev\'e II transcendent via
\begin{equation}\label{FN}
  u(x)\equiv u(x|s) = 2\lim_{\lambda\rightarrow\infty}\Big[\lambda\big(Y(\lambda;x,s)\big)_{12}\Big],
\end{equation}
and in addition we have $\bar{u}(x)=u(\bar{x})$. Conversely every real for real $x$ solution of \eqref{PII} has a unique Riemann-Hilbert representation \eqref{FN} for suitable
$s$ and we therefore adopt the notation $u(x)\equiv u(x|s)$, indicating the parametrization of solutions to PII equation \eqref{PII} in terms of the data $s$.
\begin{rem}\label{connect} The previously considered cases \eqref{AS} and \eqref{PII0} are special cases of \eqref{FN},
\begin{equation*}
	u_{\scriptscriptstyle{\textnormal{AS}}}(x;\gamma)=u\big(x|(-\im\sqrt{\gamma},0,\im\sqrt{\gamma})\big),\hspace{1cm}u_{\scriptscriptstyle{\textnormal{HM}}}(x)=u\big(x|(-\im,0,\im)\big).
\end{equation*}
\end{rem}
The data $s$ is  equal to the monodromy data of the associated linear $2\times 2$ matrix ODE
\begin{equation}\label{ODE:1}
	\frac{\d\Psi}{\d\lambda}=\left[-4\im\lambda^2\sigma_3+4\im\lambda\begin{pmatrix}
	0 & u\\
	-u & 0\\
	\end{pmatrix}+\begin{pmatrix}
	-\im x-2\im u^2 & -2u_x\\ 
	-2u_x & \im x+2\im u^2\\
	\end{pmatrix}\right]\Psi
\end{equation}
where $u$ satisfies \eqref{PII}. In the language of classical monodromy theory of differential equations (cf. \cite{I}) the entire functions $\Psi_k(\lambda)=Y_k(\lambda)e^{-\theta(\lambda,x)\sigma_3},k=1,\ldots,6$ and $\Psi_7(\lambda)\equiv \Psi_1(\lambda)$ form the seven canonical solutions of the ODE \eqref{ODE:1} and the triangular matrices $S_k$ in \eqref{masterJ} are the Stokes matrices characterizing the Stokes phenomenon of the irregular singular point $\lambda=\infty$. The matrices $S_k$ are $x$-independent and thus the second Painlev\'e transcendent $u=u(x|s)$ describes the isomonodromy deformations of system \eqref{ODE:1} with respect to the (isomonodromic) deformation parameter $x$.\bigskip

As we are interested in the asymptotic behavior of real solutions $u(x|s)$ as $x\rightarrow-\infty$, it will be useful to recall the following three different cases which have been singled out over the past $40$ years.
\begin{enumerate}
	\item[(A)] If $|s_1|<1$ is fixed, then, as $x\rightarrow-\infty$,
	\begin{equation}\label{known:1}
      u(x|s) = (-x)^{-\frac{1}{4}}\sqrt{-2\beta}\cos\left(\frac{2}{3}(-x)^{\frac{3}{2}}+\beta\ln\left(8(-x)^{\frac{3}{2}}\right)+\phi\right)+\mathcal O\left((-x)^{-\frac{7}{10}}\right),
    \end{equation}
    with
    \begin{equation*}
      \beta=\frac{1}{2\pi}\ln\left(1-|s_1|^2\right),\hspace{0.5cm} \phi=-\frac{\pi}{4}-\textnormal{arg}\,\Gamma(\im\beta)-\textnormal{arg}\,s_1,
    \end{equation*}
    and $\Gamma(z)$ is the Euler gamma function.
\end{enumerate}
\begin{remark} Expansion \eqref{known:1} appeared first in \cite{AS1,AS2} and was partially proven by Hastings, McLeod and Clarkson in \cite{HM,CM} using Gelfand-Levitan type integral equations. In the late $80$'s, Its, Kapaev, Suleimanov and Kitaev \cite{IK,S,Ki}, using the ``isomondromy method", obtained the full leading term. Another derivation of \eqref{known:1} was given by Deift and Zhou \cite{DZ2} in the mid $90$'s based on a direct asymptotic analysis of the RHP \ref{masterRHP} with the help of the Deift-Zhou nonlinear steepest descent method \cite{DZ1}. We refer the reader to the monograph \cite{FIKN} for a detailed exposition of the methods  used in the rigorous derivation of \eqref{known:1}. 
\end{remark}
\begin{enumerate}
	\item[(B)] If $|s_1|=1$, then with $s_1=\im\epsilon,\epsilon=\textnormal{sgn}(\Im\,s_1)\in\{\pm 1\}$ (compare here \eqref{cyclic}), as $x\rightarrow-\infty$,
  \begin{equation}\label{known:2}
    u(x|s) = -\epsilon\sqrt{-\frac{x}{2}}+\mathcal O \left(x^{-\frac{5}{2}}\right).
  \end{equation}
\end{enumerate}
\begin{remark}\label{Deirem} The leading order expansion in \eqref{known:2} is originally due to Hastings and McLeod \cite{HM} but was later on also derived by Deift and Zhou \cite{DZ1}, who extended \eqref{known:2} to a full asymptotic series in which all coefficients can be computed recursively. 
\end{remark}
\begin{enumerate}
	\item[(C)] If $|s_1|>1$ is fixed, then, as $x\rightarrow-\infty$,
	\begin{equation}\label{known:3}
    u(x|s) = \frac{\sqrt{-x}}{\sin\left(\frac{2}{3}(-x)^{\frac{3}{2}}+\widehat{\beta}\ln\left(8(-x)^{\frac{3}{2}}\right)+\varphi\right)+\mathcal O \left((-x)^{-\frac{3}{2}}\right)} +
    \mathcal O\left((-x)^{-1}\right)
  \end{equation}
  with
  \begin{equation*}
    \widehat{\beta}=\frac{1}{2\pi}\ln\left(|s_1|^2-1\right),\hspace{0.5cm}\varphi=-\textnormal{arg}\,\Gamma\left(\frac{1}{2}+\im\widehat{\beta}\right)-\textnormal{arg}\,s_1
  \end{equation*}
  where the error terms are uniform outside some neighborhoods of the singularities of the trigonometric function appearing in the denominator of the leading order.
 \end{enumerate}
\begin{remark} The singular asymptotic formula \eqref{known:3} was first obtained by Kapaev in \cite{K} through the isomonodromy method. In \cite{BI}, the authors rederived \eqref{known:3} based on nonlinear steepest descent techniques applied to the RHP \ref{masterRHP}.
\end{remark}
In this paper we derive new asymptotic expansions of $u(x|s)$ as $x\rightarrow-\infty$ with the help of a nonlinear steepest descent analysis applied to RHP \ref{masterRHP}. These expansions explain how the leading oscillatory power-like decay in case (A) is transformed to the leading power-like growth in case (B) as $|s_1|\uparrow 1$. We shall refer to the transition between regimes (A) and (B) as {\it regular transition}. In addition, we will also explain the transformation of the leading singular oscillatory power-like growth in case (C) to the leading power-like growth asymptotics in (B) as $|s_1|\downarrow 1$. It is natural to refer to the transition between (B) and (C) as {\it singular transition}. The results are as follows.

\subsection{Statement of results for regular transition}
Let 
\begin{equation}\label{dspara}
	\varkappa=\frac{v}{t}\in(0,\infty);\ \ \  t=(-x)^{\frac{3}{2}}>0,\ \ 0<v=-\ln\big|1-|s_1|^2\big|= -\ln\big(1-|s_1|^2\big)\equiv -2\pi\beta,\ 0<|s_1|<1
\end{equation}
and $\k\in(0,1)$ be implicitly determined from the transcendental equation
\begin{equation}\label{inteq:1}
	\varkappa=\frac{2}{3}\sqrt{\frac{2}{1+\k^2}}\left[E'-\frac{2\k^2}{1+\k^2}K'\right],
\end{equation}
in which $K$ and $E$ are standard complete elliptic integrals
\begin{equation*}
	K=K(\k)=\int_0^1\frac{\d\mu}{\sqrt{(1-\mu^2)(1-\k^2\mu^2)}},\ \ K'=K(\k');\hspace{0.75cm}E=E(\k)=\int_0^1\sqrt{\frac{1-\k^2\mu^2}{1-\mu^2}}\,\d\mu,\ \ E'=E(\k')
\end{equation*}
with modulus $\k\in(0,1)$ and complementary modulus $\k'=\sqrt{1-\k^2}$. It is shown in Proposition \ref{modex} that for any $\varkappa\in(0,\frac{2}{3}\sqrt{2})$, equation \eqref{inteq:1} determines $\k=\k(\varkappa)$ uniquely. In addition, set
\begin{equation}\label{Vfancy}
	V=V(\varkappa)=-\frac{2}{3\pi}\sqrt{\frac{2}{1+\k^2}}\left[E-\frac{1-\k^2}{1+\k^2}K\right];\hspace{1cm}\tau=\tau(\varkappa)=2\im\frac{K}{K'},
\end{equation}
and let
\begin{equation*}
	\theta_3(z,q)=1+2\sum_{m=1}^{\infty}q^{m^2}\cos(2\pi mz);\ \ \ \ \ \theta_2(z,q)=2\sum_{m=0}^{\infty}q^{(m+\frac{1}{2})^2}\cos\big((2m+1)\pi z\big),\ \ z\in\mathbb{C}
\end{equation*}
denote the Jacobi theta functions with nome $q=e^{\im\pi\tau}$. The theta functions in turn determine the Jacobi elliptic function 
\begin{equation}\label{Jell:1}
	\textnormal{cd}\left(2z\,K\left(\frac{1-\k}{1+\k}\right),\,\frac{1-\k}{1+\k}\right)=\frac{\theta_3(0,q)}{\theta_2(0,q)}\frac{\theta_2(z,q)}{\theta_3(z,q)},\ \ \ \ z\in\mathbb{C}\big\backslash\left\{\frac{1}{2}+\frac{\tau}{2}\mod \mathbb{Z}+\tau\mathbb{Z}\right\}
\end{equation}
and we list properties of the elliptic integrals and theta functions which are relevant to our analysis in Appendices \ref{appa} and \ref{app:theta}. The main result concerning the regular transition asymptotics of $u(x|s)$ as $x\rightarrow-\infty$ is formulated in the following Theorem.
\begin{theo}\label{bet:1} For any fixed $\delta\in(0,\frac{1}{3}\sqrt{2}),f_1\in(0,\infty)$, there exist positive constants $t_0=t_0(\delta),t_1=t_1(f_1),v_1=v_1(f_1)$ and $c_0=c_0(\delta),c_1=c_1(f_1)$ such that
\begin{equation}\label{ellip:1}
	u(x|s)=-\epsilon\sqrt{-\frac{x}{2}}\,\frac{1-\k(\varkappa)}{\sqrt{1+\k^2(\varkappa)}}\,\textnormal{cd}\left(2(-x)^{\frac{3}{2}}V(\varkappa)\,K\left(\frac{1-\k(\varkappa)}{1+\k(\varkappa)}\right),\,\frac{1-\k(\varkappa)}{1+\k(\varkappa)}\right)+J_1(x,s),
\end{equation}
with
\begin{equation}\label{bd:1}
	\big|J_1(x,s)\big|\leq c_0t^{-\frac{1}{15}} \ \ \ \ \forall\, t\geq t_0,\ \ 0<v\leq t\left(\frac{2}{3}\sqrt{2}-\delta\right),
\end{equation}
and
\begin{equation}\label{bd:2}
	\big|J_1(x,s)\big|\leq \frac{c_1}{\ln t}\ \ \ \ \forall\, t\geq t_1,\ v\geq v_1,\ \ \frac{2}{3}\sqrt{2}\,t-f_1\leq v<\frac{2}{3}\sqrt{2}\,t.
\end{equation}
\end{theo}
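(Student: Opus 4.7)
The plan is to carry out a Deift--Zhou nonlinear steepest descent analysis of RHP \ref{masterRHP} in the regime $t=(-x)^{3/2}\to\infty$ with $|s_1|$ allowed to depend on $t$ through the scaling parameter $\varkappa=v/t$ of \eqref{dspara}. After an appropriate Boutroux rescaling of $\lambda$ the phase $\theta(\lambda,x)$ becomes $t$ times a fixed cubic. Since $|s_1|$ is allowed to approach $1$, neither of the one-band steepest descents used for cases (A) and (B) closes; the correct contour of descent is associated with a genus-one Boutroux structure supported on two symmetric bands $[-a,-b]\cup[b,a]$. The endpoints are fixed by moment conditions requiring the $b$-cycle period of the canonical abelian differential on $w^2=(z^2-a^2)(z^2-b^2)$ to equal $\varkappa$, and in the parametrization $\k=b/a$ these conditions reduce exactly to the transcendental equation \eqref{inteq:1}, whose unique solvability is guaranteed by Proposition \ref{modex}.

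With the bands in hand I construct the associated $g$-function $g(z)=g(z;\k)$ as the primitive of the relevant meromorphic differential, normalized so that $g$ matches the cubic phase at infinity and has piecewise constant real part on each band; the quantities $V(\varkappa)$ and $\tau(\varkappa)$ of \eqref{Vfancy} will then appear respectively as a normalized $a$-period and as the modular parameter of the elliptic curve. I then perform the standard cascade of transformations $Y\to Z\to X\to T$: (i) the $g$-function conjugation $Y\mapsto Y\,e^{tg\sigma_3}$ turns the off-band jumps into exponentially small perturbations of the identity; (ii) opening lenses about each band, combined with a scalar Szeg\H{o}-type function $D$ that absorbs the $s_1$-monodromy into piecewise constant boundary data on $[-a,-b]\cup[b,a]$, reduces the on-band jumps to constants; (iii) the resulting model problem is solved globally by ratios of Jacobi theta functions on the elliptic curve of modulus $\tau(\varkappa)$ evaluated at arguments depending linearly on $t\,V(\varkappa)$, while near each of the four branch points $\pm a,\pm b$ one uses the standard Airy parametrix.

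The formula $u(x|s)=2\lim_{\lambda\to\infty}\lambda Y_{12}(\lambda)$ from \eqref{FN} then gives $u(x|s)$ as the corresponding subleading coefficient of the global parametrix, corrected by the Airy contributions and by the small-norm ratio $R-I$. Rewriting the resulting theta ratio via Jacobi's modular transformation $\k\mapsto(1-\k)/(1+\k)$ combined with the identity \eqref{Jell:1} converts it into precisely the $\textnormal{cd}$-function in \eqref{ellip:1}, while the prefactor $-\epsilon\sqrt{-x/2}\,(1-\k)/\sqrt{1+\k^2}$ is traced back to the leading coefficient of $g$ at infinity and the explicit boundary values of $D$.

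The main obstacle is keeping the error bounds uniform as the elliptic curve degenerates at either end of the admissible range of $\varkappa$. For \eqref{bd:1} the dangerous endpoint is $\varkappa\downarrow 0$, equivalently $\k\uparrow 1$, in which the two bands collapse to the points $\pm a$; the Airy disks at $b$ and $a$ (respectively $-b$ and $-a$) coalesce, the generic $t^{-1/2}$ small-norm estimate is lost, and a careful balancing of the Airy disk radius against the local phase in this merging regime yields the exponent $1/15$ quoted there. For \eqref{bd:2} the opposite degeneration $\varkappa\uparrow\tfrac{2}{3}\sqrt{2}$, equivalently $\k\downarrow 0$, is in effect: the inner gap $[-b,b]$ closes, the complementary period $K'(\k)\sim|\ln\k|$ diverges, the boundary values of $D$ develop nearly singular behavior at the origin, and the off-band jumps retain uniform exponential smallness only on a logarithmically shrinking neighborhood of the merging endpoints; tracking these effects through the final small-norm problem produces $\|R-I\|=O((\ln t)^{-1})$, which is the source of \eqref{bd:2}.
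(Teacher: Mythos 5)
The overall architecture of your argument — genus-one Boutroux $g$-function, two symmetric bands, theta-function outer parametrix, Airy local parametrices, small-norm reduction, and modular transformations for the degenerating ends — matches the paper's framework for the \emph{interior} of the $\varkappa$-range. Your description of \eqref{bd:2} also tracks the paper's upper-end extension in spirit: the modular flip $\tau\mapsto -1/\tau$, a shrinking parametrix at the closing gap, and a logarithmic loss from the degenerating modulus all appear.

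The genuine gap is in your account of \eqref{bd:1}. You claim the exponent $1/15$ comes from ``balancing the Airy disk radius against the local phase'' as the two bands coalesce at $\varkappa\downarrow 0$ (i.e.\ $\k\uparrow 1$, $m,M\to\tfrac12$). This will not work uniformly down to arbitrarily small $v>0$. If one scales the Airy disk radius as $\hat r\sim t^{-\eta/2}$ to match the band width, the $L^\infty$ small-norm estimate for the ratio problem degrades to $O(t^{-(1-\eta)})$, which is $o(1)$ only for $\eta<1$; thus the Boutroux/Airy parametrix can only reach $v\gtrsim t^{1-\eta}$ for some fixed $\eta<1$, never the full window $0<v\leq t(\tfrac23\sqrt2-\delta)$. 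The paper recognizes exactly this obstruction and, for the remaining regime $0<v\leq t^{1/5}$, \emph{abandons the genus-one Boutroux structure entirely}: it reverts to a one-cut ``dry'' steepest descent with outer parametrix $\big(\tfrac{\lambda+\lambda^\ast}{\lambda-\lambda^\ast}\big)^{\nu\sigma_3}$, $\nu=\varkappa t/(2\pi\im)$, and local parametrices at $\pm\tfrac12$ built from \emph{parabolic cylinder functions} $D_\nu(\zeta)$ — the same architecture one uses for fixed $|s_1|<1$, but with a carefully contracted disk radius $r\sim 1/v$. The exponent $1/15$ arises precisely from this second, genus-zero analysis: with $t\geq v^{k+1}$ and $k=4$ the parabolic-cylinder local error yields $\mathcal O((-x)^{-1/10})=\mathcal O(t^{-1/15})$. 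Because you keep the Airy/Boutroux parametrix throughout, your argument has no mechanism to produce this bound; the stated balancing of the Airy disk radius cannot close the small-norm loop as $\varkappa\to0$. To repair the proposal you would need to add a matching genus-zero/parabolic-cylinder parametrix for very small $\varkappa$ and splice the two estimates, which is the paper's actual route.
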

Theorem \ref{bet:1} describes the transition of the leading order asymptotics between case (A) for fixed $|s_1|<1$ given in \eqref{known:1} and case (B) for $|s_1|=1$ in \eqref{known:2}. Indeed, using the limiting behavior of $V(\varkappa),\tau(\varkappa)$ and $\k(\varkappa)$ in the Jacobi elliptic function \eqref{ellip:1} as $\varkappa\downarrow 0$ and $\varkappa\uparrow\frac{2}{3}\sqrt{2}$ (compare Corollary \ref{cor1} below), we obtain directly
\begin{cor}\label{mat} As $x\rightarrow-\infty$,
\begin{equation}\label{cor:as}
	u(x|s)=(-x)^{-\frac{1}{4}}\sqrt{-2\beta}\cos\left(\frac{2}{3}(-x)^{\frac{3}{2}}+\beta\ln\left(8(-x)^{\frac{3}{2}}\right)+\phi\right)+\mathcal{O}\left((-x)^{-\frac{1}{10}}\right),
\end{equation}
uniformly for $0<\varkappa\leq t^{-\frac{4}{5}}$. In addition, as $x\rightarrow-\infty,|s_1|\uparrow 1$,
\begin{equation}\label{cor:hm}
	u(x|s)=-\epsilon\sqrt{-\frac{x}{2}}+\mathcal{O}\left(\frac{1}{\ln(-x)}\right),
\end{equation}
uniformly for $\frac{2}{3}\sqrt{2}-\frac{1}{t}\leq\varkappa<\frac{2}{3}\sqrt{2}$.
\end{cor}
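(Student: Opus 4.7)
The plan is to derive both formulas directly from the transition asymptotic \eqref{ellip:1} of Theorem~\ref{bet:1}, by analyzing the behavior of the auxiliary parameters $\k(\varkappa)$, $V(\varkappa)$, $\tau(\varkappa)$ and of the Jacobi function $\textnormal{cd}(\cdot,(1-\k)/(1+\k))$ at the two endpoints of the range $\varkappa\in(0,\frac{2}{3}\sqrt{2})$. The required limits are furnished by Corollary~\ref{cor1} (together with Proposition~\ref{modex} for unique solvability of \eqref{inteq:1}), which I invoke freely.

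For \eqref{cor:hm}, take $\varkappa\uparrow\frac{2}{3}\sqrt{2}$. Equation \eqref{inteq:1} forces $\k\downarrow 0$, so the Jacobi modulus $(1-\k)/(1+\k)$ tends to $1$. Since $\textnormal{cd}(u,1)=\textnormal{cn}(u,1)/\textnormal{dn}(u,1)\equiv 1$ and the amplitude factor $(1-\k)/\sqrt{1+\k^2}\to 1$, the leading term in \eqref{ellip:1} collapses to $-\epsilon\sqrt{-x/2}$. A short quantitative check based on inverting \eqref{inteq:1} (using $E'(\k),K'(\k)\to\pi/2$ as $\k\to 0$ and the monotonicity provided by Proposition~\ref{modex}) shows $\k=\mathcal O(1/\ln t)$ throughout the window $\frac{2}{3}\sqrt{2}-1/t\le\varkappa<\frac{2}{3}\sqrt{2}$, so the deviations of $\textnormal{cd}$ from $1$ and of $(1-\k)/\sqrt{1+\k^2}$ from $1$ are absorbed by $\mathcal O(1/\ln t)$. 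Combined with the bound \eqref{bd:2}, this yields \eqref{cor:hm}.

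For \eqref{cor:as}, take $\varkappa\downarrow 0$, which by Proposition~\ref{modex} corresponds to $\k\uparrow 1$, i.e.\ $\k'=\sqrt{1-\k^2}\downarrow 0$. First, inverting \eqref{inteq:1} with the classical Taylor expansions of $E$, $K$ near $\k=1$ gives $E(\k')-\tfrac{2\k^2}{1+\k^2}K(\k')=\tfrac{3\pi}{32}\k'^4+\mathcal O(\k'^6)$, hence $\varkappa=\tfrac{\pi}{16}\k'^4+\mathcal O(\k'^6)$ and $(1-\k)/\sqrt{1+\k^2}=\sqrt{2\varkappa/\pi}\,(1+o(1))$; since $\varkappa t=v=-2\pi\beta$, multiplication by $\sqrt{-x/2}$ recovers the AS amplitude $(-x)^{-1/4}\sqrt{-2\beta}$. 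Second, $q=e^{\im\pi\tau}=e^{-2\pi K/K'}\downarrow 0$ exponentially as $\k\uparrow 1$, so \eqref{Jell:1} reduces the Jacobi function $\textnormal{cd}(2zK((1-\k)/(1+\k)),(1-\k)/(1+\k))$ to $\cos(\pi z)+\mathcal O(q)$ uniformly on compacts in $z$; with $z=tV(\varkappa)$ the elliptic factor becomes $\cos(\pi tV(\varkappa))$ up to an exponentially small error. Third, I would expand $V(\varkappa)$ from \eqref{Vfancy} at $\k\uparrow 1$, using $E(\k)=1+\tfrac{\k'^2}{2}(L-\tfrac12)+\mathcal O(\k'^4L)$ and $K(\k)=L+\tfrac{\k'^2}{4}(L-1)+\mathcal O(\k'^4L)$ with $L=\ln(4/\k')$, and then substitute the inversion of step one; this casts $\pi tV(\varkappa)$ in the form $\pm\bigl(\tfrac{2}{3}(-x)^{\frac{3}{2}}+\beta\ln(8(-x)^{\frac{3}{2}})+\phi\bigr)+o(1)$, the logarithm originating from the $\ln(4/\k')$ in $V$. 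Finally, the error bound \eqref{bd:1} (valid on $\varkappa\le\frac{2}{3}\sqrt{2}-\delta$, hence on $\varkappa\le t^{-4/5}$) contributes the stated $\mathcal O(t^{-1/15})=\mathcal O((-x)^{-1/10})$ and dominates the expansion errors in the three steps above.

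The main obstacle is step three: extracting the precise coefficient of $\ln(8(-x)^{\frac{3}{2}})$ and the correct constant $\phi=-\frac{\pi}{4}-\arg\Gamma(\im\beta)-\arg s_1$ requires tracking the $\k'^4\ln(1/\k')$-correction in $V(\varkappa)$ and then isolating the $\varkappa$-independent accumulated constant. The cleanest route I foresee is to exploit the agreement of both formulas in the overlap regime of fixed $|s_1|\in(0,1)$ with $t\to\infty$ (where $\varkappa=\mathcal O(1/t)\le t^{-4/5}$ for large $t$): matching the known classical AS expansion there pins down the sign and the constant $\phi$ without having to evaluate the $\Gamma$-factor independently from the elliptic data.
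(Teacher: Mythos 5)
For \eqref{cor:hm} your route is essentially the paper's: let $\varkappa\uparrow\frac{2}{3}\sqrt{2}$ in the elliptic leading term and invoke \eqref{bd:2}. Your estimate $\k=\mathcal{O}(1/\ln t)$ is pessimistic---from \eqref{I:2}, $\k=\mathcal{O}((t\ln t)^{-1/2})$ when $\sigma\le 1/t$---but this is harmless since the final error is dominated by $J_1$. (The paper implements the limit via the modular transformation, cf.\ \eqref{modtrafo}--\eqref{stok:1}, which handles the degeneration $q=e^{\im\pi\tau}\to 1$ cleanly; your argument via $\textnormal{cd}(u,1)\equiv 1$ needs the same care, since the argument $2tV K((1-\k)/(1+\k))$ is an indeterminate product.)

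For \eqref{cor:as} your route is structurally different from the paper's, and your step three has a genuine gap. The paper does \emph{not} obtain \eqref{cor:as} as a limit of Theorem \ref{bet:1}: it proves \eqref{cor:as} directly in Section \ref{flower} (as \eqref{lowerfinal}) by a second, independent nonlinear steepest descent analysis with no $g$-function and parabolic cylinder parametrices, in the scale $t\ge v^{k+1}$, and then \emph{uses} \eqref{lowerfinal} together with the consistency computation \eqref{m:2}--\eqref{mat:6} to establish the bound \eqref{bd:1} in the small-$\varkappa$ regime (Proposition \ref{imp:2}). Your proposal to run the argument in the opposite direction---reduce the elliptic leading term as $\varkappa\downarrow 0$ and then apply \eqref{bd:1}---is logically admissible once Theorem \ref{bet:1} is granted, and your steps one and two (amplitude from \eqref{I:1}; degeneration of $\textnormal{cd}$ to $\cos$ via $q\downarrow 0$) reproduce what the paper does in arriving at \eqref{mat:6}.

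The gap is in step three. The phase $\phi=-\frac{\pi}{4}-\arg\Gamma(\im\beta)-\arg s_1$ is not a constant: $\arg\Gamma(\im\beta)$ depends on $\beta=\frac{1}{2\pi}\ln(1-|s_1|^2)$, hence on $v=\varkappa t$, and $v$ ranges over all of $(0,t^{1/5}]$ in the window $0<\varkappa\le t^{-4/5}$. Expanding $V(\varkappa)$ via \eqref{l:1} gives the paper's \eqref{m:2},
\begin{equation*}
\pi tV(\varkappa)=-\frac{2}{3}(-x)^{\frac{3}{2}}-\beta\ln\big(8(-x)^{\frac{3}{2}}\big)+\beta\ln\big|\ln(1-|s_1|^2)\big|-\beta(1+\ln 2\pi)+o(1),
\end{equation*}
and the last two $v$-dependent terms are converted to $\arg\Gamma(\im\beta)$ precisely by the large-argument Stirling asymptotic \eqref{m:3}, which uses $\varkappa t=-2\pi\beta\to+\infty$. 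Your proposed shortcut---matching against the classical AS formula \eqref{known:1} at fixed $|s_1|$---identifies the two expressions only on the thin subregime of bounded $v$, and does not by itself establish the identification uniformly as $v\to\infty$ along $(0,t^{1/5}]$. To obtain the uniformity claimed in \eqref{cor:as} you cannot avoid the explicit Stirling computation (or an equivalent), which is the step the paper carries out in \eqref{m:3}--\eqref{m:5} and which you flag as the obstacle.
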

\begin{figure}[tbh]
\begin{center}
\hspace{2cm}\resizebox{0.53\textwidth}{!}{\includegraphics{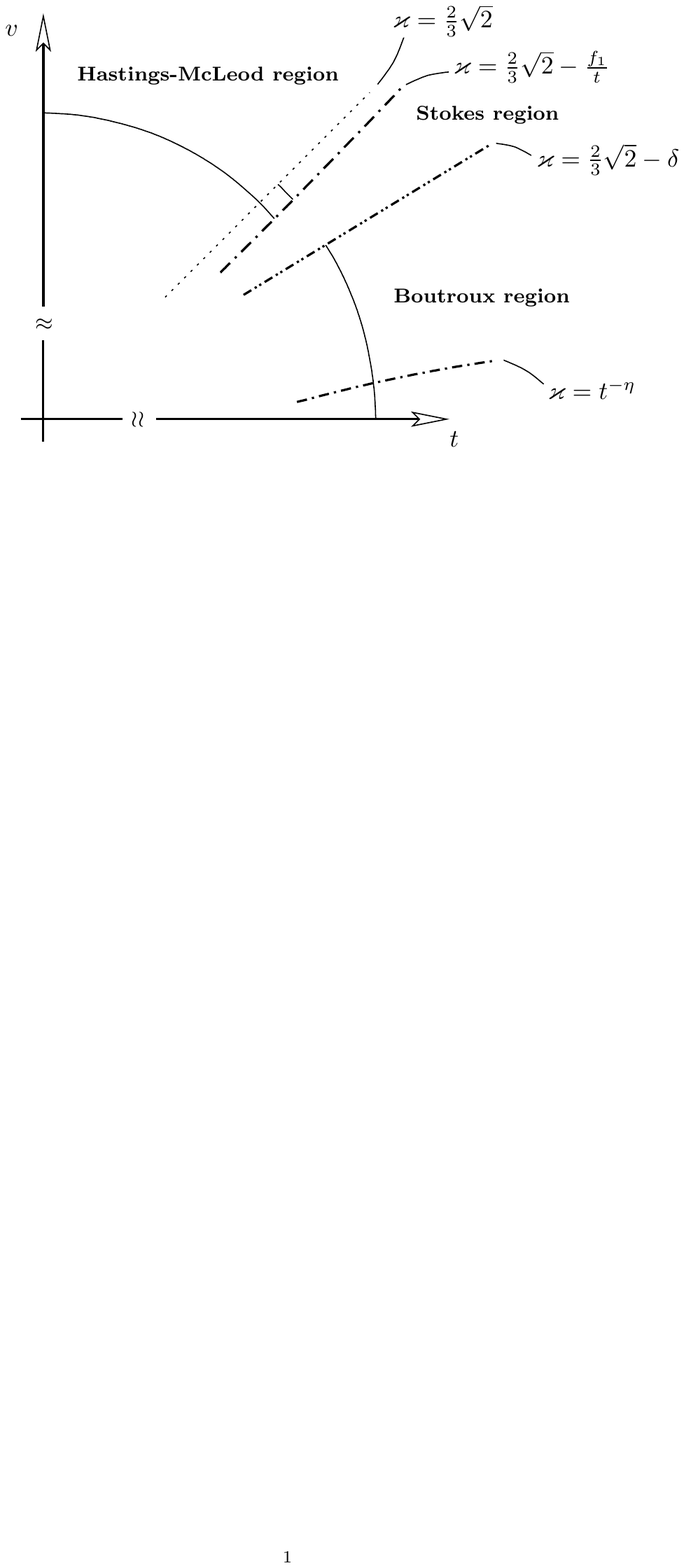}}
\caption{Depiction of transition asymptotics. The region captured by Theorems \ref{bet:1} and \ref{bet:2} is coined Boutroux region since the asymptotics are described in terms of Jacobi elliptic functions. Above and slightly below the separating line $\varkappa=\frac{2}{3}\sqrt{2}$ we observe to leading order Hastings-McLeod asymptotics, compare Theorems \ref{res:4} and \ref{res:9} below. In between we find Stokes lines, see Theorem \ref{res:5}.}
\label{diareg}
\end{center}
\end{figure}

Our second result addresses the asymptotic regime $x\rightarrow-\infty,|s_1|\uparrow 1$ for $\varkappa\geq\frac{2}{3}\sqrt{2}-\frac{f_2}{t}$ with fixed $f_2\in\mathbb{R}$. In this case the leading behavior is already of type \eqref{known:2} and therefore connects the Boutroux behavior \eqref{ellip:1}, \eqref{bd:2} completely to the Hastings-McLeod asymptotics \eqref{known:2} for $|s_1|=1$.
\begin{theo}\label{res:4} Given $f_2\in\mathbb{R}$, there exist positive constants $t_0=t_0(f_2),v_0=v_0(f_2)$ and $c=c(f_2)$ such that
\begin{equation}\label{theo:3}
	u(x|s)=-\epsilon\sqrt{-\frac{x}{2}}\left(1-\frac{1}{2\pi}\frac{e^{\sigma t}}{(t\sqrt{2})^{\frac{1}{2}}}+J_2(x,s)\right),\ \ \ \ \sigma=\frac{2}{3}\sqrt{2}-\varkappa,\ \ t=(-x)^{\frac{3}{2}}
\end{equation}
with
\begin{equation*}
	\big|J_2(x,s)\big|\leq ct^{-1}\ \ \ \ \forall\, t\geq t_0,\ v\geq v_0,\ \ v\geq\frac{2}{3}\sqrt{2}\,t-f_2.
\end{equation*}
\end{theo}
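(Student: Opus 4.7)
\medskip
\noindent\textbf{Proof proposal for Theorem \ref{res:4}.}

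The plan is to carry out the Deift--Zhou nonlinear steepest descent analysis of the master RHP \ref{masterRHP} using the same ``one--band'' $g$--function that was employed in the classical Hastings--McLeod analysis of case (B), and then to track, through every transformation, the exponentially small corrections produced by the fact that $|s_1|^2 = 1-e^{-v}$ is slightly less than unity. First I would introduce the scalar function $g(\lambda)$ that is holomorphic outside the single vertical cut $[-\im\sqrt{-x/2},\im\sqrt{-x/2}]$, satisfies an equilibrium condition $g_+(\lambda)+g_-(\lambda) = -2t^{-1}\theta(\lambda,x)+\ell$ on the band, and is normalised so that $g(\lambda)=\mathcal{O}(\lambda^{-1})$ at infinity. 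The transformation $T(\lambda)=Y(\lambda)\mathrm{e}^{-tg(\lambda)\sigma_3}$ together with the standard opening of lenses reduces the jumps off the band to the identity plus an exponentially small correction on each Stokes ray, except for contributions proportional to $s_1-\im\epsilon$ and $s_2$ (which, by the cyclic constraint \eqref{cyclic}, are both $\mathcal{O}(\mathrm{e}^{-v})$ when $|s_1|\uparrow 1$).

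Next I would install the usual model RHP: a global parametrix $P^{(\infty)}$ for the single--cut problem (elementary, since the relevant Riemann surface has genus zero) and two Airy parametrices $P^{(\pm)}$ in small discs about the band endpoints $\pm\im\sqrt{-x/2}$. Defining the error $R(\lambda)=T(\lambda)P(\lambda)^{-1}$ leads to a RHP whose jumps have two qualitatively different sources: (i) the standard mismatch on the boundary of the discs and on the lens boundaries, of magnitude $\mathcal O(t^{-1})$ as in the pure Hastings--McLeod case; (ii) new, $v$--dependent jumps on the Stokes rays lying outside the band, whose conjugation by $\mathrm{e}^{-tg(\lambda)\sigma_3}$ produces an entry of size $\mathrm{e}^{-v}\mathrm{e}^{\,\pm 2tg(\lambda)}$. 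A short calculation using the explicit form of $g$ for the single--cut equilibrium shows that on the relevant portion of $\Gamma_1\cup\Gamma_3$ one has $\Re g(\lambda)\leq \frac{1}{3}\sqrt{2}$ with equality attained at a single real saddle point $\lambda_*$, and that the quadratic expansion of $2tg$ at $\lambda_*$ has coefficient $t\sqrt{2}$.

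With this geometry in place, the small--norm theory applies: $R-I$ is controlled in $L^2\cap L^\infty$ by the worst jump, and the dominant correction is extracted by a Laplace--method evaluation of the Cauchy integral of the $v$--dependent jump across the saddle $\lambda_*$. The prefactor $1/\sqrt{t\sqrt{2}}$ is exactly the standard one--dimensional Laplace constant, while the $\mathrm{e}^{-v}\mathrm{e}^{\,\frac{2}{3}\sqrt{2}\,t}=\mathrm{e}^{\sigma t}$ factor combines the contribution $v$ from the Stokes entry with the value $\frac{2}{3}\sqrt{2}$ of $2g$ at the saddle. The residual Cauchy kernel furnishes the $\frac{1}{2\pi}$ in front. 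Finally, retrieving $u(x|s)$ from \eqref{FN} via the large--$\lambda$ expansion $R(\lambda)=I+\lambda^{-1}R_1+\cdots$ and the identity $2(Y)_{12}=2(P^{(\infty)})_{12}+\ldots$ produces the leading Hastings--McLeod term $-\epsilon\sqrt{-x/2}$ multiplied by the announced bracket $1-\tfrac{1}{2\pi}\mathrm{e}^{\sigma t}(t\sqrt{2})^{-1/2}+J_2$.

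The principal obstacle will be step (ii) and the ensuing extraction of the precise numerical constant: one must verify that among the various $v$--dependent corrections distributed over six Stokes rays, only the pair of rays pinching the real saddle contributes to leading order, that the other rays' contributions are absorbed into the $\mathcal O(t^{-1})$ error uniformly over the entire stated range $\sigma\leq f_2/t$ (so that $\mathrm{e}^{\sigma t}$ may be as large as $\mathrm{e}^{f_2}$), and that the Airy parametrix error remains $\mathcal O(t^{-1})$ even when the outer jumps are not exponentially small. This requires a careful tracking of the $v\to\infty$ limit inside the Laplace approximation, and a uniform comparison of the two competing small quantities $\mathrm{e}^{-v}$ and $t^{-1}$ in the final $L^2$--estimate for $R-I$.
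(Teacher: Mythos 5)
Your proposal misidentifies where the $e^{\sigma t}t^{-1/2}$-sized correction comes from, and as a consequence omits the one ingredient on which the whole proof hinges: a new local parametrix at the origin. You attribute the correction to conjugated jump entries of size $e^{-v}e^{\pm 2tg(\lambda)}$ on the Stokes rays $\Gamma_1\cup\Gamma_3$, to be evaluated by Laplace's method at a real saddle $\lambda_*$. But with the Hastings--McLeod $g$-function $g(\lambda)=\frac{4\im}{3}(\lambda^2-\tfrac12)^{3/2}$ and the cut on $[-\tfrac{1}{\sqrt2},\tfrac{1}{\sqrt2}]$, the boundary values satisfy $\Re g_\pm(0)=\pm\frac{1}{3}\sqrt2$, so $\Re g$ is bounded strictly away from zero on every deformed Stokes ray approaching the origin; the jumps there are $I+\mathcal O(e^{-\frac{2}{3}\sqrt2\,t})$ \emph{uniformly} in $v$, and contribute nothing at the $t^{-1/2}$ scale. (Also, the constants multiplying those jumps are not all $\mathcal O(e^{-v})$: from \eqref{cyclic} the quantity $s_2$ is bounded but generically of order one, not $\mathcal O(e^{-v})$.) The correction actually originates on the band itself: the jump $G_\varpi$ in \eqref{s:1}--\eqref{s:2} has diagonal entry $(1-s_1s_3)\,e^{t\Pi(\lambda)}=e^{-t(\varkappa-\Pi(\lambda))}$, and since $\Pi(0)=\frac{2}{3}\sqrt2$ and $\varkappa-\Pi(\lambda)=-\sigma+2\sqrt2\,\lambda^2+\mathcal O(\lambda^4)$ (cf. \eqref{oriidea}), near $\lambda=0$ this entry equals $e^{\sigma t}e^{-2\sqrt2\,t\lambda^2}(1+\cdots)$.

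This matters for the logic, not just the bookkeeping. When you set $R=TP^{-1}$ with only the global outer parametrix plus Airy discs, the jump of $R$ on the band near the origin is $I-\im\epsilon\,e^{-t(\varkappa-\Pi(\lambda))}\,P_-\sigma_+P_-^{-1}$, whose $L^\infty$-norm is of order $e^{\sigma t}$, hence $\mathcal O(1)$ (not $o(1)$) uniformly over $0<\sigma\leq f_2/t$. Small-norm theory for the singular integral equation requires the $L^\infty$-norm of $G_R-I$ to be small, so the step ``$R-I$ is controlled in $L^2\cap L^\infty$ by the worst jump'' fails; you cannot simply extract the leading term from a Cauchy integral of an $\mathcal O(1)$ jump while leaving an $\mathcal O(t^{-1})$ remainder. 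The paper resolves this by constructing the origin parametrix $F(\lambda)$ (RHP with jump \eqref{orij}, solved explicitly in \eqref{FRHmodel} via the Cauchy transform of $e^{-\mu^2}$) inside a disk $D(0,r)$; after dividing by $F$, the ratio RHP has \emph{all} jumps small, and the announced term $\frac{\epsilon}{2\pi}\frac{e^{\sigma t}}{(2\sqrt2\,t)^{1/2}}$ comes out of the matching $F(\lambda)\Upsilon(\lambda)^{-1}$ on $\partial D(0,r)$ via the residue computation \eqref{orimatch}, \eqref{nice:1}. Your Gaussian heuristic does produce the right power of $t$ and exponential, but without the explicit local model problem the small-norm framework you invoke does not close. (Minor: after the scaling $X(\lambda)=Y(\lambda\sqrt{-x})$ the band endpoints sit at $\pm\frac{1}{\sqrt2}$ on the real axis, not at $\pm\im\sqrt{-x/2}$.)
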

Theorems \ref{bet:1} and \ref{res:4} describe the behavior of $u(x|s)$ as $x\rightarrow-\infty$ for the values of $\varkappa$ in
\begin{equation*}
	\left(0,\frac{2}{3}\sqrt{2}-\delta\right]\bigsqcup\left[\frac{2}{3}\sqrt{2}-\frac{f_2}{t},\infty\right)\subset(0,\infty).
\end{equation*}
The corresponding regions in the double scaling diagram, i.e. {\it Boutroux region} for Theorem \ref{bet:1} and {\it Hastings-McLeod region} for Theorem \ref{res:4}, are shown in Figure \ref{diareg}. Another region, the {\it Stokes region}, corresponds to 
\begin{equation}\label{dead}
	\varkappa\in\left(\frac{2}{3}\sqrt{2}-\delta,\frac{2}{3}\sqrt{2}-\frac{f_1}{t}\right),\ \ \delta\in\left(0,\frac{2}{3}\sqrt{2}\right),\ f_1\in(0,\infty).
\end{equation}
We will not provide a full asymptotic description of $u(x|s)$ for the latter values of $\varkappa$ as the nonlinear steepest descent analysis becomes increasingly difficult. Instead we focus on the scale
\begin{equation*}
	t\geq t_0,\ v\geq v_0:\ \ \ \varkappa\geq \frac{2}{3}\sqrt{2}-f_3\frac{\ln t}{t},\ \ f_3\in\mathbb{R},
\end{equation*}
which eventually motivates the term {\it Stokes region}: For $f_3\in(-\infty,\frac{1}{6})$ the leading order behavior of $u(x|s)$ turns out to be unchanged from \eqref{theo:3}, however once we cross the {\it Stokes lines}
\begin{equation}\label{Stokeline}
	\mathcal{S}_k:\ \ v=\frac{2}{3}\sqrt{2}\,t-\frac{6k+1}{6}\ln t,\ \ k\in\mathbb{Z}_{\geq 0},
\end{equation}
once we go deeper into \eqref{dead}, additional $(k+1)$ terms will contribute to the leading behavior. This is in sharp contrast to Theorems \ref{bet:1} and \ref{res:4} where the fixed choices of $f_1\in(0,\infty)$ and $f_2\in\mathbb{R}$ had no effect on the leading orders. In the present paper, we shall prove the following estimation.
\begin{theo}\label{res:5} Given $f_3\in(-\infty,\frac{7}{6})$ there exists positive constants $t_0=t_0(f_3),v_0=v_0(f_3)$ and $c=c(f_3)$ such that
\begin{equation}\label{Stoke:res}
	u(x|s)=-\epsilon\sqrt{-\frac{x}{2}}\,\left(\frac{1+\frac{\epsilon p}{\sqrt{2}}}{1-\frac{\epsilon p}{\sqrt{2}}}\right)+J_3(x,s), \ \ \ \frac{\epsilon p}{\sqrt{2}}=-\frac{1}{2\pi}\sqrt{\frac{\pi}{2}}\frac{e^{\sigma t}}{(2\sqrt{2}\,t)^{\frac{1}{2}}},\ \ \sigma=\frac{2}{3}\sqrt{2}-\varkappa
\end{equation}
with
\begin{equation*}
	\big|J_3(x,s)\big|\leq ct^{-\min\{\frac{7}{6}-f_3,\frac{2}{3}\}}\ \ \ \ \forall\, t\geq t_0,\ v\geq v_0,\ \ v\geq\frac{2}{3}\sqrt{2}\,t-f_3\ln t.
\end{equation*}
\end{theo}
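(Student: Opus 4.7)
The plan is to extend the Deift--Zhou nonlinear steepest descent analysis used to establish Theorem \ref{res:4}, refining the construction of the local parametrix near the two real stationary phase points so that the entire nonlinear dependence on the (no longer small) exponentially weighted parameter is captured in closed form. The starting point is the same chain of preliminary transformations applied to RHP \ref{masterRHP} as in the proof of Theorem \ref{res:4}: a rescaling $\lambda=(-x/2)^{1/2}\zeta$ that places the stationary points of $\theta(\lambda,x)$ at $\zeta=\pm 1$, an algebraic $g$-function that removes the dominant exponential factors and endows the transformed problem with jumps of the right sign structure along suitably bent contours, followed by lens opening so that away from small neighborhoods of $\zeta=\pm 1$ the residual jumps are identity plus a term of order $\mathcal{O}(e^{-ct})$ for some $c>0$.

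Inside the lenses near $\zeta=\pm 1$ the jump matrices are triangular and carry a controlling weight $p\sim e^{\sigma t}/\sqrt{t}$ with $\sigma=\tfrac{2}{3}\sqrt{2}-\varkappa$. In the Hastings--McLeod region of Theorem \ref{res:4} one has $p=o(1)$ and the local model is solved perturbatively, yielding the linearized correction $1-\tfrac{1}{2\pi}e^{\sigma t}/(t\sqrt{2})^{1/2}$. In the Stokes region $v\ge \tfrac{2}{3}\sqrt{2}\,t-f_3\ln t$ with $f_3\in(1/2,7/6)$, however, $p=\mathcal{O}(t^{f_3-1/2})$ is bounded away from zero and can grow polynomially, so the perturbative route breaks down. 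I would instead solve the local model RHP \emph{exactly}: its jump factorizes into two triangular pieces tied to the $\zeta\mapsto-\zeta$ symmetry (both carrying the same weight $p$), and the resulting model admits an explicit $2\times 2$ closed-form solution. Reading off the $(1,2)$-entry at $\zeta=\infty$ through the extraction formula \eqref{FN} produces precisely the M\"obius image $(1+\epsilon p/\sqrt{2})/(1-\epsilon p/\sqrt{2})$ displayed in \eqref{Stoke:res}; the geometric-series expansion $(1+A)/(1-A)=1+2A+2A^2+\cdots$ encodes, term by term, the successive crossings of the Stokes lines $\mathcal{S}_k$ in \eqref{Stokeline}.

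With the exact local parametrices in hand, the standard small-norm argument on the matching circles $|\zeta\mp 1|=\delta$ controls the error $J_3$. Two sources contribute: (i) the next correction inherent to the local-to-global matching of size $\mathcal{O}(t^{-2/3})$; and (ii) the first neglected triangular factor in the local jump, of relative size $p\cdot t^{-2/3}\sim t^{f_3-7/6}$, which dominates once $f_3>1/2$. Taking the larger of the two yields the claimed bound $|J_3|\le c\,t^{-\min\{7/6-f_3,\,2/3\}}$, and a standard application of the small-norm theory then produces the asymptotics \eqref{Stoke:res} with the stated error.

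The main technical obstacle is keeping the exact local model tractable while maintaining uniform control as $f_3\uparrow 7/6$. On the one hand, one must verify that the denominator $1-\epsilon p/\sqrt{2}$ stays bounded away from zero throughout the relevant $(t,v)$ range, so that the M\"obius resummation is actually well defined; on the other, the neglected exponentially small contributions from higher-order Stokes sectors become comparable to the retained ones precisely as $f_3$ approaches $7/6$, which is the boundary of applicability of the two-parameter M\"obius ansatz and explains why the present form of the theorem cannot be pushed past the next Stokes line $\mathcal{S}_1$ without enriching the local parametrix by the next pair of triangular factors.
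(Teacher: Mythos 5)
Your proposal has the right final answer and a useful heuristic for the error count, but it mislocates the source of the Stokes phenomenon and, more importantly, does not supply the mechanism the proof actually needs to handle the non-small matching error.

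First, the location. After the chain $Y\mapsto X\mapsto T\mapsto\varpi$ and the $g$-function transformation of Section~\ref{near}, the edge points $\lambda=\pm\lambda^{\ast}=\pm\tfrac{1}{\sqrt{2}}$ carry Airy parametrices $\Delta^{r},\Delta^{\ell}$ whose matching with $\Upsilon$ is uniformly $\mathcal{O}(t^{-1})$, cf. \eqref{upedgem:1}, \eqref{upedgem:2}; they are harmless in the Stokes region. The exponentially weighted parameter $p\sim e^{\sigma t}/\sqrt{t}$ lives in the \emph{origin} parametrix $F(\lambda)$, see \eqref{FRHmodel}--\eqref{orimatch}: the entire Section~\ref{stokes} begins from the observation that \emph{only} the matching \eqref{orimatch} on $\partial D(0,r)$ breaks down when $\varkappa\ge\tfrac{2}{3}\sqrt{2}-f_3\tfrac{\ln t}{t}$. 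Your plan to refine the parametrix near $\zeta=\pm1$ would be refining the wrong piece; you would never encounter $p$ there.

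Second, and this is the decisive gap, ``solving the local model exactly'' is not the issue, because $F(\lambda)$ is already explicit in terms of the Cauchy transform of $e^{-\mu^2}$. The problem is that the \emph{matching error} between $F$ and $\Upsilon$ on $\partial D(0,r)$ scales like $e^{\sigma t}\,\z(\lambda)^{-1}\sim t^{f_3-1/2}$ and is not small; hence the ratio RHP for $\xi(\lambda)$ is not a small-norm problem, and a ``standard application of small-norm theory'' cannot be invoked as you claim. The paper's proof resolves this by expanding the Cauchy transform $I(\z)$ to finite order, factorizing the matching jump as $F=E\,\hat F\,\Upsilon$ in \eqref{tedious} so that the polynomially growing factor is isolated in the explicit triangular $E(\lambda)$, absorbing $E$ into the unknown via $\Theta=\xi E$ at the cost of a pole at $\lambda=0$, and then resolving that pole with a dressing transformation \eqref{stoke:2}. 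The constant matrix $B_0$ is then determined algebraically, and it is the formula $\kappa=p/(1-\epsilon p/\sqrt{2})$ coming from $B_0$ that produces the M\"obius factor in \eqref{Stoke:res} --- not a closed-form local solution read off at $\zeta=\infty$. Your proposal contains no analogue of the factorization-plus-singular-RHP-plus-dressing step, which is precisely the step that converts the non-small jump into something tractable.

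Finally, a small point: your worry about $1-\epsilon p/\sqrt{2}$ vanishing is unnecessary here, since by \eqref{Stoke:res} $\epsilon p/\sqrt{2}$ is explicitly negative, so the denominator is $\ge1$ throughout the admissible range of $(t,v)$. Your back-of-envelope error count $\min\{2/3,\,7/6-f_3\}$ does match the stated bound, but as a heuristic it does not substitute for the explicit error tracking in \eqref{inter:1}--\eqref{B0value}.
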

This rigorously clarifies the appearance of the first Stokes line $\mathcal{S}_1$: Fix $f_3\in(-\infty,\frac{1}{6})$ in \eqref{Stoke:res}, then 
\begin{equation*}
	p=\mathcal{O}\left(t^{-(\frac{1}{2}-f_3)}\right)=o(1)
\end{equation*}
and subsequently by geometric progression, as $x\rightarrow-\infty,|s_1|\uparrow 1$,
\begin{equation*}
	u(x|s)= -\epsilon\sqrt{-\frac{x}{2}}+\mathcal{O}\left((-x)^{-\min\{\frac{3}{2}(\frac{1}{6}-f_3),1\}}\right),\ \ \ \textnormal{uniformly for}\ \ \varkappa\geq\frac{2}{3}\sqrt{2}-f_3\frac{\ln t}{t}.
\end{equation*}

\subsection{Statement of results for singular transition}
In this case we also work with the double scaling parameter
\begin{equation*}
	\varkappa=\frac{v}{t}\in(0,\infty);\ \ \  t=(-x)^{\frac{3}{2}}>0,\ \ \ 0<v=-\ln\big|1-|s_1|^2\big|=-\ln\big(|s_1|^2-1\big)\equiv-2\pi\widehat{\beta},\ \ 1<|s_1|<\sqrt{2}
\end{equation*}
but notice that opposed to \eqref{dspara}, we have now 
\begin{equation*}
	0<-\ln\big(|s_1|^2-1\big)=-2\pi\widehat{\beta}=v\neq -2\pi\beta=-\ln\big(1-|s_1|^2\big).
\end{equation*}	
Furthermore we choose an upper constraint $1<|s_1|<\sqrt{2}$ which is important to our analysis below, but at the same time is also consistent with the anticipated transition asymptotics for $|s_1|\downarrow 1$.\footnote{For fixed $|s_1|\in[\sqrt{2},\infty)$, the leading order behavior of $u(x|s)$ as $x\rightarrow-\infty$ is already known through \eqref{known:3}.} The module $\k\in(0,1)$ is again determined via \eqref{inteq:1} and we require $V=V(\varkappa)$ and $\tau=\tau(\varkappa)$ as in \eqref{Vfancy}. However, instead of the Jacobi elliptic function $\textnormal{cd}(z)$ in\eqref{Jell:1}, we need
\begin{equation}\label{Jell:2}
 	\textnormal{dc}\left(2zK\left(\frac{1-\k}{1+\k}\right),\frac{1-\k}{1+\k}\right)=\frac{\theta_2(0,q)}{\theta_3(0,q)}\frac{\theta_3(z,q)}{\theta_2(z,q)},\ \ \ z\in\mathbb{C}\big\backslash\left\{\frac{1}{2}\mod \mathbb{Z}+\tau\mathbb{Z}\right\}.
\end{equation}
The analogue of Theorem \ref{bet:1} for the singular transition is contained in the following Theorem.
\begin{theo}\label{bet:2} For any fixed $\delta\in(0,\frac{1}{3}\sqrt{2})$, there exist positive constants $t_0=t_0(\delta)$ and $c_0=c_0(\delta)$ such that
\begin{equation}\label{singf}
	u(x|s)=-\epsilon\sqrt{-\frac{x}{2}}\frac{1+\k(\varkappa)}{\sqrt{1+\k^2(\varkappa)}}\,\textnormal{dc}\left(2(-x)^{\frac{3}{2}}K\left(\frac{1-\k(\varkappa)}{1+\k(\varkappa)}\right),\frac{1-\k(\varkappa)}{1+\k(\varkappa)}\right)+J_4(x,s)
\end{equation}
with
\begin{equation}\label{sing:est1}
	\big|J_4(x,s)\big|\leq c_0t^{-\frac{1}{15}}\ \ \ \ \forall\,t\geq t_0,\ \ 0<v\leq t\left(\frac{2}{3}\sqrt{2}-\delta\right),
\end{equation}
provided $(x,s_1)$ is uniformly bounded away from the discrete set
\begin{equation}\label{ex}
	\mathcal{Z}_n=\left\{(x,s_1):\ 2(-x)^{\frac{3}{2}}V(\varkappa)=n\in\mathbb{Z}\backslash\{0\}\right\}.
\end{equation}
On the other hand, for any fixed $f_1\in(0,\infty)$, there exist constants $t_1=t_1(f_1),v_1=v_1(f_1)$ and $c_1=c_1(f_1)$ such that \eqref{singf} holds true with
\begin{equation*}
	\big|J_4(x,s)\big|\leq \frac{c_1}{\ln t}\ \ \ \ \forall\, t\geq t_1,\ v\geq v_1,\ \ \frac{2}{3}\sqrt{2}\, t-f_1\leq v<\frac{2}{3}\sqrt{2}\,t
\end{equation*}
and no further constraint placed on $(x,s_1)$.
\end{theo}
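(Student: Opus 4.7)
The plan is to run the Deift--Zhou nonlinear steepest descent method on the RHP~\ref{masterRHP} along the same lines as the proof of Theorem~\ref{bet:1}, but with $s_1=\im\epsilon\sqrt{1+e^{-v}}$ so that $|s_1|>1$. After the rescaling $\lambda\mapsto(-x)^{1/2}\lambda$ the phase in \eqref{masterJ} becomes $t\vartheta(\lambda)$ with $\vartheta(\lambda)=\im(\frac{4}{3}\lambda^3-\lambda)$ and $t=(-x)^{3/2}$. The first transformation $Y\mapsto T$ is controlled by a genus-one $g$-function on the elliptic curve $\mathcal{R}:w^2=(\lambda^2-a^2)(\lambda^2-b^2)$, $0<a<b$, whose branch points are pinned by the same moment conditions as in the regular case and reduce to the transcendental equation \eqref{inteq:1} for $\k(\varkappa)$, with $V(\varkappa)$ and $\tau(\varkappa)$ read off from \eqref{Vfancy}. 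Thus the underlying spectral geometry and elliptic data are identical to those of the regular transition.

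The essential change for $|s_1|>1$ occurs at the lens-opening step. In the regular case one factors the triangular Stokes matrix as an upper--diagonal--lower product with diagonal entries $\propto(1-|s_1|^2)^{\pm\frac{1}{2}}$; for $|s_1|>1$ one uses instead the algebraic identity
\begin{equation*}
	\begin{pmatrix}1 & 0 \\ s_1 & 1\end{pmatrix}=\begin{pmatrix}1 & s_1^{-1} \\ 0 & 1\end{pmatrix}\begin{pmatrix}0 & -s_1^{-1} \\ s_1 & 0\end{pmatrix}\begin{pmatrix}1 & s_1^{-1} \\ 0 & 1\end{pmatrix},
\end{equation*}
so that after the $g$-function dressing the jump on the principal band becomes off-diagonal rather than diagonal. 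Consequently the outer parametrix $M(\lambda)$, still built from the Baker--Akhiezer function of $\mathcal{R}$, is now uniformized by the combination $\theta_3(z,q)/\theta_2(z,q)$ in place of $\theta_2(z,q)/\theta_3(z,q)$. Comparing \eqref{Jell:1} and \eqref{Jell:2} this is precisely what turns the Jacobi $\textnormal{cd}$ of \eqref{ellip:1} into the Jacobi $\textnormal{dc}$ of \eqref{singf}. Local Airy parametrices at the four branch points $\pm a,\pm b$ are inserted verbatim from the regular case, and matching them to $M$ yields a small-norm RH problem for the error $E(\lambda)$ with jumps of order $\mathcal{O}(t^{-1})$ on compact parts of the contour and exponentially small pieces elsewhere. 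The formula \eqref{singf} is extracted from \eqref{FN} by tracking the $\lambda^{-1}$ coefficient through the full chain of transformations.

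The degeneration $\varkappa\uparrow\frac{2}{3}\sqrt{2}$ is then handled exactly as in the proof of \eqref{bd:2}: one band collapses, the neighboring Airy parametrices begin to interact, and the algebraic bound $t^{-1/15}$ must be replaced by the logarithmic bound $1/\ln t$ on the boundary strip $\frac{2}{3}\sqrt{2}\,t-f_1\leq v<\frac{2}{3}\sqrt{2}\,t$. The main obstacle, which is specific to the singular case, is the exceptional set $\mathcal{Z}_n$ in \eqref{ex}. The zeros of $\theta_2(\,\cdot\,,q)$ are the poles of $\textnormal{dc}$ and encode genuine poles of the Painlev\'e~II transcendent (compare the singular behavior \eqref{known:3}); at each point of $\mathcal{Z}_n$ the outer parametrix $M(\lambda)$ itself acquires a pole. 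The small-norm estimate for $E(\lambda)$ therefore rests on a quantitative lower bound for $|\theta_2(2tV(\varkappa),q)|$, which is exactly the quantity encoded by the bounded-away-from-$\mathcal{Z}_n$ hypothesis. Controlling this lower bound uniformly while the nome $q=e^{\im\pi\tau(\varkappa)}$ approaches the unit circle as $\varkappa\uparrow\frac{2}{3}\sqrt{2}$ is the central technical difficulty; once it is achieved the small-norm argument closes and \eqref{sing:est1} follows. In the boundary regime $\frac{2}{3}\sqrt{2}\,t-f_1\leq v<\frac{2}{3}\sqrt{2}\,t$ a direct estimate shows that $2tV(\varkappa)$ is automatically separated from $\mathbb{Z}\setminus\{0\}$ on the admissible scale, which is why the $\mathcal{Z}_n$ constraint can be dropped from the second half of the theorem.
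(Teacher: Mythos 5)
Your high-level architecture matches the paper: genus-one $g$-function pinned by the same transcendental equation, a theta-function outer parametrix, Airy local parametrices, a small-norm argument, and the observation that the exceptional set $\mathcal{Z}_n$ is the new phenomenon which is automatically avoided near $\varkappa=\frac{2}{3}\sqrt{2}$. However, the algebraic step you propose as the engine of the singular case is not what the paper does, and I do not see how it could work in this setting.

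In the paper the object that must be factored at the lens-opening stage is not a single unipotent Stokes matrix; it is the full $2\times 2$ jump $G_S(\lambda)$ across the gap $(-m,m)$, which for $|s_1|>1$ has $(1,1)$-entry equal to $(1-s_1s_3)e^{\varkappa t}=-1$. The paper writes this as a lower-triangular $\times$ diagonal $\times$ upper-triangular product with diagonal factor $S_D=-I=e^{\im\pi\sigma_3}$; in the regular case the corresponding $(1,1)$-entry equals $+1$ and the factorization is plain LU with no diagonal factor (there is no diagonal $(1-|s_1|^2)^{\pm 1/2}$ anywhere). The lens identity you invoke, $\bigl(\begin{smallmatrix}1&0\\ s_1&1\end{smallmatrix}\bigr)=\bigl(\begin{smallmatrix}1&s_1^{-1}\\0&1\end{smallmatrix}\bigr)\bigl(\begin{smallmatrix}0&-s_1^{-1}\\ s_1&0\end{smallmatrix}\bigr)\bigl(\begin{smallmatrix}1&s_1^{-1}\\0&1\end{smallmatrix}\bigr)$, is the orthogonal-polynomial factorization of a unipotent lower-triangular matrix and does not apply to the dressed gap jump here. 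Your resulting claim that ``the jump on the principal band becomes off-diagonal rather than diagonal'' is also backwards: after the $g$-function transformation the band jump on $(\pm m,\pm M)$ is off-diagonal in both the regular and singular cases, and the new feature of the singular case is precisely that an extra \emph{diagonal} jump $e^{\im\pi\sigma_3}$ appears on the gap $(-m,m)$.

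That diagonal $e^{\im\pi\sigma_3}$ factor is what forces the half-period shift $tV\mapsto tV+\tau/2$ in the theta arguments of the outer parametrix (Proposition~\ref{pout:4}); this is the actual mechanism behind the two outcomes you correctly identified -- the replacement of $\theta_2/\theta_3$ by $\theta_3/\theta_2$ (hence $\textnormal{cd}\mapsto\textnormal{dc}$), and the appearance of the denominator $\theta(\tau/2+tV)$ whose zeros produce $\mathcal{Z}_n$. Without the correct factorization there is no way to produce this shift. Beyond this central gap, two smaller omissions: the inner Airy parametrices at $\lambda=\pm m$ and the origin parametrix also need sign modifications to absorb the $-I$ factor (they are not inserted verbatim from the regular case), and the full $t^{-1/15}$ rate down to all $v>0$ in \eqref{sing:est1} requires the separate parabolic-cylinder analysis on the scale $t\geq v^{k+1}$ (Section~\ref{singsec:3}, following \cite{BI}), which you do not mention and which also contributes its own exceptional set \eqref{exceptknown}.
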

Here, Theorem \ref{bet:2} describes the transition to leading order between case (C) for fixed $|s_1|>1$ in \eqref{known:3} and case (B) with $|s_1|=1$ in \eqref{known:2}. Indeed, through the limiting behavior of $V(\varkappa)$ and $\tau(\varkappa)$ as $\varkappa\downarrow 0$ and $\varkappa\uparrow\frac{2}{3}\sqrt{2}$ in \eqref{singf} we obtain
\begin{cor}\label{singmat} As $x\rightarrow-\infty$,
\begin{equation}\label{AS:mat}
	u(x|s)=\frac{\sqrt{-x}}{\sin\left(\frac{2}{3}(-x)^{\frac{3}{2}}+\widehat{\beta}\ln\big(8(-x)^{\frac{3}{2}}\big)+\varphi\right)+\mathcal{O}\left((-x)^{-\frac{3}{10}}\right)}+\mathcal{O}\left((-x)^{-\frac{1}{10}}\right)
\end{equation}
uniformly for $0<\varkappa\leq t^{-\frac{4}{5}}$ and away from the zeros of the trigonometric function appearing in the denominator of the leading order. In addition, as $x\rightarrow-\infty,|s_1|\downarrow 1$,
\begin{equation}\label{HS:mat}
	u(x|s)=-\epsilon\sqrt{-\frac{x}{2}}+\mathcal{O}\left(\frac{1}{\ln(-x)}\right),
\end{equation}
uniformly for $\frac{2}{3}\sqrt{2}-\frac{1}{t}\leq\varkappa<\frac{2}{3}\sqrt{2}$.
\end{cor}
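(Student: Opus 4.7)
The strategy mirrors that of Corollary~\ref{mat}: I would substitute the limiting expressions of $\k(\varkappa)$, $V(\varkappa)$, $\tau(\varkappa)$ and of the Jacobi elliptic factor $\textnormal{dc}$ into the leading order of \eqref{singf} and then invoke the two error estimates from Theorem~\ref{bet:2}. The expansions of $\k$, $V$, $\tau$ in the two regimes $\varkappa\downarrow 0$ and $\varkappa\uparrow\frac{2}{3}\sqrt{2}$ are exactly the ones already derived for the cd-analysis that yielded Corollary~\ref{mat}; only the Jacobi factor to be asymptotically evaluated is new, and \eqref{Jell:2} together with the theta series of Appendix~\ref{app:theta} reduces this to a computation with theta functions in a degenerate nome regime.

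For the Kapaev matching \eqref{AS:mat} I would work in the limit $\varkappa\downarrow 0$, so $\k\uparrow 1$ and $m\equiv(1-\k)/(1+\k)\downarrow 0$. First, invert the transcendental relation \eqref{inteq:1} to the required order using the expansions of $E,K,E',K'$ near $\k=1$ from Appendix~\ref{appa}; this yields $V(\varkappa)\to -\frac{2}{3\pi}$ together with logarithmic subleading corrections, and $\tau(\varkappa)\to\im\infty$, so that the nome $q=e^{\im\pi\tau}$ tends to zero. Next, feed the $q\to 0$ theta series from Appendix~\ref{app:theta} into \eqref{Jell:2}: the ratio $\theta_3(z,q)/\theta_2(z,q)$ collapses, up to exponentially small terms, to $(2q^{1/4}\cos(\pi z))^{-1}$, so that $\textnormal{dc}(2zK(m),m)\to\sec(\pi z)$. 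Combining this with the amplitude prefactor $(1+\k)/\sqrt{1+\k^2}\to\sqrt{2}$ produces a leading order of the Kapaev shape $\sqrt{-x}/\sin(\cdot)$, provided the subleading pieces of the expansion of $V$ supply the logarithmic phase $\widehat{\beta}\ln(8t)$ and the gamma-function constant $\varphi$ of \eqref{known:3}; the identification $\widehat{\beta}=-v/(2\pi)$ with $v=-\ln(|s_1|^2-1)$ makes this scaling consistent. The bound \eqref{sing:est1} (with $\delta\in(0,\frac{1}{3}\sqrt 2)$ chosen small enough) then provides the $\mathcal{O}((-x)^{-1/10})$ remainder uniformly on $0<\varkappa\leq t^{-4/5}$, valid away from the singularities in the denominator.

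For the Hastings-McLeod matching \eqref{HS:mat} I would instead work in the opposite regime $\varkappa\uparrow\frac{2}{3}\sqrt{2}$, equivalently $\k\downarrow 0$, $m\uparrow 1$ and $\tau\downarrow 0$. Now the nome approaches $1$, so the first step is Jacobi's imaginary transformation, which re-expresses the theta ratio in \eqref{Jell:2} through the dual nome $\widetilde q=e^{-\im\pi/\tau}$, which is small. In this dual picture $\textnormal{dc}\to 1$ and the amplitude $(1+\k)/\sqrt{1+\k^2}\to 1$, yielding the leading order $-\epsilon\sqrt{-x/2}$; the error $\mathcal{O}(1/\ln(-x))$ then follows directly from the second bound in Theorem~\ref{bet:2} applied with, say, $f_1=1$.

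The main obstacle is the careful bookkeeping of subleading terms in the expansion of $V(\varkappa)$ as $\varkappa\downarrow 0$: these terms, once plugged into the argument of the nascent $\sec$, are responsible for reproducing both the logarithmic phase $\widehat\beta\ln(8t)$ and the constant $\varphi$ in \eqref{known:3}. Doing this cleanly requires keeping several orders in the expansions of $E,K,E',K'$ near $\k=1$ and iterating the implicit inversion of \eqref{inteq:1} to a matching precision, before assembling the final leading trigonometric form and reading off the Kapaev phase.
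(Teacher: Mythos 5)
Your proposal is correct and proceeds along essentially the same lines as the paper. For \eqref{AS:mat} the paper reduces the dc-factor in the limit $\k\uparrow 1$ to $\sec(\pi tV(\varkappa))$ exactly as you describe (its \eqref{simple}), then uses the $\varkappa\downarrow 0$ expansion of $V(\varkappa)$ from Corollary~\ref{cor1} together with Stirling's formula for $\arg\Gamma(\tfrac12+\im\widehat\beta)$ and $\arg s_1$ to produce the Kapaev phase; for \eqref{HS:mat} the paper likewise applies the modular transformation $\tau\mapsto\tau'=-1/\tau$ so that the theta ratios tend to $1$, and both error bounds then come from Theorem~\ref{bet:2} exactly as you cite them, with the conversion $(-x)^{-1/10}=t^{-1/15}$ making the remainder in \eqref{AS:mat} match \eqref{sing:est1} directly.
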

Our last result is the direct analogue of Theorem \ref{res:4} to the singular transition. 
\begin{theo}\label{res:9} Given $f_2\in\mathbb{R}$, there exist positive constants $t_0=t_0(f_2),v_0=v_0(f_2)$ and $c=c(f_2)$ such that
\begin{equation*}
	u(x)=-\epsilon\sqrt{-\frac{x}{2}}\left(1+\frac{1}{2\pi}\frac{e^{\sigma t}}{(t\sqrt{2})^{\frac{1}{2}}}+J_5(x,s)\right)
\end{equation*}
with
\begin{equation*}
	\big|J_5(x,s)\big|\leq ct^{-1}\ \ \ \ \forall\,t\geq t_0,\ \ v\geq v_0,\ \ \ v\geq\frac{2}{3}\sqrt{2}\, t-f_2.
\end{equation*}
\end{theo}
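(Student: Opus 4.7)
The plan is to mirror the nonlinear steepest descent analysis used to prove Theorem \ref{res:4}, adapting each stage to the singular Stokes data $1<|s_1|<\sqrt 2$. Since the hypothesis $\varkappa\geq\frac{2}{3}\sqrt 2-f_2/t$ places us in the same scaling regime as Theorem \ref{res:4}, the underlying geometry -- the $g$-function, the signature table of $\Re(2\theta)$, and the lens contours opened near the stationary phase points $\pm\lambda_0=\pm\sqrt{-x/2}$ -- should be identical. Concretely, I would perform a first transformation $Y\mapsto T$ normalizing the behavior at infinity with the same Hastings--McLeod $g$-function as in Theorem \ref{res:4}, open lenses $T\mapsto S$ via the standard algebraic factorization of the merged Stokes jump on $(-\lambda_0,\lambda_0)$, and reduce to a small-norm RHP $S\mapsto R$ by comparing against a model built from the outer Hastings--McLeod parametrix together with local parametrices at $\pm\lambda_0$.

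The step at which the singular regime genuinely deviates from the regular one is this lens-opening factorization. After inserting $e^{-\theta\sigma_3}$, the product of the two adjacent Stokes factors on $(-\lambda_0,\lambda_0)$ conjugates to an upper--diagonal--lower triple whose diagonal piece is $(1-s_1 s_3)^{-\sigma_3}$. In the regular regime $1-s_1 s_3=1-|s_1|^2=e^{-v}>0$, producing the clean diagonal factor $e^{v\sigma_3}$ used to prove Theorem \ref{res:4}. In the singular regime one has instead $1-s_1 s_3=-(|s_1|^2-1)=-e^{-v}$, forcing one to absorb an additional $e^{\im\pi\sigma_3}$ into the factorization. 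This $\im\pi$ shift propagates through the off-diagonal entries $s_{1,3}/(1-s_1 s_3)$, flips the sign of the effective small parameter $p$ that enters the local parametrix at $\lambda_0$, and is the precise mechanism by which the coefficient $-\frac{1}{2\pi}$ in \eqref{theo:3} becomes $+\frac{1}{2\pi}$ in Theorem \ref{res:9}; the upper bound $|s_1|<\sqrt 2$ keeps $v>0$ and thus keeps the diagonal factor uniformly bounded below.

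With these replacements in place, the local parametrix at $\lambda_0$ (and at $-\lambda_0$ by symmetry) is built from the same special function solution used for Theorem \ref{res:4} but with $\widehat{\beta}=\frac{1}{2\pi}\ln(|s_1|^2-1)$ replacing $\beta$ in the effective monodromy data. Matching this parametrix to the outer model on a small circle around $\lambda_0$ produces a boundary mismatch of order $e^{\sigma t}t^{-1/2}$, which upon extraction through \eqref{FN} delivers the stated subleading correction with the positive sign. The residual error RHP has jumps that are uniformly $O(t^{-1})$ for $v\geq\frac{2}{3}\sqrt 2\,t-f_2$, so a standard Neumann series argument yields $|J_5|\leq ct^{-1}$. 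The main technical obstacle I anticipate is careful bookkeeping of the branch of the logarithm induced by the $e^{\im\pi\sigma_3}$ conjugation -- specifically, verifying that it produces no spurious jumps away from $(-\lambda_0,\lambda_0)$ and is correctly absorbed into the local model; this cancellation is forced by the cyclic Stokes constraint \eqref{cyclic} but must be checked explicitly to ensure that the small-norm reduction goes through on the full contour.
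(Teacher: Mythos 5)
Your high-level strategy -- adapt the steepest descent analysis of Theorem~\ref{res:4} to the Stokes data $1<|s_1|<\sqrt2$ -- is the paper's strategy (Section~\ref{singsec:5} explicitly imports the setup of Section~\ref{near}), and you correctly identify $1-s_1s_3=-(|s_1|^2-1)=-e^{-v}$ as the algebraic source of the sign flip and $|s_1|<\sqrt2$ as the reason $v>0$ persists. But there is a genuine gap in where you locate the effect of that sign and where the subleading term itself comes from.

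\textbf{The $e^{\sigma t}t^{-1/2}$ correction does not come from the edge parametrices.} In the paper's proof of Theorem~\ref{res:4}, the local Airy parametrices $\Delta^r,\Delta^\ell$ at the turning points $\pm\lambda^\ast=\pm\tfrac{1}{\sqrt2}$ match the outer (permutation) model $\Upsilon(\lambda)$ to order $\mathcal{O}(t^{-1})$ -- see \eqref{upedgem:1}, \eqref{upedgem:2}. The term $\frac{e^{\sigma t}}{2\pi(t\sqrt2)^{1/2}}$ is produced entirely by an \emph{origin} parametrix $F(\lambda)$ on a fixed disk around $\lambda=0$, whose matching error against $\Upsilon$ is $\mathcal{O}(e^{\sigma t}/\z(\lambda))=\mathcal{O}(e^{\sigma t}t^{-1/2})$, cf. \eqref{orimatch}; the asymptotics are then extracted from the contour integral over $C_0$ in Section~\ref{extracteasy}. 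Your proposal never introduces an origin parametrix, and without one the asserted subleading term has no source. The need for that parametrix is structural: near $\lambda=0$ the diagonal entries $(1-s_1s_3)e^{\pm t\Pi(\lambda)}$ in the $\varpi$-jump \eqref{s:1}--\eqref{s:2} scale like $e^{\sigma t}$, which is $O(1)$ on the line $\sigma=0$ and not suppressed by any lens opening, so the diagonal contribution near the origin cannot be neglected; it must be captured by a local model.

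\textbf{The lens factorization and $e^{\im\pi\sigma_3}$ bookkeeping are not the mechanism here.} Sections~\ref{near} and \ref{singsec:5} perform no LDU factorization on $(-\lambda^\ast,\lambda^\ast)$: the $g$-function transformation is applied directly and the full $2\times2$ jump on the slit is matched by $\Upsilon$ together with the origin model. The sign of $1-s_1s_3$ therefore enters the analysis through the sign of the $(11)$-entry of the origin jump, $-e^{-t(\varkappa-\Pi(\lambda))}$, which is absorbed by flipping the sign of the Cauchy-transform $(12)$-entry in the bare model $F^{RH}(\z)$ of \eqref{FRHmodel}. That flips the sign of the leading matching error in \eqref{orimatch} from $+\frac{\epsilon}{2\pi}\frac{e^{\sigma t}}{\z(\lambda)}$ to $-\frac{\epsilon}{2\pi}\frac{e^{\sigma t}}{\z(\lambda)}$ and hence the sign of the subleading coefficient in the final asymptotics. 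Likewise, the replacement $\beta\mapsto\widehat\beta$ does not happen inside an edge parametrix; it is built into the definition of the double-scaling parameter $\varkappa=v/t$ (hence $\sigma$), since $v=-\ln\bigl|1-|s_1|^2\bigr|$ evaluates to $-2\pi\beta$ for $|s_1|<1$ and to $-2\pi\widehat\beta$ for $|s_1|>1$. Your diagnosis of the algebraic origin of the sign is right, but the proposed proof would not reproduce the stated asymptotics because it neither constructs nor analyzes the origin parametrix.
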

The regions in the $(t,v)$-plane captured by Theorems \ref{bet:2} and \ref{res:9} are again shown in Figure \ref{diareg} and we use the same terminology, {\it Boutroux} and {\it Hastings-McLeod}. We do not address the Stokes region for the singular case, a full analysis for all Stokes lines \eqref{Stokeline} in both regular and singular transition will be postponed to a forthcoming publication.
\subsection{Further applications and outline of paper} The transition asymptotics are derived through an application of the nonlinear steepest descent method \cite{DZ1,DVZ} to the initial RHP \ref{masterRHP}. This analysis depends crucially on the values of the double scaling parameter
\begin{equation*}
	\varkappa=\frac{v}{t}\in(0,\infty);\ \ \ \ t=(-x)^{\frac{3}{2}}>0,\ \ \ 0<v=-\ln\big|1-|s_1|^2\big|=\begin{cases}
	-\ln\big(1-|s_1|^2\big),& 0<|s_1|<1\\
	-\ln\big(|s_1|^2-1\big),&1<|s_1|<\sqrt{2}.
	\end{cases}
\end{equation*}
For instance, the proof of Theorem \ref{bet:1} consists of several steps.
\begin{enumerate}
	\item First, the master RHP \ref{masterRHP} is analyzed with the constraint $t^{1-\eta}\leq v\leq t(\frac{2}{3}\sqrt{2}-\delta),\eta\in(0,1)$ in place. All details are worked out in Sections \ref{sec:reg} and \ref{relax}, which include a $g$-function transformation, the construction of local model problems via Jacobi theta and Airy functions and iterative solutions of singular integral equations. This part resembles in some of its aspects the approach carried out in \cite{BDIK}.
	\item Second, we analyze RHP \ref{masterRHP} subject to the constraint $t\geq v^{k+1}>0,t\geq t_0$ with $k\in\mathbb{Z}_{\geq 3}$. The techniques used now are very different from (1), in fact they are close to the ones we would use in the asymptotic analysis of RHP \ref{masterRHP} for fixed $|s_1|<1$, i.e. no need for a $g$-function transformation but different model problems involving parabolic cylinder functions. We work out the necessary details in Section \ref{flower} and combining error estimations from (1) and (2), we obtain \eqref{ellip:1} with estimation \eqref{bd:1}. In addition, expansion \eqref{cor:as} follows as well.
	\item Third, estimations \eqref{bd:2} and \eqref{cor:hm} are derived by applying modular transformations to the Jacobi theta functions used in (1) and introducing a new model function in a vicinity of the origin. The details are presented in Section \ref{upextension}.
\end{enumerate}
After that, we address the regime $t\geq t_0,v\geq v_0,v\geq \frac{2}{3}\sqrt{2}\,t-f_2$ in Theorem \ref{res:4}. Here the nonlinear steepest techniques are again different from the ones used in the derivation of Theorem \ref{bet:1}, we use a new $g$-function and different model problems, compare Section \ref{near}. It is worth mentioning that these techniques resemble the ones used in the derivation of \eqref{known:2} with $|s_1|=1$. Our final result for the regular transition is contained in Theorem \ref{res:5} which we derive in Section \ref{stokes}. The appearance of Stokes lines is already observed earlier in Section \ref{upextension}, see Remark \ref{stokefirst}, however we give a quantitative analysis of this phenomenon only subject to the constraint
\begin{equation*}
	t\geq t_0,\ v\geq v_0,\ \varkappa\geq\frac{2}{3}\sqrt{2}-f_3\frac{\ln t}{t},\ \ f_3\in\mathbb{R}.
\end{equation*}
In \cite{BDIK}, Theorem $1.12$, the authors also discovered a Stokes region while analyzing $P(s,\gamma)$ as $s\rightarrow\infty,\gamma\uparrow 1$, recall Remark \ref{recentpap}. Starting from the known large $s$ asymptotics of the eigenvalues of $K_{\sin}$ all Stokes lines were classified by an application of Lidskii's Theorem to the Fredholm determinant $P(s,\gamma)$. We will now argue that we can reverse this approach and derive asymptotics of the eigenvalues $\{\lambda_j(x)\}_{j=0}^{\infty}$ of $K_{\textnormal{Ai}}$ via Theorem \ref{res:5} and identity \eqref{Meh}. First, we state a slight improvement of Corollary \ref{nice}.
\begin{cor}\label{eig0} As $x\rightarrow-\infty$ and $\gamma\uparrow 1$, with $c_0$ as in \eqref{TW:asy2} and $\sigma$ in \eqref{Stoke:res},
\begin{equation}\label{eig01}
	\det\big(I-\gamma K_{\textnormal{Ai}}\big)\Big|_{L^2(x,\infty)}=\exp\left[\frac{x^3}{12}\right]|x|^{-\frac{1}{8}}b(\gamma)\left(1+\frac{1}{2^{\frac{9}{4}}\sqrt{\pi}}\frac{e^{\sigma t}}{t^{\frac{1}{2}}}\right)\left(1+\mathcal{O}\left(t^{-\min\{\frac{2}{3},\frac{3}{2}-f_3\}}\right)\right),
\end{equation}
uniformly for
\begin{equation*}
	\varkappa\geq\frac{2}{3}\sqrt{2}-f_3\frac{\ln t}{t},\ \ f_3\in\left(-\infty,\frac{7}{6}\right),
\end{equation*}
where $b=b(\gamma)$ is positive, bounded in $\gamma$, such that
\begin{equation*}
	b(\gamma)\rightarrow c_0,\ \ \gamma\uparrow 1.
\end{equation*}
\end{cor}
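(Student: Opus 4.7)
The plan is to compare the $\gamma$-Fredholm determinant directly to the Tracy-Widom distribution through the exact factorization
\begin{equation*}
  \det\bigl(I-\gamma K_{\textnormal{Ai}}\bigr)\Big|_{L^2(x,\infty)}=\textnormal{F}_{\textnormal{TW}}(x)\cdot\exp[\mathcal{I}(x;\gamma)],\qquad \mathcal{I}(x;\gamma):=\int_x^\infty(y-x)\bigl[u_{\textnormal{HM}}^2(y)-u_{\textnormal{AS}}^2(y;\gamma)\bigr]\d y,
\end{equation*}
obtained by dividing \eqref{Meh} by \eqref{TW}. Since the Tracy-Widom prefactor is already controlled by \eqref{TW:asy1}--\eqref{TW:asy2}, the problem reduces to an asymptotic evaluation of $\mathcal{I}(x;\gamma)$ uniformly in the Stokes region $\varkappa\geq \frac{2}{3}\sqrt 2-f_3\ln t/t$.

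I would split $\mathcal{I}=\mathcal{I}_1+\mathcal{I}_2$ at a fixed cutoff $y_*\in(-\infty,0)$. The tail $\mathcal{I}_2$ over $[y_*,\infty)$ depends only on $\gamma$ up to errors vanishing as $\gamma\uparrow 1$: both $u_{\textnormal{HM}}(y)$ and $u_{\textnormal{AS}}(y;\gamma)$ decay as Airy functions at $+\infty$ with amplitudes $1$ and $\sqrt\gamma$, so the integrand is bounded by a $\gamma$-dependent function with no residual $x$-dependence at the working order. Exponentiating, this piece is absorbed into a redefinition of the Tracy-Widom constant, producing the factor $b(\gamma)$ with $b(\gamma)\to c_0$ as $\gamma\uparrow 1$. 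The substantive part is $\mathcal{I}_1$ on $[x,y_*]$. On this interval, since $u_{\textnormal{AS}}(y;\gamma)=u(y|(-\im\sqrt\gamma,0,\im\sqrt\gamma))$ corresponds to $\epsilon=-1$, I invoke Theorem \ref{res:5} with running scaling parameter $\tilde t=(-y)^{3/2}$ and fixed $v=-\ln(1-\gamma)$. Combined with the refined HM expansion \eqref{HM:ref}, a direct algebraic computation yields
\begin{equation*}
  u_{\textnormal{HM}}^2(y)-u_{\textnormal{AS}}^2(y;\gamma)=\frac{2y\,w(y,\gamma)}{\bigl(1-w(y,\gamma)\bigr)^2}+R(y;\gamma),\qquad w(y,\gamma):=-\frac{1}{2^{9/4}\sqrt\pi}\,\frac{e^{\frac{2}{3}\sqrt 2(-y)^{3/2}-v}}{(-y)^{3/4}},
\end{equation*}
where the remainder $R(y;\gamma)$ collects contributions from $J_3$ and from the $O(y^{-3})$ tail of \eqref{HM:ref}.

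The leading integral $\int_x^{y_*}(y-x)\cdot 2y\,w(y,\gamma)\,\d y$ is then evaluated by Laplace's method at the endpoint $y=x$: the exponential $e^{\frac{2}{3}\sqrt 2(-y)^{3/2}}$ inside $w(y,\gamma)$ peaks at $y=x$, and the expansion $(-y)^{3/2}=t-\sqrt 2\,t^{1/3}(y-x)+O((y-x)^2)$ shows that the integrand is concentrated on the scale $\xi=y-x\sim t^{-1/3}$. Freezing the smooth prefactors at $y=x$ and extending the $\xi$-integration to $[0,\infty)$ up to exponentially small error, the integral collapses to
\begin{equation*}
  2x\,w(x,\gamma)\int_0^\infty\xi\,e^{-\sqrt 2\,t^{1/3}\xi}\,\d\xi=-w(x,\gamma)=\frac{1}{2^{9/4}\sqrt\pi}\,\frac{e^{\sigma t}}{t^{1/2}},\qquad \sigma=\frac{2}{3}\sqrt 2-\varkappa,
\end{equation*}
using $x=-t^{2/3}$. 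Exponentiation and multiplication by the Tracy-Widom asymptotic \eqref{TW:asy1} then yields \eqref{eig01}.

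The principal obstacle is the uniform control of the remainder integral $\int_x^{y_*}(y-x)R(y;\gamma)\,\d y$ across the entire Stokes region, particularly as $f_3\uparrow 7/6$, where the leading correction $e^{\sigma t}/t^{1/2}$ itself may grow as large as $t^{2/3}$ and competes with the pointwise bound $J_3=O(t^{-\min\{7/6-f_3,2/3\}})$ from Theorem \ref{res:5}. Propagating $J_3$ through the squaring $u_{\textnormal{AS}}^2$, integrating against $(y-x)$, and exploiting the Laplace concentration scale $\xi\sim t^{-1/3}$ is precisely what produces the sharp error exponent $\min\{2/3,3/2-f_3\}$ appearing in \eqref{eig01}; achieving this balance is the delicate bookkeeping step of the proof. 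A secondary issue is verifying that $\mathcal{I}_2$ is $x$-independent at the working order so that it merges cleanly into $b(\gamma)$ without leaving spurious $x$-dependent residues.
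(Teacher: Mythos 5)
Your strategy — factor out $\textnormal{F}_{\textnormal{TW}}(x)$ and evaluate the difference integral $\mathcal{I}$ — is a genuinely different route from the paper's, which instead works with the identity $-\partial_x^2\ln\det=u_{\scriptscriptstyle\textnormal{AS}}^2(x;\gamma)$, matches $u_{\scriptscriptstyle\textnormal{AS}}^2$ against $\partial_x^2\ln(1-\tfrac{\epsilon p}{\sqrt 2})$ term by term, integrates twice, and then pins down the linear and constant terms via the first-derivative identity. Your approach is attractive but, as written, has two substantive gaps.

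First, the Laplace evaluation of $\mathcal{I}_1$ is wrong for most of the Stokes region. You froze the denominator $(1-w(y))^{-2}$ at unity and computed only $2x\,w(x)\int_0^\infty\xi e^{-\sqrt 2\,t^{1/3}\xi}\,\d\xi=-w(x)$. But the $w(y)$ in the denominator varies on the same $\xi\sim t^{-1/3}$ scale as the $w(y)$ in the numerator, and keeping it yields
\begin{equation*}
\mathcal{I}_1^{\text{lead}}\approx -\int_0^\infty u\,\frac{w(x)\,e^{-u}}{\bigl(1-w(x)e^{-u}\bigr)^2}\,\d u = \ln\bigl(1-w(x)\bigr),
\end{equation*}
the last equality by the integration-by-parts identity $\int_0^\infty u\,\frac{we^{-u}}{(1-we^{-u})^2}\,\d u=-\ln(1-w)$. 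Exponentiating gives exactly the factor $1-w(x)=1+\frac{e^{\sigma t}}{2^{9/4}\sqrt\pi\,t^{1/2}}$ in \eqref{eig01}. Your version instead exponentiates to $e^{-w(x)}$, which agrees with $1-w(x)$ only to first order in $w$; for $f_3\in[\tfrac12,\tfrac76)$ the quantity $|w(x)|\sim t^{f_3-1/2}$ is $O(1)$ or grows as large as $t^{2/3}$, and $e^{-w}$ and $1-w$ then differ by a factor that is not $1+o(1)$. This is not a bookkeeping detail but the mechanism by which the logarithmic prefactor of the Stokes phenomenon appears, and it is precisely what the paper's computation of $\partial_x^2\ln(1-\tfrac{\epsilon p}{\sqrt 2})$ captures automatically.

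Second, $\mathcal{I}_2=\int_{y_*}^\infty(y-x)\bigl[u_{\scriptscriptstyle\textnormal{HM}}^2-u_{\scriptscriptstyle\textnormal{AS}}^2\bigr]\d y$ is not $x$-independent, contrary to your claim: splitting $(y-x)=y-x$ yields a piece $-x\int_{y_*}^\infty\bigl[u_{\scriptscriptstyle\textnormal{HM}}^2-u_{\scriptscriptstyle\textnormal{AS}}^2\bigr]\d y$ linear in $x$, with a $\gamma$-dependent coefficient, and near a fixed $y_*$ the weight $(y-x)\approx|x|$ also multiplies the $J_3$ error of Theorem \ref{res:5}, which is only $O(1)$ at fixed $y_*$; either way you are left with an uncontrolled $O(|x|)$ contribution in the exponent. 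In the paper this potential linear term is the unknown $a(\gamma)$, and it is killed — not by estimates on $\mathcal{I}_2$ — by comparing with the exact Painlev\'e identity $\partial_x\ln\det=\bigl(u_{\scriptscriptstyle\textnormal{AS}}'\bigr)^2-x\,u_{\scriptscriptstyle\textnormal{AS}}^2-u_{\scriptscriptstyle\textnormal{AS}}^4$ and the refined expansion \eqref{tstok}. Your direct-integral route would need an analogous input to rule out the $ax$ contribution; as written it does not address it.
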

This result is a direct consequence of Theorem \ref{res:5} and we state its proof in Section \ref{eigproof}. Compared to Corollary \ref{nice}, an additional contribution to the leading order appears in \eqref{eig01} and the extra factor is in general not close to unity. In other words, also the Fredholm determinant displays an asymptotic Stokes phenomenon, and we have identified the first Stokes line,
\begin{equation*}
	\widetilde{S}_1:\ \ \ v=\frac{2}{3}\sqrt{2}\,t-\frac{1}{2}\ln t.
\end{equation*}
Now back to the spectrum of $K_{\textnormal{Ai}}$, we let $\{\lambda_j(x)\}_{j=0}^{\infty}$ denote the eigenvalues of the trace class operator $K_{\textnormal{Ai}}:L^2((x,\infty);\d\lambda)\circlearrowleft$. These have been, for instance, analyzed in \cite{TW1}, were it was proven that the spectrum of $K_{\textnormal{Ai}}$ is simple, we have $1>\lambda_0(x)>\lambda_1(x)>\ldots$ and for fixed $j\in\mathbb{Z}_{\geq 0}$,
\begin{equation}\label{TWproof}
	\lambda_j(x) \sim 1,\ \ \ x\rightarrow-\infty.
\end{equation}
In fact, Tracy and Widom in loc. cit. also gave an asymptotic formula,
\begin{equation}\label{conj}
	1-\lambda_j(x)=\frac{\sqrt{\pi}}{j!}2^{\frac{7}{2}j+\frac{9}{4}}t^{j+\frac{1}{2}}e^{-\frac{2}{3}\sqrt{2}\,t}\big(1+o(1)\big),\ \ t=(-x)^{\frac{3}{2}}\rightarrow+\infty,
\end{equation}
but the derivation is not fully rigorous. We will now outline a proof for the stated behavior, using Corollary \ref{eig0}. Choose $t\geq t_0,v\geq v_0$ such that
\begin{equation}\label{sc}
	\varkappa\geq\frac{2}{3}\sqrt{2}-f_3\frac{\ln t}{t},\ \ f_3\in\left(-\infty,\frac{7}{6}\right);\ \ \ v=-\ln(1-\gamma).
\end{equation}
Apply Lidskii's Theorem,
\begin{align}
	\frac{\det(I-\gamma K_{\textnormal{Ai}})}{\det(I-K_{\textnormal{Ai}})}\bigg|_{L^2(x,\infty)}&= \det\big(I+e^{-v}K_{\textnormal{Ai}}(I-K_{\textnormal{Ai}})^{-1}\big)\Big|_{L^2(x,\infty)} = \prod_{j=0}^{\infty}\left(1+e^{-v}\frac{\lambda_j(x)}{1-\lambda_j(x)}\right)\nonumber\\
	&=\left(1+e^{-v}\frac{\lambda_0(x)}{1-\lambda_0(x)}\right)\det\big(I+e^{-v}K_{1}(I-K_{1})^{-1}\big)\Big|_{L^2(x,\infty)},\label{good}
\end{align}
with $K_1=K_{\textnormal{Ai}}\cdot P_{\textnormal{Ai}}^1$ and $P_{\textnormal{Ai}}^1$ is the projection onto the eigenspace of $K_{\textnormal{Ai}}$ with eigenvalues $\{\lambda_j\}_{j=1}^{\infty}$. This allows us to compare \eqref{TW:asy1}, \eqref{eig01} to \eqref{good}, i.e. as $x\rightarrow-\infty,|s_1|\uparrow 1$ subject to \eqref{sc},
\begin{equation*}\label{comp}
	\left(1+e^{-v}\frac{\lambda_0(x)}{1-\lambda_0(x)}\right)\det\big(I+e^{-v}K_1(I-K_1)^{-1}\big)=\left(\frac{b(\gamma)}{c_0}+\frac{b(\gamma)}{2^{\frac{9}{4}}\sqrt{\pi}c_0}\frac{e^{\sigma t}}{t^{\frac{1}{2}}}\right)\left(1+\mathcal{O}\left(t^{-\min\{\frac{2}{3},\frac{3}{2}-f_3\}}\right)\right).
\end{equation*}
We multiply though with the second summand in the first factor in the right hand side of the last estimation, 
\begin{eqnarray}
	\left(2^{\frac{9}{4}}\sqrt{\pi}\,t^{\frac{1}{2}}e^{-\sigma t}+\frac{2^{\frac{9}{4}}\sqrt{\pi}\,t^{\frac{1}{2}}\lambda_0(x)}{e^{\frac{2}{3}\sqrt{2}t}(1-\lambda_0(x))}\right)\det\big(I+e^{-v}K_1(I-K_1)^{-1}\big)&=&\frac{b(\gamma)}{c_0}\left(1+2^{\frac{9}{4}}\sqrt{\pi}\,t^{\frac{1}{2}}e^{-\sigma t}\right)\label{rig:0}\\
	&&\times\left(1+\mathcal{O}\left(t^{-\min\{\frac{2}{3},\frac{3}{2}-f_3\}}\right)\right)\nonumber
\end{eqnarray}
and here, both factors in the left hand side of \eqref{rig:0} are positive. But for $t\geq t_0,v\geq v_0$ such that
\begin{equation*}
	\frac{2}{3}\sqrt{2}>\varkappa\geq\frac{2}{3}\sqrt{2}-f_3\frac{\ln t}{t},\ \ \ f_3\in\left(0,\frac{7}{6}\right)\ \ \ \ \ \Leftrightarrow\ \ \ \ 0<\sigma=\frac{2}{3}\sqrt{2}-\varkappa\leq f_3\frac{\ln t}{t}
\end{equation*}
all summands in the right hand side of \eqref{rig:0} are bounded. Thus, by positivity, all summands in the left hand side of \eqref{rig:0} have to be bounded. So, using also \eqref{TWproof},
\begin{eqnarray}
	\frac{t^{\frac{1}{2}}}{e^{\frac{2}{3}\sqrt{2}t}(1-\lambda_0(x))}\det\big(I+e^{-v}K_1(I-K_1)^{-1}\big)&=&\mathcal{O}(1),\nonumber\\
	t^{\frac{1}{2}}e^{-\sigma t}\det\big(I+e^{-v}K_1(I-K_1)^{-1}\big)&=&\mathcal{O}(1).\label{rig:00}
\end{eqnarray}
Since $\det(I+e^{-v}K_1(I-K_1)^{-1})\geq 1$, we deduce from the first estimation,
\begin{equation*}
	\frac{t^{\frac{1}{2}}}{e^{\frac{2}{3}\sqrt{2}t}(1-\lambda_0(x))}=\mathcal{O}(1),
\end{equation*}
which is consistent with \eqref{conj} for $j=0$. In fact, \eqref{rig:0} allows us to deduce the inequality,
\begin{equation}\label{rig:000}
	1-\lambda_0(x)\geq \sqrt{\pi}\,2^{\frac{9}{4}}t^{\frac{1}{2}}e^{-\frac{2}{3}\sqrt{2}\,t}\big(1+o(1)\big),
\end{equation}
but in order to achieve equality our analysis in the Stokes region has to be extended; so far \eqref{rig:00} only gives
\begin{equation*}
	\frac{t^{\frac{1}{2}}}{e^{\frac{2}{3}\sqrt{2}t}(1-\lambda_1(x))}=\mathcal{O}(1),
\end{equation*}
which is not sufficient yet to deduce equality in \eqref{rig:000}. Summarizing, provided Theorem \ref{res:5} is extended to the full Stokes region and simultaneously also Corollary \ref{eig0}, the results on the transition asymptotics of $u(x|s)$ would enable us to derive \eqref{conj} rigorously.\smallskip

The derivation of Theorem \ref{bet:2} follows largely its regular counterpart.
\begin{enumerate}
	\item[(4)] The master RHP \ref{masterRHP} is analyzed asymptotically subject to the constraint $t^{1-\eta}\leq v\leq t(\frac{2}{3}\sqrt{2}-\delta),\eta\in(0,1)$ in Sections \ref{singsec:1} and \ref{singsec:2}. Opposed to the regular case $(1)$ we require a different outer parametrix which has the singular structure and corresponding exceptional set \eqref{ex} encoded.
	\item[(5)] After that, we address the scale $t\geq v^{k+1}>0,t\geq t_0$ following \cite{BI}. The details are summarized in Section \ref{singsec:3}.
	\item[(6)] Again, by modular transformations, plus an additional argument related to the singular structure, we complete the proof of Theorem \ref{bet:2} in Section \ref{singsec:4} and in addition obtain Corollary \ref{singmat}.
\end{enumerate}
Section \ref{singsec:5}  concludes the manuscript by deriving Theorem \ref{res:9} and we list a few identities for complete elliptic integrals and Jacobi theta functions in Appendices \ref{appa} and \ref{app:theta}.

\section{Preliminary steps for regular and singular transition analysis}\label{sec:3}
We start with the scaling transformation $X(\lambda) = Y\left(\lambda\sqrt{-x}\right),\lambda\in\mathbb{C}\backslash \bigcup_1^6\Gamma_k$ and are lead from the initial RHP \ref{masterRHP} to a RHP for the function $X(\lambda)$. This problem is formulated on the same jump contour $\bigcup_1^6\Gamma_k$ with jump
\begin{equation*}
  X_+(\lambda)=X_-(\lambda)e^{-t\vartheta(\lambda)\sigma_3}S_ke^{t\vartheta(\lambda)\sigma_3},\hspace{0.5cm}\lambda\in\Gamma_k
\end{equation*}
and jump exponent
\begin{equation*}
  \vartheta(\lambda) = \im\left(\frac{4}{3}\lambda^3-\lambda\right),\hspace{0.5cm} t =(-x)^{\frac{3}{2}}.
\end{equation*}
Next we deform the original jump contours, to obtain a RHP for a function $Z(\lambda)$ as shown in Figure \ref{figure2} - at this stage the endpoints $\lambda=\pm\lambda^{\ast}$ are yet to be determined. 
\begin{problem} The function $Z(\lambda)$ has the following analytical properties:
\begin{itemize}
 \item $Z(\lambda)$ is analytic for $\lambda\in\mathbb{C}\backslash([-\lambda^{\ast},\lambda^{\ast}]\cup\bigcup_1^6\gamma_k)$
  \item We have the following jumps, compare Figure \ref{figure2},
\begin{align*}
  Z_+(\lambda) &= Z_-(\lambda)e^{-t\vartheta(\lambda)\sigma_3}S_ke^{t\vartheta(\lambda)\sigma_3}, \hspace{1.7cm} \lambda\in\gamma_k,\ k=1,\ldots,6;\\
  Z_+(\lambda) &= Z_-(\lambda)e^{-t\vartheta(\lambda)\sigma_3}S_5^{-1}S_4^{-1}S_3^{-1}e^{t\vartheta(\lambda)\sigma_3}, \ \ \lambda\in[-\lambda^{\ast},0];\\
  Z_+(\lambda) &= Z_-(\lambda)e^{-t\vartheta(\lambda)\sigma_3}\sigma_2 S_3S_4S_5\sigma_2e^{t\vartheta(\lambda)\sigma_3}, \ \ \lambda\in[0,\lambda^{\ast}],
\end{align*}
where in fact
\begin{equation}\label{Zjump}
  S_5^{-1}S_4^{-1}S_3^{-1} = \sigma_2S_3S_4S_5\sigma_2,\hspace{1cm}\sigma_2=\begin{pmatrix}
  0 & -\im\\
  \im & 0\\
  \end{pmatrix}.
\end{equation}
  \item As $\lambda\rightarrow\infty$, the function $Z(\lambda)$ is normalized as
  \begin{equation*}
    Z(\lambda) = I+\mathcal O\left(\lambda^{-1}\right).
  \end{equation*}
\end{itemize}
\end{problem}

\begin{figure}[tbh]
\begin{minipage}{0.4\textwidth} 
\begin{center}
\resizebox{1.1\textwidth}{!}{\includegraphics{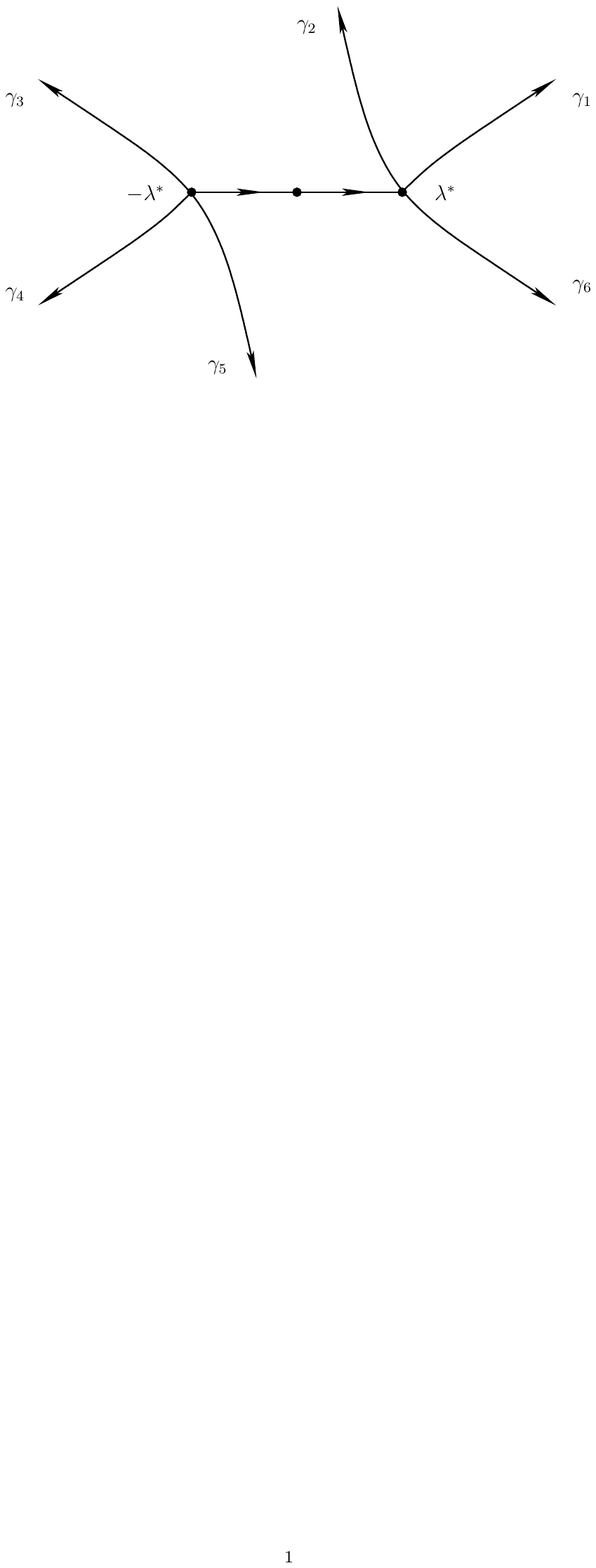}}
\caption{Deformation of original jump contours, $Y(\lambda)\mapsto Z(\lambda)$. The three dots represent the points $\lambda=-\lambda^{\ast},0,\lambda^{\ast}$.}
\label{figure2}
\end{center}
\end{minipage}
\begin{minipage}{0.5\textwidth}
\begin{center}
\resizebox{0.85\textwidth}{!}{\includegraphics{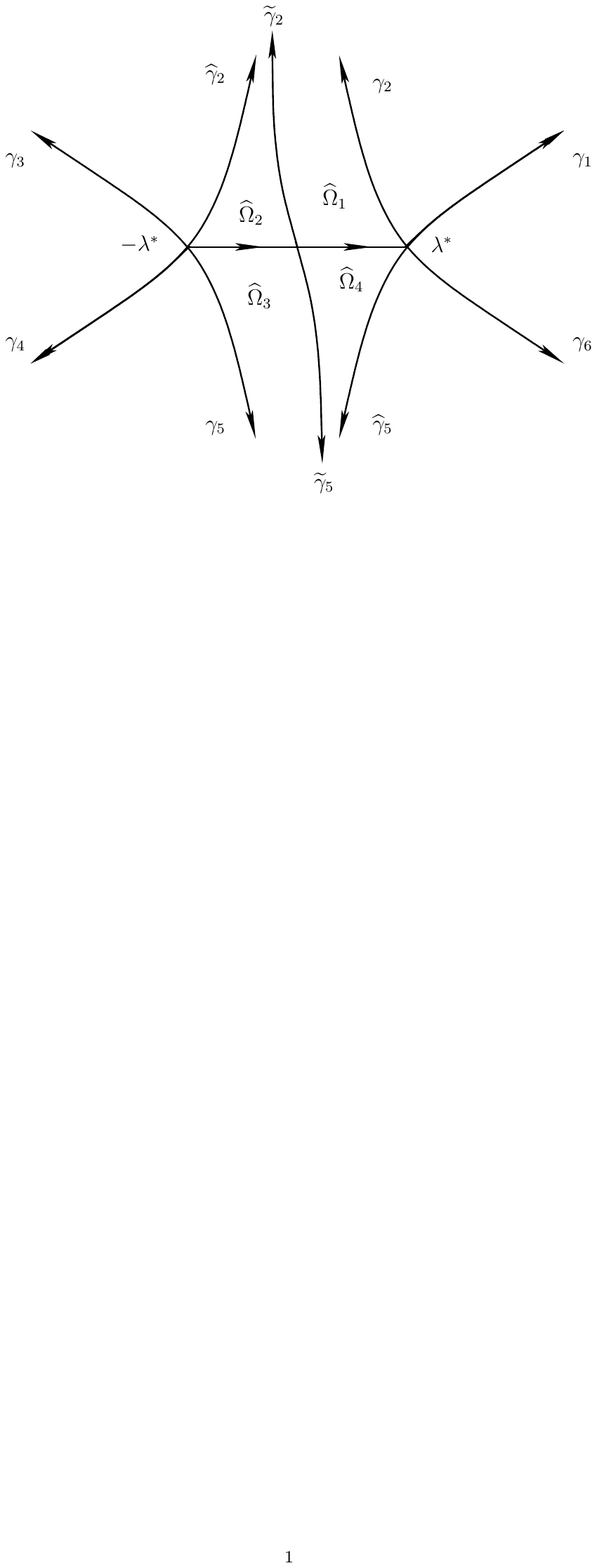}}
\caption{First opening of lens, $Z(\lambda)\mapsto T(\lambda)$.}
\label{figure4}
\end{center}
\end{minipage}
\end{figure}
Next, referring to the domains $\widehat{\Omega}_j,j=1,\ldots,4$ shown in Figure \ref{figure4}, we set
\begin{equation}\label{Tfunc}
  T(\lambda) = Z(\lambda)\begin{cases}
                e^{-t\vartheta(\lambda)\sigma_3}S_2^{-1}e^{t\vartheta(\lambda)\sigma_3},&\lambda\in\widehat{\Omega}_1\\
e^{-t\vartheta(\lambda)\sigma_3}S_4^{-1}e^{t\vartheta(\lambda)\sigma_3},&\lambda\in\widehat{\Omega}_2\\
e^{-t\vartheta(\lambda)\sigma_3}S_5^{-1}e^{t\vartheta(\lambda)\sigma_3},&\lambda\in\widehat{\Omega}_3\\
e^{-t\vartheta(\lambda)\sigma_3}S_1^{-1}e^{t\vartheta(\lambda)\sigma_3},&\lambda\in\widehat{\Omega}_4\\
I,&\textnormal{else}.
               \end{cases}
\end{equation}
and are lead to a RHP for $T(\lambda)$ with jumps (compare Figure \ref{figure4} for the jump contours)
\begin{align*}
  T_+(\lambda)&=T_-(\lambda)e^{-t\vartheta(\lambda)\sigma_3}S_ke^{t\vartheta(\lambda)\sigma_3},\ \ \lambda\in\gamma_k,\ k=1,3,4,6\\
  T_+(\lambda)&=T_-(\lambda)e^{-t\vartheta(\lambda)\sigma_3}S_4e^{t\vartheta(\lambda)\sigma_3}, \ \ \lambda\in\widehat{\gamma}_2;\hspace{1cm}
  T_+(\lambda)=T_-(\lambda)e^{-t\vartheta(\lambda)\sigma_3}S_1e^{t\vartheta(\lambda)\sigma_3}, \ \ \lambda\in\widehat{\gamma}_5\\
\end{align*}
and
\begin{align*}
  T_+(\lambda)&=T_-(\lambda)e^{-t\vartheta(\lambda)\sigma_3}S_2S_4^{-1}e^{t\vartheta(\lambda)\sigma_3},\ \ \lambda\in\widetilde{\gamma}_2;\hspace{0.75cm}
  T_+(\lambda)=T_-(\lambda)e^{-t\vartheta(\lambda)\sigma_3}\sigma_2 S_2S_4^{-1}\sigma_2e^{t\vartheta(\lambda)\sigma_3},\ \ \lambda\in\widetilde{\gamma}_5\\
  T_+(\lambda)&=T_-(\lambda)e^{-t\vartheta(\lambda)\sigma_3}S_4^{-1}S_3^{-1}S_4^{-1}e^{t\vartheta(\lambda)\sigma_3};\hspace{1.25cm}
  T_+(\lambda)=T_-(\lambda)e^{-t\vartheta(\lambda)\sigma_3}\sigma_2S_4S_3S_4\sigma_2e^{t\vartheta(\lambda)\sigma_3},
\end{align*}
for $\lambda\in[-\lambda^{\ast},0]$ and $\lambda\in[0,\lambda^{\ast}]$ in the last two cases. Notice that 
\begin{equation}\label{st}
  S_4^{-1}S_3^{-1}S_4^{-1} = \begin{pmatrix}
                              1-s_1s_3 & s_1+s_1(1-s_1s_3)\\
-s_3 & 1-s_1s_3\\
                             \end{pmatrix},\ \ 
\sigma_2 S_4S_3S_4\sigma_2 =\begin{pmatrix}
                               1-s_1s_3 & -s_3\\
s_1+s_1(1-s_1s_3) & 1-s_1s_3\\
                              \end{pmatrix}.
\end{equation}
The presence of the factors $1-s_1s_3=1-|s_1|^2$ on the diagonal motivates the use of a $g$-function in order to address the transition asymptotics $x\rightarrow-\infty,|s_1|\rightarrow1$. Subsequently we shall first work out the necessary details in case of the regular transition.

\section{Regular transition analysis for $\varkappa\in\big[\delta,\frac{2}{3}\sqrt{2}-\delta\big]$ with $0<\delta<\frac{1}{3}\sqrt{2}$ fixed}\label{sec:reg}
 We introduce the double scaling parameter
\begin{equation}\label{dbpara}
  \varkappa =\frac{v}{t}\equiv -\frac{\ln\big(1-|s_1|^2\big)}{(-x)^{\frac{3}{2}}} \equiv -\frac{2\pi\beta}{t}\in(0,\infty);\hspace{0.75cm}t=(-x)^{\frac{3}{2}},\ \ \beta=\frac{1}{2\pi}\ln\big(1-|s_1|^2\big)
\end{equation}
and keep $\varkappa\in[\delta,\frac{2}{3}\sqrt{2}-\delta]$ with $0<\delta<\frac{1}{3}\sqrt{2}$ fixed throughout. Based on the cyclic constraints \eqref{cyclic} the parameter $\varkappa$ allows us to derive expansions of the Stokes multipliers in terms of $\varkappa$ which are used later on.
\subsection{Preliminary expansions}
 First observe from \eqref{cyclic} and \eqref{dbpara} the exact identity $\Re(s_1)=\frac{s_2}{2}e^{-\varkappa t}$. Second, by definition \eqref{dbpara},
\begin{equation*}
	|s_1| = 1-\frac{1}{2}e^{-\varkappa t}+\mathcal{O}\left(e^{-2\varkappa t}\right),\ \ t\rightarrow\infty;\ \ \ \ \textnormal{arg}(s_1)=\epsilon\arccos\left(\frac{\Re(s_1)}{|s_1|}\right),\ \ \ \arccos:[-1,1]\rightarrow[0,\pi],
\end{equation*}	
and we obtain for $s_1=|s_1|e^{\im\textnormal{arg}(s_1)}$,
\begin{prop} As $t\rightarrow\infty,|s_1|\uparrow 1$ with $e^{\im\frac{\pi}{2}\epsilon}=\im\epsilon$,
\begin{equation}\label{approx:3}
	s_1=\im\epsilon\left(1-\frac{1}{2}(1+\im\epsilon s_2)e^{-\varkappa t}+\mathcal{O}\left(e^{-2\varkappa t}\right)\right),\ \ s_3=-\im\epsilon\left(1-\frac{1}{2}(1-\im\epsilon s_2)e^{-\varkappa t}+\mathcal{O}\left(e^{-2\varkappa t}\right)\right),
\end{equation}
and
\begin{equation}\label{approx:4}
	s_1+s_1(1-s_1s_3)=\im\epsilon\left(1+\frac{1}{2}(1-\im\epsilon s_2)e^{-\varkappa t}+\mathcal{O}\left(e^{-2\varkappa t}\right)\right),
\end{equation}
uniformly for $\varkappa\in[\delta,\frac{2}{3}\sqrt{2}-\delta]$ with $0<\delta<\frac{1}{3}\sqrt{2}$ fixed.
\end{prop}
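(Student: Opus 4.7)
The plan is to derive \eqref{approx:3} and \eqref{approx:4} by a short sequence of elementary Taylor expansions, starting from two exact identities that follow from \eqref{cyclic} and the definition \eqref{dbpara} of $\varkappa$. First, substituting $s_3=\bar s_1$ and $s_2\in\mathbb{R}$ into the cyclic relation $s_1-s_2+s_3+s_1s_2s_3=0$ gives $2\Re(s_1)=s_2(1-|s_1|^2)$, so using $1-|s_1|^2=e^{-v}=e^{-\varkappa t}$ one obtains the exact identity $\Re(s_1)=\frac{s_2}{2}\,e^{-\varkappa t}$ already noted in the text. Similarly $|s_1|^2=1-e^{-\varkappa t}$ gives the expansion
\begin{equation*}
	|s_1|=\sqrt{1-e^{-\varkappa t}}=1-\tfrac{1}{2}e^{-\varkappa t}+\mathcal O\left(e^{-2\varkappa t}\right),
\end{equation*}
uniformly for $\varkappa\in[\delta,\frac{2}{3}\sqrt{2}-\delta]$, since $\varkappa t\geq\delta t\to\infty$.

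Next I would extract $\textnormal{arg}(s_1)=\epsilon\arccos(\Re(s_1)/|s_1|)$. Combining the two displays above yields
\begin{equation*}
	\frac{\Re(s_1)}{|s_1|}=\frac{s_2}{2}\,e^{-\varkappa t}\bigl(1+\mathcal O(e^{-\varkappa t})\bigr),
\end{equation*}
which tends to zero uniformly on the relevant range (for $s_2$ in any bounded set, say). Using the Taylor expansion $\arccos(y)=\frac{\pi}{2}-y+\mathcal O(y^3)$ near $y=0$, I get $\textnormal{arg}(s_1)=\epsilon\bigl(\frac{\pi}{2}-\frac{s_2}{2}e^{-\varkappa t}+\mathcal O(e^{-2\varkappa t})\bigr)$, whence
\begin{equation*}
	e^{\im\textnormal{arg}(s_1)}=e^{\im\epsilon\pi/2}\bigl(1-\im\epsilon\tfrac{s_2}{2}e^{-\varkappa t}+\mathcal O(e^{-2\varkappa t})\bigr)=\im\epsilon\bigl(1-\im\epsilon\tfrac{s_2}{2}e^{-\varkappa t}+\mathcal O(e^{-2\varkappa t})\bigr).
\end{equation*}
Multiplying this by the expansion of $|s_1|$ then assembles the formula for $s_1$ in \eqref{approx:3}; the formula for $s_3$ follows by conjugation since $s_3=\bar s_1$.

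Finally, to prove \eqref{approx:4} I would exploit the algebraic identity $s_1 s_3=|s_1|^2=1-e^{-\varkappa t}$, which simplifies the expression of interest to
\begin{equation*}
	s_1+s_1(1-s_1s_3)=s_1\bigl(2-s_1s_3\bigr)=s_1\bigl(1+e^{-\varkappa t}\bigr).
\end{equation*}
Plugging the just-derived expansion of $s_1$ into the right-hand side and combining the two $e^{-\varkappa t}$ contributions yields exactly \eqref{approx:4}. No genuine obstacle arises: the entire argument is bookkeeping of two Taylor expansions, and the claimed uniformity in $\varkappa\in[\delta,\frac{2}{3}\sqrt{2}-\delta]$ is automatic because the only small parameter is $e^{-\varkappa t}\leq e^{-\delta t}$, and the constants in the remainders depend only on a bound for $s_2$ and on $\delta$.
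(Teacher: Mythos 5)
Your proof is correct and follows essentially the same route as the paper: extract the exact identity $\Re(s_1)=\tfrac{s_2}{2}e^{-\varkappa t}$ from the cyclic constraint, Taylor-expand $|s_1|$ and $\textnormal{arg}(s_1)$ via $1-|s_1|^2=e^{-\varkappa t}$ and $\arccos$, and assemble $s_1=|s_1|e^{\im\textnormal{arg}(s_1)}$. Your clean use of $s_1s_3=|s_1|^2$ to reduce \eqref{approx:4} to $s_1(1+e^{-\varkappa t})$ is exactly the simplification the paper has in mind.
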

Let us now move ahead and introduce one of the key ingredients in the nonlinear steepest descent analysis, the $g$-function.
\subsection{Introduction of g-function} Set $J=(-1,-\k)\cup(1,\k)\subset\mathbb{R}$ where $\k\in(0,1)$ is determined implicitly in \eqref{inteq:1},
\begin{equation}\label{inteq:2}
	\varkappa=\frac{2}{3}\sqrt{\frac{2}{1+\k^2}}\left[E'-\frac{2\k^2}{1+\k^2}K'\right] = \left(\frac{2}{1+\k^2}\right)^{\frac{3}{2}}\int_{\k}^1\sqrt{(1-\mu^2)(\mu^2-\k^2)}\,\d\mu.
\end{equation}
The following Proposition addresses the bijective correspondence between $\varkappa\in(0,\frac{2}{3}\sqrt{2})$ and $\k\in(0,1)$.
\begin{prop}\label{modex} The modulus $\k$ is uniquely determined via \eqref{inteq:2} in case $\varkappa\in(0,\frac{2}{3}\sqrt{2})$. Moreover, as $\varkappa\downarrow 0$,
\begin{equation}\label{I:1}
	\k=1-2\sqrt{\frac{\varkappa}{\pi}}+\frac{2\varkappa}{\pi}-\frac{29}{8}\left(\frac{\varkappa}{\pi}\right)^{\frac{3}{2}}+\mathcal{O}\left(\varkappa^2\right),
\end{equation}
and, as $\varkappa\uparrow\frac{2}{3}\sqrt{2}$, with $\sigma=\sigma(\varkappa)=\frac{2}{3}\sqrt{2}-\varkappa\downarrow 0$,
\begin{equation}\label{I:2}
	\k=\sqrt{\frac{\sqrt{2}\,\sigma}{|\ln\sigma|}}\left(1+\frac{\ln|\ln\sigma|}{2\ln\sigma}+\frac{2+7\ln 2}{4\ln\sigma}+\mathcal{O}\left(\frac{\ln|\ln\sigma|}{\ln^2\sigma}\right)\right).
\end{equation}
\end{prop}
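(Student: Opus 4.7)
The plan is to define the auxiliary function
$$F(\k):=\frac{2}{3}\sqrt{\frac{2}{1+\k^2}}\left[E(\k')-\frac{2\k^2}{1+\k^2}K(\k')\right],\qquad \k\in(0,1),\ \k'=\sqrt{1-\k^2},$$
so that \eqref{inteq:2} is exactly $\varkappa=F(\k)$, and to prove that $F$ is a strictly decreasing bijection of $(0,1)$ onto $(0,\frac{2}{3}\sqrt{2})$ with precise expansions at both endpoints.

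For the bijectivity I would first read off the boundary values directly: as $\k\downarrow 0$ one has $K',E'\to\pi/2$ and $\frac{2\k^2}{1+\k^2}\to 0$, so $F(\k)\to\frac{2\sqrt{2}}{3}$, while as $\k\uparrow 1$ both $K',E'\to\pi/2$ and the bracket collapses, so $F(\k)\to 0$. Strict monotonicity is the substantive part: using the classical derivative formulas $\frac{dK}{d\k}=\frac{E-\k'^2K}{\k\k'^2}$, $\frac{dE}{d\k}=\frac{E-K}{\k}$ together with $\frac{d\k'}{d\k}=-\k/\k'$, a direct (if tedious) computation produces, after visible cancellations, the clean formula
$$F'(\k)=-\frac{2\sqrt{2}\,\k\k'^2 K(\k')}{(1+\k^2)^{5/2}},$$
which is manifestly negative on $(0,1)$. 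Combined with continuity this gives unique solvability of \eqref{inteq:2} for every $\varkappa\in(0,\frac{2}{3}\sqrt{2})$.

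For the expansion \eqref{I:1}, I would set $u=\k'^2=1-\k^2\downarrow 0$ and substitute the standard hypergeometric expansions for $K(\k')$ and $E(\k')$ around $\k'=0$ along with $(2-u)^{-1/2}=\frac{1}{\sqrt{2}}(1+\tfrac{u}{4}+\tfrac{3u^2}{32}+\cdots)$ and $\frac{2(1-u)}{2-u}=1-\sum_{n\geq 1}u^n/2^n$. The $u^0$ and $u^1$ coefficients in $E(\k')-\frac{2\k^2}{1+\k^2}K(\k')$ cancel and one obtains
$$\varkappa=F(\k)=\frac{\pi u^2}{16}\left(1+u+\frac{99}{128}u^2+O(u^3)\right).$$
Inverting this by bootstrap with $w:=\sqrt{\varkappa/\pi}$ produces $u=4w-8w^2+\tfrac{61}{4}w^3+O(w^4)$, and a direct power-series expansion of $\k=\sqrt{1-u}$ then yields \eqref{I:1}.

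The expansion \eqref{I:2} is the main obstacle since $\k'\uparrow 1$ forces one to use the logarithmic expansions
$$K(\k')=\ln\frac{4}{\k}+\left(\ln\frac{4}{\k}-1\right)\frac{\k^2}{4}+O(\k^4\ln\k),\quad E(\k')=1+\frac{1}{2}\left(\ln\frac{4}{\k}-\frac{1}{2}\right)\k^2+O(\k^4\ln\k).$$
Substituting and retaining all terms up to $O(\k^4\ln\k)$ turns \eqref{inteq:2} into the transcendental relation $\sigma=\sqrt{2}\,\k^2\bigl[\ln\frac{4}{\k}+\tfrac{1}{2}\bigr]+O(\k^4\ln\k)$. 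The difficulty is that $\k$ enters both polynomially and through a logarithm, so a straightforward series inversion is not available. My plan is to use the ansatz $\k=\sqrt{\sqrt{2}\,\sigma/\mathcal{L}}\,R$ with $\mathcal{L}=|\ln\sigma|$, anticipating $R\to 1$; this converts the relation into
$$1=R^2\left(1+\frac{1}{\mathcal{L}}\Big[\ln\mathcal{L}+\tfrac{7\ln 2}{2}+1-2\ln R\Big]+O(\mathcal{L}^{-1})\right).$$
Since a first iteration yields $\ln R=O(\mathcal{L}^{-1}\ln\mathcal{L})$, a single bootstrap gives $R=1-\tfrac{\ln\mathcal{L}}{2\mathcal{L}}-\tfrac{2+7\ln 2}{4\mathcal{L}}+O(\mathcal{L}^{-2}\ln\mathcal{L})$. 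Rewriting $\mathcal{L}=-\ln\sigma$ and $\ln\mathcal{L}=\ln|\ln\sigma|$ reproduces \eqref{I:2} exactly. The bookkeeping of error terms in this nested-logarithm inversion is the only real subtlety, but because no further iterations contribute before the stated $O$-term, one bootstrap step suffices.
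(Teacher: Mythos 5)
Your proof is correct and follows essentially the same strategy as the paper: establish strict monotonicity of $\k\mapsto\mathrm{I}(\k)$ on $(0,1)$ together with the boundary values $\frac{2}{3}\sqrt{2}$ and $0$ to get bijectivity, then expand $\mathrm{I}(\k)$ near $\k=1$ and $\k=0$ and invert. The only cosmetic difference is that the paper expresses $\mathrm{I}'(\k)$ as a manifestly positive sum of two elliptic integrals over $[\k,1]$ whereas you derive the closed form $F'(\k)=-2\sqrt{2}\,\k\k'^{2}K(\k')/(1+\k^2)^{5/2}$; both give the needed sign-definiteness, and your intermediate expansions (in particular $\varkappa=\tfrac{\pi u^2}{16}(1+u+\tfrac{99}{128}u^2+\cdots)$ with $u=\k'^2$, and $\sigma=\sqrt{2}\,\k^2[\ln\tfrac{4}{\k}+\tfrac12]+\mathcal{O}(\k^4\ln\k)$) are equivalent, after the change $u=(1-\k)(1+\k)$, to the paper's \eqref{I:k2} and \eqref{I:k1}.
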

\begin{proof} Consider the function
\begin{equation*}
	\mathrm{F}(\k,\varkappa)=\varkappa-\mathrm{I}(\k),\ \ \k\in[0,1],
\end{equation*}
where
\begin{equation*}
	\mathrm{I}(\k)=\left(\frac{2}{1+\k^2}\right)^{\frac{3}{2}}\int_{\k}^1\sqrt{(1-\mu^2)(\mu^2-\k^2)}\,\d\mu.
\end{equation*}
With the help of standard expansions for the complete elliptic integrals as $\k\downarrow 0$ and $\k\uparrow 1$ (compare Appendix \ref{appa}), we get
\begin{eqnarray}
	\mathrm{I}(\k)&=&\frac{2}{3}\sqrt{2}\left[1-\frac{3}{2}\k^2|\ln \k|-\frac{3}{4}(1+4\ln 2)\k^2+\frac{39}{16}\k^4|\ln \k|+\mathcal{O}\left(\k^4\right)\right],\ \ \ \k\downarrow 0,\label{I:k1}\\
	\mathrm{I}(\k)&=&\frac{\pi}{4}(1-\k)^2\left[1+(1-\k)+\frac{11}{32}(1-\k)^2+\mathcal{O}\left((1-\k)^3\right)\right],\ \ \ \k\uparrow 1.\label{I:k2}
\end{eqnarray}
Thus
\begin{equation*}
	\lim_{\k\downarrow 0}\mathrm{F}(\k,\varkappa)=\varkappa-\frac{2}{3}\sqrt{2}<0,\hspace{1cm}\lim_{\k\uparrow 1}\mathrm{F}(\k,\varkappa)=\varkappa>0,
\end{equation*}
but $\mathrm{F}(\k,\varkappa)$ is real analytic in a neighborhood of $(\k\in(0,1],\varkappa\in[0,\infty))$ with the first partial derivatives equal to
\begin{eqnarray*}
	\mathrm{F}_{\k}(\k,\varkappa)&=&\k\left(\frac{2}{1+\k^2}\right)^{\frac{3}{2}}\left[\frac{3}{1+\k^2}\int_{\k}^1\sqrt{(1-\mu^2)(\mu^2-\k^2)}\,\d\mu+\int_{\k}^1\sqrt{\frac{1-\mu^2}{\mu^2-\k^2}}\,\d\mu\right]>0,\\
	\mathrm{F}_{\varkappa}(\k,\varkappa)&=&1.
\end{eqnarray*}
Hence the implicit function theorem guarantees existence of a unique real analytic solution $\k=\k(\varkappa)$ of the equation $\mathrm{F}(\k,\varkappa)=0$ near the point $(\k,\varkappa)$. We use \eqref{I:k2} and \eqref{I:k1} to derive \eqref{I:1} and \eqref{I:2}.
\end{proof}
Besides the expansions \eqref{I:1} and \eqref{I:2} for $\k=\k(\varkappa)$ itself, we will later on also require expansions of the frequency $V=V(\varkappa)$ and module $\tau=\tau(\varkappa)$ introduced in \eqref{Vfancy}. These follow directly from \eqref{I:1} and \eqref{I:2} and are summarized in the Corollary below.
\begin{cor}\label{cor1} As $\varkappa\downarrow 0$,
\begin{equation}\label{l:1}
	V(\varkappa)=-\frac{2}{3\pi}-\frac{\varkappa}{2\pi^2}\ln\varkappa+\frac{\varkappa}{2\pi^2}(1+\ln 16\pi)+\mathcal{O}\left(\varkappa^2\right),\ \ \ 
	\tau(\varkappa)=-\frac{\im}{\pi}\ln\left(\frac{\varkappa}{16\pi}\right)-\frac{17\im}{8\pi}\frac{\varkappa}{\pi}+\mathcal{O}\big(\varkappa^{\frac{3}{2}}\big).
\end{equation}
Secondly, as $\sigma=\frac{2}{3}\sqrt{2}-\varkappa\downarrow 0$,
\begin{equation}\label{l:2}
	V(\varkappa)=-\frac{\sigma}{|\ln\sigma|}\left(1+\frac{\ln|\ln\sigma|}{\ln\sigma}+\frac{2+7\ln 2}{2\ln\sigma}+\mathcal{O}\left(\frac{\ln^2|\ln\sigma|}{\ln^2\sigma}\right)\right),\ \ \ \ \ \ 
\end{equation}
and in the same limit also
\begin{equation}\label{l:3}
	\tau'\equiv-\frac{1}{\tau(\varkappa)}=-\frac{|\ln\sigma|}{2\pi\im}\left(1-\frac{\ln|\ln\sigma|}{\ln\sigma}-\frac{7\ln 2}{2\ln\sigma}+\frac{\ln|\ln\sigma|}{\ln^2\sigma}+\mathcal{O}\left(\frac{1}{\ln^2\sigma}\right)\right).
\end{equation}
\end{cor}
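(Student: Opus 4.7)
The plan is to substitute the asymptotic expansions of $\k=\k(\varkappa)$ supplied by Proposition \ref{modex} directly into the defining formulas \eqref{Vfancy} for $V(\varkappa)$ and $\tau(\varkappa)$, and then apply the standard series expansions of the complete elliptic integrals $K(\k)$, $E(\k)$, $K'(\k)=K(\k')$ collected in Appendix \ref{appa}. Since $\tau'=-1/\tau=\im K'/(2K)$, the expansion of $\tau'$ in \eqref{l:3} follows from the expansions of $K$ and $K'$ by algebraic inversion.

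For the limit $\varkappa\downarrow 0$ we have $\k\uparrow 1$, $\k'\downarrow 0$. From \eqref{I:1},
\[
\k'^2=(1-\k)(1+\k)=4\sqrt{\varkappa/\pi}\,\bigl(1-2\sqrt{\varkappa/\pi}+O(\varkappa)\bigr),\qquad \ln(4/\k')=\tfrac{1}{4}\ln(16\pi/\varkappa)+\sqrt{\varkappa/\pi}+O(\varkappa).
\]
I would then insert the near-unit expansions of $K$ and $E$ (whose coefficients are polynomials in $L=\ln(4/\k')$) and the Taylor expansion $K'=\tfrac{\pi}{2}(1+\tfrac{\k'^2}{4}+\tfrac{9\k'^4}{64}+O(\k'^6))$. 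A short computation shows that the leading logarithmic contributions in $E$ and in $\tfrac{1-\k^2}{1+\k^2}K$ cancel, producing $E-\tfrac{1-\k^2}{1+\k^2}K=1-\tfrac{\k'^2}{4}-\tfrac{3L}{16}\k'^4-\tfrac{5}{64}\k'^4+O(\k'^6\ln)$; combining with $\sqrt{2/(1+\k^2)}=1+\tfrac{\k'^2}{4}+\tfrac{3\k'^4}{32}+O(\k'^6)$ and re-expanding in $\varkappa$ produces the formula for $V$ in \eqref{l:1}. For $\tau$ one forms the quotient $K/K'$, which equals $\tfrac{2L}{\pi}-\tfrac{\k'^2}{2\pi}+O(\k'^4\ln)$, and substitutes the expansions of $L$ and $\k'^2$ in $\varkappa$; the half-integer $\sqrt{\varkappa}$-terms cancel (as required by real-analyticity in $\k$), leaving the formula for $\tau$ in \eqref{l:1}.

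For the limit $\sigma=\tfrac{2}{3}\sqrt 2-\varkappa\downarrow 0$ we have $\k\downarrow 0$, $\k'\uparrow 1$. Here one uses the Taylor series $K=\tfrac{\pi}{2}(1+\tfrac14\k^2+O(\k^4))$ and $E=\tfrac{\pi}{2}(1-\tfrac14\k^2+O(\k^4))$ together with the near-unit expansion $K'=\ln(4/\k)+\tfrac{\k^2}{4}(\ln(4/\k)-1)+O(\k^4\ln)$. A direct computation yields
\[
E-\frac{1-\k^2}{1+\k^2}K=\frac{3\pi}{4}\k^2+O(\k^4),
\]
so $V=-\tfrac{1}{\sqrt 2}\k^2(1+O(\k^2))$. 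Substituting \eqref{I:2} gives \eqref{l:2}. For $\tau'=\im K'/(2K)$ one combines $2K=\pi(1+O(\k^2))$ with $K'=\ln(4/\k)(1+O(\k^2\ln))$ and the expansion $\ln(4/\k)=\tfrac12|\ln\sigma|+\tfrac12\ln|\ln\sigma|+\tfrac74\ln 2+O(\ln|\ln\sigma|/\ln\sigma)$ that follows from \eqref{I:2}, which yields \eqref{l:3}.

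The main obstacle is pure bookkeeping. In the first limit, the half-integer powers of $\varkappa$ in $\k'^2$ interact non-trivially with the logarithmic coefficients in the expansions of $K$ and $E$, and one must rely on two delicate cancellations (of $\sqrt{\varkappa}$ and $\sqrt{\varkappa}\ln\varkappa$ contributions, forced by real-analyticity of $V$ and $\tau$ in $\k$) in order to isolate the $\varkappa\ln\varkappa$ and $\varkappa$ coefficients. In the second limit, the iterated logarithms $\ln|\ln\sigma|$ appearing in \eqref{I:2} must be propagated through several sub-leading orders of the elliptic-integral series to pin down the constants $(2+7\ln 2)/2$ in \eqref{l:2} and $7\ln 2/2$ in \eqref{l:3}.
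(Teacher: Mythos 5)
Your proposal is correct and follows the same route the paper itself indicates (the text just before the corollary says the expansions ``follow directly from \eqref{I:1} and \eqref{I:2}''), namely: substitute the $\k(\varkappa)$ expansions from Proposition~\ref{modex} into the elliptic-integral formulas \eqref{Vfancy} for $V$ and $\tau$ and use the series of Appendix~\ref{appa}. Your intermediate formulas for $\k'^2$, $\ln(4/\k')$, $E-\tfrac{1-\k^2}{1+\k^2}K$ and $\sqrt{2/(1+\k^2)}$ check out.

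One bookkeeping remark: the truncations you quote are one order short of what is needed to pin down the stated coefficients. For $\tau$ as $\varkappa\downarrow 0$ you write $K/K'=\tfrac{2L}{\pi}-\tfrac{\k'^2}{2\pi}+\mathcal O(\k'^4\ln)$; since $\k'^4=\mathcal O(\varkappa)$, this error already swallows the $-\tfrac{17}{8\pi^2}\varkappa$ term, so you must carry the $\k'^4$ coefficient. A useful simplification is that the $L\k'^4$ contribution cancels, leaving $K/K'=\tfrac{2}{\pi}\bigl[L-\tfrac{1}{4}\k'^2-\tfrac{13}{128}\k'^4\bigr]+\mathcal O(\k'^6\ln)$, after which the $-\tfrac{17}{8\pi^2}\varkappa$ coefficient emerges. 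Similarly, near $\sigma=0$ your $\ln(4/\k)=\tfrac12|\ln\sigma|+\tfrac12\ln|\ln\sigma|+\tfrac74\ln 2+\mathcal O(\ln|\ln\sigma|/\ln\sigma)$ does not resolve the explicit $+\tfrac{\ln|\ln\sigma|}{\ln^2\sigma}$ term of \eqref{l:3}; you need the $-\tfrac{\ln|\ln\sigma|}{2\ln\sigma}$ contribution that comes from the bracket in \eqref{I:2}. Finally, the appeal to ``real-analyticity in $\k$'' to explain the cancellation of the $\sqrt{\varkappa}$-terms in $\tau$ is not quite right as stated ($K(\k)/K(\k')$ is singular at $\k=1$, and $\k'^2$ itself is a half-power of $\varkappa$); the cancellation is real but should simply be exhibited by the direct computation, as you do when you combine the $\sqrt{\varkappa/\pi}$ term from $L$ with that from $\tfrac{1}{4}\k'^2$.
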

We now define the $g$-function,
\begin{equation}\label{gf:1}
	g(z)=4\im\int_M^z\big((\mu^2-M^2)(\mu^2-m^2)\big)^{\frac{1}{2}}\,\d\mu,\ \ \ \ z\in\mathbb{C}\backslash[-M,M]
\end{equation}
where
\begin{equation}\label{b:point}
	0<m=\frac{1}{\sqrt{2}}\frac{\k}{\sqrt{1+\k^2}}<\frac{1}{2}<M=\frac{1}{\sqrt{2}}\frac{1}{\sqrt{1+\k^2}}<\frac{1}{\sqrt{2}}
\end{equation}
and the contour of integration is chosen in the simply connected domain $\mathbb{CP}^1\backslash[-M,M]$. Moreover, we fix
\begin{equation*}
	-\pi<\textnormal{arg}\,\left(\big(\mu^2-M^2\big)\big(\mu^2-m^2\big)\right)\leq\pi\ \ \ \ \textnormal{such that}\ \ \ \sqrt{(\mu^2-M^2)(\mu^2-m^2)}>0\ \ \textnormal{for}\ \ \mu>M,
\end{equation*}
and thus, $g(z)$ is single-valued and analytic in $\mathbb{CP}^1\backslash[-M,M]$. As $z$ tends to infinity
\begin{equation*}
  g(z) = \vartheta(z)+\ell+\frac{\im}{8z}\left(\frac{1-\k^2}{1+\k^2}\right)^2+\mathcal O\left(z^{-3}\right),\hspace{0.5cm}z\rightarrow\infty,
\end{equation*}
where
\begin{equation}\label{gf:2}
  \ell=-\vartheta(M)+4\im\int_M^{\infty}\left[\sqrt{(\mu^2-M^2)(\mu^2-m^2)}-\mu^2+\frac{1}{4}\right]\d\mu.
\end{equation}
Further steps in the analysis require certain analytical properties of $g(z)$.
\begin{prop}\label{gprop} For $z\in\mathbb{R}$ introduce
\begin{equation*}
  \Omega(z) = \im\big(g_+(z)+g_-(z)\big),\ \ \ \textnormal{and}\ \ \ \Pi(z) = g_+(z)-g_-(z),\hspace{0.5cm}\textnormal{with}\ \ \ g_{\pm}(z)=\lim_{\varepsilon\downarrow 0}g(z\pm\im\varepsilon).
\end{equation*}
The functions $\Omega(z)$ and $\Pi(z)$ are real-valued on the real line, in fact
\begin{eqnarray*}
  \Omega(z) &=&-8\int_M^z\sqrt{\big(\mu^2-M^2\big)\big(\mu^2-m^2\big)}\,\d\mu,\hspace{0.25cm}z\in(M,+\infty),\hspace{0.75cm} \Omega(z) = 0,\hspace{0.25cm}z\in(m,M),\\
  \Omega(z) &=& -8\int_z^m\sqrt{\big(M^2-\mu^2\big)\big(m^2-\mu^2\big)}\,\d\mu,\hspace{0.25cm}z\in(-m,m),\\
  \Omega(z) &=& -8\int_{-m}^m\sqrt{\big(M^2-\mu^2\big)\big(m^2-\mu^2\big)}\,\d\mu\equiv 2\pi V(\varkappa),\hspace{0.25cm}z\in(-M,-m),\\
  \Omega(z) &=& -8\int_{-m}^m\sqrt{\big(M^2-\mu^2\big)\big(m^2-\mu^2)}\,\d\mu+8\int_z^{-M}\sqrt{\big(\mu^2-M^2\big)\big(\mu^2-m^2\big)}\,\d\mu,\hspace{0.25cm}z\in(-\infty,-M).
\end{eqnarray*}
Moreover
\begin{eqnarray*}
  \Pi(z) &=& 0,\hspace{0.25cm} z\in(-\infty,-M)\cup(M,+\infty),\hspace{0.5cm}\Pi(z) = 8\int_z^M\sqrt{\big(M^2-\mu^2\big)\big(\mu^2-m^2\big)}\,\d\mu,\hspace{0.25cm} z\in(m,M),\\
  \Pi(z) &=& 8\int_m^M\sqrt{\big(M^2-\mu^2\big)\big(\mu^2-m^2\big)}\,\d\mu\equiv\varkappa,\hspace{0.25cm} z\in(-m,m),\\
  \Pi(z) &=& \varkappa-8\int_z^{-m}\sqrt{\big(M^2-\mu^2\big)\big(\mu^2-m^2\big)}\,\d\mu,\hspace{0.25cm} z\in(-M,-m).
\end{eqnarray*}
\end{prop}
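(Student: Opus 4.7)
My plan is to proceed in two stages: first I would determine the boundary values
$R_\pm(\mu) = \lim_{\varepsilon \downarrow 0} \sqrt{(\mu^2-M^2)(\mu^2-m^2)}\big|_{\mu \pm \im\varepsilon}$
of the integrand of $g$ on each of the five subintervals $(-\infty,-M)$, $(-M,-m)$, $(-m,m)$, $(m,M)$, $(M,\infty)$, and then compute $g_\pm(z) = 4\im\int_M^z R_\pm(\mu)\,\d\mu$ along contours running just above (resp.\ below) the real axis. Starting from the normalization $R(\mu) = +\sqrt{(\mu^2-M^2)(\mu^2-m^2)} > 0$ for $\mu > M$, I would analytically continue $R$ through the upper half-plane by small semicircles around each of the four branch points $\pm M, \pm m$; each semicircle around a simple zero of the radicand contributes a factor $e^{\im\pi/2} = \im$ to the square root, so one finds in turn $R_+ = +\im\sqrt{(M^2-\mu^2)(\mu^2-m^2)}$ on $(m,M)$, $R_+ = -\sqrt{(M^2-\mu^2)(m^2-\mu^2)}$ on $(-m,m)$, $R_+ = -\im\sqrt{(M^2-\mu^2)(\mu^2-m^2)}$ on $(-M,-m)$, and $R = +\sqrt{(\mu^2-M^2)(\mu^2-m^2)} > 0$ on $(-\infty,-M)$. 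Continuation through the lower half-plane yields $R_- = \overline{R_+}$ on $(m,M)$ and $(-M,-m)$, while $R_- = R_+$ on $(-m,m)$---no jump is accumulated on the middle interval because the continuation there encircles an even number of branch points. The resulting Schwarz-type relation $R_- = \overline{R_+}$ automatically makes $\Omega = \im(g_++g_-)$ and $\Pi = g_+ - g_-$ real-valued.

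With the boundary values in hand, the formulas in the proposition will follow by splitting the defining integrals $4\im\int_M^z R_\pm\,\d\mu$ at the breakpoints $\pm m, \pm M$ encountered along the path. On $(M,\infty)$ and $(-\infty,-M)$ there is no jump, so $\Pi \equiv 0$ while $\Omega = 2\im g$ reduces directly to the stated expressions; in the second case one invokes the even symmetry $\int_{-M}^{-m}\sqrt{(\mu^2-M^2)(\mu^2-m^2)}\,\d\mu = \int_m^M\sqrt{(\mu^2-M^2)(\mu^2-m^2)}\,\d\mu$ to cancel the two ``end-interval'' imaginary contributions. On each of the three internal subintervals the purely imaginary terms in $R_\pm$ cancel in one of $g_+ \pm g_-$ and double in the other, yielding the asserted expressions after elementary telescoping. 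The two non-trivial constant identifications are $\Pi \equiv \varkappa$ on $(-m,m)$ and $\Omega \equiv 2\pi V(\varkappa)$ on $(-M,-m)$: the first I would verify by the rescaling $\mu = M\nu$ together with $m/M = \k$ and $M^3 = (2(1+\k^2))^{-3/2}$, which reduces $8\int_m^M\sqrt{(M^2-\mu^2)(\mu^2-m^2)}\,\d\mu$ exactly to the right-hand side of the defining equation~\eqref{inteq:2}; the second I would check by $\mu = m\sin\theta$ and the standard reduction of $\int_0^{\pi/2}\cos^2\theta\sqrt{1-\k^2\sin^2\theta}\,\d\theta$ to the complete elliptic integrals $E(\k)$ and $K(\k)$, matched against the definition of $V$ in~\eqref{Vfancy}.

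The principal obstacle will be the careful bookkeeping of the accumulated phases of $R$ as one crosses each of the four branch points; this is fixed once and for all by the initial normalization on $(M,\infty)$ and a continuous tracking argument, after which the remaining work reduces to mechanical splits of integrals and two routine elliptic substitutions.
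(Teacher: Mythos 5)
Your proposal is correct and is essentially the canonical computation that the paper leaves to the reader (Proposition~\ref{gprop} is stated without proof); the branch tracking, the splitting of the integral at the breakpoints, and the two elliptic identifications all check out. One small imprecision worth flagging: since $g$ is purely imaginary (not real) on $(M,\infty)$, the correct reflection identity is $g_-(z) = -\overline{g_+(z)}$ rather than a direct consequence of $R_- = \overline{R_+}$ alone---the extra $4\im$ prefactor matters---which then gives $\Omega = -2\Im g_+$ and $\Pi = 2\Re g_+$, both manifestly real, as you conclude.
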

\subsection{The g-function transformation}
We first go back to the RHP for $T(\lambda)$ as defined in \eqref{Tfunc} with jump contour $\Sigma_T$ shown in Figure \ref{figure4}. Now fix the endpoint
\begin{equation*}
 	\lambda^{\ast}=\frac{1}{\sqrt{2}}\frac{1}{\sqrt{1+\k^2}}=M
\end{equation*}	
and employ the following transformation,
\begin{equation}\label{gtraf:1}
  S(\lambda) = e^{-t\ell\sigma_3}T(\lambda)e^{t(g(\lambda)-\vartheta(\lambda))\sigma_3},\hspace{0.5cm}\lambda\in\mathbb{C}\backslash\Sigma_T,\hspace{0.25cm}
\Sigma_T=\left([-M,M]
  \cup\bigcup_{k=1}^6 \gamma_k\cup\hat{\gamma}_2\cup\hat{\gamma}_5\cup\tilde{\gamma}_2\cup\tilde{\gamma}_5\right),
\end{equation}
with $g=g(z)$ as in \eqref{gf:1} and $\ell$ in \eqref{gf:2}. Since all jumps in the $T$-RHP display the structure
\begin{equation*}
  T_+(\lambda) = T_-(\lambda) e^{-t\vartheta(\lambda)\sigma_3}G_T(\lambda)e^{t\vartheta(\lambda)\sigma_3},\hspace{0.5cm}\lambda\in\mathbb{C}\backslash\Sigma_T,
\end{equation*}
we are lead to the following RHP
\begin{problem} Determine the $2\times 2$ matrix valued function $S(\lambda)=S(\lambda;x,s)$ such that
\begin{itemize}
	\item $S(\lambda)$ is analytic for $\lambda\in\mathbb{C}\backslash\Sigma_S$ where the jump contour $\Sigma_S$ is identical to the contour $\Sigma_T$ shown in Figure \ref{figure4}.
	\item The function $S(\lambda)$ satisfies the jump condition
\begin{equation*}
  S_+(\lambda) = S_-(\lambda) \underbrace{e^{-tg_-(\lambda)\sigma_3}G_T(\lambda)e^{tg_+(\lambda)\sigma_3}}_{=G_S(\lambda)},\ \ \lambda\in\Sigma_S\equiv\Sigma_T.
\end{equation*}
	\item As $\lambda\rightarrow\infty$,
	\begin{equation*}
		S(\lambda)=I+\mathcal{O}\left(\lambda^{-1}\right).
	\end{equation*}
\end{itemize}
\end{problem}
Recall at this point \eqref{st}: for $\lambda\in(m,M)$ we have thus
\begin{equation*}
  S_+(\lambda) = S_-(\lambda)\begin{pmatrix}
                              e^{-t(\varkappa-\Pi(\lambda))} & -s_3\\
 s_1+s_1(1-s_1s_3)& (1-s_1s_3)e^{-t\Pi(\lambda)}\\
                             \end{pmatrix},\hspace{0.5cm}\lambda\in(m,M)
\end{equation*}
with $\varkappa$ from \eqref{dbpara}. But (compare Proposition \ref{gprop})
\begin{equation*}
  \varkappa-\Pi(\lambda) = 8\int_m^{\lambda}\sqrt{\big(M^2-\mu^2\big)\big(\mu^2-m^2\big)}\,\d\mu>0,\hspace{0.5cm}\lambda\in(m,M)
\end{equation*}
and $\Pi(\lambda)>0$ in the right slit. Hence, using also the expansions \eqref{approx:3},\eqref{approx:4},
\begin{equation}\label{ob1}
  G_S(\lambda)\begin{pmatrix}
  	0 & -e^{-\im\frac{\pi}{2}\epsilon}\\
	e^{\im\frac{\pi}{2}\epsilon} & 0\\
	\end{pmatrix}^{-1}\rightarrow I,\hspace{0.5cm} x\rightarrow-\infty,|s_1|\uparrow 1:\ \ \varkappa\in\left[\delta,\frac{2}{3}\sqrt{2}-\delta\right],\ \ 0<\delta<\frac{1}{3}\sqrt{2}
\end{equation}
uniformly in $\lambda$ chosen from any compact subset of the right slit $(m,M)$. In the left slit $(-M,-m)$ a similar situation occurs,
\begin{equation*}
  S_+(\lambda)=S_-(\lambda)\begin{pmatrix}
                            e^{-t(\varkappa-\Pi(\lambda))} & (s_1+s_1(1-s_1s_3))e^{\im t\Omega(\lambda)}\\
-s_3e^{-\im t\Omega(\lambda)} & (1-s_1s_3)e^{-t\Pi(\lambda)}
                           \end{pmatrix},\hspace{0.5cm}\lambda\in(-M,-m).
\end{equation*}
But here
\begin{equation*}
 \varkappa-\Pi(\lambda) = 8\int_{\lambda}^{-m}\sqrt{\big(M^2-\mu^2\big)\big(\mu^2-m^2\big)}\,\d\mu>0,\hspace{0.5cm}\lambda\in(-M,-m)
\end{equation*}
and also
\begin{equation*}
  \Pi(\lambda) = 8\int_{-\lambda}^M\sqrt{\big(M^2-\mu^2\big)\big(\mu^2-m^2\big)}\,\d\mu>0,\hspace{0.5cm}\lambda\in(-M,-m).
\end{equation*}
Hence combined with \eqref{approx:3},\eqref{approx:4},
\begin{equation}\label{ob4}
  G_S(\lambda)\begin{pmatrix}
               0 & e^{\im\frac{\pi}{2}\epsilon+\im t\Omega(\lambda)}\\
-e^{-\im\frac{\pi}{2}\epsilon-\im t\Omega(\lambda)} & 0\\
              \end{pmatrix}^{-1}\rightarrow I,\hspace{0.5cm} x\rightarrow-\infty,|s_1|\uparrow 1:\ \ \varkappa\in\left[\delta,\frac{2}{3}\sqrt{2}-\delta\right],\ 0<\delta<\frac{1}{3}\sqrt{2}
\end{equation}
uniformly in $\lambda$ chosen from any compact subset of the left slit $(-M,-m)$. Next, we consider the jump contours which extend to infinity, i.e. 
\begin{equation*}
  \Sigma_{S_{\infty}}=\hat{\gamma}_2\cup\hat{\gamma}_5\cup\tilde{\gamma}_2\cup\tilde{\gamma}_5\cup\bigcup\gamma_k.
\end{equation*}
Along these contours the jumps in the $S$-RHP are given by
\begin{equation*}
  S_+(\lambda)=S_-(\lambda) e^{-tg(\lambda)\sigma_3}G_T(\lambda)e^{tg(\lambda)\sigma_3},\hspace{0.5cm}\lambda\in\Sigma_{S_{\infty}}
\end{equation*}
Hence, by triangularity and the sign chart of $\Re(g(\lambda))$ (compare Figure \ref{figure6}), the jumps on the infinite contours approach the identity matrix exponentially fast in the limit
$x\rightarrow-\infty,|s_1|\uparrow 1$ with $\varkappa\in[\delta,\frac{2}{3}\sqrt{2}-\delta],\delta>0$, provided we stay away from the endpoints $\lambda=\pm m,\pm M$ and the origin $\lambda=0$. In the remaining gap $(-m,m)\subset\mathbb{R}$ two cases need to be distinguished: First for $\lambda\in(0,m)$,
\begin{eqnarray}
  S_+(\lambda) &=& S_-(\lambda)\begin{pmatrix}
                              1 & -s_3 e^{\im t\Omega(\lambda)}\\
(s_1+s_1(1-s_1s_3))e^{-\im t\Omega(\lambda)} & (1-s_1s_3)^2\\
                             \end{pmatrix}\nonumber\\
&=&S_-(\lambda)\begin{pmatrix}
1 & 0\\
(s_1+s_1(1-s_1s_3))e^{-\im t\Omega(\lambda)}& 1\\
\end{pmatrix}\begin{pmatrix}
1 & -s_3e^{\im t\Omega(\lambda)}\\
0 & 1\\
\end{pmatrix}\nonumber\\
&\equiv&S_-(\lambda) S_{L_1}(\lambda)S_{U_1}(\lambda),\hspace{0.5cm}\lambda\in(0,m)\label{ob2}
\end{eqnarray}
and secondly for $\lambda\in(-m,0)$,
\begin{eqnarray}
  S_+(\lambda) &=&S_-(\lambda)\begin{pmatrix}
                               1 & (s_1+s_1(1-s_1s_3))e^{\im t\Omega(\lambda)}\\
-s_3e^{-\im t\Omega(\lambda)} & (1-s_1s_3)^2\\
                              \end{pmatrix}\nonumber\\
&=&S_-(\lambda)\begin{pmatrix}
                1 & 0\\
-s_3e^{-\im t\Omega(\lambda)} & 1\\
               \end{pmatrix}\begin{pmatrix}
1 & (s_1+s_1(1-s_1s_3))e^{\im t\Omega(\lambda)} \\
0 & 1\\
\end{pmatrix}\nonumber\\
&\equiv&S_-(\lambda)S_{L_2}(\lambda)S_{U_2}(\lambda),\hspace{0.5cm}\lambda\in(-m,0).\label{ob3}
\end{eqnarray}
In \eqref{ob2} and \eqref{ob3} all off-diagonal entries are fast oscillating as $x\rightarrow-\infty$, see Proposition \ref{gprop}. We now transform this behavior on $(-m,m)$ to exponential decay with the help of contour deformations ({\it opening of lens}) tailored to the factorizations written in \eqref{ob2} and \eqref{ob3}. The key to this explicit transformation is the following Proposition.
\begin{prop}\label{openup} Introduce for $z\in(-m,m)$ the functions
 \begin{equation*}
    H_1(z) = \im\Omega(z),\hspace{0.5cm} H_2(z) = -\im\Omega(z),
 \end{equation*}
with $\Omega=\Omega(z)$ as in Proposition \ref{gprop}. Then $H_1(z)$ admits local analytical continuation into a neighborhood of the gap $(-m,m)$ into the upper half-plane such that
\begin{equation*}
  \Re \big(H_1(z)\big)<0,\hspace{0.75cm}\Im z>0,\hspace{0.25cm} \Re z\in(-m,m).
\end{equation*}
Similarly, the function $H_2(z)$ admits local analytical continuation into a neighborhood of the gap $(-m,m)$ into the lower half-plane such that
\begin{equation*}
  \Re\big(H_2(z)\big)<0,\hspace{0.75cm}\Im z<0,\hspace{0.25cm} \Re z\in(-m,m)
\end{equation*}
\end{prop}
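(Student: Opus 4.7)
My approach is to realize $H_1$ and $H_2$ as boundary values of functions that are already analytic in half-neighborhoods of the gap, and then to read off the sign of the real part from a first-order Taylor expansion at the real axis.

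First I would use Proposition \ref{gprop} to express $g_{\pm}$ on $(-m,m)$ in closed form. Since $\Pi(z)=g_+(z)-g_-(z)=\varkappa$ is constant on the gap and $\Omega(z)=\im(g_+(z)+g_-(z))$ is real-valued there, one obtains immediately
\begin{equation*}
	g_+(z)=\frac{\varkappa}{2}-\frac{\im\,\Omega(z)}{2},\qquad g_-(z)=-\frac{\varkappa}{2}-\frac{\im\,\Omega(z)}{2},\qquad z\in(-m,m).
\end{equation*}
Consequently $H_1(z)=\im\,\Omega(z)=-2g_+(z)+\varkappa$, and since $g$ is analytic on all of $\mathbb{C}\setminus[-M,M]$, the right-hand side furnishes an analytic continuation of $H_1$ into any simply connected neighborhood of $(-m,m)$ lying in the upper half-plane:
\begin{equation*}
	H_1(z)=-2g(z)+\varkappa,\qquad \Im z>0,\ \Re z\in(-m,m).
\end{equation*}

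Next I would analyze the sign of $\Re H_1(z)$ near the gap. By construction $\Re H_1(x)=0$ for $x\in(-m,m)$, so only the normal derivative decides the sign for small $y>0$. The Cauchy--Riemann equations combined with the explicit formula for $\Omega$ on $(-m,m)$ from Proposition \ref{gprop} give
\begin{equation*}
	\left.\frac{\partial}{\partial y}\Re H_1(x+\im y)\right|_{y=0^+}=-\frac{\d}{\d x}\Im H_1(x)=-\Omega'(x)=-8\sqrt{(M^2-x^2)(m^2-x^2)}<0
\end{equation*}
for every $x$ in the open gap. A first-order Taylor expansion in the normal direction, together with continuity of $\partial_y\Re H_1$ and compactness of any closed subinterval $[-m+\varepsilon,m-\varepsilon]$, then propagates this strict negativity to a thin open half-neighborhood above the gap.

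The argument for $H_2(z)=-\im\,\Omega(z)$ is completely symmetric: one continues via $H_2(z)=2g(z)+\varkappa$ into the lower half-plane using $g_-(z)=-\varkappa/2-\im\,\Omega(z)/2$, and computes $\partial_y\Re H_2|_{y=0^-}=+\Omega'(x)>0$, so that $\Re H_2$ decreases as $\Im z$ moves from $0$ into the negative half-plane. Alternatively, a Schwarz reflection-type identity $H_2(z)=\overline{H_1(\bar z)}$, which holds by reality of $\Omega$ on the gap, reduces the lower half-plane statement to the one just established for $H_1$. I do not anticipate any real obstacle: once one observes that $\Pi|_{(-m,m)}\equiv\varkappa$, the analytic continuation is free, and the sign of the normal derivative reduces to the manifest positivity of $\sqrt{(M^2-x^2)(m^2-x^2)}$ on $(-m,m)$. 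The only delicacy is that the half-neighborhood thins out toward the endpoints $\pm m$, where $\Omega'$ vanishes; this, however, is precisely the standard situation that will be resolved in the subsequent lens opening by the insertion of local Airy parametrices at $\pm m$ and does not obstruct the present local statement.
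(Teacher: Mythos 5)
Your proposal is correct and takes essentially the same route as the paper's proof: compute the normal derivative of $\Re H_j$ on the gap via the Cauchy--Riemann equations and read off the strict sign from the explicit formula $\Omega'(x)=8\sqrt{(M^2-x^2)(m^2-x^2)}$. The extra observations you make (the explicit continuation $H_1=-2g+\varkappa$ via $\Pi\equiv\varkappa$, and the thinning of the half-neighborhood near $\pm m$) are sound but supplementary; the core argument matches the paper's.
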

\begin{proof} 
 We follow the standard line of argument. First
\begin{equation*}
  \im\Omega(z) = -8\im\int_z^m\sqrt{\big(M^2-\mu^2\big)\big(m^2-\mu^2\big)}\,\d\mu,\hspace{0.5cm} z\in(-m,m)
\end{equation*}
and thus
\begin{eqnarray*}
  \frac{\d}{\d y}H_1(x+\im y)\bigg|_{y=0} &=&-8\sqrt{\big(M^2-x^2\big)\big(m^2-x^2\big)}<0,\hspace{0.5cm}x\in(-m,m),\\
  \frac{\d}{\d y}H_2(x-\im y)\bigg|_{y=0} &=&-8\sqrt{\big(M^2-x^2\big)\big(m^2-x^2\big)}<0,\hspace{0.5cm}x\in(-m,m),
\end{eqnarray*}
which implies the claim via the Cauchy-Riemann equations.
\end{proof}
\begin{figure}[tbh]
\begin{center}
\resizebox{0.8\textwidth}{!}{\includegraphics{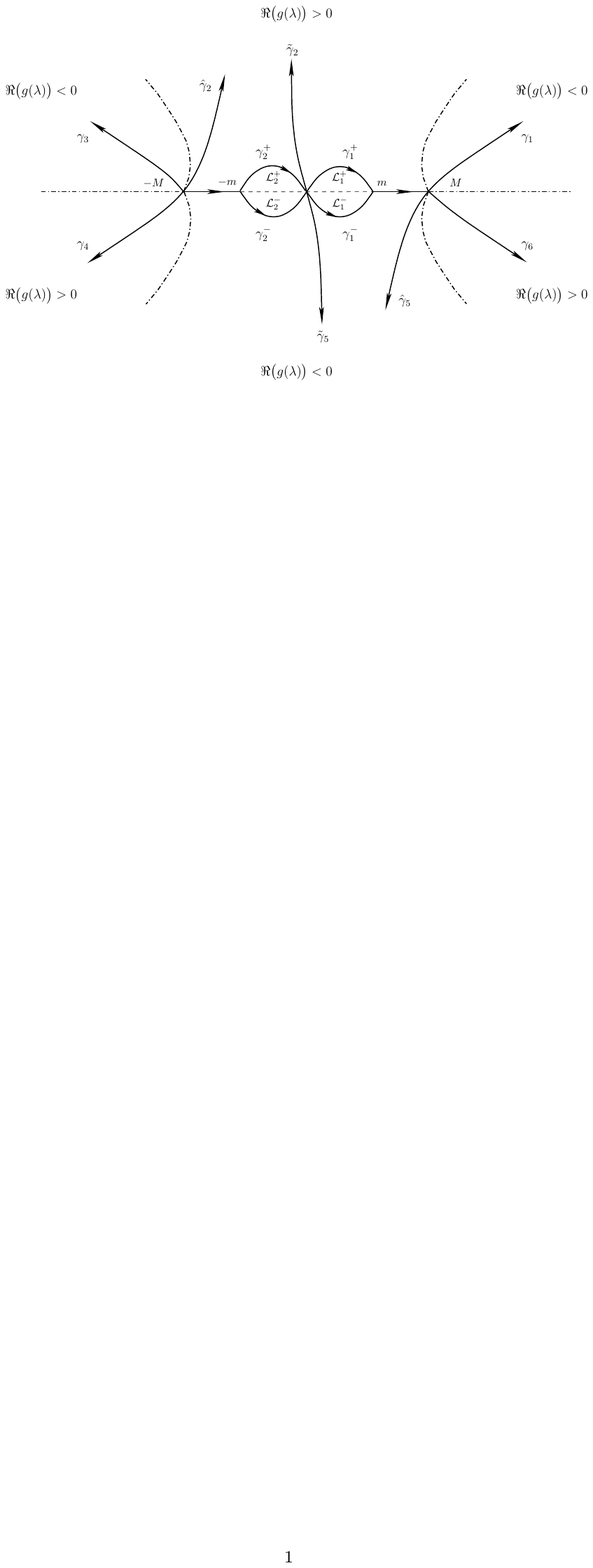}}
\caption{Opening of lens, the jump contours for $L(\lambda)$ consists of the solid black lines. In addition, along the dotted lines $\Re(g(\lambda))=0$.}
\label{figure6}
\end{center}
\end{figure}
\subsection{Opening of lens} Suppose $\mathcal{L}_j^{\pm},j=1,2$ denotes the lens shaped regions shown in Figure \ref{figure6}. With the help of the analytical continuations of $H_1(z)$ and $H_2(z)$ as discussed in Proposition \ref{openup}, we set
\begin{equation}\label{ou:1}
  L(\lambda) = \begin{cases}
                S(\lambda)S_{U_1}^{-1}(\lambda),&\lambda\in\mathcal{L}_1^+\\
S(\lambda)S_{L_1}(\lambda),&\lambda\in\mathcal{L}_1^-\\
S(\lambda)S_{U_2}^{-1}(\lambda),&\lambda\in\mathcal{L}_2^+\\
S(\lambda)S_{L_2}(\lambda),&\lambda\in\mathcal{L}_2^-\\
S(\lambda),&\textnormal{otherwise}
               \end{cases}
\end{equation}
so that $L(\lambda)$ solves the following RHP
\begin{problem}\label{openRHP} Determine the $2\times 2$ piecewise analytic function $L(\lambda)$ such that
\begin{itemize}
 \item $L(\lambda)$ is analytic for $\lambda\in\mathbb{C}\backslash\left([-M,-m]\cup[m,M]\cup\gamma_1^{\pm}\cup\gamma_2^{\pm}\cup\Sigma_{S_{\infty}}\right)$
 \item We have the following jump behavior, with orientation as indicated in Figure \ref{figure6},
\begin{equation*}
  L_+(\lambda) = L_-(\lambda)\begin{cases}
G_S(\lambda),&\lambda\in(-M,-m)\cup(m,M)\cup\Sigma_{S_{\infty}}\\
                              S_{U_1}(\lambda),&\lambda\in\gamma_1^+\\
S_{L_1}(\lambda),&\lambda\in\gamma_1^-\\
S_{U_2}(\lambda),&\lambda\in\gamma_2^+\\
S_{L_2}(\lambda),&\lambda\in\gamma_2^-
                             \end{cases}
\end{equation*}
  \item As $\lambda\rightarrow\infty$,
\begin{equation*}
 L(\lambda)=I+\mathcal O\left(\lambda^{-1}\right).
\end{equation*}
\end{itemize}
\end{problem}
After employing the explicit transformation \eqref{ou:1}, it is now time to focus on the local model problems near the points $\lambda=\pm m,\pm M$ and $\lambda=0$ as well as on the slit segment $J=(-M,-m)\cup(m,M)$.
\subsection{The outer parametrix}\label{outer1para}
The RHP associated to the outer parametrix is motivated by \eqref{ob1},\eqref{ob4} and thus consists in 
\begin{problem}\label{oup:2} Find a $2\times 2$ matrix-valued piecewise analytic function $N(\lambda)=N(\lambda;t,\epsilon)$ such that
\begin{itemize}
	\item $N(\lambda)$ is analytic for $\lambda\in\mathbb{C}\backslash\overline{J}$
  \item Along the branch cuts, with orientation as in \eqref{figure6},
\begin{eqnarray}
  N_+(\lambda) &=& N_-(\lambda)\begin{pmatrix}
                              0 & e^{\im\frac{\pi}{2}\epsilon+\im t\Omega(\lambda)}\\
-e^{-\im\frac{\pi}{2}\epsilon-\im t\Omega(\lambda)} & 0\\
                             \end{pmatrix},\hspace{0.5cm}\lambda\in(-M,-m)\label{ou:1}\\
  N_+(\lambda)&=&N_-(\lambda)\begin{pmatrix}
                              0 & -e^{-\im\frac{\pi}{2}\epsilon}\\
  e^{\im\frac{\pi}{2}\epsilon} & 0\\
                             \end{pmatrix},\hspace{0.5cm}\lambda\in(m,M)\label{ou:2}
\end{eqnarray}
  \item $N(\lambda)$ is square integrable on $\overline{J}$
  \item As $\lambda\rightarrow\infty$,
  \begin{equation*}
    N(\lambda) = I+\mathcal O\left(\lambda^{-1}\right)
  \end{equation*}
\end{itemize}
\end{problem}
Observe that the function
\begin{equation*}
	\widetilde{N}(\lambda)=e^{-\im\frac{\pi}{4}\epsilon\sigma_3}N(\lambda)e^{\im\frac{\pi}{4}\epsilon\sigma_3},\ \ \lambda\in\mathbb{C}\backslash\overline{J}
\end{equation*}
solves a similar RHP as the one posed for $N(\lambda)$, but with jumps on $J$ given by
\begin{eqnarray*}
	\widetilde{N}_+(\lambda)&=&\widetilde{N}_-(\lambda)\begin{pmatrix}
	0 & e^{\im t\Omega(\lambda)}\\
	-e^{-\im t\Omega(\lambda)} & 0\\
	\end{pmatrix},\ \ \lambda\in(-M,-m),\\
	\widetilde{N}_+(\lambda)&=&\widetilde{N}_-(\lambda)\begin{pmatrix}
	0 & 1\\
	-1 & 0\\
	\end{pmatrix},\ \ \lambda\in(m,M).
\end{eqnarray*}
This is precisely the same jump behavior which appeared in \cite{BDIK}, Section $3.1$. The solution method is therefore analogous, we only summarize the relevant steps and refer to \cite{BDIK} for further details.\smallskip

The solution to RHP \ref{oup:2} is derived in terms of Jacobi theta functions defined on the elliptic curve
\begin{equation*}
  \Gamma = \big\{(z,w):\ w^2=p(z)\big\},\hspace{1cm} p(z) = \left(z^2-m^2\right)\big(z^2-M^2\big)
\end{equation*}
of genus one. We view $\Gamma$ as two sheeted covering of the Riemann sphere, glued together in the standard way. For definitness, let $\sqrt{p(z)}\sim z^2$ as $z\rightarrow\infty$ on the first sheet and $\sqrt{p(z)}\sim -z^2$ in the similar limit on the second sheet. 
\begin{figure}[tbh]
\begin{center}
\resizebox{0.7\textwidth}{!}{\includegraphics{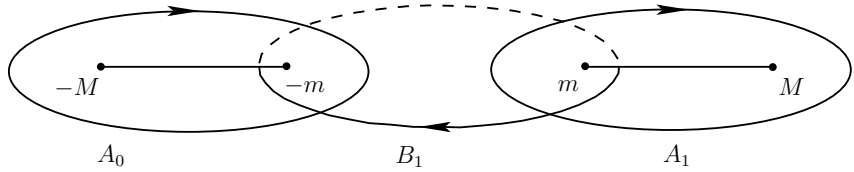}}
\caption{Standard homology basis for $\Gamma$}
\label{figure5}
\end{center}
\end{figure}

We fix a homology basis for $\Gamma$ as indicated in Figure \ref{figure5}, and let
\begin{equation}\label{norm}
  \omega = \frac{c\,\d z}{w},\hspace{0.5cm} c=c(\varkappa)=\frac{\im}{2}\left[\int_m^M\frac{\d\mu}{\sqrt{(M^2-\mu^2)(\mu^2-m^2)}}\right]^{-1} = \frac{\im}{2}\frac{M}{K'}
\end{equation}
be the unique holomorphic one form on $\Gamma$ with normalization
\begin{equation*}
	\oint_{A_1}\omega =1,
\end{equation*}
and $B$-period (compare \eqref{Vfancy})
\begin{equation*}
	\tau=\oint_{B_1}\omega=2c\int_{-m}^m\frac{\d z}{\sqrt{(m^2-z^2)(M^2-z^2)}}=2\im\frac{K}{K'},\hspace{0.5cm} -\im\tau>0.
\end{equation*}
Also, define the Abel (type) map
\begin{equation*}
	u:\mathbb{CP}^1\backslash[-M,M]\rightarrow\mathbb{C},\ \ z\mapsto u(z)=\int_M^z\omega=u(\infty)-\frac{c}{z}+\mathcal{O}\left(z^{-3}\right),\ z\rightarrow\infty
\end{equation*}
and collect the following properties
\begin{prop}\label{pout:1} The Abelian integral $u(z)$ is single-valued and analytic for $z\in\mathbb{CP}^1\backslash[-M,M]$, we have in addition
\begin{equation*}
	u_+(z)-u_-(z)=\begin{cases}
	0,&z\in(-\infty,-M)\cup(M,\infty)\\
	-1,&z\in(-m,m)
	\end{cases},\hspace{0.5cm}u_+(z)+u_-(z)=\begin{cases}
	0,&z\in(m,M)\\
	\tau,&z\in(-M,-m),
	\end{cases}
\end{equation*}
and
\begin{equation}\label{Abel:1}
	u(\infty)=\frac{\tau}{4} \ \ \ (\textnormal{in}\,\, \mathbb{CP}^1).
\end{equation}
\begin{remark} Equation \eqref{Abel:1} is a simple identity between complete elliptic integrals,
\begin{equation*}
	u(\infty)=\frac{\im}{2}\frac{K}{K'},\ \ \ \ \tau=2\im\frac{K}{K'}.
\end{equation*}
\end{remark}
\end{prop}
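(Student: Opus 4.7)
The plan is to derive all three assertions from the behavior of the branch $w=\sqrt{p(z)}$ fixed by the convention $\sqrt{p(z)}\sim z^2$ as $z\to\infty$ on the first sheet. This choice makes $w$ single-valued and analytic on $\mathbb{CP}^1\setminus\bigl([-M,-m]\cup[m,M]\bigr)$ with the standard relations $w_+=-w_-$ across each of the two branch cuts and $w$ analytic across the gap $(-m,m)$. Consequently, $\omega=c\,dz/w$ inherits $\omega_+=-\omega_-$ on the two cuts and $\omega_+=\omega_-$ on $(-m,m)$, from which every jump relation of $u$ will follow by integration.

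For the single-valuedness and analyticity of $u(z)=\int_M^z\omega$ on $\mathbb{CP}^1\setminus[-M,M]$ I would argue topologically: the domain $\mathbb{CP}^1\setminus([-M,-m]\cup[m,M])$ has cyclic fundamental group generated by the $A_1$-cycle around the right slit, with period $\oint_{A_1}\omega=1$, so the integral is a priori multi-valued. Removing in addition the segment $(-m,m)$ kills this generator and leaves a simply connected domain, which yields path-independence and hence analyticity of $u$.

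The jump formulas then come from integrating the jumps of $\omega$. On $(-M,-m)$ and $(m,M)$ the identity $\omega_++\omega_-=0$ shows that $u_++u_-$ is locally constant, while on $(-m,m)$ the identity $\omega_+=\omega_-$ shows $u_+-u_-$ is locally constant; on $(-\infty,-M)\cup(M,\infty)$ both boundary values coincide by analyticity. The three constants are pinned down as follows. At $z=M$ the integrable singularity of $\omega$ and the fact that $M$ is the base point force $u_\pm(M)=0$, so the constant on $(m,M)$ is $0$. The jump on $(-m,m)$ equals $-\oint_{A_1}\omega=-1$, obtained by deforming a path from $z_+$ to $z_-$ in the uncut domain into an $A_1$-cycle with appropriate orientation. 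The constant $\tau$ on $(-M,-m)$ is obtained by computing $u_\pm(-M)$ directly along the upper and lower edges: the contributions from the branch-cut subintervals $(m,M)$ and $(-M,-m)$ cancel in the sum $u_++u_-$ because $\omega_++\omega_-=0$ there, while the middle arc through $(-m,m)$ contributes $\int_m^{-m}\omega=\tau/2$ to each of $u_+(-M)$ and $u_-(-M)$ by the very definition of $\tau$, giving the total $\tau$.

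The final identity $u(\infty)=\tau/4$ follows from a direct elliptic-integral evaluation: the substitution $z=M/\mu$ in $u(\infty)=c\int_M^\infty dz/\sqrt{(z^2-m^2)(z^2-M^2)}$ gives $u(\infty)=cK/M$ with $K=K(m/M)=K(\k)$, and together with $c=\im M/(2K')$ and $\tau=2\im K/K'$ this simplifies to $\im K/(2K')=\tau/4$. The only delicate point in the whole proof is the consistent sign-tracking of $w_\pm$ on each of the four subintervals of the cut, which has to be performed by analytic continuation of the branch from $z=+\infty$ through the upper half-plane; once this bookkeeping is carried out I do not foresee any substantive obstacle.
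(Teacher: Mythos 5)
Your proof is correct and it is, in essence, the argument one would expect: the paper states Proposition \ref{pout:1} without a written proof, giving only the Remark that \eqref{Abel:1} reduces to $u(\infty)=\tfrac{\im}{2}K/K'$ versus $\tau=2\im K/K'$, and deferring the construction to \cite{BDIK}. Your decomposition — tracking $w_+=-w_-$ across the two slits and $w_+=w_-$ across the gap, deducing local constancy of $u_++u_-$ resp.\ $u_+-u_-$, pinning the constants at the endpoints $z=M$ and $z=-M$ and via the $A_1$-period, and evaluating $u(\infty)$ by the substitution $z=M/\mu$ — is the standard route and yields exactly the elliptic-integral identity the paper's Remark records. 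One small wording slip: to obtain $u_+(z)-u_-(z)$ you should deform a path from $z_-$ to $z_+$ (not the reverse), since $u_+-u_-=\int_M^{z_+}\omega-\int_M^{z_-}\omega$; whether this closed loop is $+A_1$ or $-A_1$ depends on the orientation fixed in Figure~\ref{figure5}, which is the sign-bookkeeping you already flagged as the only delicate point.
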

Our construction requires furthermore the functions
\begin{equation*}
	\omega(z)=\left(\frac{(z+m)(z-M)}{(z+M)(z-m)}\right)^{\frac{1}{4}},\hspace{0.5cm}\phi(z)=\frac{1}{2}\left(\omega(z)+\big(\omega(z)\big)^{-1}\right),\ \ \hat{\phi}(z)=\frac{1}{2\im}\left(\omega(z)-\big(\omega(z)\big)^{-1}\right),
\end{equation*}
defined and analytic for $z\in\mathbb{CP}^1\backslash[-M,M]$ such that $\omega(z)>0$ for $z>M$. To complete the derivation, we use the Jacobi theta function
\begin{equation*}
	\theta(z|\tau)\equiv \theta_3(z|\tau)=\sum_{k\in\mathbb{Z}}\exp\left[\im\pi k^2\tau+2\pi\im kz\right],\ \ z\in\mathbb{C},
\end{equation*}
and define
\begin{equation*}
	N^{(\pm)}(z)=\left(\frac{\theta(u(z)+tV\pm d)}{\theta(u(z)\pm d)},\frac{\theta(-u(z)+tV\pm d)}{\theta(-u(z)\pm d)}\right)\equiv \left(N_1^{(\pm)}(z),N_2^{(\pm)}(z)\right),
\end{equation*}
where (see \eqref{Vfancy})
\begin{equation}\label{Vchoice}
	V(\varkappa)\equiv V=-\frac{4}{\pi}\int_{-m}^m\sqrt{\left(M^2-\mu^2\right)\left(m^2-\mu^2\right)}\,\d\mu
	\equiv\frac{1}{2\pi}\Omega(z),\ z\in(-M,-m);\hspace{1.5cm} d=-\frac{\tau}{4}.
\end{equation}
With this, a solution to RHP \ref{oup:2} is given in the next Proposition.
\begin{prop}[cf. \cite{BDIK}, Section $3.1$]\label{pout:3} The function
\begin{equation}\label{ou:3}
	N(\lambda)=e^{\im\frac{\pi}{4}\epsilon\sigma_3}\frac{\theta(0)}{\theta(tV)}\begin{pmatrix}
	N_1^{(+)}(\lambda)\phi(\lambda) & N_2^{(+)}(\lambda)\hat{\phi}(\lambda)\\
	-N_1^{(-)}(\lambda)\hat{\phi}(\lambda) & N_2^{(-)}(\lambda)\phi(\lambda)\\
	\end{pmatrix}e^{-\im\frac{\pi}{4}\epsilon\sigma_3}
\end{equation}
is single-valued and analytic in $\mathbb{C}\backslash\overline{J}$. Its jumps are stated in \eqref{ou:1} and \eqref{ou:2}, furthermore, as $\lambda\rightarrow\infty$,
\begin{equation*}
	N(\lambda)=I+\frac{1}{\lambda}\begin{pmatrix}
	-c\,\frac{\theta'(tV)}{\theta(tV)} & -\frac{\theta(0)}{\theta(tV)}\frac{\theta(u(\infty)-tV-d)}{\theta(u(\infty)-d)}\frac{M-m}{2\im}e^{\im\frac{\pi}{2}\epsilon}\smallskip\\
	\frac{\theta(0)}{\theta(tV)}\frac{\theta(u(\infty)+tV-d)}{\theta(u(\infty)-d)}\frac{M-m}{2\im}e^{-\im\frac{\pi}{2}\epsilon} & c\,\frac{\theta'(tV)}{\theta(tV)}\\
	\end{pmatrix}+\mathcal{O}\left(\lambda^{-2}\right).
\end{equation*}
\end{prop}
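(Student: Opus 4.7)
The proof is a direct verification of the three bulleted properties of RHP \ref{oup:2}. First, the outer conjugation by $e^{\pm\im\frac{\pi}{4}\epsilon\sigma_3}$ reduces everything to a ``base'' RHP for the inner matrix $\widetilde N(\lambda)=e^{-\im\frac{\pi}{4}\epsilon\sigma_3}N(\lambda)e^{\im\frac{\pi}{4}\epsilon\sigma_3}$ whose prescribed jumps on $J$ are
\begin{equation*}
	\widetilde N_+(\lambda)=\widetilde N_-(\lambda)\begin{pmatrix} 0 & e^{\im t\Omega(\lambda)} \\ -e^{-\im t\Omega(\lambda)} & 0\end{pmatrix},\ \lambda\in(-M,-m);\qquad \widetilde N_+(\lambda)=\widetilde N_-(\lambda)\begin{pmatrix} 0 & -1 \\ 1 & 0\end{pmatrix},\ \lambda\in(m,M);
\end{equation*}
the $\epsilon$-dependent factors in \eqref{ou:1}, \eqref{ou:2} then re-emerge via $e^{\im\frac{\pi}{2}\epsilon}=\im\epsilon$ after conjugating back.

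For analyticity off $\overline{J}$, a sign analysis shows that $\omega^4(z)=(z+m)(z-M)/((z+M)(z-m))$ is negative exactly on $J$ and positive on $\mathbb{R}\setminus\overline{J}$, so the algebraic factors $\phi(\lambda),\hat\phi(\lambda)$ are analytic precisely in $\mathbb{C}\setminus\overline{J}$. The theta quotients $N_j^{(\pm)}$ could only jump where $u(\lambda)$ jumps, namely on $(-m,m)$ and on $(-\infty,-M)\cup(M,\infty)$; by Proposition \ref{pout:1} those jumps are integers, and the periodicity $\theta(z+1\,|\,\tau)=\theta(z\,|\,\tau)$ kills them. The endpoint singularities at $\pm m,\pm M$ are at worst of order $\tfrac14$ and hence square-integrable.

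Next comes the jump verification on $J$. On the right slit $(m,M)$ one has $u_+(z)=-u_-(z)$ by Proposition \ref{pout:1}; evenness $\theta(-w)=\theta(w)$ immediately yields $N_1^{(\pm)}\big|_-=N_2^{(\pm)}\big|_+$ and $N_2^{(\pm)}\big|_-=N_1^{(\pm)}\big|_+$, i.e.\ a column swap of the inner matrix. A direct boundary-value computation of $\omega$ on the cut (where $\omega^4<0$) gives the complementary relations between $\phi_\pm$ and $\hat\phi_\pm$ that turn this column swap into the antidiagonal jump $\left(\begin{smallmatrix}0&-1\\1&0\end{smallmatrix}\right)$. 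On the left slit the same idea applies, but now $u_+(z)+u_-(z)=\tau$ forces use of the quasi-periodicity $\theta(z+\tau\,|\,\tau)=e^{-\im\pi\tau-2\pi\im z}\theta(z\,|\,\tau)$; the scalar exponentials produced from the four quotients $N_j^{(\pm)}$ must conspire to collapse into the single off-diagonal factor $e^{\pm\im t\Omega(\lambda)}$. This is the main obstacle: it requires (i) the identification of $\Omega(z)\equiv 2\pi V(\varkappa)$ on $(-M,-m)$ from Proposition \ref{gprop} together with \eqref{Vchoice}, and (ii) the fine-tuned choice $d=-\tau/4$, which is exactly what cancels the $z$-independent $\tau$-dependent phases arising from the four numerators and denominators. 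Once one carries out this bookkeeping the advertised jump drops out.

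Finally, for the expansion at $\lambda=\infty$, write $u(\lambda)=u(\infty)-c\lambda^{-1}+\mathcal O(\lambda^{-3})$ with $u(\infty)=\tau/4=-d$ by \eqref{Abel:1}. Evenness of $\theta$ gives $\theta'(0)=0$, so
\begin{equation*}
	\theta\big(u(\lambda)+d\big)=\theta(0)+\mathcal O\big(\lambda^{-2}\big),\qquad \theta\big(u(\lambda)+tV+d\big)=\theta(tV)-\frac{c\,\theta'(tV)}{\lambda}+\mathcal O\big(\lambda^{-2}\big),
\end{equation*}
and hence $N_1^{(+)}(\lambda)=\tfrac{\theta(tV)}{\theta(0)}\big(1-\tfrac{c\,\theta'(tV)}{\theta(tV)}\lambda^{-1}\big)+\mathcal O(\lambda^{-2})$, with analogous expansions for $N_{1,2}^{(-)}$ and $N_2^{(+)}$ after noting $-u(\infty)+d=-\tau/2$ and applying evenness to rewrite the limits as $\theta(u(\infty)\mp tV-d)/\theta(u(\infty)-d)$. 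Taylor-expanding $\omega^4$ at infinity yields $\phi(\lambda)=1+\mathcal O(\lambda^{-2})$ and $\hat\phi(\lambda)=(m-M)/(2\im\lambda)+\mathcal O(\lambda^{-2})$. Multiplying the four entries of the inner matrix, restoring the conjugation-induced factors $\im\epsilon,-\im\epsilon$ on the off-diagonal, and collecting the $\lambda^{-1}$-coefficient reproduces the stated matrix $M_1$ verbatim; in particular the two diagonal entries $\pm c\,\theta'(tV)/\theta(tV)$ come entirely from the theta quotients while the off-diagonal entries receive their prefactor $(M-m)/(2\im)$ from $\hat\phi$.
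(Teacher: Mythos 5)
Your approach — reducing to the base problem for $\widetilde N=e^{-\im\frac{\pi}{4}\epsilon\sigma_3}N\,e^{\im\frac{\pi}{4}\epsilon\sigma_3}$, verifying analyticity and jumps via the branch structure of $\omega$ and the (quasi-)periodicity of $\theta$, then Taylor-expanding at $\infty$ — is exactly the verification the paper outsources to \cite{BDIK}, and your asymptotic computation (including the identification $u(\infty)\pm d\in\{0,\tau/2\}$ and the $\hat\phi$-prefactor $(m-M)/(2\im\lambda)$) matches the stated $\lambda^{-1}$-coefficient.

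There is, however, a sign error in the reduced jump you write on $(m,M)$. Conjugating the jump \eqref{ou:2} by $e^{\im\frac{\pi}{4}\epsilon\sigma_3}$ turns the off-diagonal entries $-e^{-\im\frac{\pi}{2}\epsilon}$ and $e^{\im\frac{\pi}{2}\epsilon}$ into $-e^{-\im\pi\epsilon}=+1$ and $e^{\im\pi\epsilon}=-1$ (because $\epsilon\in\{\pm1\}$), so the correct base jump on $(m,M)$ is $\left(\begin{smallmatrix}0&1\\-1&0\end{smallmatrix}\right)$, not $\left(\begin{smallmatrix}0&-1\\1&0\end{smallmatrix}\right)$ as you wrote; the paper records this display explicitly right before the proposition. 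If you carry out the boundary-value bookkeeping with the branch fixed by $\omega>0$ on $(M,\infty)$, you find $\omega_\pm=e^{\pm\im\pi/4}|\omega|$ on $(m,M)$ and hence $\phi_+=-\hat\phi_-$, $\hat\phi_+=\phi_-$, which together with the column swap from the theta quotients produces $\left(\begin{smallmatrix}0&1\\-1&0\end{smallmatrix}\right)$ — consistent with the paper, but inconsistent with the matrix you asserted. So the step "the advertised jump drops out" would fail as written; the fix is just the sign. (Minor: the column swap $N_1^{(\pm)}|_-=N_2^{(\pm)}|_+$ follows directly from $u_-=-u_+$ and the \emph{definition} of $N_2^{(\pm)}$ in terms of $-u(z)$; evenness of $\theta$ is not actually needed there, only in rewriting the $\lambda\to\infty$ limits.)
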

Our next move focuses on the construction of model functions near the branch points and the origin.
\subsection{Parametrix near the origin}  Near the origin, we first observe that
\begin{equation*}
	2g(\lambda)=\varkappa-\im\Omega(\lambda),\ \ \lambda\in\tilde{\gamma}_2;\hspace{1cm}2g(\lambda)=-\varkappa-\im\Omega(\lambda),\ \ \lambda\in\tilde{\gamma}_5.
\end{equation*}
Thus the jumps on the quasi-vertical segments $\tilde{\gamma}_2\cup\tilde{\gamma}_5$ displayed in Figure \ref{figure6} approach the identity matrix exponentially fast along the entire segment, more precisely
\begin{eqnarray*}
	e^{-tg(\lambda)\sigma_3}S_2S_4^{-1}e^{tg(\lambda)\sigma_3} &=&I+(s_1+s_2)e^{-t\varkappa}e^{\im t\Omega(\lambda)}\sigma_+,\ \ \lambda\in\tilde{\gamma}_2\\
	e^{-tg(\lambda)\sigma_3}\sigma_2S_2S_4^{-1}\sigma_2e^{tg(\lambda)\sigma_3}&=&I-(s_1+s_2)e^{-t\varkappa}e^{-\im t\Omega(\lambda)}\sigma_-,\ \ \lambda\in\tilde{\gamma}_5.
\end{eqnarray*}
Using also \eqref{approx:3} and \eqref{approx:4} we shall consider the following RHP.
\begin{problem}
Find a $2\times 2$ piecewise analytic function $H(\lambda)$ such that
\begin{itemize}
	\item $H(\lambda)$ is analytic for $\lambda\in(\mathbb{C}\cap D(0,r))\backslash(\gamma_1^{\pm}\cup\gamma_2^{\pm})$ where $D(0,r)=\{\lambda\in\mathbb{C}:\ |\lambda|<r\}$ and we fix $0<r<\frac{m}{2}$.
	\item The boundary values are related via the equations (compare Figure \ref{figure6} for orientation)
	\begin{eqnarray}
		H_+(\lambda)&=&H_-(\lambda)\begin{pmatrix}
	1& \im\epsilon\, e^{\im t\Omega(\lambda)}\\
	0 & 1\\
	\end{pmatrix},\ \ \ \lambda\in D(0,r)\cap(\gamma_1^+\cup \gamma_2^+);\label{o:1}\\
	H_+(\lambda)&=&H_-(\lambda)\begin{pmatrix}
	1 & 0\\
	\im\epsilon\, e^{-\im t\Omega(\lambda)} & 1\\
	\end{pmatrix},\ \ \ \lambda\in D(0,r)\cap(\gamma_1^-\cup\gamma_2^-)\label{o:2}
	\end{eqnarray}
	\item $H(\lambda)$ is bounded at the origin $\lambda=0$.
	\item As $x\rightarrow-\infty,|s_1|\uparrow 1$ with $\varkappa\in[\delta,\frac{2}{3}\sqrt{2}-\delta],\delta\in(0,\frac{1}{3}\sqrt{2})$ fixed, we have 
	\begin{equation*}
		H(\lambda)=\big(I+o(1)\big)N(\lambda)
	\end{equation*}
	uniformly for $\lambda\in\partial D(0,r)$.\bigskip
\end{itemize}
\end{problem}
The solution to this problem is very elementary: assemble
\begin{equation*}
	H^o(\z)=\begin{cases}
	e^{\im\z\sigma_3},&\textnormal{arg}\,\z\in(-\frac{\pi}{4},\frac{\pi}{4})\cup(\frac{3\pi}{4},\frac{5\pi}{4})\\
	e^{\im\z\sigma_3}\begin{pmatrix}
	1 & \im\epsilon\,e^{\im t\Omega(0)}\\
	0 & 1\\
	\end{pmatrix},&\textnormal{arg}\,\z\in(\frac{\pi}{4},\frac{3\pi}{4})\smallskip\\
	e^{\im\z\sigma_3}\begin{pmatrix}
	1 & 0\\
	-\im\epsilon\,e^{-\im t\Omega(0)} & 1\\
	\end{pmatrix},&\textnormal{arg}\,\z\in(-\frac{3\pi}{4},-\frac{\pi}{4})
	\end{cases}
\end{equation*}
with $\Omega(0)=\pi V$ (see Proposition \ref{gprop}) and observe that $H^o(\z)$ solves a ``bare" RHP
\begin{itemize}
	\item $H^o(\z)$ is analytic for $\z\in\mathbb{C}\backslash\{\z:\,\textnormal{arg}\,\z=\pm\frac{\pi}{4},\pm\frac{3\pi}{4}\}$
	\item Along the four rays oriented from zero to infinity,
	\begin{align*}
		H_+^o(\z)&=H_-^o(\z)\begin{pmatrix}
		1 & \im\epsilon\,e^{\im t\Omega(0)}\\
		0 & 1\\
		\end{pmatrix},\ \textnormal{arg}\,\z=\frac{\pi}{4};\hspace{0.5cm}&H_+^o(\z)&=H_-^o(\z)\begin{pmatrix}
		1& -\im\epsilon\,e^{\im t\Omega(0)}\\
		0 & 1\\
		\end{pmatrix},\  \textnormal{arg}\,\z=\frac{3\pi}{4};\\
		H_+^o(\z)&=H_-^o(\z)\begin{pmatrix}
		1 & 0\\
		\im\epsilon\,e^{-\im t\Omega(0)} & 1\\
		\end{pmatrix},\ \textnormal{arg}\,\z=-\frac{\pi}{4};\hspace{0.5cm}&H_+^o(\z)&=H_-^o(\z)\begin{pmatrix}
		1 & 0\\
		-\im\epsilon\,e^{-\im t\Omega(0)} & 1\\
		\end{pmatrix},\ \ \textnormal{arg}\,\z=-\frac{3\pi}{4}.
	\end{align*}
	\item $H^o(\z)$ is bounded as $\z\rightarrow 0$.
	\item As $\z\rightarrow\infty$, we have
	\begin{equation}\label{o:3}
		H^o(\z)=\left(I+\mathcal{O}\left(\z^{-\infty}\right)\right)e^{\im\z\sigma_3}
	\end{equation}
	uniformly in a full neighborhood of $\z=\infty$.
\end{itemize}
Referring to the locally analytic change of variables,
\begin{equation}\label{o:4}
	\z(\lambda)=\frac{t}{2}\left(\Omega(\lambda)-\Omega(0)\right)=4tMm\lambda\left(1+\mathcal{O}\left(\lambda^2\right)\right),\ \ \lambda\in D(0,r),\ 0<r<\frac{m}{2}
\end{equation}
the origin parametrix is then given by
\begin{equation}\label{o:5}
	H(\lambda)=N(\lambda)H^o\big(\z(\lambda)\big)e^{-\im\z(\lambda)\sigma_3},\ \ \lambda\in D(0,r),
\end{equation}
with $N(\lambda)$ as in \eqref{ou:3}. Since $N(\lambda)$ is analytic in the disk $D(0,r)$, compare Proposition \ref{pout:3}, we check directly that the jumps of $H(\lambda)$ are indeed as in \eqref{o:1} and \eqref{o:2} with the orientation of the contours near the origin like in Figure \ref{figure6}. Note also that the four rays in the bare RHP can always be locally deformed to match the contours $\gamma_j^{\pm}$. Furthermore with \eqref{o:3} and \eqref{o:4} we obtain from \eqref{o:5} the desired matching between the model functions: as $t=(-x)^{\frac{3}{2}}\rightarrow+\infty,|s_1|\uparrow 1$ such that $\varkappa\in[\delta,\frac{2}{3}\sqrt{2}-\delta]$ is fixed,
\begin{equation}\label{paraes:0}
	H(\lambda)=\left(1+\mathcal{O}\left(t^{-\infty}\right)\right)N(\lambda)
\end{equation}
uniformly for $0<r_1\leq|z|\leq r_2<\frac{m}{2}$. This completes the construction of the origin parametrix.
\subsection{Parametrices near the inner branch points} Our construction for the model function near the right inner branch point $\lambda=m$ is motivated by the local expansions
\begin{eqnarray*}
	\im\Omega(\lambda)&=&c_0(\lambda-m)^{\frac{3}{2}}+\mathcal{O}\left((\lambda-m)^{\frac{5}{2}}\right),\ \ \lambda\in\gamma_1^+\cap D(m,r)\\
	-\im\Omega(\lambda)&=&-c_0(\lambda-m)^{\frac{3}{2}}+\mathcal{O}\left((\lambda-m)^{\frac{5}{2}}\right),\ \ \lambda\in\gamma_1^-\cap D(m,r)\\
	\Pi(\lambda)&=&\varkappa-c_0(\lambda-m)^{\frac{3}{2}}+\mathcal{O}\left((\lambda-m)^{\frac{5}{2}}\right),\ \ \lambda\in(m,M)\cap D(m,r)
\end{eqnarray*}
where
\begin{equation*}
	c_0=\frac{16}{3}\sqrt{2m(M^2-m^2)}>0,
\end{equation*}
we fix $0<r<\min\{\frac{m}{2},\frac{1}{2}(M-m)\}$ and the function $(\lambda-m)^{\frac{3}{2}}$ is defined for $\lambda\in\mathbb{C}\backslash[m,\infty)$ with the branch fixed by the requirement $\textnormal{arg}\,(\lambda-m)=\pi$ for $\lambda<m$. This motivates the use of Airy functions and our construction follows from now on \cite{BDIK}, Section $3.2$. with minor modifications. We define
\begin{equation}\label{bare:1}
	A^{RH}(\z)=A_0(\z)\begin{cases}
	I,&\textnormal{arg}\,\z\in(0,\frac{2\pi}{3})\\
	\begin{pmatrix}
	1 & -1\\
	0 & 1\\
	\end{pmatrix},&\textnormal{arg}\,\z\in(\frac{2\pi}{3},\frac{4\pi}{3})\smallskip\\
	\begin{pmatrix}
	1 & -1\\
	0 & 1\\
	\end{pmatrix}\begin{pmatrix}
	1 & 0\\
	1 & 1\\
	\end{pmatrix},&\textnormal{arg}\,\z\in(\frac{4\pi}{3},2\pi)
	\end{cases}
\end{equation}
where $A_0(\z)$ denotes the unimodular entire function
\begin{equation}\label{bareAiry}
	A_0(\z)=\im\sqrt{\pi}e^{-\im\frac{\pi}{6}\sigma_3}\begin{pmatrix}
	e^{-\im\frac{\pi}{6}} & 0\\
	0 & e^{\im\frac{\pi}{2}} \\
	\end{pmatrix}\begin{pmatrix}
	\textnormal{Ai}\left(e^{-\im\frac{2\pi}{3}}\z\right) & \textnormal{Ai}(\z)\\
	e^{-\im\frac{2\pi}{3}}\textnormal{Ai}'\left(e^{-\im\frac{2\pi}{3}}\z\right)&\textnormal{Ai}'(\z)
	\end{pmatrix}e^{\im\frac{\pi}{6}\sigma_3},\ \ \z\in\mathbb{C}
\end{equation}
which is constructed with the help of the Airy function $w=\textnormal{Ai}(z)$, the unique solution to the boundary value problem
\begin{equation*}
	w''=zw;\hspace{0.7cm} \textnormal{Ai}(z)=\frac{z^{-\frac{1}{4}}}{2\sqrt{\pi}}e^{-\frac{2}{3}z^{\frac{3}{2}}}\left(1-\frac{5}{48}z^{-\frac{3}{2}}+\mathcal{O}\left(z^{-\frac{6}{2}}\right)\right),\ \ z\rightarrow\infty,\ \ -\pi<\textnormal{arg}\,z<\pi.
\end{equation*}
Through the standard properties of $w=\textnormal{Ai}(z)$ (cf. \cite{N}), the model function $A^{RH}(\z)$ in \eqref{bare:1} has jumps on the contour shown in Figure \ref{Airy1}, more precisely
\begin{itemize}
	\item $A^{RH}(\z)$ is analytic for $\z\in\mathbb{C}\backslash\{\textnormal{arg}\,\z=0,\frac{2\pi}{3},\frac{4\pi}{3}\}$
	\item The boundary values are related via the equations
	\begin{eqnarray*}
		A^{RH}_+(\z)&=&A^{RH}_-(\z)\begin{pmatrix}
		1 & 1\\
		0 & 1\\
		\end{pmatrix},\ \ \textnormal{arg}\,\z=\frac{2\pi}{3},\\
		A^{RH}_+(\z)&=&A^{RH}_-(\z)\begin{pmatrix}
		1 & 0\\
		-1 & 1\\
		\end{pmatrix},\ \ \textnormal{arg}\,\z=\frac{4\pi}{3},\\
		A^{RH}_+(\z)&=&A^{RH}_-(\z)\begin{pmatrix}
		1 & 1\\
		-1 & 0\\
		\end{pmatrix},\ \ \textnormal{arg}\,\z=0.
	\end{eqnarray*}
	\item As $\z\rightarrow\infty$, we have in a full neighborhood of infinity,
	\begin{equation}\label{bare:2}
		A^{RH}(\z)=\z^{-\frac{1}{4}\sigma_3}\frac{\im}{2}\begin{pmatrix}
		1 & -\im\\
		-1 & -\im\\
		\end{pmatrix}\left[I+\frac{1}{48\z^{\frac{3}{2}}}\begin{pmatrix}
		-1 & 6\im\\
		6\im & 1\\
		\end{pmatrix}+\mathcal{O}\left(\z^{-\frac{6}{2}}\right)\right]e^{\frac{2}{3}\z^{\frac{3}{2}}\sigma_3}.
	\end{equation}
\end{itemize}
\begin{figure}
\begin{minipage}{0.35\textwidth} 
\begin{center}
\resizebox{0.8\textwidth}{!}{\includegraphics{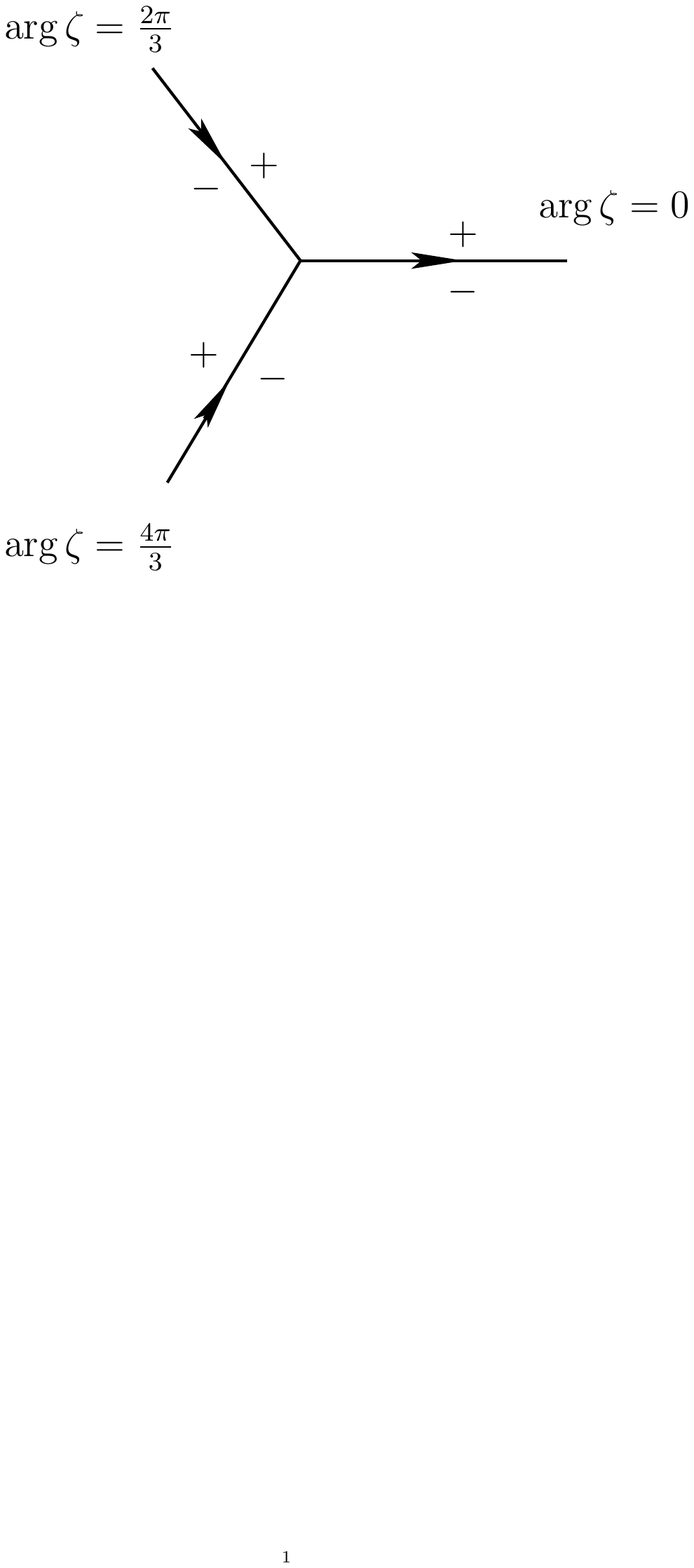}}
\caption{Jump contour for the bare parametrix $A^{RH}(\zeta)$.}
\label{Airy1}
\end{center}
\end{minipage}
\begin{minipage}{0.45\textwidth}
\begin{center}
\resizebox{0.85\textwidth}{!}{\includegraphics{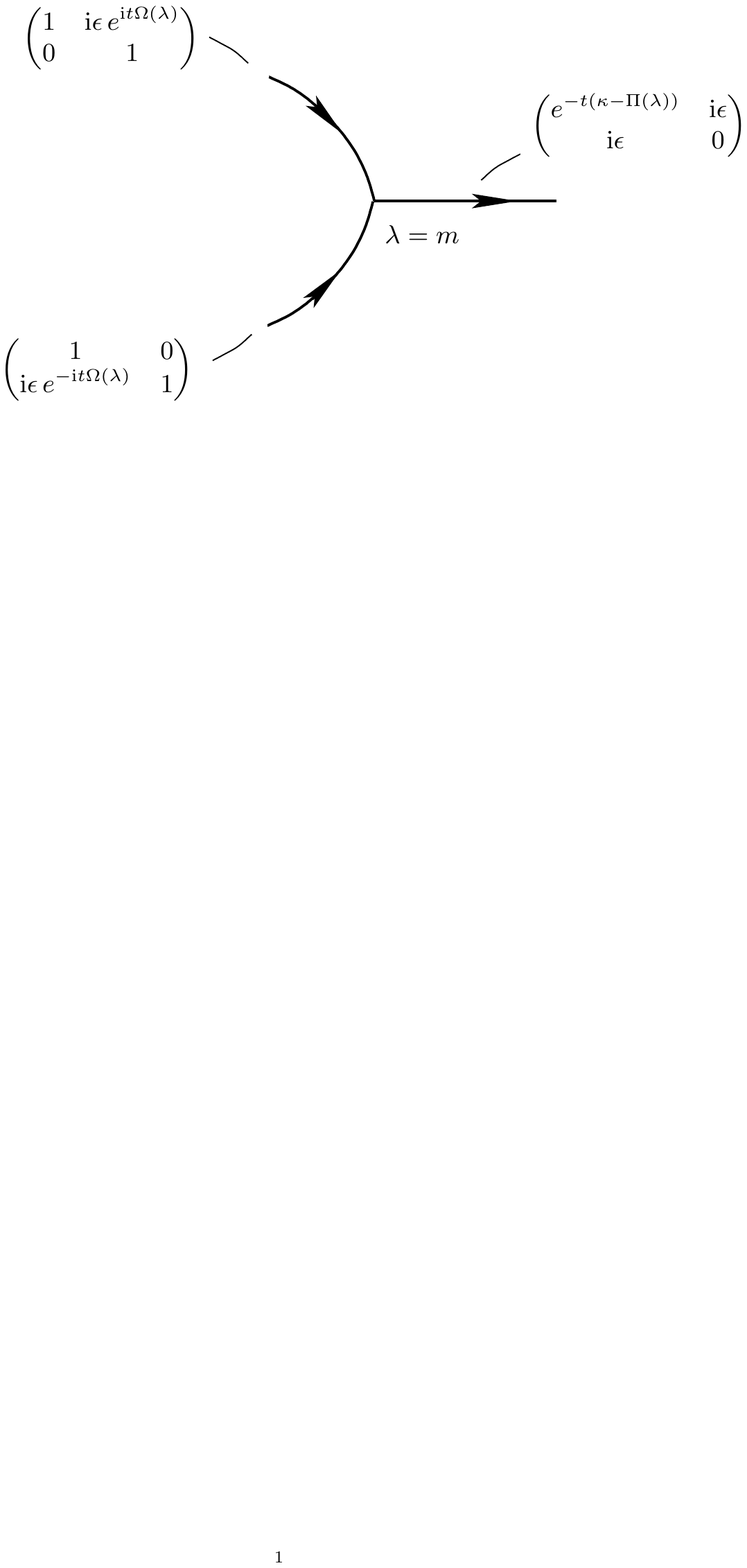}}
\caption{Jumps of the model function $U(\lambda)$ near $\lambda=m$.}
\label{figure7}
\end{center}
\end{minipage}
\end{figure}
With the (locally conformal) change of variables
\begin{equation}\label{mat:0}
	\z(\lambda)=\left[6t\,e^{\im\frac{\pi}{2}}\int_m^{\lambda}\left(\left(M^2-\mu^2\right)\left(m^2-\mu^2\right)\right)^{\frac{1}{2}}\,\d\mu\right]^{\frac{2}{3}}\sim\left(4t\sqrt{2m(M^2-m^2)}\,\right)^{\frac{2}{3}}(\lambda-m),\hspace{0.5cm} |\lambda-m|<r,
\end{equation}
we define
\begin{equation}\label{param}
	U(\lambda)=B_{r_1}(\lambda)A^{RH}\big(\z(\lambda)\big)e^{-\frac{2}{3}\z^{\frac{3}{2}}(\lambda)\sigma_3}e^{-\im\frac{\pi}{4}\epsilon\sigma_3},\hspace{0.5cm}|\lambda-m|<r
\end{equation}
which involves the (locally) analytic function
\begin{equation*}
	B_{r_1}(\lambda)=N(\lambda)e^{\im\frac{\pi}{4}\epsilon\sigma_3}\begin{pmatrix}
	-\im & \im\\
	1 & 1\\
	\end{pmatrix}\delta(\lambda)^{\sigma_3}\left(\z(\lambda)\frac{\lambda-M}{\lambda-m}\right)^{\frac{1}{4}\sigma_3},\ \ \delta(\lambda)=\left(\frac{\lambda-m}{\lambda-M}\right)^{\frac{1}{4}}\rightarrow 1.
\end{equation*}
\begin{remark} Analyticity of $B_{r_1}(\lambda)$ near $\lambda=m$ follows from the simple observation that for $\lambda\in(m,m+r)$,
\begin{align*}
	\big(B_{r_1}(\lambda)\big)_+&=N_-(\lambda)\begin{pmatrix}
	0 & -e^{-\im\frac{\pi}{2}\epsilon}\\
	e^{\im\frac{\pi}{2}\epsilon} & 0\\
	\end{pmatrix}e^{\im\frac{\pi}{4}\epsilon\sigma_3}\begin{pmatrix}
	-\im & \im\\
	1 & 1\\
	\end{pmatrix}\big(\delta_-(\lambda)\big)^{\sigma_3}e^{-\im\frac{\pi}{2}\sigma_3}\left(\z(\lambda)\frac{\lambda-M}{\lambda-m}\right)^{\frac{1}{4}\sigma_3}\\
	&=\big(B_{r_1}(\lambda)\big)_-,
\end{align*}
thus $B_{r_1}(\lambda)$ can only have an isolated singularity at $\lambda=m$. But this singularity is at worst of square root type, hence has to be removable.
\end{remark}
As we are allowed to locally deform the jump contours in \eqref{bare:1} and match those in the initial $L$-RHP, the local parametrix $U(\lambda)$ has jumps in $D(m,r)$ as shown in Figure \ref{figure7}. Moreover, with the help of \eqref{bare:2},
	\begin{eqnarray}
		U(\lambda)&=&N(\lambda)e^{\im\frac{\pi}{4}\epsilon\sigma_3}\left[I+\frac{1}{48\z^{\frac{3}{2}}}\begin{pmatrix}
		-1 & 6\im\\
		6\im & 1\\
		\end{pmatrix} +\mathcal{O}\left(\z^{-\frac{6}{2}}\right)\right]e^{-\im\frac{\pi}{4}\epsilon\sigma_3}\nonumber\\
		&=&\left[I+N(\lambda)\left\{\frac{1}{48\z^{\frac{3}{2}}}\begin{pmatrix}
		-1 & -6\epsilon\\
		6\epsilon & 1\\
		\end{pmatrix} +\mathcal{O}\left(\z^{-\frac{6}{2}}\right)\right\}\big(N(\lambda)\big)^{-1}\right]N(\lambda)\label{mat:1}
	\end{eqnarray}
	so that (as $t\rightarrow+\infty,|s_1|\uparrow 1$ with $\varkappa\in[\delta,\frac{2}{3}\sqrt{2}-\delta]$ fixed)
	\begin{equation}\label{pares:1}
		U(\lambda)=\big(I+\mathcal{O}\left(t^{-1}\right)\big)N(\lambda),
	\end{equation}
	uniformly for $0<r_1\leq|\lambda-m|\leq r_2<\min\{\frac{m}{2},\frac{1}{2}(M-m)\}$.\bigskip

The parametrix near the remaining inner branch point $\lambda=-m$ is constructed similarly: introduce
\begin{equation}\label{bare:3}
	\widetilde{A}^{RH}(\z)=\widetilde{A}_0(\z)\begin{cases}
	I,&\textnormal{arg}\,\z\in(-\pi,-\frac{\pi}{3})\\
	\begin{pmatrix}
	1 & 0\\
	-e^{-\im\pi(1-\gamma)} & 1\\
	\end{pmatrix},&\textnormal{arg}\,\z\in(-\frac{\pi}{3},\frac{\pi}{3})\smallskip\\
	\begin{pmatrix}
	1 & 0\\
	-e^{-\im\pi(1-\gamma)} & 1\\
	\end{pmatrix}\begin{pmatrix}
	1 & e^{\im\pi(1-\gamma)}\\
	0 & 1\\
	\end{pmatrix},&\textnormal{arg}\,\z\in(\frac{\pi}{3},\pi)
	\end{cases}
\end{equation}
where
\begin{equation*}
	\gamma=1+\frac{8t}{\pi}\int_{-m}^m\sqrt{\left(M^2-\mu^2\right)\left(m^2-\mu^2\right)}\,\d\mu\equiv1-2tV(\varkappa),
\end{equation*}
with
\begin{align}
	\widetilde{A}_0(\z)&=\im\sqrt{\pi}e^{\im\frac{\pi}{2}(1-\gamma)\sigma_3}e^{-\im\frac{\pi}{3}\sigma_3}\begin{pmatrix}
	e^{-\im\frac{\pi}{4}}e^{-\im\pi(1-\gamma)} & 0\\
	0 & e^{-\im\frac{5\pi}{12}}\\
	\end{pmatrix}\begin{pmatrix}
	e^{\im\pi}\textnormal{Ai}'\left(e^{\im\pi}\z\right) & -e^{\im\frac{\pi}{3}}\textnormal{Ai}'\left(e^{\im\frac{\pi}{3}}\z\right)\smallskip\\
	-\textnormal{Ai}\left(e^{\im\pi}\z\right)&\textnormal{Ai}\left(e^{\im\frac{\pi}{3}}\z\right)
	\end{pmatrix}\nonumber\\
	&\,\times e^{\im\frac{\pi}{3}\sigma_3}e^{-\im\frac{\pi}{2}(1-\gamma)\sigma_3},\hspace{0.5cm}\z\in\mathbb{C},\label{bareAiry2}
\end{align}
and collect the following properties
\begin{itemize}
	\item $\widetilde{A}^{RH}(\z)$ is analytic for $\z\in\mathbb{C}\backslash\{\textnormal{arg}\,\z=-\frac{\pi}{3},\frac{\pi}{3},\pi\}$
	\item The jumps are as follows, compare Figure \ref{Airy2} for orientation,
	\begin{eqnarray*}
		\widetilde{A}^{RH}_+(\z)&=&\widetilde{A}^{RH}_-(\z)\begin{pmatrix}
		1 & 0\\
		-e^{-\im\pi(1-\gamma)} & 1\\
		\end{pmatrix},\ \ \ \textnormal{arg}\,\z=-\frac{\pi}{3}\\
		\widetilde{A}^{RH}_+(\z)&=&\widetilde{A}^{RH}_-(\z)\begin{pmatrix}
		1 & e^{\im\pi(1-\gamma)}\\
		0& 1\\
		\end{pmatrix},\ \ \ \textnormal{arg}\,\z=\frac{\pi}{3}\\
		\widetilde{A}^{RH}_+(\z)&=&\widetilde{A}^{RH}_-(\z)\begin{pmatrix}
		1 & e^{\im\pi(1-\gamma)}\\
		-e^{-\im\pi(1-\gamma)} & 0\\
		\end{pmatrix},\ \ \ \textnormal{arg}\,\z=\pi
	\end{eqnarray*}
	\item As $\z\rightarrow\infty$, valid in a full neighborhood of infinity,
	\begin{align}\label{bare:4}
		\widetilde{A}^{RH}(\z) &=\z^{\frac{1}{4}\sigma_3}\frac{\im}{2}e^{-\im\pi(1-\gamma)}\begin{pmatrix}
		1 & -\im e^{\im\pi(1-\gamma)}\\
		-1 & -\im e^{\im\pi(1-\gamma)}\\
		\end{pmatrix}\bigg[I+\frac{\im}{48\z^{\frac{3}{2}}}e^{\im\frac{\pi}{2}(1-\gamma)\sigma_3}\begin{pmatrix}
		1 & 6\im\\
		6\im & -1\\
		\end{pmatrix}e^{-\im\frac{\pi}{2}(1-\gamma)\sigma_3}\nonumber\\
		&+\mathcal{O}\left(\z^{-\frac{6}{2}}\right)\bigg]e^{\frac{2}{3}\im\z^{\frac{3}{2}}\sigma_3}
	\end{align}
\end{itemize}
\begin{figure}
\begin{minipage}{0.35\textwidth} 
\begin{center}
\resizebox{0.8\textwidth}{!}{\includegraphics{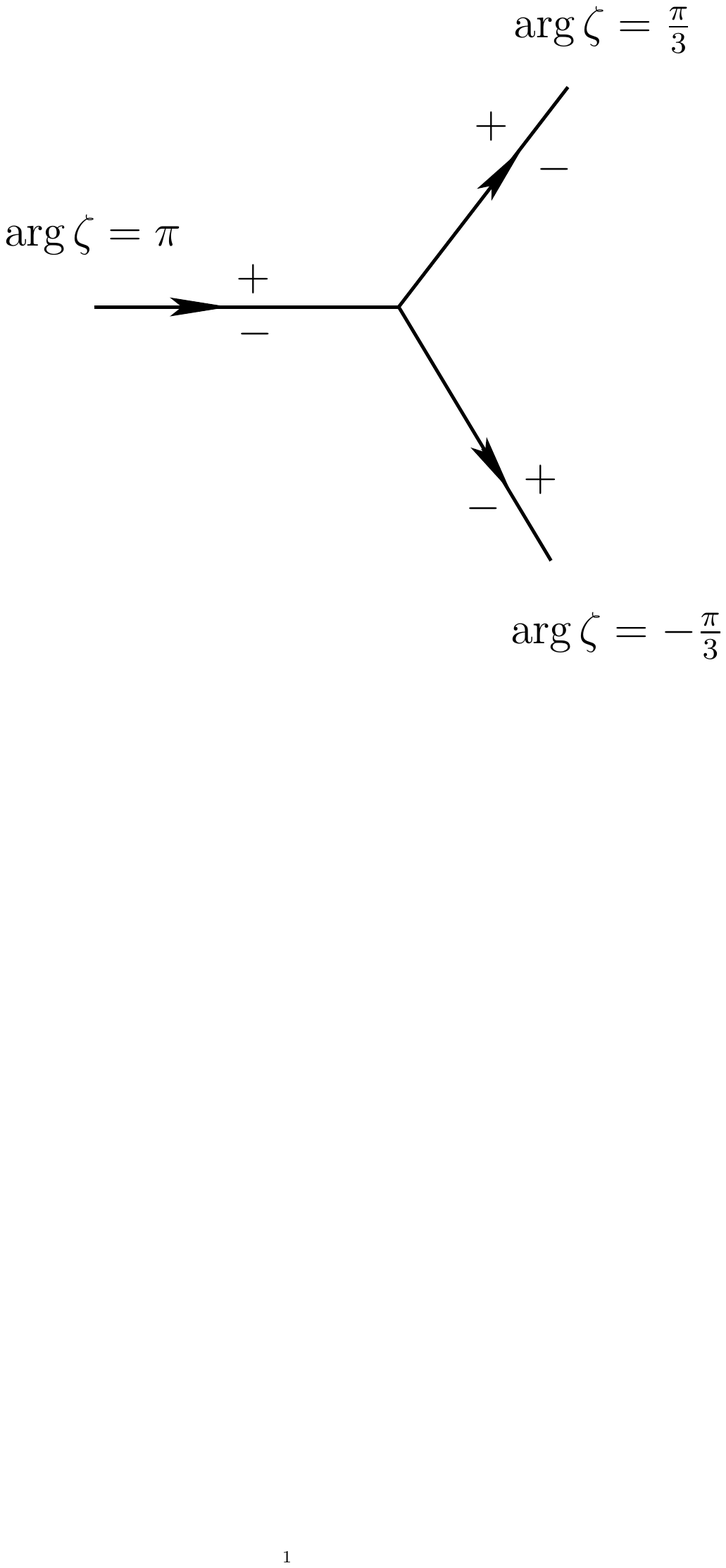}}
\caption{Jump contour for the bare parametrix $\widetilde{A}^{RH}(\zeta)$.}
\label{Airy2}
\end{center}
\end{minipage}
\ \ \ \ \ \ \begin{minipage}{0.35\textwidth}
\begin{center}
\resizebox{1.3\textwidth}{!}{\includegraphics{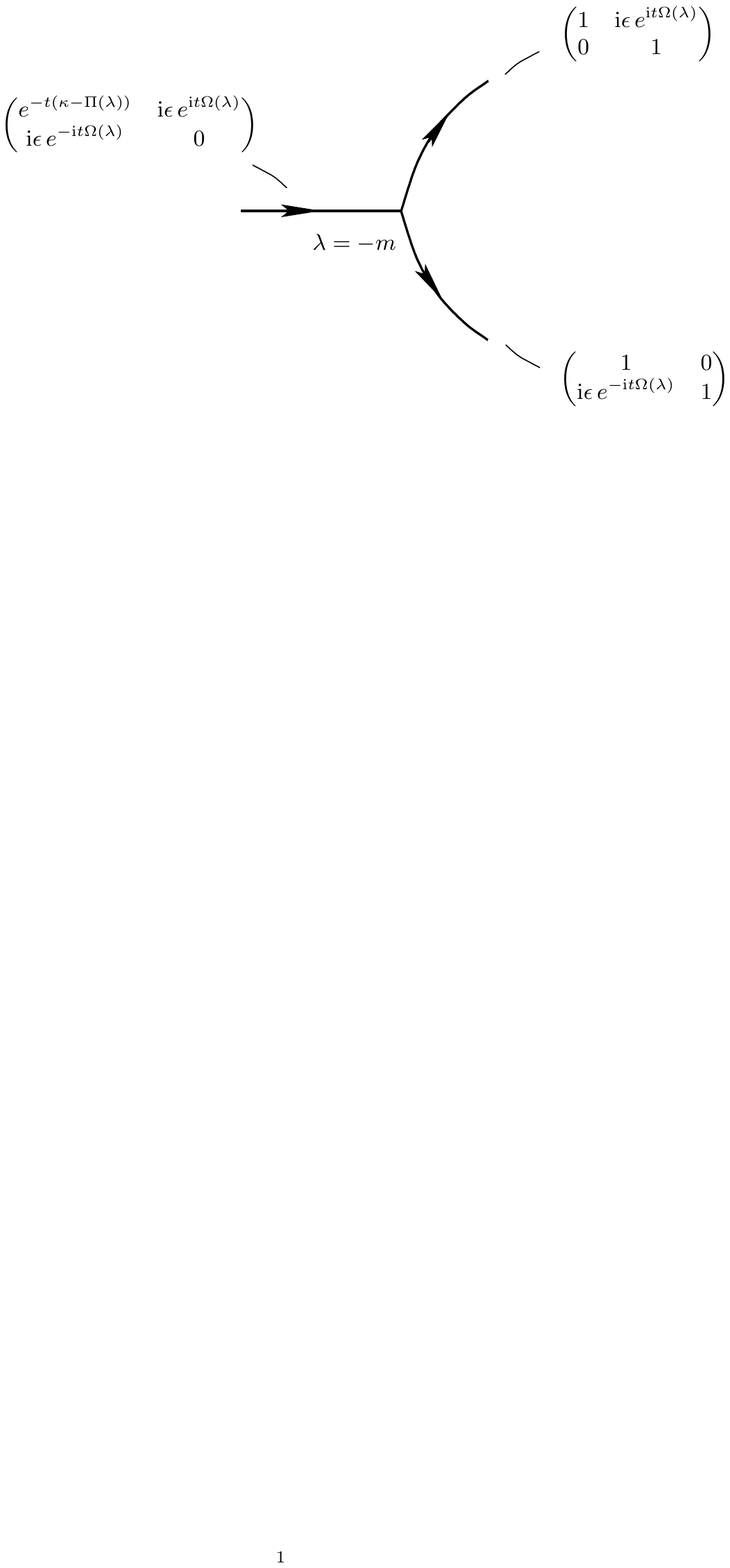}}
\caption{Jumps of the model function $V(\lambda)$ near $\lambda=-m$.}
\label{figure8}
\end{center}
\end{minipage}
\end{figure}

With the change of variables
\begin{equation*}
	\z(\lambda)=\left[6t\int_{-m}^{\lambda}\left(\left(M^2-\mu^2\right)\left(m^2-\mu^2\right)\right)^{\frac{1}{2}}\,\d\mu\right]^{\frac{2}{3}}\sim\left(4t\sqrt{2m(M^2-m^2)}\,\right)^{\frac{2}{3}}(\lambda+m),\ \ \ |\lambda+m|<r,
\end{equation*}
we move ahead and define the parametrix near $\lambda=-m$,
\begin{equation}\label{paramm}
	V(\lambda)=B_{\ell_1}(\lambda)\widetilde{A}^{RH}\big(\z(\lambda)\big)e^{-\frac{2}{3}\im\z^{\frac{3}{2}}(\lambda)\sigma_3}e^{-\im\frac{\pi}{4}\epsilon\sigma_3},\ \ \ |\lambda+m|<r,
\end{equation}
where
\begin{equation*}
	B_{\ell_1}(\lambda)=N(\lambda)e^{\im\frac{\pi}{4}\epsilon\sigma_3}\begin{pmatrix}
	-\im e^{\im\pi(1-\gamma)} & \im e^{\im\pi(1-\gamma)}\\
	1 & 1\\
	\end{pmatrix}\widetilde{\delta}(\lambda)^{-\sigma_3}\left(\z(\lambda)\frac{\lambda+M}{\lambda+m}\right)^{-\frac{1}{4}\sigma_3},
\end{equation*}
with $\widetilde{\delta}(\lambda)=\big(\frac{\lambda+m}{\lambda+M}\big)^{\frac{1}{4}}\rightarrow 1,\,\lambda\rightarrow\infty$.
\begin{remark} Again, the multiplier $B_{\ell_1}(\lambda)$ is analytic at $\lambda=-m$, since
\begin{align*}
	\big(B_{\ell _1}(\lambda)\big)_+&=N_-(\lambda)\begin{pmatrix}
	0 & e^{\im\frac{\pi}{2}\epsilon+2\pi\im tV}\\
	-e^{-\im\frac{\pi}{2}\epsilon-2\pi\im tV} & 0\\
	\end{pmatrix}e^{\im\frac{\pi}{4}\epsilon\sigma_3}\begin{pmatrix}
	-\im e^{\im\pi(1-\gamma)} & \im e^{\im\pi(1-\gamma)}\\
	1 & 1\\
	\end{pmatrix}\big(\,\widetilde{\delta}_-(\lambda)\big)^{-\sigma_3}\\
	&\times\,e^{-\im\frac{\pi}{2}\sigma_3}\left(\,\z(\lambda)\frac{\lambda+M}{\lambda+m}\right)^{-\frac{1}{4}\sigma_3}=\big(B_{\ell_1}(\lambda)\big)_-,\ \ \ \ \lambda\in(-m-r,-m).
\end{align*}
\end{remark}
Moreover, we check that $V(\lambda)$ (after employing a local contour deformation) has jumps inside $D(-m,r)$ as shown in Figure \ref{figure8} and through \eqref{bare:4},
	\begin{eqnarray*}
		V(\lambda)&=&N(\lambda)e^{\im\frac{\pi}{4}\epsilon\sigma_3}\left[I+\frac{\im}{48\z^{\frac{3}{2}}}e^{\im\frac{\pi}{2}(1-\gamma)\sigma_3}\begin{pmatrix}
		1 & 6\im\\
		6\im & -1\\
		\end{pmatrix}e^{-\im\frac{\pi}{2}(1-\gamma)\sigma_3}+\mathcal{O}\left(\z^{-\frac{6}{2}}\right)\right]e^{-\im\frac{\pi}{4}\epsilon\sigma_3}\\
		&=&\left[I+N(\lambda)\left\{\frac{\im}{48\z^{\frac{3}{2}}}\begin{pmatrix}
		1 & -6\epsilon\, e^{2\pi\im tV}\\
		6\epsilon\, e^{-2\pi\im tV} & -1\\
		\end{pmatrix}+\mathcal{O}\left(\z^{-\frac{6}{2}}\right)\right\}\big(N(\lambda)\big)^{-1}\right]N(\lambda).
	\end{eqnarray*}
	Thus for $t\rightarrow+\infty,|s_1|\uparrow 1$ such that $\varkappa\in[\delta,\frac{2}{3}\sqrt{2}-\delta],\delta\in(0,\frac{1}{3}\sqrt{2})$,
	\begin{equation}\label{paraes:2}
		V(\lambda)=\big(I+\mathcal{O}\left(t^{-1}\right)\big)N(\lambda)
	\end{equation}
	uniformly for $0<r_1\leq|\lambda+m|\leq r_2<\min\{\frac{m}{2},\frac{1}{2}(M-m)\}$.
This completes the construction of the parametrices at the inner branch points $\lambda=\pm m$.
\subsection{Parametrices near the outer branch points}\label{outpara}
Once more Airy functions are used in the constructions, first near $\lambda=M$: Define the bare parametrix,
\begin{equation}\label{bare:5}
	\widehat{A}^{RH}(\z)=\widehat{A}_0(\z)\begin{cases}
	\begin{pmatrix}
	1 & 0\\
	1 & 1\\
	\end{pmatrix},&\textnormal{arg}\,\z\in(-\pi,-\frac{2\pi}{3})\\
	I,&\textnormal{arg}\,\z\in(-\frac{2\pi}{3},-\frac{\pi}{3})\\
	\begin{pmatrix}
	1 & 1\\
	0 & 1\\
	\end{pmatrix},&\textnormal{arg}\,\z\in(-\frac{\pi}{3},\frac{\pi}{3})\smallskip\\
	\begin{pmatrix}
	1 & 1\\
	0 & 1\\
	\end{pmatrix}\begin{pmatrix}
	1 & 0\\
	-1 & 1\\
	\end{pmatrix},&\textnormal{arg}\,\z\in(\frac{\pi}{3},\pi)
	\end{cases}
\end{equation}
with
\begin{equation*}
	\widehat{A}_0(\z)=\sigma_2\widetilde{A}_0(\z)\Big|_{\gamma\equiv1}\sigma_2=\im\sqrt{\pi}\,e^{\im\frac{\pi}{3}\sigma_3}\begin{pmatrix}
	e^{-\im\frac{5\pi}{12}} & 0\\
	0 & e^{-\im\frac{\pi}{4}}\\
	\end{pmatrix}\begin{pmatrix}
	\textnormal{Ai}\left(e^{\im\frac{\pi}{3}}\z\right) & \textnormal{Ai}\left(e^{\im\pi}\z\right)\smallskip\\
	e^{\im\frac{\pi}{3}}\textnormal{Ai}'\left(e^{\im\frac{\pi}{3}}\z\right) & e^{\im\pi}\textnormal{Ai}'\left(e^{\im\pi}\z\right)
	\end{pmatrix}e^{-\im\frac{\pi}{3}\sigma_3},\ \ \z\in\mathbb{C}.
\end{equation*}
This model function has the properties listed below
\begin{itemize}
	\item $\widehat{A}^{RH}(\z)$ is analytic for $\z\in\mathbb{C}\backslash\{\textnormal{arg}\,\z=-\frac{2\pi}{3},-\frac{\pi}{3},\frac{\pi}{3},\pi\}$
	\item We have the following jump behavior on the contour shown in Figure \ref{figure9}, 
	\begin{eqnarray*}
		\widehat{A}^{RH}_+(\z)&=&\widehat{A}^{RH}_-(\z)\begin{pmatrix}
		1 & 0\\
		-1 & 1\\
		\end{pmatrix},\ \ \ \textnormal{arg}\,\z=-\frac{2\pi}{3},\frac{\pi}{3}\\
		\widehat{A}^{RH}_+(\z)&=&\widehat{A}^{RH}_-(\z)\begin{pmatrix}
		1 & 1\\
		0 & 1\\
		\end{pmatrix},\ \ \ \textnormal{arg}\,\z=-\frac{\pi}{3}\\
		\widehat{A}^{RH}_+(\z)&=&\widehat{A}^{RH}_-(\z)\begin{pmatrix}
		0 & 1\\
		-1 & 0\\
		\end{pmatrix},\ \ \ \textnormal{arg}\,\z=\pi
	\end{eqnarray*}
\begin{figure}[tbh]
\begin{minipage}{0.35\textwidth} 
\begin{center}
\resizebox{0.85\textwidth}{!}{\includegraphics{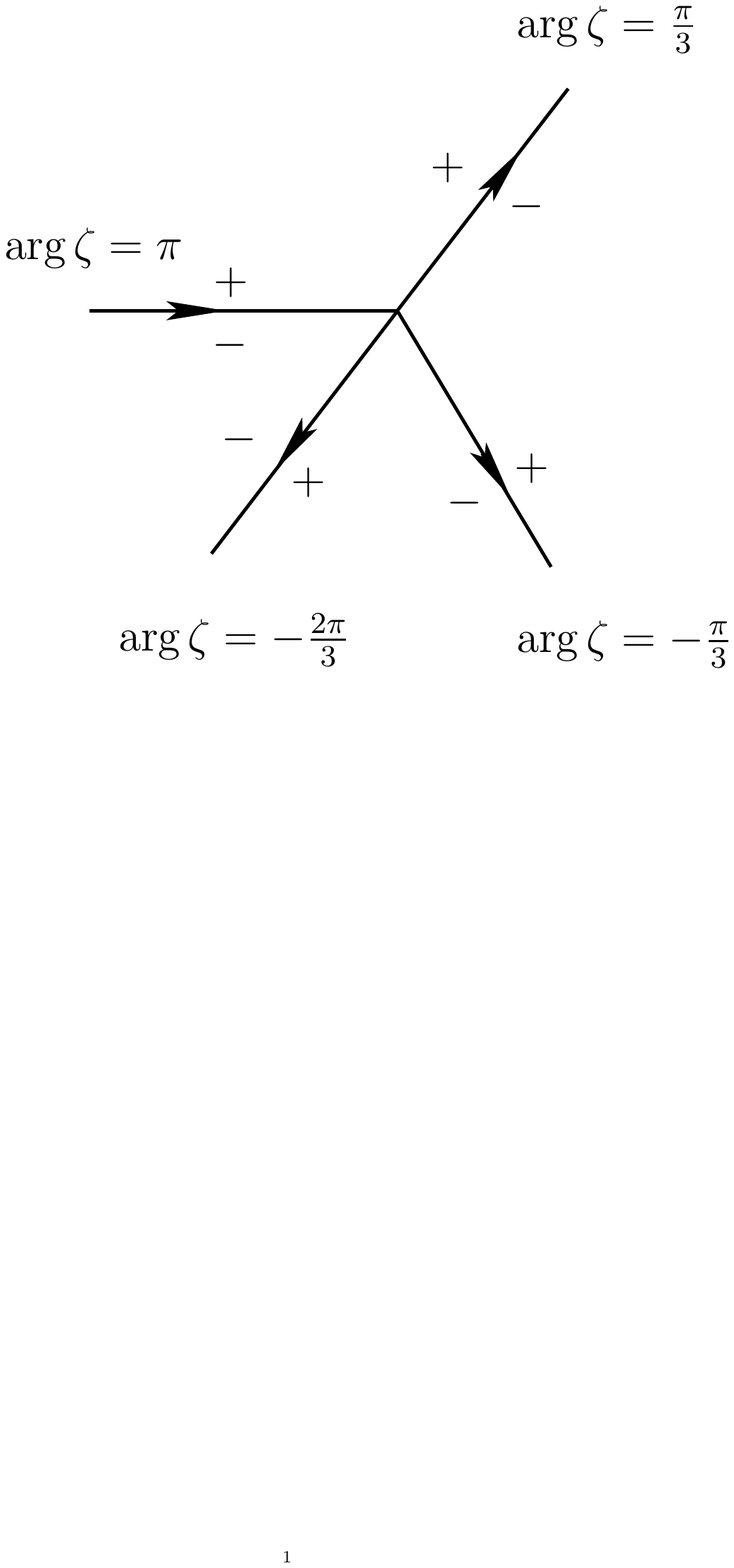}}
\caption{Jump contour for the bare parametrix $\widehat{A}^{RH}(\zeta)$.}
\label{figure9}
\end{center}
\end{minipage}
\begin{minipage}{0.5\textwidth}
\begin{center}
\resizebox{0.85\textwidth}{!}{\includegraphics{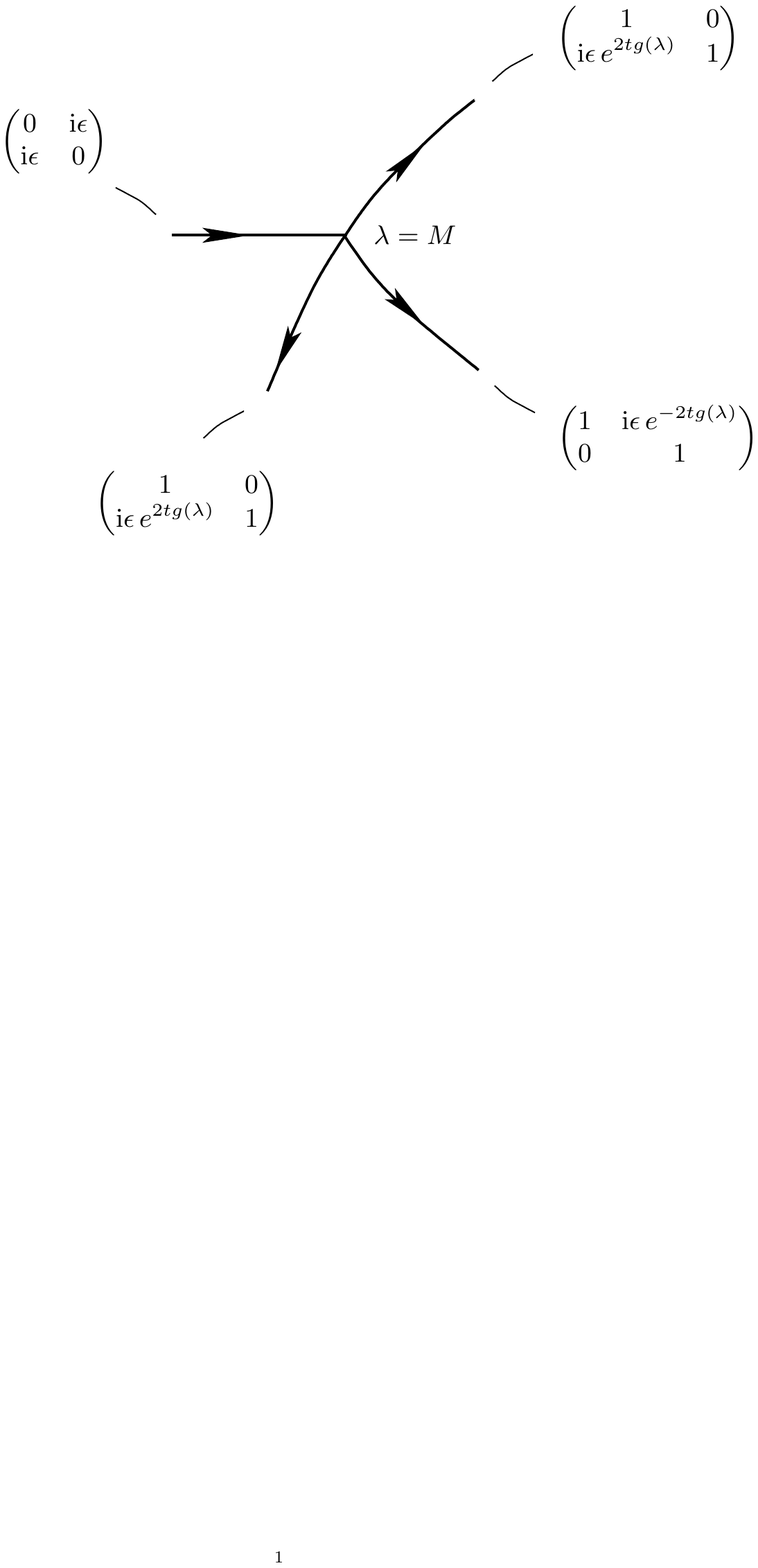}}
\caption{Jumps of the model function $P(\lambda)$ near $\lambda=M$.}
\label{figure10}
\end{center}
\end{minipage}
\end{figure}		
	\item As $\z\rightarrow\infty$,
	\begin{equation}\label{bare:6}
		\widehat{A}^{RH}(\z)=\z^{-\frac{1}{4}\sigma_3}\frac{\im}{2}\begin{pmatrix}
		-\im & 1\\
		\im & 1\\
		\end{pmatrix}\left[I+\frac{\im}{48\z^{\frac{3}{2}}}\begin{pmatrix}
		-1 & -6\im\\
		-6\im & 1\\
		\end{pmatrix}+\mathcal{O}\left(\z^{-\frac{6}{2}}\right)\right]e^{-\frac{2}{3}\im\z^{\frac{3}{2}}\sigma_3}
	\end{equation}
\end{itemize}
This time, we use the change of variables,
\begin{equation*}		
	\z(\lambda)=\left[6t\int_M^{\lambda}\left(\left(\mu^2-M^2\right)\left(\mu^2-m^2\right)\right)^{\frac{1}{2}}\d\mu\right]^{\frac{2}{3}}\sim\left(4t\sqrt{2M(M^2-m^2)}\right)^{\frac{2}{3}}(\lambda-M),\ \ \ |\lambda-M|<r
\end{equation*}
and the local parametrix is defined as
\begin{equation}\label{parao}
	P(\lambda)=B_{r_2}(\lambda)\widehat{A}^{RH}\big(\z(\lambda)\big)e^{\frac{2}{3}\im\z^{\frac{3}{2}}(\lambda)\sigma_3}e^{-\im\frac{\pi}{4}\epsilon\sigma_3},\ \ \ |\lambda-M|<r.
\end{equation}
Here,
\begin{equation*}
	B_{r_2}(\lambda)=N(\lambda)e^{\im\frac{\pi}{4}\epsilon\sigma_3}\begin{pmatrix}
	1 & -1\\
	-\im & -\im\\
	\end{pmatrix}\widehat{\delta}(\lambda)^{\sigma_3}\left(\z(\lambda)\frac{\lambda-m}{\lambda-M}\right)^{\frac{1}{4}\sigma_3},\hspace{0.5cm}\widehat{\delta}(\lambda)=\left(\frac{\lambda-M}{\lambda-m}\right)^{\frac{1}{4}}\rightarrow 1,\ \lambda\rightarrow\infty
\end{equation*}
is once more analytic near $\lambda=M$.  The relevant jump properties of $P(\lambda)$ are displayed in Figure \ref{figure10} and with \eqref{bare:6},
	\begin{eqnarray*}
		P(\lambda)&=&N(\lambda)e^{\im\frac{\pi}{4}\epsilon\sigma_3}\left[I+\frac{\im}{48\z^{\frac{3}{2}}}\begin{pmatrix}
		-1 & -6\im\\
		-6\im & 1\\
		\end{pmatrix}+\mathcal{O}\left(\z^{-\frac{6}{2}}\right)\right]e^{-\im\frac{\pi}{4}\epsilon\sigma_3}\\
		&=&\left[I+N(\lambda)\left\{\frac{\im}{48\z^{\frac{3}{2}}}\begin{pmatrix}
		-1 & 6\epsilon\\
		-6\epsilon & 1\\
		\end{pmatrix}+\mathcal{O}\left(\z^{-\frac{6}{2}}\right)\right\}\big(N(\lambda)\big)^{-1}\right]N(\lambda).
	\end{eqnarray*}
	Hence for $t\rightarrow+\infty,|s_1|\uparrow 1$ such that $\varkappa\in[\delta,\frac{2}{3}\sqrt{2}-\delta],\delta\in(0,\frac{1}{3}\sqrt{2})$,
	\begin{equation}\label{paraes:3}
		P(\lambda)=\left(I+\mathcal{O}\left(t^{-1}\right)\right)N(\lambda)
	\end{equation}
	uniformly for $0<r_1\leq|\lambda-M|\leq r_2<\frac{1}{2}(M-m)$.\bigskip		

We are left with the parametrix near the left most branch point $\lambda=-M$: assemble
\begin{equation}\label{bare:7}
	\bar{A}^{RH}(\z)=\bar{A}_0(\z)\begin{cases}
	\begin{pmatrix}
	1 & e^{\im\pi(1-\gamma)}\\
	0 & 1\\
	\end{pmatrix},&\textnormal{arg}\,\z\in(0,\frac{\pi}{3})\\
	I,&\textnormal{arg}\,\z\in(\frac{\pi}{3},\frac{2\pi}{3})\\
	\begin{pmatrix}
	1 & 0\\
	e^{-\im\pi(1-\gamma)} & 1\\
	\end{pmatrix},&\textnormal{arg}\,\z\in(\frac{2\pi}{3},\frac{4\pi}{3})\smallskip\\
	\begin{pmatrix}
	1 & 0\\
	e^{-\im\pi(1-\gamma)} & 1\\
	\end{pmatrix}\begin{pmatrix}
	1 & -e^{\im\pi(1-\gamma)}\\
	0 & 1\\
	\end{pmatrix},&\textnormal{arg}\,\z\in(\frac{4\pi}{3},2\pi)
	\end{cases}
\end{equation}
where
\begin{align*}
	\bar{A}_0(\z)&=e^{\im\frac{\pi}{2}(1-\gamma)\sigma_3}\sigma_2A_0(\z)\sigma_2e^{-\im\frac{\pi}{2}(1-\gamma)\sigma_3}\\
	&=\im\sqrt{\pi}e^{\im\frac{\pi}{6}\sigma_3}e^{\im\frac{\pi}{2}(1-\gamma)\sigma_3}\begin{pmatrix}
	e^{\im\frac{\pi}{2}} & 0\\
	0 & e^{-\im\frac{\pi}{6}}\\
	\end{pmatrix}\begin{pmatrix}
	\textnormal{Ai}'(\z) & -e^{-\im\frac{2\pi}{3}}\textnormal{Ai}'\big(e^{-\im\frac{2\pi}{3}}\z\big)\smallskip\\
	-\textnormal{Ai}(\z) & \textnormal{Ai}\big(e^{-\im\frac{2\pi}{3}}\z\big)\\
	\end{pmatrix}e^{-\im\frac{\pi}{6}\sigma_3}e^{-\im\frac{\pi}{2}(1-\gamma)\sigma_3},\ \ \z\in\mathbb{C},
\end{align*}
and, as before, $\gamma=1-2tV(\varkappa)$. We have
\begin{itemize}
	\item $\bar{A}^{RH}(\z)$ is analytic for $\z\in\mathbb{C}\backslash\{\textnormal{arg}\,\z=0,\frac{\pi}{3},\frac{2\pi}{3},\frac{4\pi}{3}\}$
	\item Fixing orientations as shown in Figure \ref{figure11} below, the relevant boundary values are related as follows
	\begin{eqnarray*}
		\bar{A}^{RH}_+(\z)&=&\bar{A}^{RH}_-(\z)\begin{pmatrix}
		1 & -e^{\im\pi(1-\gamma)}\\
		0 & 1\\
		\end{pmatrix},\ \ \ \textnormal{arg}\,\z=\frac{\pi}{3},\frac{4\pi}{3}\\
		\bar{A}^{RH}_+(\z)&=&\bar{A}^{RH}_-(\z)\begin{pmatrix}
		1 & 0\\
		e^{-\im\pi(1-\gamma)} & 1\\
		\end{pmatrix},\ \ \ \textnormal{arg}\,\z=\frac{2\pi}{3}\\
		\bar{A}^{RH}_+(\z)&=&\bar{A}^{RH}_-(\z)\begin{pmatrix}
		0 & e^{\im\pi(1-\gamma)}\\
		-e^{-\im\pi(1-\gamma)} & 0\\
		\end{pmatrix},\ \ \ \textnormal{arg}\,\z=0
	\end{eqnarray*}
	\begin{figure}[tbh]
\begin{minipage}{0.35\textwidth} 
\begin{center}
\resizebox{1\textwidth}{!}{\includegraphics{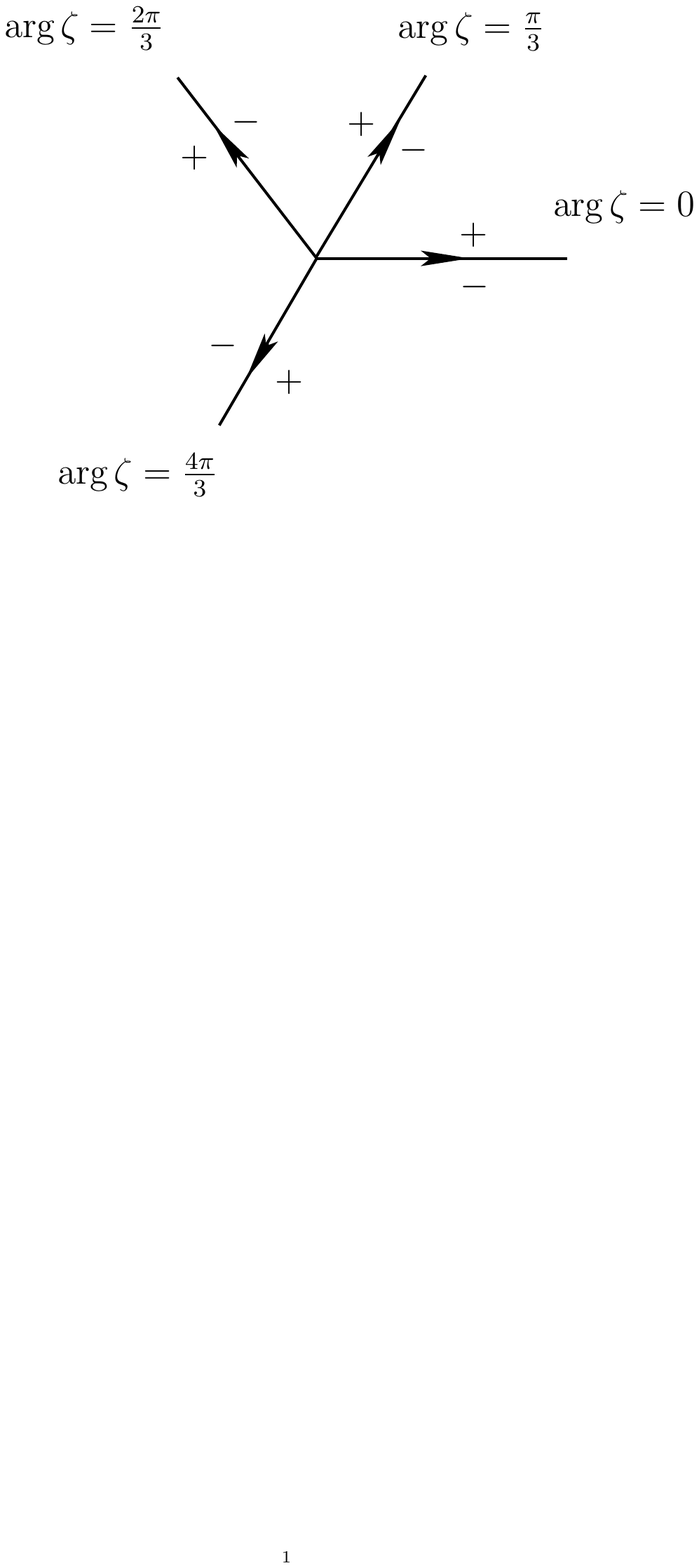}}
\caption{Jump contour for the bare parametrix $\bar{A}^{RH}(\zeta)$.}
\label{figure11}
\end{center}
\end{minipage}
\ \ \ \ \ \ \ \begin{minipage}{0.5\textwidth}
\begin{center}
\resizebox{1\textwidth}{!}{\includegraphics{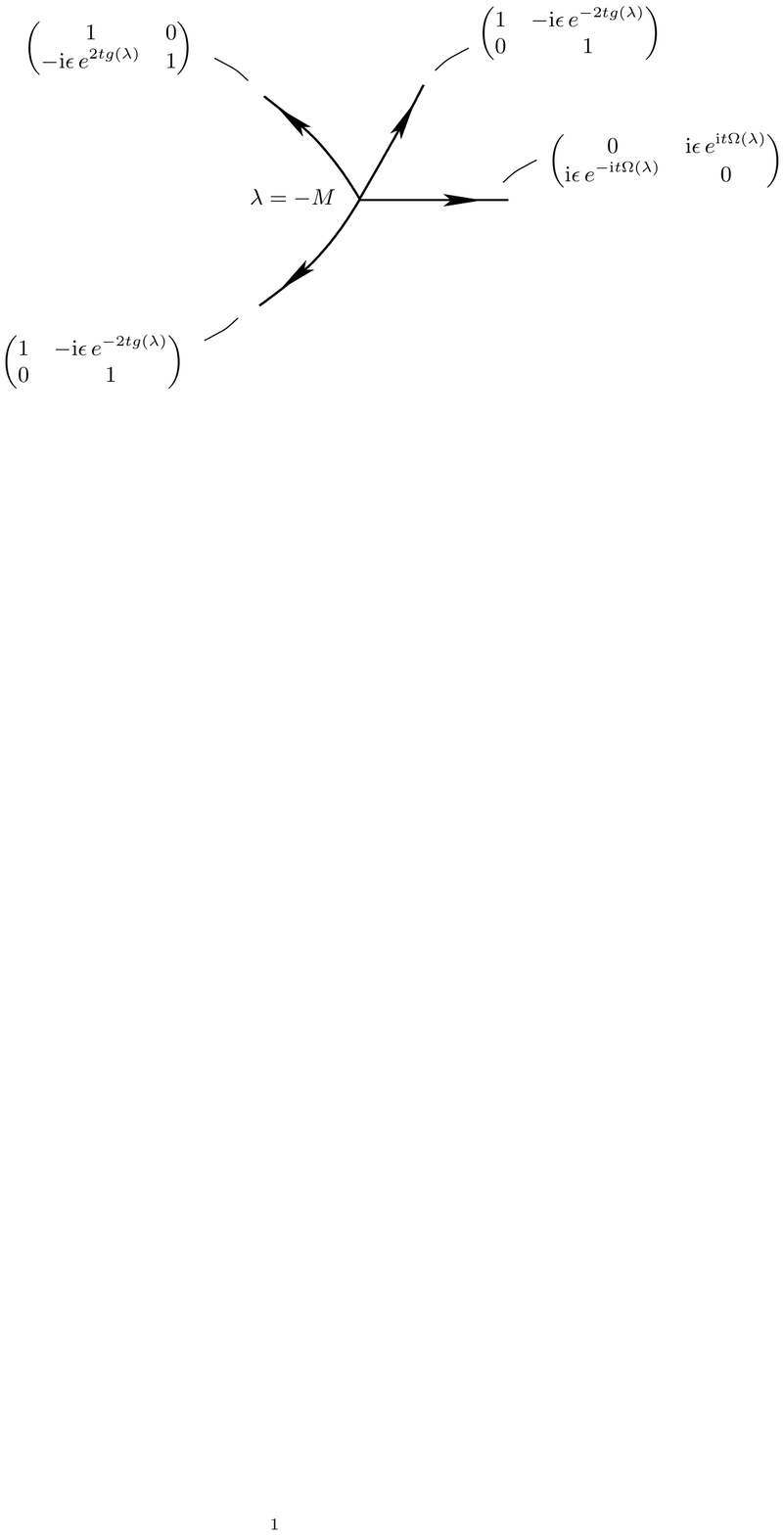}}
\caption{Jumps of the model function $Q(\lambda)$ near $\lambda=-M$.}
\label{figure12}
\end{center}
\end{minipage}
\end{figure}			
	\item As $\z\rightarrow\infty$,
	\begin{eqnarray}\label{bare:8}
		\bar{A}^{RH}(\z)&=&\z^{\frac{1}{4}\sigma_3}\frac{\im}{2}e^{\im\frac{\pi}{2}(1-\gamma)\sigma_3}\begin{pmatrix}
		-\im & 1\\
		\im & 1\\
		\end{pmatrix}e^{-\im\frac{\pi}{2}(1-\gamma)\sigma_3}\bigg[I+\frac{1}{48\z^{\frac{3}{2}}}\begin{pmatrix}
		1 & 6\im e^{-\im\pi\gamma}\\
		6\im e^{\im\pi\gamma} & -1\\
		\end{pmatrix}\nonumber\\
		&&+\mathcal{O}\left(\z^{-\frac{6}{2}}\right)\bigg]e^{-\frac{2}{3}\z^{\frac{3}{2}}\sigma_3}
	\end{eqnarray}
\end{itemize}
and with the change of variables
\begin{equation*}
	\z(\lambda)=\left[6te^{\im\frac{\pi}{2}}\int_{-M}^{\lambda}\left(\left(\mu^2-M^2\right)\left(\mu^2-m^2\right)\right)^{\frac{1}{2}}\d\mu\right]^{\frac{2}{3}}\sim\left(4t\sqrt{2M(M^2-m^2)}\,\right)^{\frac{2}{3}}(\lambda+M),\ \ |\lambda+M|<r,
\end{equation*}
the required local parametrix is given by
\begin{equation}\label{paraom}
	Q(\lambda)=B_{\ell_2}(\lambda)\bar{A}^{RH}\big(\z(\lambda)\big)e^{\frac{2}{3}\z^{\frac{3}{2}}(\lambda)\sigma_3}e^{-\im\frac{\pi}{4}\epsilon\sigma_3},\ \ |\lambda+M|<r,
\end{equation}
where	
\begin{equation*}
	B_{\ell_2}(\lambda)=N(\lambda)e^{\im\frac{\pi}{4}\epsilon\sigma_3}e^{\im\frac{\pi}{2}(1-\gamma)\sigma_3}\begin{pmatrix}
	1 & -1\\
	-\im & -\im\\
	\end{pmatrix}e^{-\im\frac{\pi}{2}(1-\gamma)\sigma_3}\bar{\delta}(\lambda)^{-\sigma_3}\left(\z(\lambda)\frac{\lambda+m}{\lambda+M}\right)^{-\frac{1}{4}\sigma_3},
\end{equation*}
and $\bar{\delta}(\lambda)=\big(\frac{\lambda+M}{\lambda+m}\big)^{\frac{1}{4}}\rightarrow 1$ as $\lambda\rightarrow\infty$. Since $B_{\ell_2}(\lambda)$ is analytic at $\lambda=-M$, the model function $Q(\lambda)$ has the jumps shown in Figure \ref{figure12} and with \eqref{bare:8},
	\begin{eqnarray*}
		Q(\lambda)&=&N(\lambda)e^{\im\frac{\pi}{4}\sigma_3}\left[I+\frac{1}{48\z^{\frac{3}{2}}}\begin{pmatrix}
		1 & 6\im e^{-\im\pi\gamma}\\
		6\im e^{\im\pi\gamma} & -1\\
		\end{pmatrix}+\mathcal{O}\left(\z^{-\frac{6}{2}}\right)\right]e^{-\im\frac{\pi}{4}\epsilon\sigma_3}\\
		&=&\left[I+N(\lambda)\left\{\frac{1}{48\z^{\frac{3}{2}}}\begin{pmatrix}
		1 & 6\epsilon\,e^{2\pi\im tV}\\
		-6\epsilon\,e^{-2\pi\im tV} & -1\\
		\end{pmatrix}+\mathcal{O}\left(\z^{-\frac{6}{2}}\right)\right\}\big(N(\lambda)\big)^{-1}\right]N(\lambda).
	\end{eqnarray*}
	This implies for $t\rightarrow+\infty,|s_1|\uparrow 1$ such that $\varkappa\in[\delta,\frac{2}{3}\sqrt{2}-\delta],\delta\in(0,\frac{1}{3}\sqrt{2})$,
	\begin{equation}\label{paraes:4}
		Q(\lambda)=\left(I+\mathcal{O}\left(t^{-1}\right)\right)N(\lambda)
	\end{equation}
	uniformly for $0<r_1\leq|\lambda+M|\leq r_2<\frac{1}{2}(M-m)$.		
\subsection{Final transformation - ratio problem}\label{ratiosec1}
In this final step, we use the explicit model functions $H(\lambda)$ in \eqref{o:5}, $U(\lambda)$ in \eqref{param}, $V(\lambda)$ in \eqref{paramm}, $P(\lambda)$ in \eqref{parao}, $Q(\lambda)$ in \eqref{paraom} and $N(\lambda)$ in \eqref{ou:3}. We transform the previous RHP \ref{openRHP} for $L(\lambda)$ by setting
\begin{equation}\label{ratio:1}
	R(\lambda)=L(\lambda)\begin{cases}
		\big(H(\lambda)\big)^{-1},&|\lambda|<r\\
		\big(U(\lambda)\big)^{-1},&|\lambda-m|<\hat{r}\\
		\big(V(\lambda)\big)^{-1},&|\lambda+m|<\hat{r}\\
		\big(P(\lambda)\big)^{-1},&|\lambda-M|<\bar{r}\\
		\big(Q(\lambda)\big)^{-1},&|\lambda+M|<\bar{r}\\
		\big(N(\lambda)\big)^{-1},&|\lambda|>r, |\lambda\mp m|>\hat{r}, |\lambda\mp M|>\bar{r}
		\end{cases}
\end{equation}
where $0<r<\frac{m}{2},0<\hat{r}<\min\{\frac{m}{2},\frac{1}{2}(M-m)\}$ and $0<\bar{r}<\frac{1}{2}(M-m)$ remain fixed. Definition \eqref{ratio:1} implies that the RHP for $L(\lambda)$ shown in Figure \ref{figure6} is transformed to the RHP for $R(\lambda)$ posed on the contour $\Sigma_R$ shown in Figure \ref{figure13}.
\begin{figure}[tbh]
\begin{center}
\resizebox{0.7\textwidth}{!}{\includegraphics{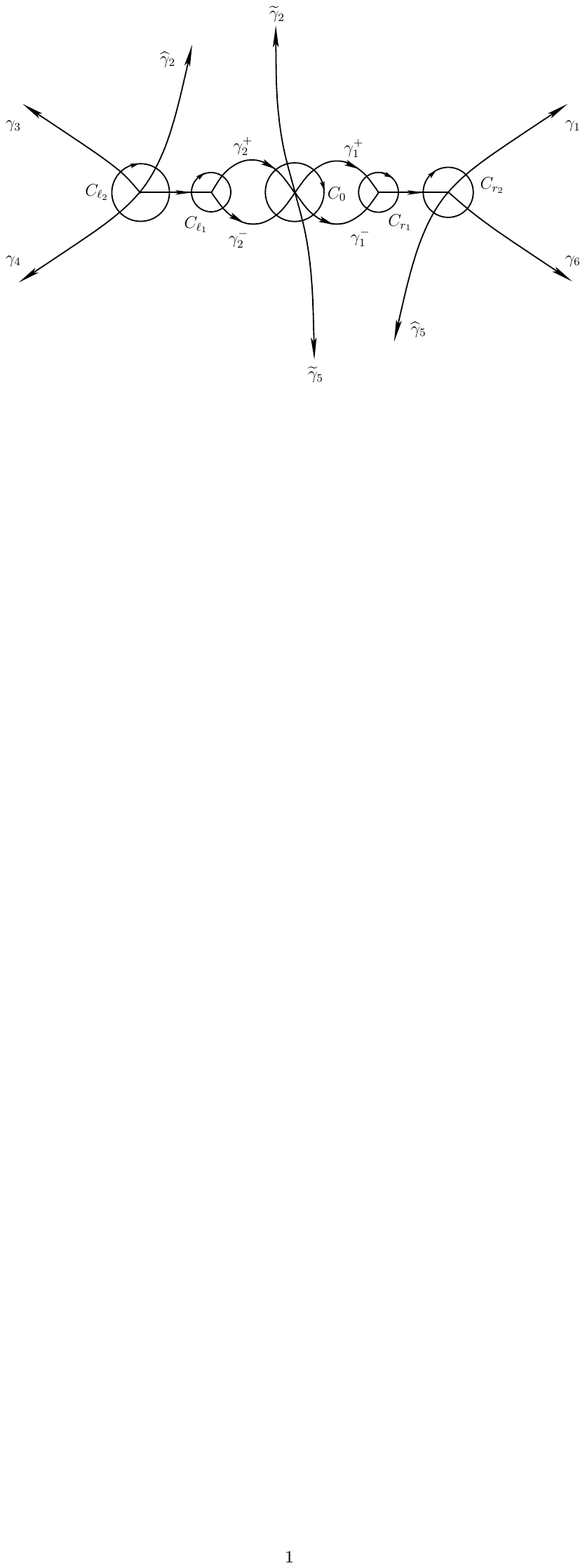}}
\caption{The jump contour $\Sigma_R$ of the ratio function $R(\lambda)$ defined in \eqref{ratio:1}.}
\label{figure13}
\end{center}
\end{figure}
\begin{itemize}
	\item $R(\lambda)$ is analytic for $\lambda\in\mathbb{C}\backslash\Sigma_R$ where $\Sigma_R=[-M,-m]\cup[m,M]\cup\gamma_1^{\pm}\cup\gamma_2^{\pm}\cup C_0\cup C_{r_1}\cup C_{r_2}\cup C_{\ell_1}\cup C_{\ell_2}\cup\Sigma_{S_{\infty}}$
	\item For the oriented contour $\Sigma_R$ shown in Figure \ref{figure13}, the jumps read as follows. First inside the five circles $C_j$,
	\begin{equation*}
		R_+(\lambda)=R_-(\lambda)G_R(\lambda;t,|s_1|),\ \ \ \ \ G_R(\lambda)\Xi_-(\lambda)=\Xi_-(\lambda)C(\lambda)
	\end{equation*}
	where
	\begin{equation*}
		\Xi(\lambda)=H(\lambda),\ \ |\lambda|<r;\ \ \ \ \Xi(\lambda)=\begin{cases}
			U(\lambda),&|\lambda-m|<\hat{r}\\
			V(\lambda),&|\lambda+m|<\hat{r}
			\end{cases};\ \ \ \ \Xi(\lambda)=\begin{cases}
			P(\lambda),&|\lambda-M|<\bar{r}\\
			Q(\lambda),&|\lambda+M|<\bar{r}
			\end{cases}
	\end{equation*}
	and (inside the far most left circle $C_{\ell_2}$),
	\begin{equation*}
		C(\lambda)=\begin{cases}
			\begin{pmatrix}
		1 & \big(\im\epsilon-s_1\big)e^{-2tg(\lambda)}\\
		0 & 1\\
		\end{pmatrix},&\lambda\in(\hat{\gamma}_2\cup\gamma_4)\cap D(-M,\bar{r})\smallskip\\
		\begin{pmatrix}
		1 & 0\\
		\big(s_3+\im\epsilon\big)e^{2tg(\lambda)} & 1\\
		\end{pmatrix},&\lambda\in\gamma_3\cap D(-M,\bar{r})\smallskip\\
		\begin{pmatrix}
		-\im\epsilon\,s_1(1+e^{-\varkappa t}) & -\im\epsilon\,e^{\im t\Omega(\lambda)}e^{-t(\varkappa-\Pi(\lambda))}\\
		-\im\epsilon\,e^{-\im t\Omega(\lambda)}e^{-t(\varkappa+\Pi(\lambda))}& \im\epsilon\,s_3\\
		\end{pmatrix},&\lambda\in(-M,-M+\bar{r})
		\end{cases}
	\end{equation*}
	followed by (inside the left circle $C_{\ell_1}$),
	\begin{equation*}
		C(\lambda)=\begin{cases}
		\begin{pmatrix}
		1 & \big(s_1+s_1(1-s_1s_3)-\im\epsilon\big)e^{\im t\Omega(\lambda)}\\
		0 & 1\\
		\end{pmatrix},&\!\!\!\!\lambda\in\gamma_2^+\cap D(-m,\hat{r})\smallskip\\
		\begin{pmatrix}
		1 & 0\\
		-\big(s_3+\im\epsilon)e^{-\im t\Omega(\lambda)} & 1\\
		\end{pmatrix},&\!\!\!\!\lambda\in\gamma_2^-\cap D(-m,\hat{r})\smallskip\\
		\begin{pmatrix}
		-\im\epsilon\,s_1(1+e^{-\varkappa t}) & (s_1+s_1(1-s_1s_3)-\im\epsilon)e^{\im t\Omega(\lambda)}e^{-t(\varkappa-\Pi(\lambda))}\\
		-\im\epsilon\,e^{-\im t\Omega(\lambda)}e^{-t(\varkappa+\Pi(\lambda))} & \im\epsilon\,s_3+e^{-2t\varkappa}\\
		\end{pmatrix},&\!\!\!\!\lambda\in (-m-\hat{r},-m)
		\end{cases}
	\end{equation*}
	as well as for the corresponding circles on the right,
	\begin{equation*}
		C(\lambda)=\begin{cases}
		\begin{pmatrix}
			1 & -\big(s_3+\im\epsilon\big)e^{\im t\Omega(\lambda)}\\
			0 & 1\\
			\end{pmatrix},&\lambda\in\gamma_1^+\cap D(m,\hat{r})\smallskip\\
			\begin{pmatrix}
			1 & 0\\
			\big(s_1+s_1(1-s_1s_3)-\im\epsilon\big)e^{-\im t\Omega(\lambda)} & 1\\
			\end{pmatrix},&\lambda\in\gamma_1^-\cap D(m,\hat{r})\smallskip\\
			\begin{pmatrix}
				\im\epsilon\, s_3 & -\im\epsilon\,e^{-t(\varkappa-\Pi(\lambda))}\\
				-\im\epsilon\,e^{-t(\varkappa+\Pi(\lambda))} & -\im\epsilon\,s_1(1+e^{-\varkappa t})+e^{-2t\varkappa}\\
				\end{pmatrix},& \lambda\in (m,m+\hat{r})
			\end{cases}
	\end{equation*}
	and
	\begin{equation*}
		C(\lambda)=\begin{cases}
			\begin{pmatrix}
			1 & 0\\
			\big(s_1-\im\epsilon\big)e^{2tg(\lambda)} & 1\\
			\end{pmatrix},&\lambda\in(\gamma_1\cup\hat{\gamma}_5)\cap D(M,\bar{r})\smallskip\\
			\begin{pmatrix}
			1 & -\big(s_3+\im\epsilon\big)e^{-2tg(\lambda)}\\
			0 & 1\\
			\end{pmatrix},&\lambda\in\gamma_6\cap D(M,\bar{r})\smallskip\\
			\begin{pmatrix}
			\im\epsilon\,s_3 & -\im\epsilon\,e^{-t(\varkappa-\Pi(\lambda))}\\
			-\im\epsilon\,e^{-t(\varkappa+\Pi(\lambda))} & -\im\epsilon\,s_1(1+e^{-\varkappa t})\\
			\end{pmatrix},&\lambda\in(M-\bar{r},M).
			\end{cases}
	\end{equation*}
	Inside the remaining circle centered at the origin,
	\begin{equation*}
		C(\lambda)=\begin{cases}
			\begin{pmatrix}
		1 & (s_1+s_2)e^{-\varkappa t}e^{\im t\Omega(\lambda)}\\
		0 & 1\\
		\end{pmatrix},&\lambda\in\tilde{\gamma}_2\cap D(0,r)\smallskip\\
		\begin{pmatrix}
		1 & 0\\
		-(s_1+s_2)e^{-\varkappa t}e^{-\im t\Omega(\lambda)} & 1\\
		\end{pmatrix},&\lambda\in\tilde{\gamma}_5\cap D(0,r)\smallskip\\
		\begin{pmatrix}
		1 & -(s_3+\im\epsilon)e^{\im t\Omega(\lambda)}\\
		0 & 1\\
		\end{pmatrix},&\lambda\in\gamma_1^+\cap D(0,r)\smallskip\\
		\begin{pmatrix}
		1 & (s_1+s_1e^{-\varkappa t}-\im\epsilon)e^{\im t\Omega(\lambda)}\\
		0 & 1\\
		\end{pmatrix},&\lambda\in\gamma_2^+\cap D(0,r)\smallskip\\
		\begin{pmatrix}
		1 & 0\\
		-(s_3+\im\epsilon)e^{-\im t\Omega(\lambda)} & 1\\
		\end{pmatrix},&\lambda\in\gamma_2^-\cap D(0,r)\smallskip\\
		\begin{pmatrix}
		1 & 0\\
		(s_1+s_1e^{-\varkappa t}-\im\epsilon)e^{-\im t\Omega(\lambda)} & 1\\
		\end{pmatrix},&\lambda\in\gamma_1^-\cap D(0,r)
		\end{cases}
	\end{equation*}
	Recalling our previous expansions for $s_1=\bar{s}_3$ in \eqref{approx:3},\eqref{approx:4}, we see that for $\lambda\in\Sigma_R$ inside the circles with radii $r,\hat{r}$ and $\bar{r}$, we have 
	\begin{equation*}
		C(\lambda)=I+\mathcal{O}\left(e^{-\varkappa t}\right),\ \ t\rightarrow+\infty,|s_1|\uparrow 1:\ \ \varkappa\in\left[\delta,\frac{2}{3}\sqrt{2}-\delta\right],
	\end{equation*}
	which is uniform with respect to the chosen radii. But since $\Xi_-(\lambda)$ is bounded on the same contours, we obtain the estimation
	\begin{equation}\label{DZ:1}
		\|G_R(\cdot\,;t,|s_1|)-I\|_{L^2\cap L^{\infty}(\Sigma_R\cap D_j)}\leq d_1e^{-\varkappa t},\ \ \ d_1>0
	\end{equation}
	where $D_j$ denotes any of the five open disks centered at $\lambda=0$ or $\lambda=\pm m,\pm M$. Secondly we estimate the jumps on the remaining finite branches of $\Sigma_R$: for the horizontal line segments, first for $\lambda\in(-M+\bar{r},-m-\hat{r})$, followed then by $\lambda\in(m+\hat{r},M-\bar{r})$,
	\begin{align}
		R_+(\lambda)&=R_-(\lambda)N_-(\lambda)\begin{pmatrix}
		-\im\epsilon\,s_1(1+e^{-\varkappa t}) & -\im\epsilon\,e^{\im t\Omega(\lambda)}e^{-t(\varkappa-\Pi(\lambda))}\\
		-\im\epsilon\,e^{-\im t\Omega(\lambda)}e^{-t(\varkappa+\Pi(\lambda))} & \im\epsilon\,s_3\\
		\end{pmatrix}\big(N_-(\lambda)\big)^{-1},\nonumber\\
		R_+(\lambda)&=R_-(\lambda)N_-(\lambda)\begin{pmatrix}
		\im\epsilon\,s_3 & -\im\epsilon\,e^{-t(\varkappa-\Pi(\lambda))}\\
		-\im\epsilon\,e^{-t(\varkappa+\Pi(\lambda))} & -\im\epsilon\,s_1(1+e^{-\varkappa t})\\
		\end{pmatrix}\big(N_-(\lambda)\big)^{-1}.\nonumber
	\end{align}
	Notice that for $\lambda\in(m+\hat{r},M-\bar{r})$,
	\begin{align*}
		\varkappa-\Pi(\lambda)&=8\int_m^{\lambda}\sqrt{\left(M^2-\mu^2\right)\left(\mu^2-m^2\right)}\,\d\mu\geq 8\sqrt{2m(M+m)}\int_m^{\lambda}\sqrt{(M-\mu)(\mu-m)}\,\d\mu\\
		&\geq c\sqrt{\k(1-\k)}\,\hat{r}^{\frac{3}{2}},\ c>0\ \ \textnormal{universal},
	\end{align*}
	with a similar estimate also holding for $\lambda\in(-M+\bar{r},-m-\hat{r})$. As the outer parametrix $N_{-}(\lambda)$ remains bounded on the line segments for fixed radii, we have with universal $d_j>0$ (using again \eqref{approx:3},\eqref{approx:4}),
	\begin{equation}\label{DZ:2}
		\|G_R(\cdot\,;t,|s_1|)-I\|_{L^2\cap L^{\infty}((-M+\bar{r},-m-\hat{r})\cup(m+\hat{r},M-\bar{r}))}\leq d_2e^{-d_3t\min\{\varkappa,\sqrt{\k(1-\k)}\hat{r}^{\frac{3}{2}}\}}.
	\end{equation}
	\begin{rem} Although we have chosen fixed radii in \eqref{ratio:1} we shall always indicate the dependency of the estimations of $G_R(\cdot;t,|s_1|)-I$ on the latter. This will be useful later on when we partially drop the constraint on $\varkappa\in[\delta,\frac{2}{3}\sqrt{2}-\delta]$.
	\end{rem}
	Next, on the lens boundaries, for $j=1,2$,
	\begin{eqnarray*}
		R_+(\lambda)&=&R_-(\lambda)N(\lambda)S_{U_j}(\lambda)\big(N(\lambda)\big)^{-1},\ \lambda\in\gamma_j^+\cap\{|\lambda\mp m|>\hat{r}\},\\
		R_+(\lambda)&=&R_-(\lambda)N(\lambda)S_{L_j}(\lambda)\big(N(\lambda)\big)^{-1},\ \lambda\in\gamma_j^-\cap\{|\lambda\mp m|>\hat{r}\},
	\end{eqnarray*}
	and therefore, with fixed radii which ensures the boundedness of $N(\lambda)$,
	\begin{equation}\label{DZ:3}
		\|G_R(\cdot\,;t,|s_1|)-I\|_{L^2\cap L^{\infty}(\gamma_j^{\pm}\cap\{|\lambda\mp m|>\hat{r}\})}\leq d_4e^{-d_5t\sqrt{\k(1-\k)}\,\hat{r}^{\frac{3}{2}}}.
	\end{equation}
	The pieces of $\Sigma_R$ now left are all infinite branches and circle boundaries, more precisely
	\begin{equation*}
		R_+(\lambda)=R_-(\lambda)Q(\lambda)\big(N(\lambda)\big)^{-1},\ \lambda\in C_{\ell_2};\hspace{0.5cm}R_+(\lambda)=R_-(\lambda)V(\lambda)\big(N(\lambda)\big)^{-1},\ \lambda\in C_{\ell_1}
	\end{equation*}
	\begin{equation*}
		R_+(\lambda)=R_-(\lambda)U(\lambda)\big(N(\lambda)\big)^{-1},\ \lambda\in C_{r_1};\hspace{0.5cm}R_+(\lambda)=R_-(\lambda)P(\lambda)\big(N(\lambda)\big)^{-1},\ \lambda\in C_{r_2}
	\end{equation*}
	and
	\begin{equation*}
		R_+(\lambda)=R_-(\lambda)H(\lambda)\big(N(\lambda)\big)^{-1},\ \ \lambda\in C_0.
	\end{equation*}
	We conclude with \eqref{pares:1},\eqref{paraes:2},\eqref{paraes:3},\eqref{paraes:4} and \eqref{paraes:0}, (fixed radii so that $N(\lambda)$ is bounded)
	\begin{equation}\label{DZ:4}
		\|G_R(\cdot\,;t,|s_1|)-I\|_{L^2\cap L^{\infty}(\cup C_j)}\leq \frac{d_6}{t}\big(\k(1-\k)\big)^{-\frac{1}{2}}\max\left\{\hat{r}^{-\frac{3}{2}},\sqrt{\k}\,\bar{r}^{-\frac{3}{2}}\right\}
	\end{equation}
	Also, for the infinite branches,
	\begin{equation*}
		R_+(\lambda)=R_-(\lambda)N(\lambda)e^{-tg(\lambda)\sigma_3}G_T(\lambda)e^{tg(\lambda)\sigma_3}\big(N(\lambda))^{-1},
	\end{equation*}
	where $G_T(\lambda)$ depends only on the Stokes multipliers and $G_T(\lambda)$ are piecewise constant triangular matrices. Due to this triangularity and the sign chart of $\Re\big(g(\lambda)\big)$, compare Figure \ref{figure6}, we have that
	\begin{equation}\label{DZ:5}
		\|G_R(\cdot\,;t,|s_1|)-I\|_{L^2\cap L^{\infty}(\textnormal{infinite})}\leq d_7e^{-d_8t\min\{\sqrt{1-\k}\,\bar{r}^{\frac{3}{2}},\sqrt{\k}\,r\}},\ \ d_j>0.
	\end{equation}
	\item As $\lambda\rightarrow\infty$, 
	\begin{equation*}
		R(\lambda)=I+\mathcal{O}\left(\lambda^{-1}\right).
	\end{equation*}
\end{itemize}	
The latter RHP for $R(\lambda)$ can be solved iteratively as $t\rightarrow+\infty,|s_1|\uparrow 1$ provided $\varkappa\in[\delta,\frac{2}{3}\sqrt{2}-\delta]$.
\subsection{Iterative solution}
We collect estimations \eqref{DZ:1}-\eqref{DZ:5}, so that
\begin{equation}\label{DZ:6}
	\|G_R(\cdot\,;t,|s_1|)-I\|_{L^2\cap L^{\infty}(\Sigma_R)}\leq\frac{d}{t},\ \ t\rightarrow+\infty,|s_1|\uparrow 1:\ \ \varkappa\in\left[\delta,\frac{2}{3}\sqrt{2}-\delta\right],\,\,\,0<\delta<\frac{1}{3}\sqrt{2}.
\end{equation}
Here we used that for the values of $\varkappa\in[\delta,\frac{2}{3}\sqrt{2}-\delta]$ we only work with fixed radii in \eqref{ratio:1}. Since the RHP at hand is equivalent to the singular integral equation
\begin{equation*}
	R_-(\lambda)=I+\frac{1}{2\pi\im}\int_{\Sigma_R}R_-(w)\big(G_R(w)-I\big)\frac{\d w}{w-\lambda},\ \ \lambda\in\Sigma_R,
\end{equation*}
estimation \eqref{DZ:6} guarantees (cf. \cite{DZ1}) the unique existence of an asymptotic solution of the RHP in $L^2(\Sigma_R)$, this solution satisfies moreover
\begin{equation}\label{DZ:7}
	\|R_-(\cdot\,;t,|s_1|)-I\|_{L^2(\Sigma_R)}\leq \frac{d}{t}.
\end{equation}
\subsection{Transition asymptotics}\label{transasy:1}
We now extract the asymptotics through \eqref{FN}, i.e. we apply
\begin{equation*}
	u(x|s)=2\lim_{\lambda\rightarrow\infty}\Big[\lambda\big(Y(\lambda;x,s)\big)_{12}\Big]
\end{equation*}
and for this have to trace back the sequence of transformations
\begin{equation*}
	Y(\lambda)\mapsto X(\lambda)\mapsto Z(\lambda)\mapsto T(\lambda)\mapsto S(\lambda)\mapsto L(\lambda)\mapsto R(\lambda).
\end{equation*}
First, we get that
\begin{align*}
	u(x|s)&=2\lim_{\lambda\rightarrow\infty}\Big[\lambda\big(Y(\lambda)\big)_{12}\Big]=2\sqrt{-x}\lim_{z\rightarrow\infty}\big[zX(z)\big]_{12}=2\sqrt{-x}\lim_{z\rightarrow\infty}\big[zT(z)\big]_{12}\\
	&=2\sqrt{-x}\lim_{z\rightarrow\infty}\left[ze^{t\ell\sigma_3}S(z)e^{-t(g(z)-\vartheta(z))\sigma_3}\right]_{12}=2\sqrt{-x}\lim_{z\rightarrow\infty}\left[ze^{t\ell\sigma_3}L(z)e^{-t(g(z)-\vartheta(z))\sigma_3}\right]_{12}\\
	&=2\sqrt{-x}\lim_{z\rightarrow\infty}\left[ze^{t\ell\sigma_3}N(z)R(z)e^{-t(g(z)-\vartheta(z))\sigma_3}\right]_{12}.
\end{align*}
But since
\begin{equation*}
	g(z)-\vartheta(z)=\ell+\frac{\im}{8z}\left(\frac{1-\k^2}{1+\k^2}\right)^2+\mathcal{O}\left(z^{-3}\right),\ \ z\rightarrow\infty,
\end{equation*}
we have
\begin{equation*}
	e^{-t(g(z)-\vartheta(z))\sigma_3}=e^{-t\ell\sigma_3}\left(I-\frac{\im t}{8z}\left(\frac{1-\k^2}{1+\k^2}\right)^2\sigma_3+\mathcal{O}\left(z^{-2}\right)\right).
\end{equation*}
Also, as $z\rightarrow\infty$,
\begin{equation*}
	R(z)=I+\frac{1}{2\pi\im}\int_{\Sigma_R}R_-(w)\big(G_R(w)-I\big)\frac{\d w}{w-z}=I+\frac{\im}{2\pi z}\int_{\Sigma_R}R_-(w)\big(G_R(w)-I\big)\d w+\mathcal{O}\left(z^{-2}\right)
\end{equation*}
and 
\begin{equation*}
	N(z)=I+\frac{N_1}{z}+\mathcal{O}\left(z^{-2}\right),\ \ z\rightarrow\infty
\end{equation*}
with (compare Proposition \ref{pout:3}),
\begin{equation*}
	(N_1)_{12}=-\frac{\theta(0)}{\theta(tV)}\frac{\theta(u(\infty)-tV-d)}{\theta(u(\infty)-d)}\frac{M-m}{2\im}e^{\im\frac{\pi}{2}\epsilon}=-\frac{\theta_3(0|\tau)}{\theta_2(0|\tau)}\frac{\theta_2(tV|\tau)}{\theta_3(tV|\tau)}e^{\im\pi tV}\frac{M-m}{2\im}e^{\im\frac{\pi}{2}\epsilon}
\end{equation*}
where we used \eqref{Abel:1} and \eqref{Vchoice}, i.e. 
\begin{equation*}
	u(\infty)-d=\frac{\tau}{2},
\end{equation*}
and the well known transformation identities between Jacobi theta functions, compare Appendix \ref{app:theta}. So far, we have thus the exact identity,
\begin{equation}\label{f1}
	u(x)=\sqrt{-x}\,e^{2t\ell}\bigg[-\epsilon(M-m)\frac{\theta_3(0|\tau)}{\theta_2(0|\tau)}\frac{\theta_2(tV|\tau)}{\theta_3(tV|\tau)}e^{\im\pi tV}+\mathcal{E}(\varkappa)\bigg],
\end{equation}
with
\begin{equation*}
	\mathcal{E}(\varkappa)=\frac{\im}{\pi}\int_{\Sigma_R}\Big(R_-(w)\big(G_R(w)-I\big)\Big)_{12}\,\d w=\mathcal{O}\left(t^{-1}\right),\ \ t\rightarrow+\infty,|s_1|\uparrow 1:\ \ \varkappa\in\left[\delta,\frac{2}{3}\sqrt{2}-\delta\right]
\end{equation*}
In order to further simplify expansion \eqref{f1}, we note that
\begin{prop}\label{helpful}
\begin{equation*}
	e^{2t(\ell+\im\frac{\pi}{2}V)} = 1.
\end{equation*}
\end{prop}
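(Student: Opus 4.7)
The plan is to reduce the claim to showing $\ell+\frac{\im\pi}{2}V=0$, which I would establish by evaluating $g(-M)$ in two independent ways and comparing.

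First, Proposition \ref{gprop} gives $g_+(z)+g_-(z)=-2\pi\im V$ on $(-M,-m)$. Since $g'(z)=4\im\sqrt{(z^2-M^2)(z^2-m^2)}$ has only an integrable square-root singularity at $z=-M$, the function $g$ extends continuously to that endpoint with $g_+(-M)=g_-(-M)=g(-M)$, so passing to the limit $z\to -M$ yields $g(-M)=-\im\pi V$.

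Second, starting from \eqref{gf:1}, I would compute $g(-M)=4\im\int_M^{-M}\sqrt{(\mu^2-M^2)(\mu^2-m^2)}\,\d\mu$ by deforming the path in $\mathbb{C}\setminus[-M,M]$ into the $R\to\infty$ limit of the contour $[M,R]\cup\{|\mu|=R,\,\Im\mu\geq 0\}\cup[-R,-M]$. The key ingredient is the normalization $M^2+m^2=\frac{1}{2}$ built into \eqref{b:point}, which produces $\sqrt{(\mu^2-M^2)(\mu^2-m^2)}=\mu^2-\frac{1}{4}+\mathcal{O}(\mu^{-2})$ as $\mu\to\infty$; this both guarantees convergence of the integral in \eqref{gf:2} defining $\ell$ and forces the divergent $R$-dependent pieces from the arc and the two real segments to cancel. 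After using the evenness of the integrand to identify the $[-R,-M]$ contribution with that on $[M,R]$, and collecting the finite remainders against \eqref{gf:2} together with $\vartheta(M)=\im(\frac{4}{3}M^3-M)$, one finds $g(-M)=2\ell$. Comparison with the first evaluation yields $2\ell=-\im\pi V$, hence $e^{2t(\ell+\im\pi V/2)}=1$.

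The main subtle point is the branch choice of $\sqrt{(\mu^2-M^2)(\mu^2-m^2)}$ on the segment $[-R,-M]$ as reached from the upper semicircle. Since the polynomial under the square root has degree four, its argument increases by $4\pi$ along the upper arc, so the argument of the square root increases by $2\pi$, meaning the square root returns to its positive real value on $(-\infty,-M]$ from above. This is precisely what makes the two real-line contributions add constructively and produces the crucial factor $2$ in $g(-M)=2\ell$; without tracking the branch correctly one would get the wrong sign and the identity would fail.
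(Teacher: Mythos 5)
Your proof is correct, and it is genuinely different from the one in the paper. The paper's proof regards $f(M)\equiv\ell+\im\frac{\pi}{2}V$ as a function of a single variable $M\in[\tfrac12,\tfrac{1}{\sqrt2}]$ (with $m=\sqrt{\tfrac12-M^2}$), computes $f'(M)$, and kills it by invoking the elliptic-period identity $u(\infty)=\tau/4$, which forces the bracketed difference of integrals in $f'(M)$ to vanish; the constant is then pinned down in the degenerate limit $M\downarrow\tfrac12$, where $\ell\to\im/3$ and $\im\frac{\pi}{2}V\to-\im/3$. Your argument instead evaluates $g(-M)$ twice: once from the boundary relation $\Omega=\im(g_++g_-)=2\pi V$ on $(-M,-m)$ together with the vanishing of the jump $\Pi$ at the endpoint (giving $g(-M)=-\im\pi V$), and once by deforming $\int_M^{-M}$ to a large semicircle and recognizing the finite remainder as $2\ell$ via the normalization $M^2+m^2=\tfrac12$ and the definition \eqref{gf:2}. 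Your route replaces the ``differentiate-then-fix-the-constant'' scheme with a single direct contour computation, so it avoids both the period identity and the degenerate limit; what it requires in exchange is careful branch bookkeeping for $\sqrt{(\mu^2-M^2)(\mu^2-m^2)}$ along the arc and on $(-\infty,-M]$, which you handle correctly (the integrand is even and returns to its positive real determination on the far side, so the two real-axis pieces double rather than cancel). One small quantitative slip in your commentary: if one botched the branch and took the negative determination on $(-\infty,-M]$, the real-line contributions would cancel and $g(-M)$ would reduce to the divergent arc integral — so the failure mode is a divergence, not merely a sign error — but this does not affect the validity of the argument you actually carry out.
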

\begin{proof} Let us treat 
\begin{align*}
	f(M)\equiv\,&\ell+\im\frac{\pi}{2}V = -\vartheta(M)+4\im\int_M^{\infty}\left[\sqrt{\left(\mu^2-M^2\right)\left(\mu^2-m^2\right)}-\mu^2+\frac{1}{4}\right]\,\d\mu\\
	&-2\im\int_{-m}^m\sqrt{\left(M^2-\mu^2\right)\left(m^2-\mu^2\right)}\,\d\mu,\hspace{0.7cm} m=m(M)=\sqrt{\frac{1}{2}-M^2}
\end{align*}
as a function of one variable $M\in[\frac{1}{2},\frac{1}{\sqrt{2}}]$. It is not hard to verify that
\begin{equation*}
	\frac{\d}{\d M}f(M)=2\im M(m^2-M^2)\left[2\int_M^{\infty}\frac{\d\mu}{\sqrt{(\mu^2-M^2)(\mu^2-m^2)}}-\int_{-m}^m\frac{\d\mu}{\sqrt{(M^2-\mu^2)(m^2-\mu^2)}}\right].
\end{equation*}
But from the identity $u(\infty)=\frac{\tau}{4}$, compare \eqref{Abel:1}, we see that the last difference of integrals vanishes, hence $f(M)$ is in fact $M$ independent. Letting $M\downarrow \frac{1}{2}$, we have
\begin{equation*}
	f(M)=\frac{\im}{3}+o(1)-\frac{\im}{3}=o(1),\ \ \ M\downarrow \frac{1}{2}
\end{equation*}
and thus the stated identity follows.
\end{proof}
Summarizing our previous simplifications and noting that $\ell\in\im\mathbb{R}$, we have 
\begin{equation}\label{final1}
	u(x|s)=-\epsilon\sqrt{-x}\,(M-m)\frac{\theta_3(0|\tau)}{\theta_2(0|\tau)}\frac{\theta_2(tV|\tau)}{\theta_3(tV|\tau)}+J_1(x,s),\ \ \ t=(-x)^{\frac{3}{2}}
\end{equation}
and
\begin{prop}\label{error:1} For any given $\delta\in(0,\frac{1}{3}\sqrt{2})$ there exist positive constants $t_0=t_0(\delta),v_0=v_0(\delta)$ and $c=c(\delta)$ such that
\begin{equation*}
	\big|J_1(x,s)\big|\leq ct^{-\frac{2}{3}},\ \ \ \forall\,t\geq t_0(\delta),\ \ v\geq v_0(\delta),\ \ t\delta\leq v\leq t\left(\frac{2}{3}\sqrt{2}-\delta\right).
\end{equation*}
\end{prop}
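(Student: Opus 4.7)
The plan is to derive the bound directly from formulas \eqref{f1}, \eqref{final1} together with the estimates \eqref{DZ:6}, \eqref{DZ:7} already established in the iterative solution of the ratio problem. First, combining \eqref{f1} with Proposition \ref{helpful}, which gives $e^{2t\ell}\cdot e^{\im\pi tV}=1$, and comparing with the definition of $J_1$ in \eqref{final1}, I identify the exact expression
\begin{equation*}
    J_1(x,s) = \sqrt{-x}\,e^{2t\ell}\,\mathcal{E}(\varkappa),\qquad \mathcal{E}(\varkappa)=\frac{\im}{\pi}\int_{\Sigma_R}\Big(R_-(w)\big(G_R(w)-I\big)\Big)_{12}\,\d w.
\end{equation*}
Since $\ell\in\im\mathbb{R}$ we have $|e^{2t\ell}|=1$, and since $t=(-x)^{3/2}$ we have $\sqrt{-x}=t^{1/3}$, so it suffices to show $|\mathcal{E}(\varkappa)|\leq c(\delta)\,t^{-1}$ uniformly in the stated range.

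Next I would bound $\mathcal{E}(\varkappa)$ by the standard Cauchy--Schwarz type estimate,
\begin{equation*}
    |\mathcal{E}(\varkappa)|\leq c\Big(\|G_R-I\|_{L^1(\Sigma_R)}+\|R_--I\|_{L^2(\Sigma_R)}\|G_R-I\|_{L^2(\Sigma_R)}\Big),
\end{equation*}
and then invoke the estimates \eqref{DZ:1}--\eqref{DZ:5}, which on the bounded segments of $\Sigma_R$ give $L^2\cap L^\infty$ bounds of order $t^{-1}$, and on the unbounded segments an exponentially small bound via the triangularity of $G_T$ and the sign chart of $\Re\,g(\lambda)$. Together with \eqref{DZ:6}--\eqref{DZ:7} this yields $|\mathcal{E}(\varkappa)|=O(t^{-1})$, whence $|J_1(x,s)|\leq c\,t^{1/3}\cdot t^{-1}=c\,t^{-2/3}$ as required.

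The only non-routine point, and hence the main technical obstacle, is establishing that all constants in the estimates \eqref{DZ:1}--\eqref{DZ:5} can be taken uniform in $\varkappa\in[\delta,\tfrac{2}{3}\sqrt{2}-\delta]$. Three facts make this work. First, Proposition \ref{modex} together with Corollary \ref{cor1} shows that $\varkappa\mapsto\k(\varkappa)$ is real analytic and strictly monotonic with $\k\uparrow 1$ as $\varkappa\downarrow 0$ and $\k\downarrow 0$ as $\varkappa\uparrow\tfrac{2}{3}\sqrt{2}$; consequently on the compact interval $[\delta,\tfrac{2}{3}\sqrt{2}-\delta]$ we have $\k\in[\k_-(\delta),\k_+(\delta)]\subset(0,1)$, so the factor $\k(1-\k)$ appearing in the exponents of \eqref{DZ:2}, \eqref{DZ:3}, \eqref{DZ:5} is bounded away from zero. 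Second, the branch points $\pm m,\pm M$ are uniformly separated in this regime, allowing the radii $r,\hat r,\bar r$ to be chosen as fixed positive numbers depending only on $\delta$, so the prefactors in \eqref{DZ:4} are controlled. Third, on the same range the outer parametrix $N(\lambda)$ built in Proposition \ref{pout:3} has $\tau(\varkappa)$ and $V(\varkappa)$ varying in a compact set bounded away from the degeneration points, so $N(\lambda)$ and $(N(\lambda))^{-1}$ are uniformly bounded on each of the circles $C_j$ and along the deformed contours away from the disks. Once these three uniformities are in place, the constant $d$ in \eqref{DZ:6}--\eqref{DZ:7} depends only on $\delta$, and the bound $|J_1(x,s)|\leq c(\delta)\,t^{-2/3}$ follows for all $t\geq t_0(\delta)$ and $v\geq v_0(\delta)$ with $t\delta\leq v\leq t(\tfrac{2}{3}\sqrt{2}-\delta)$.
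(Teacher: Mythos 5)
Your proposal is correct and takes essentially the same route the paper implicitly follows. The paper does not give a separate displayed proof of Proposition~\ref{error:1}: it simply asserts $\mathcal{E}(\varkappa)=\mathcal{O}(t^{-1})$ immediately after the definition in the line following \eqref{f1}, relying on the estimates \eqref{DZ:6}--\eqref{DZ:7}, and the statement of the Proposition is the consequence of this together with $|e^{2t\ell}|=1$ and $\sqrt{-x}=t^{1/3}$. You unpack precisely that chain: the identification $J_1=\sqrt{-x}\,e^{2t\ell}\mathcal{E}(\varkappa)$ after using Proposition~\ref{helpful}, the split $\mathcal{E}=\frac{\im}{\pi}\int(G_R-I)_{12}+\frac{\im}{\pi}\int((R_--I)(G_R-I))_{12}$ with the first term controlled in $L^1$ (the $L^\infty$ bound on the bounded pieces of $\Sigma_R$ plus exponential decay on the infinite branches) and the second by Cauchy--Schwarz, and finally the three uniformity observations (compactness of $\k([\delta,\tfrac{2}{3}\sqrt{2}-\delta])\subset(0,1)$, fixed radii thanks to the uniform separation of $\pm m,\pm M$, and the uniform boundedness of $N^{\pm1}$ on $\Sigma_R\setminus\bigcup D_j$) which the paper folds into its claim that the constants in \eqref{DZ:1}--\eqref{DZ:5} are universal once $\varkappa$ is pinned in a compact interval. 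This is a faithful, somewhat more explicit rendering of the paper's argument.
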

The leading term in \eqref{final1}, although currently expressed in terms of ratios of Jacobi theta functions, can be rewritten with the help of the Jacobi elliptic function \eqref{Jell:1} using Appendix \ref{app:theta},
\begin{equation*}
	\theta_j(z|\tau) \equiv \theta_j(z,q),\ \ j=2,3;\hspace{1cm}q=e^{\im\pi\tau}=\exp\left[-2\pi\frac{K}{K'}\right],
\end{equation*}
and the identity (see \eqref{b:point})
\begin{equation*}
	M-m=\frac{1}{\sqrt{2}}\frac{1-\k}{\sqrt{1+\k^2}}.
\end{equation*}
Hence,
\begin{equation}\label{final:11}
	u(x|s)=-\epsilon\sqrt{-\frac{x}{2}}\,\frac{1-\k}{\sqrt{1+\k^2}}\,\textnormal{cd}\left(2(-x)^{\frac{3}{2}}V\,K\left(\frac{1-\k}{1+\k}\right),\,\frac{1-\k}{1+\k}\right)+J_1(x,s)
\end{equation}
and the error term $J_1(x,s)$ is estimated in Proposition \ref{error:1}. We have already control over the error in the domain
\begin{equation*}
	\Big\{\big(t,v\big):\ t\geq t_0,\ v\geq v_0:\ t\delta\leq v\leq t\left(\frac{2}{3}\sqrt{2}-\delta\right)\Big\},
\end{equation*}
but in order to complete the proof of Theorem \ref{bet:1} this domain will be extended in the following sections.
\section{Extension at the lower end for regular transition}\label{relax}
Assume that both $t=(-x)^{\frac{3}{2}}\geq t_0$ and $v=-\ln(1-|s_1|^2)\geq v_0$ are sufficiently large such that
\begin{equation}\label{shrink0}
	0<t^{-\eta}\leq\varkappa\leq\frac{2}{3}\sqrt{2}-\delta,\ \ \ \ \ \ \textnormal{with}\ \ \ \ 0<\eta<1\ \ \ \textnormal{fixed}.
\end{equation}
From \eqref{I:1} and \eqref{b:point}, we see that
\begin{equation*}
	m=\frac{1}{2}-\frac{\hat{c}}{t^{\frac{\eta}{2}}}+\mathcal{O}\left(t^{-\eta}\right),\ \ \ M=\frac{1}{2}+\frac{\hat{c}}{t^{\frac{\eta}{2}}}+\mathcal{O}\left(t^{-\eta}\right)
\end{equation*}
as $t\geq t_0$ for some $\hat{c}>0$. Hence, in the definition of the ratio problem \eqref{ratio:1} for $R(\lambda)$, we cannot keep the radii $\hat{r}$ and $\bar{r}$ fixed since $M-m\downarrow 0$. This issue can be resolved by scaling both radii with $t$, in fact we shall choose in \eqref{ratio:1}
\begin{equation}\label{shrink1}
	\hat{r}=c_1t^{-\frac{\eta}{2}},\hspace{0.5cm} \bar{r}=c_2t^{-\frac{\eta}{2}},\hspace{1cm}0<c_1,c_2:\  c_1+c_2<\hat{c}.
\end{equation}
In order to estimate the jumps in the ratio problem with the latter choice in place, we need to derive estimations for the outer parametrix $N(\lambda)$ and $N_-(\lambda)$ on the jump contours. First with Corollary \ref{cor1}, expansion \eqref{l:1},
\begin{equation*}
	\theta_3\big(z|\tau(\varkappa)\big)=1+\mathcal{O}(\varkappa),\ \ \varkappa \downarrow 0
\end{equation*}
uniformly for $z$ chosen from compact subsets of $\mathbb{C}$. Next, with the contracting choice of radii,
\begin{equation*}
	\omega(z) = \mathcal{O}(1),\ \ \ \omega^{-1}(z)=\mathcal{O}(1),\hspace{1cm}0<t^{-\eta}\leq\varkappa\leq\frac{2}{3}\sqrt{2}-\delta
\end{equation*}
uniformly for $z\in C_{\ell_2}\cup C_{\ell_1}\cup C_{r_1}\cup C_{r_2}$. The same estimations for $\omega^{\pm 1}(z)$ are also valid on the infinite branches $\Sigma_{S_{\infty}}$ as well as on all remaining finite branches outside the five disks $D_j$ centered at $\lambda=0$ and $\lambda=\pm m,\pm M$. Together
\begin{equation*}
	N_-(\lambda)=\mathcal{O}(1)=\big(N_-(\lambda)\big)^{-1},\ \ \varkappa\downarrow 0
\end{equation*}
uniformly on the circle boundaries and all other finite or infinite branches outside $\bigcup D_j$. We can now simply go back to \eqref{DZ:1}-\eqref{DZ:4} and substitute \eqref{shrink0},\eqref{shrink1} and \eqref{I:1} into the estimations. We obtain immediately
\begin{equation}\label{DZ:8}
	\|G_R(\cdot;t,|s_1|)-I\|_{L^2\cap L^{\infty}(\Sigma_R)}\leq\frac{d_9}{t^{1-\eta}},\ \ \forall\, t\geq t_0,\ v\geq v_0:\ \ 0<t^{-\eta}\leq\varkappa\leq\frac{2}{3}\sqrt{2}-\delta,\ \ \eta\in(0,1),
\end{equation}
and thus, repeating the steps in the previous section (with adjusted error terms),
\begin{equation}\label{final2}
	u(x|s)=-\epsilon\sqrt{-\frac{x}{2}}\,\frac{1-\k}{\sqrt{1+\k^2}}\,\textnormal{cd}\left(2(-x)^{\frac{3}{2}}V\,K\left(\frac{1-\k}{1+\k}\right),\,\frac{1-\k}{1+\k}\right)+\mathcal{O}\left((-x)^{-1+\frac{3}{2}\eta}\right),
\end{equation}
uniformly as $x\rightarrow-\infty,|s_1|\uparrow 1$ such that $0<t^{-\eta}\leq\varkappa\leq\frac{2}{3}\sqrt{2}-\delta$ for any $0<\eta<\frac{2}{3}$. In estimation \eqref{final2}, we obtained control over the error term through the a-priori knowledge 
\begin{equation}\label{apriori}
	\|R_-(\cdot\,;t,|s_1|)-I\|_{L^2\cap L^{\infty}(\Sigma_R)}\leq \frac{d_{10}}{t^{1-\eta}}
\end{equation}
and a direct estimation of
\begin{equation*}
	\sqrt{-x}\,\mathcal{E}(\varkappa)=\mathcal{O}\left((-x)^{-1+\frac{3}{2}\eta}\right).
\end{equation*}
We can get a better error estimation by explicitly computing the first terms in
\begin{align*}
	\mathcal{E}(\varkappa)=\frac{\im}{\pi}\sum_{k=1}^2\bigg(\oint_{C_{\ell_k}}\big(G_R(w)-I\big)_{12}\,\d w+&\oint_{C_{r_k}}\big(G_R(w)-I\big)_{12}\,\d w\bigg)\\
	&+\frac{\im}{\pi}\int_{\Sigma_R}\Big(\big(G_R(w)-I\big)\big(R_-(w)-I\big)\Big)_{12}\,\d w.
\end{align*}
Two of the four contour integrals along the circles $C_{\ell_k}\cup C_{r_k}$ have been explicitly computed in \cite{BDIK}, Section $4.2$. It was shown in loc. cit that the aformentioned two contour integrals are in fact $\mathcal{O}\left(t^{-1}\right)$ even with contracting radii. The same result also applies to the contour integrals over $C_{\ell_2}$ and $C_{r_2}$ and without reproducing the lengthy computations of \cite{BDIK}, we simply conclude 
\begin{equation}\label{Ek:1}
	\mathcal{E}(\varkappa)=\frac{\im}{\pi}\int_{\Sigma_R}\Big(\big(G_R(w)-I\big)\big(R_-(w)-I\big)\Big)_{12}\,\d w+\mathcal{O}\left(t^{-1}\right).
\end{equation}
\begin{rem}\label{bettererror} Even without referring to the explicit computations in \cite{BDIK} we see directly from, say \eqref{mat:0} and\eqref{mat:1}, that the evaluation of the contour integral
\begin{equation*}
	 \oint_{C_{r_1}}\big(G_R(w)-I\big)\d w
\end{equation*}
by residue theorem leads to an asymptotic series in reciprocal nonnegative integer powers of $t$.
\end{rem}

For further improvement, we use that we have slightly better $L^2$-estimations than \eqref{DZ:8},
\begin{equation*}
	\|G_R(\cdot;t,|s_1|)-I\|_{L^2(\Sigma_R)}\leq \frac{d_{11}}{t^{1-\frac{3}{4}\eta}}
\end{equation*}
so that by general small norm theory of \cite{DZ1},
\begin{equation*}
	\|R_-(\cdot;t,|s_1|)-I\|_{L^2(\Sigma_R)}\leq \frac{d_{12}}{t^{1-\frac{3}{4}\eta}}.
\end{equation*}
Back to \eqref{Ek:1}, for any $\eta\in(0,1)$,
\begin{equation*}
	\mathcal{E}(\varkappa)=\mathcal{O}\left(t^{-2+\frac{3}{2}\eta}\right)+\mathcal{O}\left(t^{-1}\right)
\end{equation*}
and we have therefore derived \eqref{final:11} with an error term $J_1(x,s)$ such that
\begin{prop}\label{imp:1} For any given $\delta\in(0,\frac{2}{3}\sqrt{2}),\eta\in(0,1)$ there exist positive constants $t_0=t_0(\delta,\eta),v_0=v_0(\delta,\eta)$ and $c=c(\delta,\eta)$ such that
\begin{equation*}
	\big|J_1(x,s)\big|\leq ct^{-\min\{\frac{2}{3},\frac{5}{3}-\frac{3}{2}\eta\}},\ \ \ \forall\, t\geq t_0,\ \ v\geq v_0,\ \ t^{1-\eta}\leq v\leq t\left(\frac{2}{3}\sqrt{2}-\delta\right).
\end{equation*}
\end{prop}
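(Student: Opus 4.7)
The plan is to observe first that $J_1(x,s) = \sqrt{-x}\,\mathcal{E}(\varkappa)$ (this identity is precisely what \eqref{f1} reduces to after invoking Proposition \ref{helpful} to eliminate the $e^{2t\ell}$ factor and matching with \eqref{final:11}), so that, since $\sqrt{-x}=t^{\frac{1}{3}}$, it suffices to prove the sharpened estimate $|\mathcal{E}(\varkappa)|\leq c\,t^{-\min\{1,\,2-\frac{3}{2}\eta\}}$ throughout the regime $t^{1-\eta}\leq v\leq t(\frac{2}{3}\sqrt{2}-\delta)$. The proposition then follows by multiplying through by $t^{\frac{1}{3}}$.

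The starting point is the identity \eqref{Ek:1}, which splits $\mathcal{E}(\varkappa)$ into two pieces: (i) four explicit residue contour integrals around the local parametrices at $\lambda=\pm m,\pm M$, and (ii) the quadratic remainder $\frac{\im}{\pi}\int_{\Sigma_R}\bigl((G_R-I)(R_--I)\bigr)_{12}\,\d w$. For (i), one uses the pole structure encoded in \eqref{mat:1} together with the analogous expansions for the other three branch points: as recorded in Remark \ref{bettererror}, residue evaluation produces an asymptotic series in reciprocal nonnegative integer powers of $t$, with leading order $\mathcal{O}(t^{-1})$. This rate persists in the contracting-radii regime \eqref{shrink1}, because the $\z^{-3/2}\sim t^{-1}$ factors in the parametrix expansions are not degraded by the scale $M-m\sim t^{-\eta/2}$, using the $\omega^{\pm 1}(z)=\mathcal{O}(1)$ bounds from the beginning of Section \ref{relax}.

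For (ii), I would apply Cauchy-Schwarz,
\[
\left|\int_{\Sigma_R}(G_R-I)(R_--I)\,\d w\right|\leq \|G_R-I\|_{L^2(\Sigma_R)}\,\|R_--I\|_{L^2(\Sigma_R)},
\]
and invoke the sharpened $L^2$-bound $\|G_R-I\|_{L^2(\Sigma_R)}\leq d_{11}\,t^{-(1-\frac{3}{4}\eta)}$. This bound improves on \eqref{DZ:8} by gaining a factor $t^{-\eta/4}$ on integration: on the five circles of arc length $\mathcal{O}(t^{-\eta/2})$ the $L^\infty$ estimates \eqref{DZ:1}, \eqref{DZ:4} yield an extra square-root of the length; on the slit segments and lens boundaries the pointwise estimates \eqref{DZ:2}, \eqref{DZ:3} are exponentially small in $t\sqrt{\k(1-\k)}\,\hat{r}^{\frac{3}{2}}\sim t^{1-\eta}\,t^{-\eta/4}$ and contribute negligibly in $L^2$; the infinite branches are likewise super-polynomially small by \eqref{DZ:5}. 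Standard singular-integral theory (Deift-Zhou \cite{DZ1}) transfers this $L^2$-bound to $R_--I$, so piece (ii) is $\mathcal{O}(t^{-2+\frac{3}{2}\eta})$.

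Combining (i) and (ii) gives $|\mathcal{E}(\varkappa)|\leq c(t^{-1}+t^{-2+\frac{3}{2}\eta})=c\,t^{-\min\{1,\,2-\frac{3}{2}\eta\}}$, and multiplication by $t^{\frac{1}{3}}$ yields the proposition. The main subtle point I anticipate is verifying uniformity of the residue extraction in (i) once the radii $\hat{r},\bar{r}$ contract: one must confirm that the multiplicative factors $N(\lambda)(\cdots)N(\lambda)^{-1}$ appearing in \eqref{mat:1} and its analogues do not inflate as $M-m\downarrow 0$, which is precisely ensured by the $\omega^{\pm 1}(z)=\mathcal{O}(1)$ bounds on each circle $C_j$ established at the start of this section.
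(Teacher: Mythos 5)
Your argument follows the paper's route exactly: rewrite $|J_1|=\sqrt{-x}\,|\mathcal{E}(\varkappa)|$ (the unimodular factor $e^{2t\ell}$ drops in absolute value), split $\mathcal{E}$ via \eqref{Ek:1} into the four explicit circle integrals (estimated $\mathcal{O}(t^{-1})$ by residue theorem, cf.\ Remark~\ref{bettererror} and the computations in \cite{BDIK}) plus a quadratic remainder handled by Cauchy--Schwarz with the sharpened bound $\|G_R-I\|_{L^2(\Sigma_R)}\leq d_{11}t^{-1+\frac{3}{4}\eta}$ coming from the $\mathcal{O}(t^{-\eta/2})$ arclengths of the contracting circles. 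One small correction to your narration of piece (i): on $\partial D(m,\hat r)$, one has $\z^{-3/2}\sim t^{-1+\eta}$ rather than $t^{-1}$ once $M-m\sim t^{-\eta/2}$ and $\hat r\sim t^{-\eta/2}$, so the pointwise size alone is not what saves the estimate; what does is the residue-theorem cancellation yielding a series in nonnegative integer powers of $t^{-1}$, which you correctly invoke via Remark~\ref{bettererror}.
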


Notice that for $t\geq t_0$ such that $0<\varkappa\leq t^{-\frac{4}{5}}$ we obtain from \eqref{I:1} and \eqref{l:1},
\begin{equation*}
	-\epsilon\sqrt{-\frac{x}{2}}\,\frac{1-\k}{\sqrt{1+\k^2}}\,\textnormal{cd}\left(2(-x)^{\frac{3}{2}}V\,K\left(\frac{1-\k}{1+\k}\right),\,\frac{1-\k}{1+\k}\right)=-\epsilon(-x)^{-\frac{1}{4}}\sqrt{-2\beta}\,\cos(\pi tV(\varkappa))+\mathcal{O}\left((-x)^{-\frac{7}{10}}\right)
\end{equation*}
with
\begin{equation*}
	 \beta=\frac{1}{2\pi}\ln\left(1-|s_1|^2\right).
\end{equation*}
Also from \eqref{l:1}, as $0<\varkappa\leq t^{-\frac{4}{5}}$,
\begin{equation}\label{m:2}
	\pi tV(\varkappa)=-\frac{2}{3}(-x)^{\frac{3}{2}}-\beta\ln\left(8(-x)^{\frac{3}{2}}\right)+\beta\ln\big|\ln(1-|s_1|^2)\big|-\beta(1+\ln 2\pi)+\mathcal{O}\left((-x)^{-\frac{9}{10}}\right).
\end{equation}
But from Stirling's approximation,
\begin{equation}\label{m:3}
	\textnormal{arg}\,\Gamma(\im\beta)=\textnormal{arg}\,\Gamma\left(-\frac{\im}{2\pi}\varkappa t\right)=-\beta(1+\ln 2\pi)+\beta\ln\big|\ln(1-|s_1|^2)\big|+\frac{\pi}{4}+\mathcal{O}\left((\varkappa t)^{-1}\right),
\end{equation}
since $\varkappa t=\mathcal{O}\big(t^{\frac{1}{5}}\big)\rightarrow+\infty$. Using in addition \eqref{approx:3},
\begin{equation}\label{m:4}
	\textnormal{arg}(s_1)=\frac{\epsilon\pi}{2}+\mathcal{O}\left(t^{-\infty}\right),
\end{equation}
expansion \eqref{m:4},\eqref{m:3} combined in \eqref{m:2} gives therefore
\begin{equation}\label{m:5}
	\pi tV(\varkappa)=-\left[\frac{2}{3}(-x)^{\frac{3}{2}}-\beta\ln\left(8(-x)^{\frac{3}{2}}\right)+\phi\right]-\frac{\pi}{2}(1+\epsilon)+\mathcal{O}\left((-x)^{-\frac{3}{10}}\right),
\end{equation}
where (compare \eqref{known:1})
\begin{equation*}
	\phi=-\frac{\pi}{4}-\textnormal{arg}\,\Gamma(\im\beta)-\textnormal{arg}\, s_1.
\end{equation*}
Substituting \eqref{m:5} back into the expansion of the Jacobi elliptic function, we use the addition theorem for $\cos z$ and the identity 
\begin{equation*}
	-\epsilon\cos\big(z+\frac{\pi}{2}(1+\epsilon)\big)=\cos z,\ \  z\in\mathbb{C},
\end{equation*}
so that finally
\begin{align}
	-\epsilon\sqrt{-\frac{x}{2}}\,\frac{1-\k}{\sqrt{1+\k^2}}\,\textnormal{cd}\left(2(-x)^{\frac{3}{2}}V\,K\left(\frac{1-\k}{1+\k}\right),\,\frac{1-\k}{1+\k}\right)=&(-x)^{-\frac{1}{4}}\sqrt{-2\beta}\cos\left(\frac{2}{3}(-x)^{\frac{3}{2}}+\beta\ln\left(8(-x)^{\frac{3}{2}}\right)+\phi\right)\nonumber\\
	&+\mathcal{O}\left((-x)^{-\frac{2}{5}}\right),\ x\rightarrow-\infty,\ \ \label{mat:6}
\end{align}
uniformly for $0<\varkappa\leq t^{-\frac{4}{5}}$. This short computation shows that the Jacobi elliptic function leading term in \eqref{final:11} reproduces the known oscillatory leading behavior of \eqref{known:1} for the values of $(t,v)$ such that $t\geq t_0$ and $0<\varkappa\leq t^{-\frac{4}{5}}$. However in the same region of the double scaling diagram we do not yet have control over the error $J_1(x,s)$, compare Proposition \ref{imp:1}. This will be achieved below using different nonlinear steepest descent techniques applied to the RHP \ref{masterRHP}.

\section{Further extension at the lower end for regular transition}\label{flower}
Suppose throughout that $t=(-x)^{\frac{3}{2}}\geq t_0$ is sufficiently large such that
\begin{equation}\label{lowerscale}
	t\geq v^{k+1}>0,\ \ t\geq t_0,\ \ k\in\mathbb{Z}_{\geq 0}.
\end{equation}
With this constraint in place we employ a different approach to the nonlinear steepest descent analysis of RHP \ref{masterRHP}. In fact we now choose steps which are close in certain points to the ones used in the derivation of \eqref{known:1} for fixed $|s_1|<1$, see \cite{FIKN}, chapter $9$, $\S 4$.\smallskip

Start from the Riemann-Hilbert problem for $Z(\lambda)$ with its jump contour shown in Figure \ref{figure2} or Figure \ref{lower1} below and where we fix 
\begin{equation*}
	\lambda^{\ast}=\frac{1}{2}.
\end{equation*}	
The jump matrix on the full segment $[-\lambda^{\ast},\lambda^{\ast}]$, compare \eqref{Zjump}, is factorized
\begin{equation*}
	S_5^{-1}S_4^{-1}S_3^{-1}=\sigma_2 S_3S_4S_5\sigma_2 = \begin{pmatrix}
	1-s_1s_3 & s_1\\
	s_1 & 1-s_1s_3\\
	\end{pmatrix} = \begin{pmatrix}
	1 & 0\\
	s_1e^{\varkappa t} & 1\\
	\end{pmatrix}e^{-\varkappa t\sigma_3}\begin{pmatrix}
	1 & s_1e^{\varkappa t}\\
	0 & 1\\
	\end{pmatrix}\equiv S_LS_DS_U
\end{equation*}
and we observe that with \eqref{lowerscale},
\begin{equation}\label{obs:1}
	0<e^{\varkappa t}\leq \exp\left(t^{\frac{1}{k+1}}\right),\ \ \ \ \ t\geq t_0,\ \ t\geq v^{k+1}>0.
\end{equation}	
\begin{figure}[tbh]
\begin{center}
\resizebox{0.6\textwidth}{!}{\includegraphics{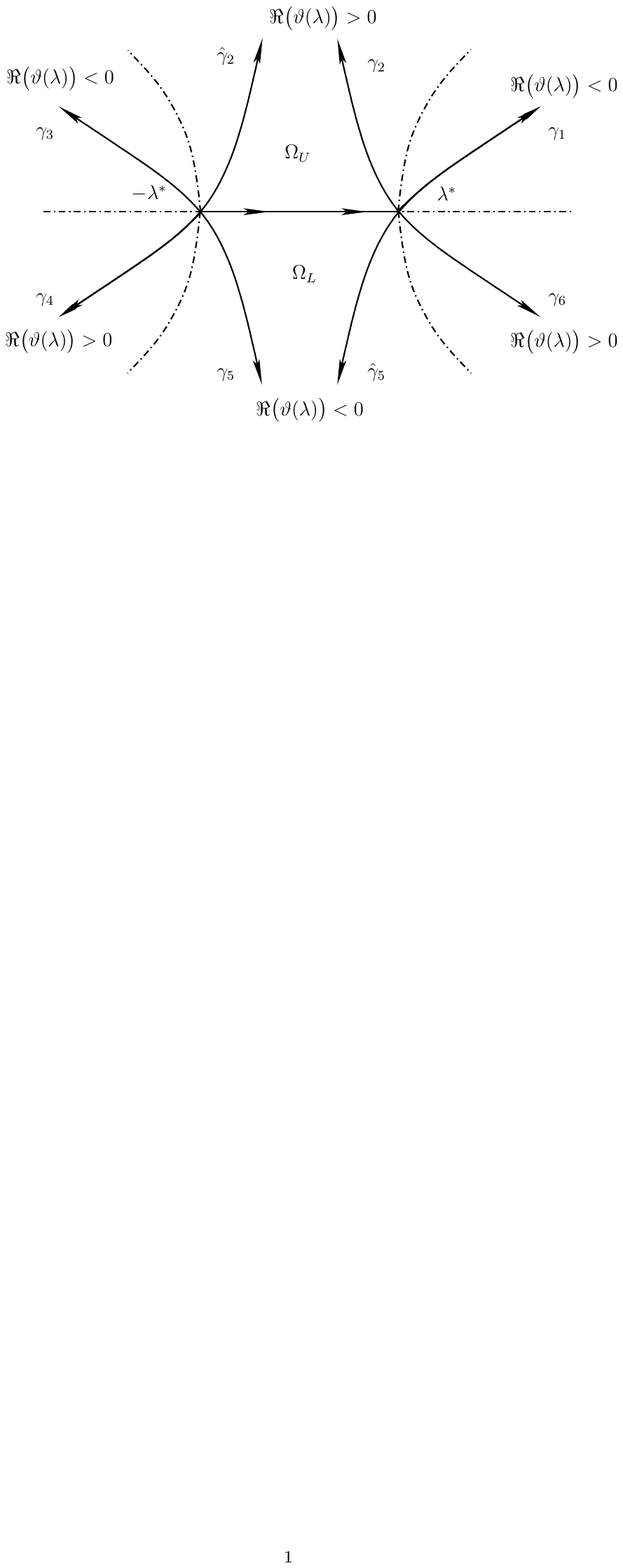}}
\caption{Opening of lens subject to the scale \eqref{lowerscale}, the jump contours for $\Phi(\lambda)$ are the solid black lines. In addition, along the dotted lines $\Re\big(\vartheta(\lambda)\big)=0$.}
\label{lower1}
\end{center}
\end{figure}

Next define (compare Figure \ref{lower1}),
\begin{equation*}
	\Phi(\lambda)=\begin{cases}
		Z(\lambda)S_U^{-1},&\lambda\in\Omega_U\\
		Z(\lambda)S_L,&\lambda\in\Omega_L\\
		Z(\lambda),&\textnormal{otherwise}
		\end{cases}
\end{equation*}
and obtain the following RHP
\begin{itemize}
	\item $\Phi(\lambda)$ is analytic for $\lambda\in\mathbb{C}\backslash(\hat{\gamma}_2\cup\hat{\gamma}_5\cup\bigcup_1^6\gamma_k)$
	\item The boundary values are related via the following jump conditions
	\begin{eqnarray*}
		\Phi_+(\lambda)&=&\Phi_-(\lambda)e^{-t\vartheta(\lambda)\sigma_3}S_ke^{t\vartheta(\lambda)\sigma_3},\ \  \ \ \ \ \ \lambda\in\gamma_k,\ k=1,3,4,6\\
		\Phi_+(\lambda)&=&\Phi_-(\lambda)e^{-t\vartheta(\lambda)\sigma_3}S_2S_U^{-1}e^{t\vartheta(\lambda)\sigma_3},\ \ \lambda\in\gamma_2\\
		\Phi_+(\lambda)&=&\Phi_-(\lambda)e^{-t\vartheta(\lambda)\sigma_3}S_Ue^{t\vartheta(\lambda)\sigma_3},\ \ \ \ \ \ \,\lambda\in\hat{\gamma}_2\\
		\Phi_+(\lambda)&=&\Phi_-(\lambda)e^{-t\vartheta(\lambda)\sigma_3}S_5S_Le^{t\vartheta(\lambda)\sigma_3},\ \ \ \,\lambda\in\gamma_5\\
		\Phi_+(\lambda)&=&\Phi_-(\lambda)e^{-t\vartheta(\lambda)\sigma_3}S_L^{-1}e^{t\vartheta(\lambda)\sigma_3},\ \ \ \ \ \lambda\in\hat{\gamma}_5\\
		\Phi_+(\lambda)&=&\Phi_-(\lambda)S_D,\hspace{3cm}\ \ \lambda\in[-\lambda^{\ast},\lambda^{\ast}]
	\end{eqnarray*}
	\item The function $\Phi(\lambda)$ is normalized,
	\begin{equation*}
		\Phi(\lambda)=I+\mathcal{O}\left(\lambda^{-1}\right),\ \ \lambda\rightarrow\infty.
	\end{equation*}
\end{itemize}
Notice that the jumps on the infinite branches $\gamma_k$ for $k=1,3,4,6$ are exponentially close to the unit matrix in the double scaling limit \eqref{lowerscale} as long as we stay away from $\lambda=\pm\lambda^{\ast}$. On the remaining infinite branches we have for $\lambda\in\partial D(\lambda^{\ast},r)\cup\partial D(-\lambda^{\ast},r),r>0$ with our choice $\lambda^{\ast}=\frac{1}{2}$,
\begin{equation*}
	G_{\Phi}(\lambda)=I+\mathcal{O}\left(e^{\varkappa t}e^{-\hat{c}tr^2}\right)
\end{equation*}
with a universal constant $\hat{c}>0$. Thus recalling \eqref{obs:1}, we could choose a contracting radius $r=r(t)$, compare Section \ref{rloweresti} below, and obtain again sufficiently fast decay for the jumps on the latter infinite branches. We will work out all necessary details after the following two subsections.
\subsection{The outer parametrix}
The required outer parametrix is given by
\begin{equation}\label{crucialpara}
	\Psi^D(\lambda)=\left(\frac{\lambda+\lambda^{\ast}}{\lambda-\lambda^{\ast}}\right)^{\nu\sigma_3},\ \ \lambda\in\mathbb{C}\backslash[-\lambda^{\ast},\lambda^{\ast}];\hspace{1cm}\nu=\frac{\varkappa t}{2\pi\im}
\end{equation}
and appeared in \cite{FIKN}, chapter $9$, $\S 4$. Its relevant analytical properties are summarized below
\begin{itemize}
	\item $\Psi^D(\lambda)$ is analytic for $\lambda\in\mathbb{C}\backslash[-\lambda^{\ast},\lambda^{\ast}]$
	\item The boundary values are related via the equation
	\begin{equation*}
		\Psi^D_+(\lambda)=\Psi^D_-(\lambda)e^{-\varkappa t\sigma_3},\ \ \lambda\in(-\lambda^{\ast},\lambda^{\ast})
	\end{equation*}
	\item $\Psi^D(\lambda)$ is square integrable on the closed interval $[-\lambda^{\ast},\lambda^{\ast}]$
	\item As $\lambda\rightarrow\infty$, we have
	\begin{equation*}
		\Psi^D(\lambda)=I+\frac{\nu\sigma_3}{\lambda}+\mathcal{O}\left(\lambda^{-2}\right).
	\end{equation*}
\end{itemize}
\subsection{The edge point parametrices} Near the end points $\lambda=\pm\lambda^{\ast}$ the required parametrices are constructed in terms of parabolic cylinder functions, see again \cite{FIKN}, chapter $9$, $\S 4$. We briefly summarize the results:\smallskip

Let $D_{\nu}(\z)$ denote the unique solution to Weber's equation
\begin{equation*}
	w''+\left(\nu+\frac{1}{2}-\frac{\z^2}{4}\right)w=0,\ \ w=w(\z)
\end{equation*}
subject to the boundary condition
\begin{equation*}
	D_{\nu}(\z)=\z^{\nu}e^{-\frac{1}{4}\z^2}\left(1-\frac{\nu(\nu-1)}{2\z^2}+\mathcal{O}\left(\z^{-4}\right)\right),\ \ -\frac{3\pi}{4}<\textnormal{arg}\,\z<\frac{3\pi}{4}.
\end{equation*}
Observe that $w=D_{\nu}(\z)$ is entire in $\z\in\mathbb{C}$. We define for $\z\in\mathbb{C}$ the Wronskian type matrix
\begin{equation*}
	Z_0(\z)=2^{-\frac{1}{2}\sigma_3}\begin{pmatrix}
	D_{-\nu-1}(\im\z) & D_{\nu}(\z)\\
	\im D_{-\nu-1}'(\im\zeta) & D_{\nu}'(\z)\\
	\end{pmatrix}\begin{pmatrix}
	e^{\im\frac{\pi}{2}(\nu+1)} & 0\\
	0 & 1\\
	\end{pmatrix}
\end{equation*}
and assemble
\begin{equation*}
	Z^{RH}(\z)=Z_0(\z)\begin{cases}
	I,&\textnormal{arg}\,\z\in(-\frac{\pi}{4},0)\\
	H_0,&\textnormal{arg}\,\z\in(0,\frac{\pi}{2})\\
	H_0H_1,&\textnormal{arg}\,\z\in(\frac{\pi}{2},\pi)\\
	H_0H_1H_2,&\textnormal{arg}\,\z\in(\pi,\frac{3\pi}{2})\\
	H_0H_1H_2H_3,&\textnormal{arg}\,\z\in(\frac{3\pi}{2},\frac{7\pi}{4})
	\end{cases}
\end{equation*}
where
\begin{equation*}
	H_0=\begin{pmatrix}
	1 & 0\\
	h_0 & 1\\
	\end{pmatrix},\ \ H_1=\begin{pmatrix}
	1 & h_1\\
	0 & 1\\
	\end{pmatrix},\hspace{1cm} H_{n+2}=e^{\im\pi(\nu+\frac{1}{2})\sigma_3}H_ne^{-\im\pi(\nu+\frac{1}{2})\sigma_3},\ n=0,1,2
\end{equation*}
and
\begin{equation*}
	h_0=-\frac{\im\sqrt{2\pi}}{\Gamma(1+\nu)},\hspace{0.5cm} h_1=\frac{\sqrt{2\pi}}{\Gamma(-\nu)}e^{\im\pi\nu},\ \ \ \ 1+h_0h_1=e^{2\pi\im\nu}.
\end{equation*}
Using standard properties of parabolic cylinder functions, cf. \cite{N} this setup leads to the following RHP for the bare parametrix $Z^{RH}(\z)$
\begin{itemize}
	\item $Z^{RH}(\z)$ is analytic for $\z\in\mathbb{C}\backslash\{\textnormal{arg}\,\z=0,\frac{\pi}{2},\pi,\frac{3\pi}{2},\frac{7\pi}{4}\}$
	\item The jumps along the contours shown in Figure \ref{cylindbare} are as follows
	\begin{eqnarray*}
		Z_+^{RH}(\z)&=&Z_-^{RH}(\z)H_k,\ \ \textnormal{arg}\,\z=\frac{\pi k}{2},\ k=0,1,2,3\\
		Z_+^{RH}(\z)&=&Z_-^{RH}(\z)e^{2\pi\im\nu\sigma_3},\ \ \textnormal{arg}\,\z=\frac{7\pi}{4}
	\end{eqnarray*}
	\item As $\z\rightarrow\infty$, valid in a full neighborhood of infinity,
	\begin{align*}
		Z^{RH}(\z)\sim&\frac{1}{\sqrt{2}}\z^{-\frac{1}{2}\sigma_3}\begin{pmatrix}
		1 & 1\\
		1 & -1\\
		\end{pmatrix}\left[I+\sum_{s=1}^{\infty}\begin{pmatrix}
		(\nu)_{2s} & (-)^s\big((-\nu)_{2s}-(-\nu-1)_{2s}\big)\\
		(\nu+1)_{2s}-(\nu)_{2s} & (-)^s(-\nu-1)_{2s}\\
		\end{pmatrix}\frac{\z^{-2s}}{s!\,2^s}\right]\\
		&\times\,e^{(\frac{\z^2}{4}-(\nu+\frac{1}{2})\ln\z)\sigma_3}
	\end{align*}
	where we used the full asymptotic series
	\begin{equation*}
		D_{\nu}(\z)\sim e^{-\frac{\z^2}{4}}\z^{\nu}\sum_{s=0}^{\infty}(-)^s\frac{(-\nu)_{2s}}{s!\,2^s}\z^{-2s},\ \ \z\rightarrow\infty,\ \ -\frac{3\pi}{4}<\textnormal{arg}\,\z<\frac{3\pi}{4}
	\end{equation*}
	combined with the identity
	\begin{equation*}
		D_{\nu}'(\z)=\frac{\z}{2}D_{\nu}(\z)-D_{\nu+1}(\z),\ \ \z,\nu\in\mathbb{C}.
	\end{equation*}
\end{itemize}
Next, we employ the locally conformal change of variables
\begin{equation*}
	\z(\lambda)=2\sqrt{t}\big(\vartheta(\lambda^{\ast})-\vartheta(\lambda)\big)^{\frac{1}{2}} = 2\sqrt{2t}\,e^{\im\frac{3\pi}{4}}(\lambda-\lambda^{\ast})\left(1+\frac{2}{3}\big(\lambda-\lambda^{\ast}\big)\right)^{\frac{1}{2}},\ \ |\lambda-\lambda^{\ast}|<r,
\end{equation*}
and define the edge point parametrix near $\lambda=\lambda^{\ast}$ as
\begin{equation*}
	\Psi^r(\lambda)=B(\lambda)Z^{RH}\big(\z(\lambda)\big)e^{-\frac{1}{4}\z^2(\lambda)\sigma_3}e^{t\vartheta(\lambda^{\ast})\sigma_3}\left(-\frac{h_1}{s_3}\right)^{\frac{1}{2}\sigma_3}
\end{equation*}
where
\begin{equation*}
	B(\lambda)=\left(\z(\lambda)\frac{\lambda+\lambda^{\ast}}{\lambda-\lambda^{\ast}}\right)^{\nu\sigma_3}\left(-\frac{h_1}{s_3}\right)^{-\frac{1}{2}\sigma_3}e^{-t\vartheta(\lambda^{\ast})\sigma_3}2^{-\frac{1}{2}\sigma_3}\begin{pmatrix}
	\z(\lambda) & 1\\
	1 & 0\\
	\end{pmatrix},\ \ \ |\lambda-\lambda^{\ast}|<r
\end{equation*}
is analytic at $\lambda=\lambda^{\ast}$. After a local contour deformation we check that the jumps of $\Psi^r(\lambda)$ near $\lambda=\lambda^{\ast}$ are identical to the ones in the initial $\Phi$-RHP, i.e. (compare Figure \ref{parabo})
\begin{figure}[tbh]
\begin{minipage}{0.35\textwidth} 
\begin{center}
\resizebox{0.9\textwidth}{!}{\includegraphics{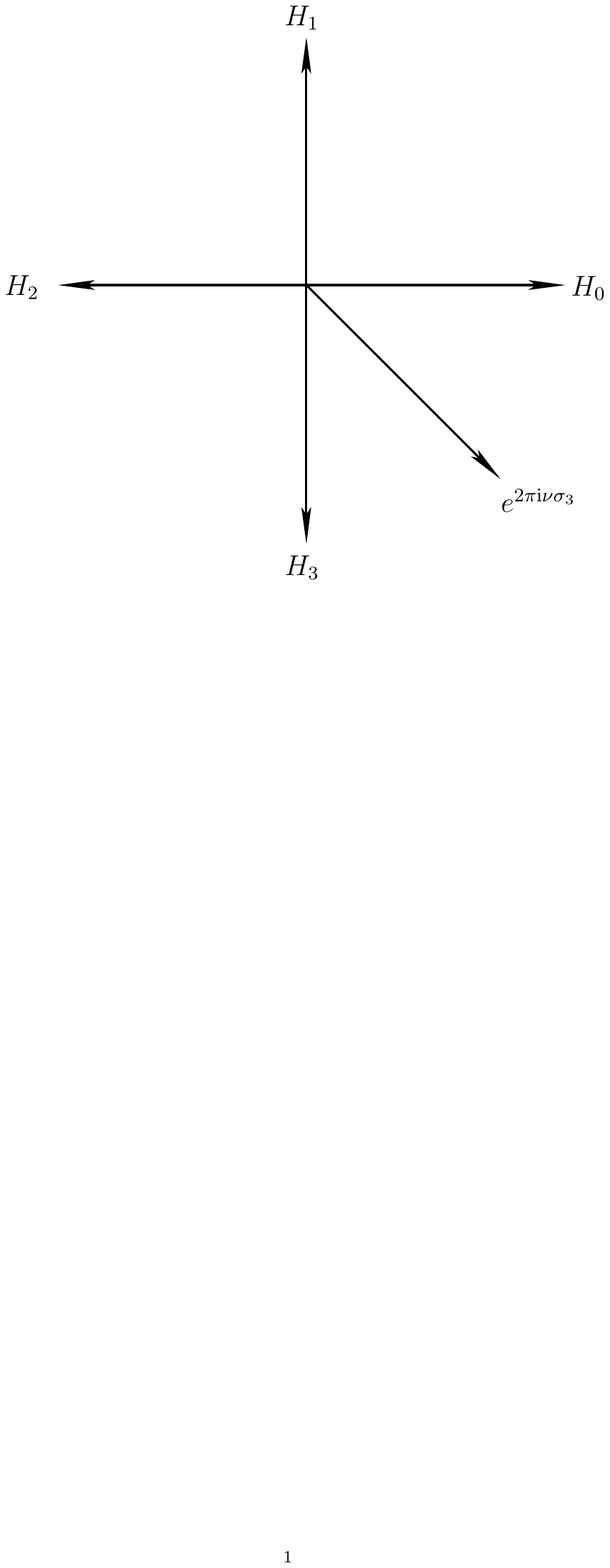}}
\caption{Jump contour for the bare parametrix $Z^{RH}(\zeta)$.}
\label{cylindbare}
\end{center}
\end{minipage}
\begin{minipage}{0.5\textwidth}
\begin{center}
\resizebox{0.5\textwidth}{!}{\includegraphics{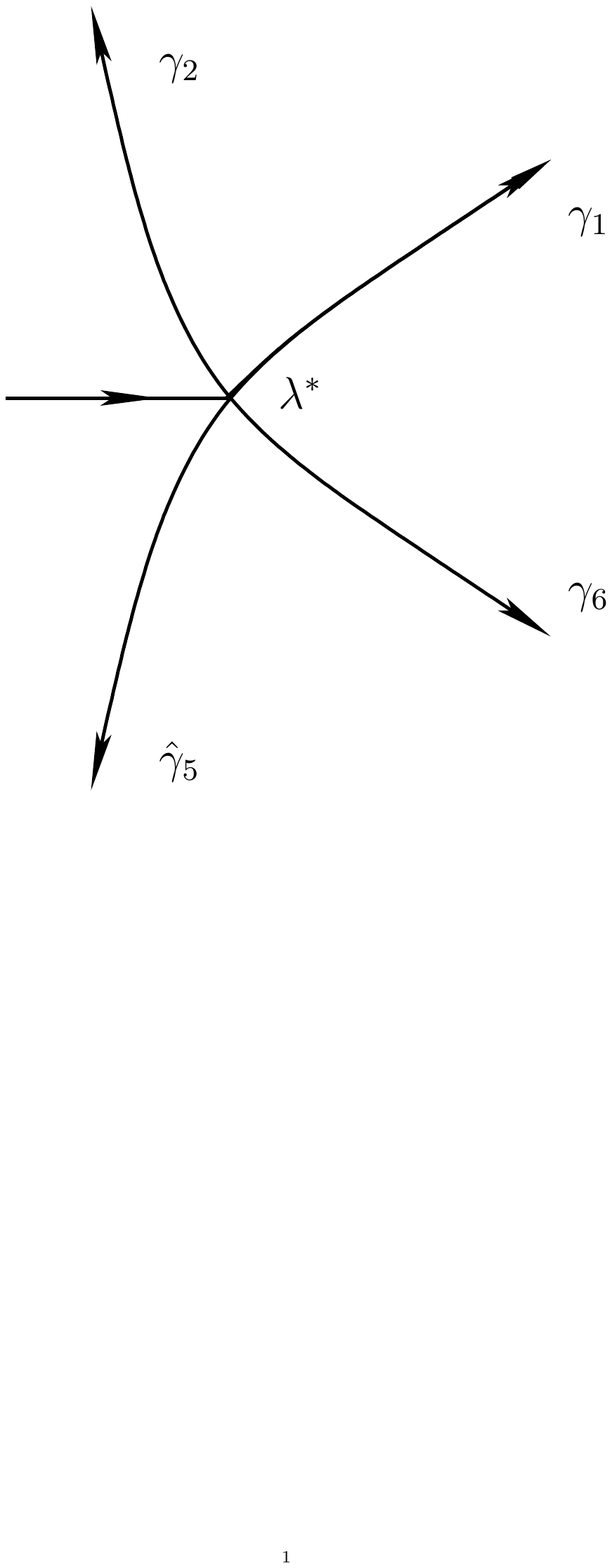}}
\caption{Jump contour of the model function $\Psi^r(\lambda)$ near $\lambda=\lambda^{\ast}$.}
\label{parabo}
\end{center}
\end{minipage}
\end{figure}	
\begin{align*}
	\Psi^r_+(\lambda)&=\Psi^r_-(\lambda)e^{-t\vartheta(\lambda)\sigma_3}S_1e^{t\vartheta(\lambda)\sigma_3},\ \lambda\in\gamma_{1r};\ \ \ \
	\Psi^r_+(\lambda)=\Psi^r_-(\lambda)e^{-t\vartheta(\lambda)\sigma_3}S_2S_U^{-1}e^{t\vartheta(\lambda)\sigma_3},\ \ \lambda\in\gamma_{2r}\\
	\Psi^r_+(\lambda)&=\Psi^r_-(\lambda)e^{-t\vartheta(\lambda)\sigma_3}S_6e^{t\vartheta(\lambda)\sigma_3},\ \ \lambda\in\gamma_{6r};\ \ \
	\Psi^r_+(\lambda)=\Psi^r_-(\lambda)e^{-t\vartheta(\lambda)\sigma_3}S_L^{-1}e^{t\vartheta(\lambda)\sigma_3},\ \ \lambda\in\hat{\gamma}_{5r}\\
	\Psi^r_+(\lambda)&=\Psi^r_-(\lambda)S_D,\ \ \lambda\in(\lambda^{\ast}-r,\lambda^{\ast}),
\end{align*}
where $\gamma_{jr}=\gamma_j\cap D(\lambda^{\ast},r)$. Moreover, with the previous asymptotic expansion of $Z^{RH}(\z)$,
\begin{align}
	\Psi^r&(\lambda)\sim\big(\beta(\lambda)\big)^{\sigma_3}\left(-\frac{h_1}{s_3}\right)^{-\frac{1}{2}\sigma_3}\z^{\frac{1}{2}\sigma_3}\begin{pmatrix}
	1 & 0\\
	1 & 1\\
	\end{pmatrix}\left[I+\sum_{s=1}^{\infty}C_{2s}(\nu)\frac{\z^{-2s}}{s!\,2^s}\right]\z^{-\frac{1}{2}\sigma_3}\left(-\frac{h_1}{s_3}\right)^{\frac{1}{2}\sigma_3}\big(\beta(\lambda)\big)^{-\sigma_3}\Psi^D(\lambda)\nonumber\\
	&=\bigg[I+\frac{\sigma_-}{\z(\lambda)}\big(\beta(\lambda)\big)^{-2}\left(-\frac{h_1}{s_3}\right)\sum_{m=0}^{\infty}(\nu+1)_{2m}\frac{\z(\lambda)^{-2m}}{m!\,2^m}+\sigma_+\big(\beta(\lambda)\big)^2\left(-\frac{s_3}{h_1}\right)\nonumber\\
	&\times\,\sum_{m=1}^{\infty}(-)^m\big((-\nu)_{2m}-(-\nu-1)_{2m}\big)\frac{\z(\lambda)^{-2m+1}}{m!\,2^m}+\sum_{m=1}^{\infty}\begin{pmatrix}
	(\nu)_{2m} & 0\\
	0 & (-)^m(-\nu)_{2m}\\
	\end{pmatrix}\frac{\z(\lambda)^{-2m}}{m!\,2^m}\bigg]\Psi^D(\lambda),\label{mup:1}
\end{align}
as $t\rightarrow+\infty$ (and thus $|\z|\rightarrow\infty$). Here we used the abbreviation
\begin{equation*}
	\beta(\lambda)=\left(\z(\lambda)\frac{\lambda+\lambda^{\ast}}{\lambda-\lambda^{\ast}}\right)^{\nu}e^{-t\vartheta(\lambda^{\ast})}.
\end{equation*}
Expansion \eqref{mup:1} establishes the matching relation between the model functions $\Psi^r(\lambda)$ and $\Psi^D(\lambda)$ valid as $t\rightarrow\infty$ for $0<r_1\leq|\lambda-\lambda^{\ast}|\leq r_2<\frac{1}{4}$.\smallskip

Near the other edge point $\lambda=-\lambda^{\ast}$ we can use the symmetry of the problem and define the required local parametrix as
\begin{equation}\label{lowerparaleft}
	\Psi^{\ell}(\lambda)=\sigma_2\Psi^r(-\lambda)\sigma_2,\ \ \ |\lambda+\lambda^{\ast}|<r
\end{equation}
which replaces the matchup \eqref{mup:1} by
\begin{align}
	\Psi^{\ell}&(\lambda)\sim\bigg[I+\frac{\sigma_+}{\z(-\lambda)}\big(\beta(-\lambda)\big)^{-2}\left(\frac{h_1}{s_3}\right)\sum_{m=0}^{\infty}(\nu+1)_{2m}\frac{\z(-\lambda)^{-2m}}{m!\,2^m}+\sigma_-\big(\beta(-\lambda)\big)^2\left(\frac{s_3}{h_1}\right)\nonumber\\
	&\times\,\sum_{m=1}^{\infty}(-)^m\big((-\nu)_{2m}-(-\nu-1)_{2m}\big)\frac{\z(-\lambda)^{-2m+1}}{m!\,2^m}+\sum_{m=1}^{\infty}\begin{pmatrix}
	(-)^m(-\nu)_{2m} & 0\\
	0 & (\nu)_{2m}\\
	\end{pmatrix}\frac{\z(-\lambda)^{-2m}}{m!\,2^m}\bigg]\Psi^D(\lambda).\label{mup:2}
\end{align}
Collecting the model functions $\Psi^D(\lambda),\Psi^r(\lambda)$ and $\Psi^{\ell}(\lambda)$ we move on to the final transformation.

\subsection{Ratio problem}\label{rloweresti}
Assemble
\begin{equation}\label{ratiolower}
	\chi(\lambda)=\Phi(\lambda)\begin{cases}
	\big(\Psi^r(\lambda)\big)^{-1},&|\lambda-\lambda^{\ast}|<r\\
	\big(\Psi^{\ell}(\lambda)\big)^{-1},&|\lambda+\lambda^{\ast}|<r\\
	\big(\Psi^D(\lambda)\big)^{-1},&|\lambda\pm \lambda^{\ast}|>r
	\end{cases}
\end{equation}
with $0<r<\frac{1}{4}$ which we will choose more specifically below. This leads to the RHP formulated on the contours as shown in Figure \ref{ratiolowerfig}
\begin{figure}[tbh]
\begin{center}
\resizebox{0.5\textwidth}{!}{\includegraphics{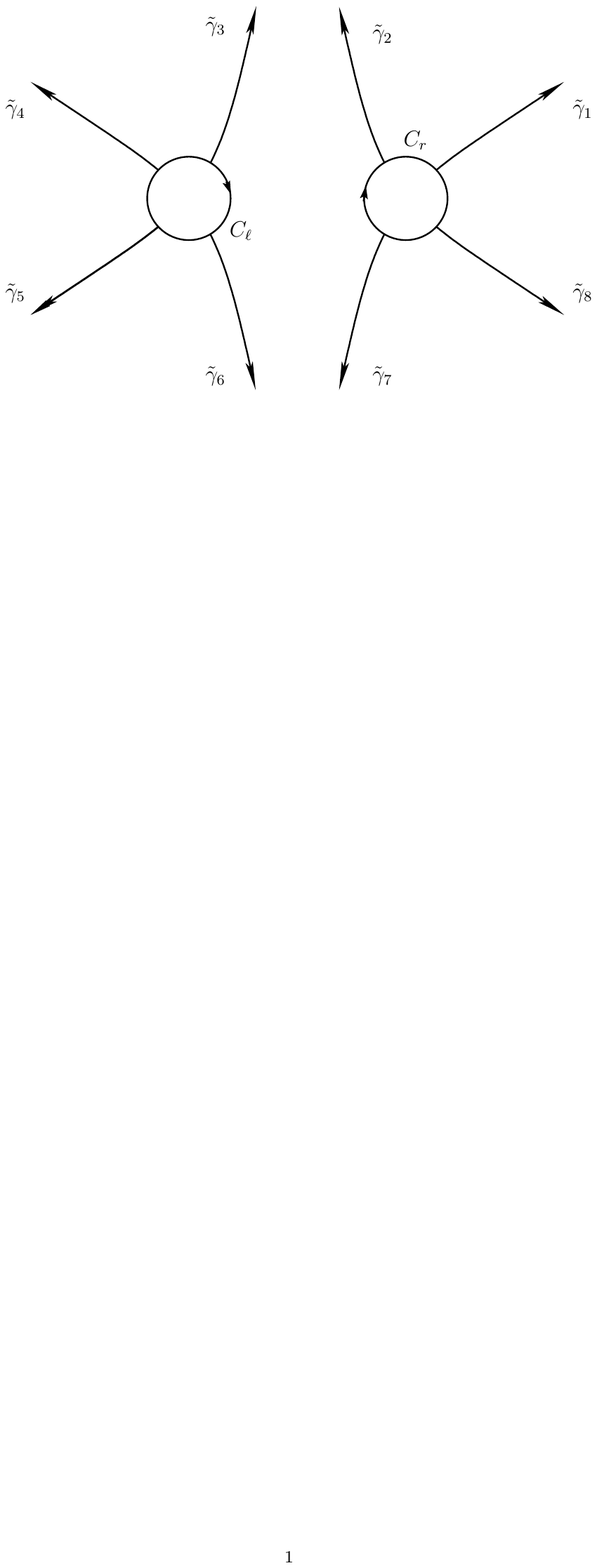}}
\caption{Jump contour for the ratio problem as introduced in \eqref{ratiolower}.}
\label{ratiolowerfig}
\end{center}
\end{figure}
\begin{itemize}
	\item $\chi(\lambda)$ is analytic for $\lambda\in\mathbb{C}\backslash(C_r\cup C_{\ell}\cup\{\textnormal{infinite branches}\,\tilde{\gamma}_j\})$
	\item The following jumps are valid:
	\begin{equation*}
		\chi_+(\lambda)=\chi_-(\lambda)\Psi^D(\lambda)\tilde{S}_j\big(\Psi^D(\lambda)\big)^{-1},\ \ \ \lambda\in\tilde{\gamma}_j,\ \ j=1,\ldots,8
	\end{equation*}
	along the infinite branches $\tilde{\gamma}_k$ with piecewise constant matrices $\tilde{S}_j$ which are given in the $\Phi$-RHP. There are no jumps inside the circles and along the line segment $[-\lambda^{\ast},\lambda^{\ast}]$ since we constructed ``exact" local parametrices $\Psi^D(\lambda),\Psi^r(\lambda)$ and $\Psi^{\ell}(\lambda)$. However on the circle boundaries,
	\begin{equation*}
		\chi_+(\lambda)=\chi_-(\lambda)\Psi^r(\lambda)\big(\Psi^D(\lambda)\big)^{-1},\ \ \lambda\in C_r;\hspace{1cm}\chi_+(\lambda)=\chi_-(\lambda)\Psi^{\ell}(\lambda)\big(\Psi^D(\lambda)\big)^{-1},\ \ \lambda\in C_{\ell}
	\end{equation*}
	\item As $\lambda\rightarrow\infty$, we have that
	\begin{equation*}
		\chi(\lambda)=I+\mathcal{O}\left(\lambda^{-1}\right).
	\end{equation*}
\end{itemize}
For the jumps on the infinite branches $\tilde{\gamma}_j,j=1,\ldots,8$ we obtain immediately
\begin{equation*}
	G_{\chi}(\lambda)=I+\mathcal{O}\left(e^{d_1\varkappa t-d_2tr^2}\right)
\end{equation*}
with universal constants $d_1,d_2>0$. If we choose a (in general) contracting radius $r$ such that
\begin{equation*}
	\frac{1}{4}>r=\frac{1}{v}\geq t^{-\frac{1}{k+1}}>0,
\end{equation*}
then by \eqref{lowerscale} for $t\geq v^{k+1}>0,t\geq t_0$,
\begin{equation*}
	d_1\varkappa t-d_2tr^2\leq d_1t^{\frac{1}{k+1}}-d_2t^{\frac{k-1}{k+1}}\leq -d_3t^{\frac{k-1}{k+1}},\ \ d_3>0,\ k\geq 3;
\end{equation*}
hence all contributions from the infinite branches are approaching the identity matrix exponentially fast, i.e.
\begin{equation}\label{loesti:1}
	\|G_{\chi}(\cdot;t,|s_1|)-I\|_{L^2\cap L^{\infty}(\cup\tilde{\gamma}_j)}\leq d_4e^{-d_5t^{\frac{k-1}{k+1}}},\ \ k\in\mathbb{Z}_{\geq 3}
\end{equation}
for $t\geq v^{k+1}>0,t\geq t_0$. For the circle boundaries we use \eqref{mup:1} and \eqref{mup:2}, for instance for $\lambda\in C_r$ we use $|\z(\lambda)|\geq 4\sqrt{2}\,t^{\frac{k-1}{2(k+1)}}$ as well as
\begin{equation*}
	h_1=\mathcal{O}\left((\varkappa t)^{\frac{1}{2}}e^{\frac{3}{4}\varkappa t}\right)
\end{equation*}
which follows from Stirling's approximation. Thus for $\lambda\in C_r$,
\begin{equation*}
	\big|G_{\chi}(\lambda;t,|s_1|)-I\big|\leq c\left\{\frac{(\varkappa t)^{\frac{1}{2}}}{r\sqrt{t}}\begin{pmatrix}
	0 & e^{-\frac{3}{4}\varkappa t}e^{\varphi(r)}\\
	e^{\frac{3}{4}\varkappa t}e^{-\varphi(r)} & 0\\
	\end{pmatrix}+\hat{\mathcal{E}}\big(\lambda;t,|s_1|\big)\right\}
\end{equation*}
where
\begin{equation*}
	\varphi(r)=\frac{\varkappa t}{\pi}\textnormal{arg}\left(\z(\lambda)\frac{\lambda+\lambda^{\ast}}{\lambda-\lambda^{\ast}}\right)\ \ \Rightarrow\ \ \left|\frac{3}{4}\varkappa t-\varphi(r)\right|\leq \frac{\varkappa t}{\pi}\arctan(r)=\frac{\varkappa t}{\pi}r\left(1+\mathcal{O}\left(r^3\right)\right)=\frac{1}{\pi}\left(1+o(1)\right).
\end{equation*}
Bounds for the error term $\hat{\mathcal{E}}(\lambda;t,|s_1|)$ can be found for instance in \cite{N}: there exists $t_0=t_0(k)>0$ and constants $d_j=d_j(t_0)$ such that
\begin{equation*}
	\|\hat{\mathcal{E}}(\lambda;t,|s_1|)\|\leq d_6e^{d_7\varkappa t-d_8\sqrt{t}r}\leq d_6e^{-d_9t^{\frac{k-1}{2(k+1)}}},\ \ \forall\,t\geq v^{k+1}>0,\ \ t\geq t_0,\ \ k\in\mathbb{Z}_{\geq 3}.
\end{equation*}
For the latter estimation to hold for $k=3$, we have to adjust the radius $r\mapsto \frac{c}{v}$ with a sufficiently large, but $k,t,|s_1|$ independent, constant $c>0$. Notice that with \eqref{lowerparaleft} completely similar estimations will also hold for $G_{\chi}(\lambda;t,|s_1|)$ when $\lambda\in C_{\ell}$. Thus, summarizing
\begin{equation}\label{loesti:2}
	\|G_{\chi}(\cdot\,;t,|s_1|)-I\|_{L^2\cap L^{\infty}(C_r\cup C_{\ell})}\leq d_{10}t^{-\frac{k-2}{2(k+1)}},\ \ \forall\,t\geq v^{k+1}>0,\ \ t\geq t_0,\ \ k\in\mathbb{Z}_{\geq 3}.
\end{equation}
The latter estimation combined with \eqref{loesti:1}, we obtain from general theory \cite{DZ1} the unique solvability of the ratio problem for the values of $(t,v)$ satisfying \eqref{lowerscale}. Moreover its unique solution satisfies
\begin{equation}\label{loesti:3}
	\|\chi_-(\cdot\,;t,|s_1|)-I\|_{L^2(\Sigma_R)}\leq d_{11}t^{-\frac{k-2}{2(k+1)}},\ \ \forall\,t\geq v^{k+1}>0,\  \ t\geq t_0,\ \ k\in\mathbb{Z}_{\geq 3}.
\end{equation}
\subsection{Derivation of leading order asymptotics subject to \eqref{lowerscale}}
We need to trace back the transformations
\begin{equation*}
	Y(\lambda)\mapsto X(\lambda)\mapsto Z(\lambda)\mapsto \Phi(\lambda)\mapsto \chi(\lambda)
\end{equation*}
in order to obtain with \eqref{FN},
\begin{align*}
	u(x|s)&=2\lim_{\lambda\rightarrow\infty}\Big[\lambda\big(Y(\lambda)\big)_{12}\Big]=2\sqrt{-x}\lim_{z\rightarrow\infty}\big[z\,\Phi(z)\big]_{12} = 2\sqrt{-x}\lim_{z\rightarrow\infty}\big[z\Psi^D(z)\chi(z)\big]_{12}\\
	&=2\sqrt{-x}\left[\nu\sigma_3+\frac{\im}{2\pi}\int_{\Sigma_{\chi}}\chi_-(w)\big(G_{\chi}(w)-I\big)\,\d w\right]_{12}\\
	&=\frac{\im\sqrt{-x}}{\pi}\oint_{C_R\cup C_{\ell}}\big(G_{\chi}(w)-I\big)_{12}\,\d w+\mathcal{O}\left((-x)^{-\frac{2k-7}{2(k+1)}}\right),\ \ k\in\mathbb{Z}_{\geq 4}
\end{align*}
where we already neglected all exponentially small contributions from the infinite branches, compare \eqref{loesti:1}. Also the latter error estimation follows from \eqref{loesti:3} and the following computation: observe that by residue theorem and \eqref{mup:1} as well as \eqref{mup:2},
\begin{align*}
	\oint_{C_r}\big(G_{\chi}(w)-I\big)_{12}\d w&\sim \sum_{m=1}^{\infty}(-)^m\big((-\nu)_{2m}-(-\nu-1)_{2m}\big)\frac{1}{m!\,2^m}\left(-\frac{s_3}{h_1}\right)\oint_{C_r}\beta^2(w)\z(w)^{-2m+1}\,\d w\\
	&=\nu\left(-\frac{s_3}{h_1}\right)(-2\pi\im)e^{-2t\vartheta(\lambda^{\ast})}\left(2\sqrt{2t}\,e^{\im\frac{3\pi}{4}}\right)^{2\nu-1}+\mathcal{O}\left(t^{-\frac{3k-2}{2(k+1)}}\right);\\
	\oint_{C_\ell}\big(G_{\chi}(w)-I\big)_{12}\d w&\sim\sum_{m=0}^{\infty}(\nu+1)_{2m}\frac{1}{m!2^m}\left(\frac{h_1}{s_3}\right)\oint_{C_{\ell}}\beta^{-2}(-w)\z(-w)^{-2m-1}\,\d w\\
	&=\left(-\frac{h_1}{s_3}\right)(-2\pi\im)e^{2t\vartheta(\lambda^{\ast})}\left(2\sqrt{2t}\,e^{\im\frac{3\pi}{4}}\right)^{-2\nu-1}+\mathcal{O}\left(t^{-\frac{3k-2}{2(k+1)}}\right).
\end{align*}
Recalling the definition of $h_1$ and applying standard symmetrization techniques, we obtain with $k=4$,
\begin{equation}\label{lowerfinal}
	u(x|s)=(-x)^{-\frac{1}{4}}\sqrt{-2\beta}\cos\left(\frac{2}{3}(-x)^{\frac{3}{2}}+\beta\ln\left(8(-x)^{\frac{3}{2}}\right)+\phi\right)+\mathcal{O}\left((-x)^{-\frac{1}{10}}\right),
\end{equation}
uniformly as $x\rightarrow-\infty$ subject to the scale \eqref{lowerscale} with $k=4$. Estimation \eqref{lowerfinal} appears in Corollary \ref{mat}, expansion \eqref{cor:as}. Moreover, with \eqref{mat:6}, we have now shown that
\begin{equation*}
	u(x|s)=-\epsilon\sqrt{-\frac{x}{2}}\,\frac{1-\k}{\sqrt{1+\k^2}}\,\textnormal{cd}\left(2(-x)^{\frac{3}{2}}V\,K\left(\frac{1-\k}{1+\k}\right),\,\frac{1-\k}{1+\k}\right)+J_1(x,s),
\end{equation*}
where
\begin{prop}\label{imp:2} There exists positive constants $t_0$ and $c$ such that
\begin{equation*}
	\big|J_1(x,s)\big|\leq ct^{-\frac{1}{15}},\ \ \ \forall\, t\geq t_0,\ \ 0<v\leq t^{\frac{1}{5}}
\end{equation*}
\end{prop}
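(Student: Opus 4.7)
My plan is to deduce this estimate directly from the two existing asymptotic representations of $u(x|s)$ already available in the regime $0<v\leq t^{1/5}$, by subtracting them and invoking the triangle inequality. Under the hypothesis $v\leq t^{1/5}$, the parameter choice $k=4$ in the scale \eqref{lowerscale} is admissible (since $t\geq v^5$), and simultaneously $\varkappa=v/t\leq t^{-4/5}$, so the small-$\varkappa$ expansion \eqref{mat:6} applies as well. Thus the hypotheses of both pieces of machinery are met in the entire range.

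First I would invoke \eqref{lowerfinal}, the output of the parabolic-cylinder steepest-descent analysis carried out in Section \ref{flower} with $k=4$, namely
\begin{equation*}
u(x|s)=(-x)^{-\frac{1}{4}}\sqrt{-2\beta}\cos\bigl(\tfrac{2}{3}(-x)^{\frac{3}{2}}+\beta\ln(8(-x)^{\frac{3}{2}})+\phi\bigr)+\mathcal{O}\bigl((-x)^{-\frac{1}{10}}\bigr),
\end{equation*}
uniformly for $t\geq t_0$. Since $t=(-x)^{3/2}$, the error is $\mathcal{O}(t^{-1/15})$. Next I would invoke \eqref{mat:6}, the elementary small-$\varkappa$ asymptotic computation performed via \eqref{I:1}, \eqref{l:1}, and Stirling's formula, which states that the Jacobi elliptic leading term in \eqref{final:11} coincides with the same oscillatory cosine up to an error $\mathcal{O}((-x)^{-2/5})=\mathcal{O}(t^{-4/15})$, uniformly for $0<\varkappa\leq t^{-4/5}$. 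Since $J_1(x,s)$ is by definition the difference between $u(x|s)$ and that Boutroux leading term, subtracting the two representations and applying the triangle inequality yields $|J_1(x,s)|\leq ct^{-1/15}$, the worse of the two error rates, which is precisely the claim.

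The main obstacle I expect to address is verifying that the machinery of Section \ref{flower} remains uniform as $v\downarrow 0$. The prescription $r=c/v$ for the circle radius in \eqref{ratiolower} violates the structural constraint $r<\frac{1}{4}$ once $v$ is too small, so for such values of $v$ the radius should instead be frozen at a fixed small constant. This substitution is benign: with a fixed radius the decay estimate \eqref{loesti:1} on the infinite branches is governed by $e^{d_1\varkappa t-d_2 tr^2}$, which remains exponentially small because $\varkappa t\leq t^{1/5}$ is negligible compared to $tr^2$; and the parabolic cylinder asymptotics on $C_r\cup C_\ell$ gain an improved rate $\mathcal{O}(t^{-1/2})$ over \eqref{loesti:2}. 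Consequently the small-norm conclusion \eqref{loesti:3} persists and the residue computation leading to \eqref{lowerfinal} is unaffected, completing the argument.
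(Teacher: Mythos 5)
Your proposal is correct and takes essentially the same route as the paper: the paper's own proof of Proposition \ref{imp:2} is precisely the triangle-inequality combination of \eqref{lowerfinal} (the parabolic-cylinder steepest-descent output of Section \ref{flower} with $k=4$, giving error $\mathcal{O}((-x)^{-1/10})=\mathcal{O}(t^{-1/15})$) with \eqref{mat:6} (giving error $\mathcal{O}((-x)^{-2/5})=\mathcal{O}(t^{-4/15})$), which is exactly what the sentence "Moreover, with \eqref{mat:6}, we have now shown that..." encodes. Your closing paragraph correctly flags that the contracting-radius prescription $r=c/v$ in Section \ref{rloweresti} only makes sense once $v$ is bounded below, and the remedy you propose — freezing $r$ at a fixed small constant for bounded $v$, where the jump estimates on the infinite branches and on $C_r\cup C_\ell$ only improve — is the standard and correct fix for a detail the paper leaves implicit.
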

Propositions \ref{imp:2} and \ref{imp:1} combined together (for $\eta=\frac{4}{5}$ in the latter) form the content of Theorem \ref{bet:1}, estimation \eqref{bd:1}. In order to complete the proof of Theorem \ref{bet:1} we have to derive the outstanding estimation \eqref{bd:2} by extending the region
\begin{equation*}
	\left\{\big(t,v\big):\ t\geq t_0:\ 0<v\leq t\left(\frac{2}{3}\sqrt{2}-\delta\right)\right\}
\end{equation*}
at the upper end. This will be achieved in the next section.
\section{Extension at the upper end for regular transition}\label{upextension}
In the analysis of the lower end extension presented in Section \ref{relax} we used the crucial fact that for $\varkappa\downarrow 0$, we have $-\im\tau\rightarrow+\infty$, compare \eqref{l:1}. Thus at the lower end, the oscillations in the Jacobi theta functions become strictly periodic and their amplitudes vanish exponentially fast. In turn we obtained
\begin{equation*}
	N_-(\lambda)=\mathcal{O}(1)=\big(N_-(\lambda)\big)^{-1},\ \ \varkappa\downarrow 0,\ \ \lambda\in\Sigma_R\backslash\bigcup D_j
\end{equation*}
which lead us to the central estimation \eqref{DZ:8}. At the upper end the oscillations in the theta functions still vanish, however this time not through vanishing amplitudes but increasing periods: as $\varkappa\uparrow \frac{2}{3}\sqrt{2}$ we get with \eqref{l:2},
\begin{equation*} 
	-\im\tau=\frac{2\pi}{|\ln\sigma|}+\mathcal{O}\left(\frac{\ln|\ln\sigma|}{\ln^2\sigma}\right)=o(1),\ \ \sigma=\frac{2}{3}\sqrt{2}-\varkappa\downarrow 0.
\end{equation*}
This implies the necessity of using modular transformations for the Jacobi theta functions appearing in the construction of the outer parametrix \eqref{ou:3}, i.e. we use (cf. \cite{N})
\begin{equation*}
	\theta_3(z|\tau)=\sqrt{\frac{\im}{\tau}}\,\theta_3(z\tau'|\tau')e^{\im\pi\tau'z^2},\hspace{0.5cm}\theta_2(z|\tau)=\sqrt{\frac{\im}{\tau}}\,\theta_4(z\tau'|\tau')e^{\im\pi\tau'z^2},\ \ z\in\mathbb{C},\ \ \tau'=-\frac{1}{\tau}.
\end{equation*}
Going back to \eqref{ou:3} the transformation identities imply for $\lambda\in\mathbb{C}\backslash\overline{J}$,
\begin{equation*}
	N(\lambda)=e^{\im\frac{\pi}{4}\epsilon\sigma_3}\frac{\theta(0|\tau')}{\theta(\tau'tV|\tau')}e^{-2\pi\im\tau'tVu(\infty)\sigma_3}\begin{pmatrix}
	N_1'^{(+)}(\lambda)\phi(\lambda) & N_2'^{(+)}(\lambda)\hat{\phi}(\lambda)\\
	-N_1'^{(-)}(\lambda)\hat{\phi}(\lambda) & N_2'^{(-)}(\lambda)\phi(\lambda)
	\end{pmatrix}e^{2\pi\im\tau'tVu(\lambda)\sigma_3}e^{-\im\frac{\pi}{4}\epsilon\sigma_3}
\end{equation*}
where
\begin{equation*}
	\left(N_1'^{(\pm)}(z),N_2'^{(\pm)}(z)\right)=\left(\frac{\theta(\tau'(u(z)+tV\pm d)|\tau')}{\theta(\tau'(u(z)\pm d)|\tau')},\frac{\theta(\tau'(-u(z)+tV\pm d)|\tau')}{\theta(\tau'(-u(z)\pm d)|\tau')}\right).
\end{equation*}
Notice that from \eqref{l:2}, for $\lambda\in\mathbb{CP}^1\backslash[-M,M]$ chosen from compact subsets,
\begin{equation}\label{e:1s}
	e^{\pm 2\pi\im\tau'tVu(\lambda)}\sim\exp\left[\mp\sigma t\left(1+\mathcal{O}\left(\frac{\ln^2|\ln\sigma|}{\ln^2\sigma}\right)\right) u(\lambda)\,\right],\ \ \sigma\downarrow 0
\end{equation}
and for the same values of $\lambda$ we have at the same time
\begin{equation}\label{e:2s}
	u(\lambda)=\mathcal{O}\left(\frac{1}{\ln\sigma}\right)=u(\infty)
\end{equation}
since with \eqref{norm},
\begin{equation*}
	c(\varkappa)=\frac{\im}{\sqrt{2}\,|\ln\sigma|}\left(1+\frac{\ln|\ln\sigma|}{\ln\sigma}+\mathcal{O}\left(\frac{1}{\ln\sigma}\right)\right),\ \ \sigma\downarrow 0.
\end{equation*}
Along the line segment $[-M,M]$, by direct computation
\begin{equation*}
	u_-(\lambda)=\mathcal{O}\left(\frac{1}{\ln\sigma}\right),\ \ \sigma\downarrow 0,\ \ \ \ \lambda\in(-M,-m-r]\cup[m+r,M),\ \ \ 0<r<\frac{1}{2}\ \ \ \textnormal{fixed},
\end{equation*}
and hence all together
\begin{equation*}
	e^{\pm 2\pi\im\tau'tVu(\lambda)}=\mathcal{O}(1),\ \ \ \ \ \ \ \sigma\downarrow 0,\ \  \lambda\in\mathbb{CP}^1\backslash D(0,\hat{r}),
\end{equation*}
for any fixed $\hat{r}:m<\hat{r}<M$, provided we ensure that 
\begin{equation}\label{sc:1}
	\frac{\sigma t}{\ln\sigma}=o(1),\ \ \sigma\downarrow 0.
\end{equation}
We shall in fact impose a stronger condition on $\varkappa$ than implied by \eqref{sc:1}: suppose subsequently that $t=(-x)^{\frac{3}{2}}$ and $v=-\ln\big(1-|s_1|^2\big)$ are sufficiently large such that
\begin{equation}\label{sc:2}
	\frac{2}{3}\sqrt{2}-\frac{f_1}{t}\leq\varkappa<\frac{2}{3}\sqrt{2}\ \ \ \ \ \Leftrightarrow\ \ \ \ \ 0<\sigma\leq\frac{f_1}{t},\ \ \ \ f_1>0.
\end{equation}
Subject to this constraint we obtain directly with \eqref{l:2},
\begin{equation}\label{modup}
	N_1'^{(\pm)}(\lambda)=1+\mathcal{O}\left(\sqrt{\frac{\sigma}{|\ln\sigma|}}\,\right) = N_2'^{(\pm)}(\lambda),
\end{equation}
uniformly for $\lambda\in\mathbb{CP}^1\backslash\overline{J}$. Moreover
\begin{prop}\label{ou:match} As $x\rightarrow-\infty,|s_1|\uparrow 1$ subject to \eqref{sc:2} we have
\begin{equation}\label{outparamatch}
	N(\lambda)=\left(I+\mathcal{O}\left(\sqrt{\frac{\sigma}{|\ln\sigma|}}\,\right)\right)e^{-2\pi\im\tau'tVu(\infty)\sigma_3}\Upsilon(\lambda)e^{2\pi\im\tau'tVu(\lambda)\sigma_3}=\left(I+\mathcal{O}\left(\frac{1}{\ln t}\right)\right)\Upsilon(\lambda)
\end{equation}
uniformly for $\lambda$ chosen from $\partial D(0,\hat{r})$ for any $\hat{r}:m<\hat{r}<M$. Here, $\Upsilon(\lambda)$ equals
\begin{equation}\label{nouter}
	\Upsilon(\lambda)=e^{\im\frac{\pi}{4}\epsilon\sigma_3}\alpha(\lambda)^{\sigma_2}e^{-\im\frac{\pi}{4}\epsilon\sigma_3}=\frac{1}{2}e^{\im\frac{\pi}{4}\epsilon\sigma_3}\begin{pmatrix}
	\alpha(\lambda)+\alpha^{-1}(\lambda) & -\im(\alpha(\lambda)-\alpha^{-1}(\lambda))\\
	\im(\alpha(\lambda)-\alpha^{-1}(\lambda)) & \alpha(\lambda)+\alpha^{-1}(\lambda)\\
	\end{pmatrix}e^{-\im\frac{\pi}{4}\epsilon\sigma_3},
\end{equation}
where
\begin{equation*}
	\alpha(\lambda)=\left(\frac{\lambda-\frac{1}{\sqrt{2}}}{\lambda+\frac{1}{\sqrt{2}}}\right)^{\frac{1}{4}}\rightarrow 1,\ \ \lambda\rightarrow\infty
\end{equation*}
is defined and analytic on $\mathbb{C}\backslash\big[-\frac{1}{\sqrt{2}},\frac{1}{\sqrt{2}}\big]$.
\end{prop}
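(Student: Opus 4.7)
The plan is to derive \eqref{outparamatch} directly from the modular-transformed representation of $N(\lambda)$ displayed immediately before the statement. Under the constraint \eqref{sc:2} we have, by \eqref{l:3}, that $-\im\tau'=|\ln\sigma|/(2\pi)+o(1)\to\infty$, hence the new nome $q'=e^{\im\pi\tau'}$ is exponentially small as $t\to\infty$. I would then apply \eqref{modup}, together with the completely analogous estimate $\theta(0|\tau')/\theta(\tau'tV|\tau')=1+O(\sqrt{\sigma/|\ln\sigma|})$ (which follows by the same one-term expansion of the theta series), to replace every theta-quotient in the representation by $1+O(\sqrt{\sigma/|\ln\sigma|})$, uniformly for $\lambda\in\partial D(0,\hat{r})$.

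The next step is to reduce the purely geometric factor appearing in $N(\lambda)$ to $\alpha(\lambda)^{\sigma_2}$. A direct computation using $\phi=(\omega+\omega^{-1})/2$ and $\hat{\phi}=(\omega-\omega^{-1})/(2\im)$ gives
\begin{equation*}
\begin{pmatrix}\phi(\lambda) & \hat{\phi}(\lambda)\\ -\hat{\phi}(\lambda) & \phi(\lambda)\end{pmatrix}=\omega(\lambda)^{\sigma_2},
\end{equation*}
where $\omega^{\sigma_2}$ is understood via diagonalization of $\sigma_2$. By \eqref{I:2} we have $\k=O(\sqrt{\sigma/|\ln\sigma|})$ as $\sigma\downarrow 0$, so $m\downarrow 0$ and $M\uparrow 1/\sqrt{2}$ with $M-1/\sqrt{2}=O(\k^{2})$. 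Comparing the explicit formulas for $\omega(\lambda)$ and $\alpha(\lambda)$ then yields $\omega(\lambda)=\alpha(\lambda)(1+O(\k))$, hence $\omega(\lambda)^{\sigma_2}=\alpha(\lambda)^{\sigma_2}(I+O(\sqrt{\sigma/|\ln\sigma|}))$, uniformly on $\partial D(0,\hat{r})$, once $\sigma$ is small enough that the circle is bounded away from $\pm m$ and $\pm M$. Conjugating by $e^{\im\frac{\pi}{4}\epsilon\sigma_3}$ and collecting all error terms produces the first equality of \eqref{outparamatch}.

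For the second equality I would invoke \eqref{e:1s} and \eqref{e:2s}. Under \eqref{sc:2} we have $\sigma t\leq f_1$, while $c(\varkappa)=O(1/|\ln\sigma|)$ forces $u(\lambda),u(\infty)=O(1/|\ln\sigma|)=O(1/\ln t)$. Therefore $\pm 2\pi\im\tau'tVu(\cdot)=O(\sigma t/|\ln\sigma|)=O(1/\ln t)$, so the two matrix exponentials flanking $\Upsilon(\lambda)$ in the first equality are each $I+O(1/\ln t)$. Using the uniform boundedness of $\Upsilon(\lambda)$ on $\partial D(0,\hat{r})$ (since $\hat{r}<1/\sqrt{2}$ makes $\alpha(\lambda)^{\pm 1}$ bounded there), these factors can be absorbed into a prefactor $I+O(1/\ln t)$, which dominates the smaller $O(\sqrt{\sigma/|\ln\sigma|})$ error from the first equality and yields the second.

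The main technical difficulty I anticipate is tracking uniformity across the full circle $\partial D(0,\hat{r})$. The circle crosses the branch cut of both $\omega$ and $\alpha$ at $\lambda=\pm\hat{r}$, so the approximation $\omega(\lambda)=\alpha(\lambda)(1+O(\k))$ and its matrix consequence must be verified with consistent boundary values from the upper and lower half-planes, and one must check that the jump matrices of $N(\lambda)$ on $(-M,-m)\cup(m,M)$ and those of the limiting object $\Upsilon(\lambda)$ on $[-1/\sqrt{2},1/\sqrt{2}]$ are compatible at the level of the stated error. This is also the point where the scaling \eqref{sc:2} is essential: it simultaneously controls $\k$, the period $\tau$, and the arguments $\tau'tVu(\cdot)$ at matching rates, and weakening it would destroy any one of these three estimates.
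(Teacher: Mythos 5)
Your proposal is correct and takes essentially the same approach the paper implicitly uses: the modular transformation of the theta-functions, the estimate \eqref{modup} on the theta-quotients, the identity $\begin{pmatrix}\phi & \hat{\phi}\\ -\hat{\phi} & \phi\end{pmatrix}=\omega^{\sigma_2}$ together with $\omega(\lambda)\to\alpha(\lambda)$ as $\k\downarrow 0$ (so $m\downarrow 0$, $M\uparrow\tfrac{1}{\sqrt{2}}$), and the control of the conjugating exponentials via \eqref{e:1s}--\eqref{e:2s} under \eqref{sc:2}. The paper states this proposition essentially without a separate proof, so your write-up supplies exactly the details the authors leave to the reader.
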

At this point let us get back to the nonlinear steepest descent analysis. We start from the initial RHP \ref{masterRHP} and use the transformation sequence
\begin{equation*}
	Y(\lambda)\mapsto X(\lambda)\mapsto Z(\lambda)\mapsto T(\lambda)\mapsto S(\lambda)
\end{equation*}
just like it occurred in Section \ref{sec:3} and parts of Section \ref{sec:reg} in the analysis for $\varkappa\in[\delta,\frac{2}{3}\sqrt{2}-\delta]$. However, instead of using the ``opening of lens" transformation \eqref{ou:1} we will not split the jump contours near $(-m,m)$ but instead introduce a new model problem on the (contracting) line segment $(-\hat{r},\hat{r})\subset(-M,M)$ with
\begin{equation}\label{ridea}
	\hat{r}=m+\frac{d}{t^{\frac{1}{3}}},\ \ \ d>0.
\end{equation}
In more detail
\begin{problem}\label{nmodel} Find a $2\times 2$ matrix-valued piecewise analytic function $J(\lambda)=J(\lambda;\varkappa)$ such that
\begin{itemize}
	\item $J(\lambda)$ is analytic for $\lambda\in\mathbb{C}\backslash(-\hat{r},\hat{r})$
	\item The limiting values of $J(\lambda)$ are related via the jump condition
	\begin{equation*}
		J_+(\lambda)=J_-(\lambda)\begin{pmatrix}
		e^{-t(\varkappa-\Pi(\lambda))} & \im\epsilon\\
		\im\epsilon & 0
		\end{pmatrix},\ \ \lambda\in(-\hat{r},\hat{r})
	\end{equation*}
	where $\varkappa-\Pi(\lambda)\in C^1(-M,M)$ has been explicitly computed in Proposition \ref{gprop}. Also, $J_{\pm}(\lambda)$ are bounded on $(-\hat{r},\hat{r})$.
	\item As $x\rightarrow-\infty,|s_1|\uparrow 1$ subject to \eqref{sc:2},
	\begin{equation*}
		J(\lambda)=\big(I+o(1)\big)\Upsilon(\lambda)
	\end{equation*}
	uniformly for $\lambda\in\partial D(0,\hat{r})$ and $\Upsilon(\lambda)$ has been introduced in \eqref{nouter}.
\end{itemize}
\end{problem}
We choose a solution to this model problem in the following form,
\begin{equation}\label{ou:idea}
	J(\lambda)=B_u(\lambda)e^{\im\frac{\pi}{4}\epsilon\sigma_3}\begin{pmatrix}
	1 & \frac{1}{2\pi\im}\int_{-M}^{M}e^{-t(\varkappa-\Pi(\mu))}\frac{\d\mu}{\lambda-\mu}\\
	0 & 1\\
	\end{pmatrix}\begin{cases}
	\begin{pmatrix}
	0 & 1\\
	-1 & 0\\
	\end{pmatrix}e^{-\im\frac{\pi}{4}\epsilon\sigma_3},&\lambda\in D(0,\hat{r}):\ \Im\lambda>0\\
	e^{-\im\frac{\pi}{4}\epsilon\sigma_3},&\lambda\in D(0,\hat{r}):\ \Im\lambda<0,
	\end{cases}
\end{equation}
where $B_u(\lambda)$ is locally analytic,
\begin{equation*}
	B_u(\lambda)=\Upsilon(\lambda)e^{\im\frac{\pi}{4}\epsilon\sigma_3}\begin{cases}
		\begin{pmatrix}
		0 & -1\\
		1 & 0\\
		\end{pmatrix}e^{-\im\frac{\pi}{4}\epsilon\sigma_3},&\lambda\in D(0,\hat{r}):\,\Im\lambda>0\\
		e^{-\im\frac{\pi}{4}\epsilon\sigma_3},&\lambda\in D(0,\hat{r}):\,\Im\lambda<0.
		\end{cases}
\end{equation*}
Let us quickly verify that \eqref{ou:idea} indeed satisfies the properties of RHP \ref{nmodel}. First, analyticity of the multiplier $B_u(\lambda)$ follows from the jump of $\Upsilon(\lambda)$,
\begin{equation*}
	\Upsilon_+(\lambda)=\Upsilon_-(\lambda)e^{\im\frac{\pi}{4}\epsilon\sigma_3}\begin{pmatrix}
	0 & 1\\
	-1 & 0
	\end{pmatrix}e^{-\im\frac{\pi}{4}\epsilon\sigma_3},\ \ \ \lambda\in(-\hat{r},\hat{r})\subset\left(-\frac{1}{\sqrt{2}},\frac{1}{\sqrt{2}}\right).
\end{equation*}
Thus the jump behavior of $J(\lambda)$ is a consequence of the Plemelj-Sokhotskii formula,
\begin{equation*}
	e^{\im\frac{\pi}{4}\epsilon\sigma_3}\begin{pmatrix}
	1 & -e^{-t(\varkappa-\Pi(\lambda))}\\
	0 & 1\\
	\end{pmatrix}\begin{pmatrix}
	0 & 1\\
	-1 & 0
	\end{pmatrix}e^{-\im\frac{\pi}{4}\epsilon\sigma_3} = \begin{pmatrix}
		e^{-t(\varkappa-\Pi(\lambda))} & \im\epsilon\\
		\im\epsilon & 0
		\end{pmatrix},\ \ \lambda\in(-\hat{r},\hat{r}).
\end{equation*}
Secondly, as $x\rightarrow-\infty,|s_1|\uparrow 1$ with \eqref{sc:2}, 
\begin{align*}
	\int_{-M}^{M}\frac{e^{-t(\varkappa-\Pi(\mu))}}{\lambda-\mu}\frac{\d\mu}{2\pi\im}&=\frac{\im}{2\pi}\ln\left(\frac{\lambda-m}{\lambda+m}\right)+\int_{m}^{M}\frac{e^{-t(\varkappa-\Pi(\mu))}}{\lambda-\mu}\frac{\d\mu}{2\pi\im}+\int_{-M}^{-m}\frac{e^{-t(\varkappa-\Pi(\mu))}}{\lambda-\mu}\frac{\d\mu}{2\pi\im}\\
	&=\frac{\im}{2\pi}\ln\left(\frac{\lambda-m}{\lambda+m}\right)+\mathcal{O}\left(\frac{\ln^{\frac{1}{6}}t}{t^{\frac{1}{2}}}\right)=\mathcal{O}\left(\frac{1}{t^{\frac{1}{6}}\ln^{\frac{1}{2}}t}\right)+
	\mathcal{O}\left(\frac{\ln^{\frac{1}{6}}t}{t^{\frac{1}{2}}}\right)
\end{align*}
uniformly for $\lambda\in\partial D(0,\hat{r})$. Here we used a Laplace-type argument for the two remaining integrals based on the behavior
\begin{eqnarray*}
	\varkappa-\Pi(\lambda)&\sim&\frac{16}{3}\sqrt{2m(M^2-m^2)}\,(\lambda-m)^{\frac{3}{2}},\ \ \lambda\downarrow m\\
	\varkappa-\Pi(\lambda)&\sim&\frac{16}{3}\sqrt{2m(M^2-m^2)}\,(-\lambda-m)^{\frac{3}{2}},\ \ \lambda\uparrow -m
\end{eqnarray*}
as well as the expansions
\begin{equation*}
	m=\sqrt{\frac{\sigma}{\sqrt{2}|\ln\sigma|}}\left(1+\frac{\ln|\ln\sigma|}{2\ln\sigma}+\mathcal{O}\left(\frac{1}{\ln\sigma}\right)\right),\ \ \ M=\frac{1}{\sqrt{2}}\left(1-\frac{\sigma}{\sqrt{2}|\ln\sigma|}+\mathcal{O}\left(\frac{\sigma\ln|\ln\sigma|}{\ln^2\sigma}\right)\right);\ \ \sigma\downarrow 0.
\end{equation*}
Thus
\begin{equation}\label{nmodelmat}
	J(\lambda)=\left(I+\mathcal{O}\left(\frac{1}{t^{\frac{1}{6}}\ln^{\frac{1}{2}}t}\right)\right)\Upsilon(\lambda)
\end{equation}
as $t\rightarrow+\infty,|s_1|\uparrow 1$ subject to \eqref{sc:2} uniformly for $\lambda\in\partial D(0,\hat{r})$ with $\hat{r}$ as in \eqref{ridea}.\smallskip

Instead of the ratio transformation \eqref{ratio:1}, introduce for \eqref{upperscale},
\begin{equation}\label{ratio:3}
	R_u(\lambda)=S(\lambda)\begin{cases}
		\big(J(\lambda)\big)^{-1},&|\lambda|<\hat{r}\\
		\big(P(\lambda)\big)^{-1},&|\lambda-M|<\bar{r}\\
		\big(Q(\lambda)\big)^{-1},&|\lambda+M|<\bar{r}\\
		\big(N(\lambda)\big)^{-1},&|\lambda|>\hat{r},|\lambda\mp M|>\bar{r}
		\end{cases}
\end{equation}
where $0<\hat{r}=m+dt^{-\frac{1}{3}}<\frac{1}{3}$ and $0<\bar{r}<\frac{1}{4}$ remains fixed. The model functions $N(\lambda),P(\lambda)$ and $Q(\lambda)$ are (as before) given in \eqref{ou:3}, \eqref{parao} and \eqref{paraom}.
\begin{figure}[tbh]
\begin{center}
\resizebox{0.7\textwidth}{!}{\includegraphics{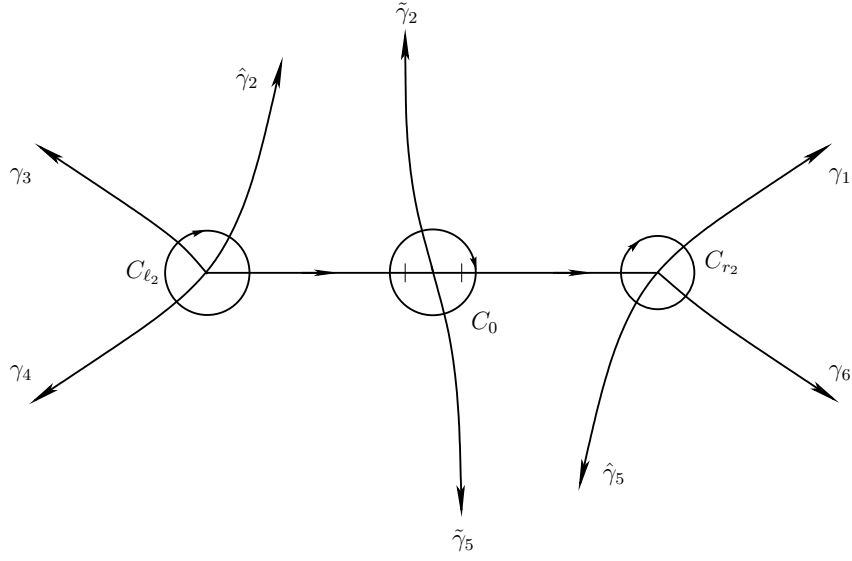}}
\caption{The jump contour $\Sigma_{R_u}$ of the ratio function $R_u(\lambda)$ defined in \eqref{ratio:3}. The two vertical lines represent the branch points $\lambda=\pm m$.}
\label{figureratio3}
\end{center}
\end{figure}

The RHP for $R_u(\lambda)$ is formulated on the contour shown in Figure \ref{figureratio3}. Compared to the ratio problem for $R(\lambda)$ the only difference in the jump behavior occurs inside the circle $C_0$ centered at the origin,
\begin{equation*}
	\big(R_u(\lambda)\big)_+=\big(R_u(\lambda)\big)_-\left[I+J_-(\lambda)\begin{pmatrix}
	\im\epsilon\,s_3e^{\im t\Omega(\lambda)}-1 & e^{-t(\varkappa-\Pi(\lambda))}(-\im\epsilon-s_3e^{\im t\Omega(\lambda)})\\
	-\im\epsilon\,e^{-t(\varkappa+\Pi(\lambda))} & -\im\epsilon\,s_1(1+e^{-\varkappa t})e^{-\im t\Omega(\lambda)} -1+ e^{-2t\varkappa}\\
	\end{pmatrix}\big(J_-(\lambda)\big)^{-1}\right],
\end{equation*}
for $\lambda\in(-\hat{r},\hat{r})$ as well as
\begin{align*}
	\big(R_u(\lambda)\big)_+&=\big(R_u(\lambda)\big)_-J(\lambda)\begin{pmatrix}
	1 & (s_1+s_2)e^{-\varkappa t}e^{\im t\Omega(\lambda)} \\
	0 & 1\\
	\end{pmatrix}\big(J(\lambda)\big)^{-1},\ \ \lambda\in\tilde{\gamma}_2\cap D(0,\hat{r})\\
	\big(R_u(\lambda)\big)_+&=\big(R_u(\lambda)\big)_-J(\lambda)\begin{pmatrix}
	1 & 0\\
	-(s_1+s_2)e^{-\varkappa t}e^{-\im t\Omega(\lambda)} & 1\\
	\end{pmatrix}\big(J(\lambda)\big)^{-1},\ \ \lambda\in\tilde{\gamma}_5\cap D(0,\hat{r}),
\end{align*}
and on the circle boundary $\partial D(0,\hat{r})$,
\begin{equation*}
	\big(R_u(\lambda)\big)_+=\big(R_u(\lambda)\big)_-J(\lambda)\big(N(\lambda)\big)^{-1},\ \ |\lambda|=\hat{r}.
\end{equation*}
Notice that (since $\hat{r}=\mathcal{O}(t^{-\frac{1}{3}})\rightarrow 0$)
\begin{equation*}
	t\Omega(\lambda)=\mathcal{O}\left(\frac{\sigma t}{|\ln\sigma|}\right)=\mathcal{O}\left(\frac{1}{\ln t}\right),\ \ \ \lambda\in[-\hat{r},\hat{r}]
\end{equation*}
subject to \eqref{upperscale} as well as
\begin{equation*}
	\lambda\in(m+dt^{-\frac{1}{3}},M-\bar{r}):\ \ t(\varkappa-\Pi(\lambda))\geq 8t\int_m^{m+dt^{-\frac{1}{3}}}\sqrt{(M^2-\mu^2)(\mu^2-m^2)}\,\d\mu\geq c\sqrt{mt}\rightarrow+\infty.
\end{equation*}
The latter combined with \eqref{outparamatch},\eqref{ridea},\eqref{nmodelmat}, using also the boundedness of $N_-(\lambda)$ away from the three disks $D_j$ centered at $\lambda=0$ and $\lambda=\pm M$  and recalling at the same time our estimations for the jumps given in Section \ref{ratiosec1}, we obtain
\begin{equation}\label{upperesti:1}
	\|G_{R_u}(\cdot;t,|s_1|)-I\|_{L^{\infty}(\Sigma_{R_u})}\leq \frac{d}{\ln t},\ \ t\rightarrow+\infty,\ \ |s_1|\uparrow 1:\ \ \frac{2}{3}\sqrt{2}-\frac{f_2}{t}\leq\varkappa<\frac{2}{3}\sqrt{2}.
\end{equation}
Opposed to the this $L^{\infty}$-estimation we can obtain better $L^2$-estimations through
\begin{equation*}
	 \int_{-\hat{r}}^{\hat{r}}\left|1-e^{\im t\Omega(\lambda)}\right|^2\,\d\lambda=\frac{2d}{t^{\frac{1}{3}}}\big(1-\cos(2\pi tV)\big)+2\int_{-m}^m\big(1-\cos(t\Omega(\lambda))\big)\,\d\lambda=\mathcal{O}\left(\frac{1}{t^{\frac{1}{3}}\ln^2t}\right)
\end{equation*}
and
\begin{equation*}
	\oint_{\partial D(0,\hat{r})}\left|u(\lambda)\right|^2|\d\lambda|\leq \frac{c\hat{r}}{\ln^2 t}=\mathcal{O}\left(\frac{1}{t^{\frac{1}{3}}\ln^2t}\right).
\end{equation*}
The last estimation is required for the $L^2$-estimation of the jump on $\partial D(0,\hat{r})$:
\begin{align*}
	J(\lambda)N^{-1}(\lambda)=&\,J(\lambda)\big(\Upsilon(\lambda)\big)^{-1}\Upsilon(\lambda)N^{-1}(\lambda)\stackrel{\eqref{nmodelmat}}{=}\left(I+\mathcal{O}\left(\frac{1}{t^{\frac{1}{6}}\ln^{\frac{1}{2}}t}\right)\right)\Upsilon(\lambda)\big(N(\lambda)\big)^{-1}\\
	\stackrel{\eqref{outparamatch}}{=}&\left(I+\mathcal{O}\left(\frac{1}{t^{\frac{1}{6}}\ln^{\frac{1}{2}}t}\right)\right)\Upsilon(\lambda)e^{-2\pi\im\tau'tVu(\lambda)\sigma_3}\big(\Upsilon(\lambda)\big)^{-1}e^{2\pi\im\tau'tVu(\infty)\sigma_3}\left(I+\mathcal{O}\left(\frac{1}{t^{\frac{1}{2}}\ln^{\frac{1}{2}}t}\right)\right)\\
	=&\left(I-2\pi\im\tau'tVu(\lambda)\Upsilon(\lambda)\sigma_3\big(\Upsilon(\lambda)\big)^{-1}+\ldots\right)e^{2\pi\im\tau'tVu(\infty)\sigma_3}\left(I+\mathcal{O}\left(\frac{1}{t^{\frac{1}{6}}\ln^{\frac{1}{2}}t}\right)\right).
\end{align*}
We summarize,
\begin{equation}\label{upperesti:2}
	\|G_{R_u}(\cdot;t,|s_1|)-I\|_{L^{2}(\Sigma_{R_u})}\leq \frac{d}{t^{\frac{1}{6}}\ln^{\frac{1}{2}} t},\ \ t\rightarrow+\infty,\ \ |s_1|\uparrow 1:\ \ \frac{2}{3}\sqrt{2}-\frac{f_1}{t}\leq\varkappa<\frac{2}{3}\sqrt{2},\ \ \ f_1>0
\end{equation}
and combining \eqref{upperesti:1}, \eqref{upperesti:2} this implies from general theory \cite{DZ1} the unique solvability of the $R_u$-RHP in $L^2(\Sigma_{R_u})$, moreover its unique solution satisfies
\begin{equation*}
	\|(R_u)_-(\cdot;t,|s_1|)-I\|_{L^2(\Sigma_{R_u})}\leq \frac{d}{t^{\frac{1}{6}}\ln^{\frac{1}{2}} t}.
\end{equation*}
In the same way as we derived \eqref{f1}, we obtain
\begin{equation*}
	u(x|s)=\sqrt{-x}\,e^{2t\ell}\left[-\epsilon(M-m)\frac{\theta_3(0|\tau)}{\theta_2(0|\tau)}\frac{\theta_2(tV|\tau)}{\theta_3(tV|\tau)}e^{\im\pi tV}+\mathcal{E}_u(\varkappa)\right]
\end{equation*}
with
\begin{equation*}
	\mathcal{E}_u(\varkappa)=\frac{\im}{\pi}\int_{\Sigma_{R_u}}\!\!\!\!\big(G_{R_u}(w)-I\big)_{12}\,\d w+\mathcal{O}\left(\frac{1}{t^{\frac{1}{3}}\ln t}\right)=\frac{\im}{\pi}\left[\oint_{\partial D(0,\hat{r})}\!\!\!+\int_{-\hat{r}}^{\hat{r}}\right]\big(G_{R_u}(w)-I\big)_{12}\,\d w+\mathcal{O}\left(\frac{1}{t^{\frac{1}{3}}\ln t}\right).
\end{equation*}
However, by direct computation,
\begin{equation*}
	\int_{-\hat{r}}^{\hat{r}}\big(G_{R_u}(w)-I\big)_{12}\,\d w = \mathcal{O}\left(\frac{1}{t^{\frac{1}{3}}\ln t}\right)=\oint_{\partial D(0,\hat{r})}\big(G_{R_u}(w)-I\big)_{12}\,\d w
\end{equation*}
and thus 
\begin{equation}\label{final3}
	u(x|s)=-\epsilon\sqrt{-\frac{x}{2}}\,\frac{1-\k}{\sqrt{1+\k^2}}\,\textnormal{cd}\left(2(-x)^{\frac{3}{2}}V\,K\left(\frac{1-\k}{1+\k}\right),\,\frac{1-\k}{1+\k}\right)+J_1(x,s),
\end{equation}
with
\begin{prop}\label{imp:3} For any given $f_1>0$ there exist positive constants $t_0=t_0(f_1),v_0=v_0(f_1)$ and $c=c(f_1)$ such that
\begin{equation*}
	\big|J_1(x,s)\big|\leq\frac{c}{\ln t},\ \ \ \forall\,t\geq t_0,\ v\geq v_0:\ \ \frac{2}{3}\sqrt{2}\,t-f_1\leq v<\frac{2}{3}\sqrt{2}\,t.
\end{equation*}
\end{prop}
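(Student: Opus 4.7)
The plan is to assemble the machinery developed throughout Section \ref{upextension} and apply it to the exact identity \eqref{f1} together with the vanishing identity $e^{2t(\ell + \im\frac{\pi}{2}V)} = 1$ from Proposition \ref{helpful}. First I would observe that the constraint $\frac{2}{3}\sqrt{2}\,t - f_1 \leq v < \frac{2}{3}\sqrt{2}\,t$ forces $\sigma = \frac{2}{3}\sqrt{2} - \varkappa = \mathcal{O}(t^{-1})$, so by Corollary \ref{cor1} the period $-\im\tau(\varkappa) \sim 2\pi/|\ln\sigma| \sim 2\pi/\ln t$ collapses to zero. This degeneracy means the Jacobi theta functions appearing in the original outer parametrix $N(\lambda)$ of \eqref{ou:3} cannot be treated as $\mathcal{O}(1)$ quantities, and the entire nonlinear steepest descent scheme of Section \ref{sec:reg} must be modified.

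The key structural modification is to apply the modular transformation to $\theta_2,\theta_3$, so that the dominant behavior of $N(\lambda)$ is captured by the algebraic function $\Upsilon(\lambda) = e^{\im\frac{\pi}{4}\epsilon\sigma_3}\alpha(\lambda)^{\sigma_2}e^{-\im\frac{\pi}{4}\epsilon\sigma_3}$ defined in \eqref{nouter}, with the matching $N(\lambda) = (I + \mathcal{O}(1/\ln t))\Upsilon(\lambda)$ on $\partial D(0,\hat{r})$ recorded in Proposition \ref{ou:match}. Next I would abandon the opening of lens on $(-m,m)$ and instead construct a single local model $J(\lambda)$ on the contracting segment $(-\hat{r},\hat{r})$ with $\hat{r} = m + dt^{-1/3}$, exhibiting it explicitly via the Cauchy transform formula \eqref{ou:idea}. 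Verifying the jump is immediate from Plemelj--Sokhotskii, while the matching $J(\lambda) = (I + \mathcal{O}(t^{-1/6}\ln^{-1/2}t))\Upsilon(\lambda)$ on $\partial D(0,\hat{r})$ follows from a Laplace-type analysis applied to the remaining two integrals using the cube-root behavior of $\varkappa - \Pi(\mu)$ at $\mu = \pm m$.

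Third, I would set up the ratio problem \eqref{ratio:3}, keeping the Airy parametrices $P, Q$ near the outer branch points $\pm M$ (with fixed radii $\bar r$) and replacing the inner parametrices $H, U, V$ by the single $J$ inside $D(0,\hat{r})$. Combining the matching estimates with the already established $L^{\infty}$-boundedness of $N_-(\lambda)$ outside the disks, with the sub-exponential decay $e^{-c\sqrt{mt}}$ of $e^{-t(\varkappa - \Pi(\lambda))}$ on the gap $(m + dt^{-1/3}, M - \bar r)$, and with the estimate $t\Omega(\lambda) = \mathcal{O}(1/\ln t)$ on $(-\hat r, \hat r)$, yields the global bounds \eqref{upperesti:1} and \eqref{upperesti:2}. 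Small-norm singular integral theory then provides unique solvability and the estimate $\|(R_u)_- - I\|_{L^2(\Sigma_{R_u})} = \mathcal{O}(t^{-1/6}\ln^{-1/2} t)$. Extracting the $(1,2)$-coefficient of the large-$\lambda$ expansion of $R_u(\lambda)$ in the Fredholm residue formula, the two surviving contour integrals along $\partial D(0,\hat{r})$ and along $(-\hat{r},\hat{r})$ both evaluate to $\mathcal{O}(t^{-1/3}/\ln t)$ by direct computation, and the estimate of Proposition \ref{imp:3} follows from combining this with the $L^2$ bound above, after multiplying by $\sqrt{-x} = t^{1/3}$.

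The main obstacle is the matching step on $\partial D(0,\hat{r})$: the outer parametrix $N(\lambda)$ carries rapidly varying factors $e^{\pm 2\pi\im\tau'tVu(\lambda)\sigma_3}$ whose exponents are $\mathcal{O}(\sigma t/|\ln\sigma|)$, which is precisely $\mathcal{O}(1/\ln t)$ under \eqref{sc:2}, but not smaller. Hence any improvement beyond the $1/\ln t$ rate in the error $J_1(x,s)$ would require a more refined local construction or the full control of the Stokes region, which is deferred to the Stokes-line analysis of Section \ref{stokes}; here the $1/\ln t$ loss is intrinsic and cannot be avoided with the present parametrix.
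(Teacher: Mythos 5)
Your proposal follows the paper's own proof of Proposition~\ref{imp:3} in Section~\ref{upextension} essentially step by step: the modular transformation of the theta functions as $\tau\to 0$, the matching $N(\lambda)=(I+\mathcal{O}(1/\ln t))\Upsilon(\lambda)$ of Proposition~\ref{ou:match}, the Cauchy-transform model $J(\lambda)$ on the contracting segment $(-\hat r,\hat r)$ with $\hat r = m+dt^{-1/3}$, the ratio problem \eqref{ratio:3}, the $L^\infty$/$L^2$ estimates \eqref{upperesti:1}--\eqref{upperesti:2}, and the residue computation showing the surviving integrals are $\mathcal{O}(t^{-1/3}/\ln t)$. This is the same route as the paper; no gaps.
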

Together with Propositions \ref{imp:1} and \ref{imp:2}, Proposition \ref{imp:3} completes the proof of Theorem \ref{bet:1}. In order to derive \eqref{cor:hm} in Corollary \ref{mat}, we choose $t\geq t_0,v\geq v_0$ such that $0<\sigma\leq\frac{1}{t}$. Since
\begin{equation}\label{modtrafo}
	\frac{\theta_3(0|\tau)}{\theta_3(tV|\tau)}\frac{\theta_2(tV|\tau)}{\theta_2(0|\tau)} = \frac{\theta_3(0|\tau')}{\theta_3(tV\tau'|\tau')}\frac{\theta_4(tV\tau'|\tau')}{\theta_4(0|\tau')}
\end{equation}
we get with \eqref{l:2} at once
\begin{equation*}
	\frac{\theta_3(0|\tau')}{\theta_4(0|\tau')} = 1+\mathcal{O}\left(\sqrt{\frac{\sigma}{|\ln\sigma|}}\right)=1+\mathcal{O}\left(\frac{1}{\sqrt{t}\ln^{\frac{1}{2}}t}\right).
\end{equation*}
Similarly,
\begin{equation*}
	\theta_3(tV\tau'|\tau')=1+\mathcal{O}\left(\frac{1}{\sqrt{t}\ln^{\frac{1}{2}}t}\right)=\theta_4(tV\tau'|\tau').
\end{equation*}
Moving ahead (recall \eqref{I:2}),  
\begin{equation}\label{stok:1}
	-\epsilon\sqrt{-\frac{x}{2}}\,\frac{1-\k}{\sqrt{1+\k^2}}\,\frac{\theta_3(0|\tau)}{\theta_3(tV|\tau)}\frac{\theta_2(tV|\tau)}{\theta_2(0|\tau)} =-\epsilon\sqrt{-\frac{x}{2}}+\mathcal{O}\left(\frac{1}{(-x)^{\frac{1}{4}}\ln^{\frac{1}{2}}|x|}\right)
\end{equation}
uniformly as $x\rightarrow-\infty,|s_1|\uparrow 1$ such that $0<\sigma\leq\frac{1}{t}$. In view of Proposition \ref{imp:3}, the latter estimation completes the proof of Corollary \ref{mat}.
\begin{rem}\label{stokefirst} The introduction of $J(\lambda)$ in \eqref{ou:idea} together with bounds on $N_-(\lambda)$ in \eqref{modup},\eqref{outparamatch} allowed us to control the error $J_1(x,s)$ in the region \eqref{sc:2}. The analysis of $J_1(x,s)$ inside the region
\begin{equation*}
	\left\{\big(t,v\big):\ \ t\geq t_0,\ v\geq v_0:\ \ \frac{f_1}{t}<\sigma\leq\delta,\ \ \ f_1>0,\ 0<\delta<\frac{1}{3}\sqrt{2}\right\}
\end{equation*}
which is not covered by Propositions \ref{imp:1}, \ref{imp:2} and \ref{imp:3} is much more difficult. In this case \eqref{sc:1} is violated in general and thus $N_-(\lambda)$ may become unbounded. For instance, if we were to choose 
\begin{equation*}
	t\geq t_0,\ \ v\geq v_0:\ \ \frac{1}{t}\leq\sigma\leq\chi\frac{\ln t}{t}<\delta,\ \ \  \chi>0,
\end{equation*}
then from \eqref{I:2} and \eqref{l:2}, using again \eqref{modtrafo},
\begin{equation*}
	-\epsilon\sqrt{-\frac{x}{2}}\,\frac{1-\k}{\sqrt{1+\k^2}}\,\frac{\theta_3(0|\tau)}{\theta_3(tV|\tau)}\frac{\theta_2(tV|\tau)}{\theta_2(0|\tau)}=-\epsilon\sqrt{-\frac{x}{2}}\left(1+\mathcal{O}\left(t^{\chi-\frac{1}{2}}\right)\right) = -\epsilon\sqrt{-\frac{x}{2}}+\mathcal{O}\left((-x)^{\frac{3}{2}(\chi-\frac{1}{6})}\right).
\end{equation*}
Hence we only restore the leading Hastings-McLeod asymptotics for $\chi<\frac{1}{6}$. For the values of $\chi\geq\frac{1}{6}$ more terms will contribute to the leading order, or equivalently, the leading asymptotics changes once the ``Stokes line" 
\begin{equation*}
	v=\frac{2}{3}\sqrt{2}\,t-\frac{1}{6}\ln t
\end{equation*}
is crossed. We provide further detail on this interesting feature in Section \ref{stokes} below.
\end{rem}	

\section{Regular transition analysis near and above the separating line $\varkappa=\frac{2}{3}\sqrt{2}$}\label{near}

In this section we assume that both $t=(-x)^{\frac{3}{2}}$ and $v=-\ln(1-|s_1|^2)$ are sufficiently large such that
\begin{equation}\label{upperscale}
	\varkappa\geq \frac{2}{3}\sqrt{2}-\frac{f_2}{t},\ \ f_2\in\mathbb{R}.
\end{equation}
Again, we use the transformation sequence
\begin{equation*}
	Y(\lambda)\mapsto X(\lambda)\mapsto T(\lambda)
\end{equation*}
but subsequently the nonlinear steepest descent analysis of the $T$-RHP in \eqref{Tfunc} is very different from the one used in Sections \ref{sec:reg} and \ref{flower}. In fact, the steps carried out below are a modification of the ones used in the analysis for $|s_1|=1$, see \cite{FIKN}, chapter $11$, $\S 4$. The details are as follows.
\subsection{g-function transformation} We fix 
\begin{equation*}
	\lambda^{\ast}=\frac{1}{\sqrt{2}}
\end{equation*}
in \eqref{Tfunc} and define for $\lambda\in\mathbb{C}\backslash\Sigma_T$ with $\Sigma_T=[-\lambda^{\ast},\lambda^{\ast}]\cup\hat{\gamma}_2\cup\hat{\gamma}_5\cup\tilde{\gamma}_2\cup\tilde{\gamma}_5\bigcup\gamma_k$,
\begin{equation}\label{newg}
	\varpi(\lambda)=T(\lambda)e^{t(g(\lambda)-\vartheta(\lambda))\sigma_3};\hspace{0.5cm}\textnormal{with}\ \ \ \ 
	g(\lambda)=\frac{4\im}{3}\left(\lambda^2-\frac{1}{2}\right)^{\frac{3}{2}}=\vartheta(\lambda)+\mathcal{O}\left(\lambda^{-1}\right),\ \ \lambda\rightarrow\infty.
\end{equation}
This transforms the $T$-RHP to a RHP with jumps on $[-\lambda^{\ast},0]\cup[0,\lambda^{\ast}]$ such that
\begin{eqnarray}
  \varpi_+(\lambda) &=&\varpi_-(\lambda)\begin{pmatrix}
                               (1-s_1s_3)e^{t\Pi(\lambda)} & s_1+s_1(1-s_1s_3)\\
-s_3 & (1-s_1s_3)e^{-t\Pi(\lambda)}\\
                              \end{pmatrix},\hspace{0.5cm}\lambda\in[-\lambda^{\ast},0]\label{s:1}\\
\varpi_+(\lambda) &=&\varpi_-(\lambda)\begin{pmatrix}
                             (1-s_1s_3)e^{t\Pi(\lambda)} & -s_3\\
s_1+s_1(1-s_1s_3) & (1-s_1s_3)e^{-t\Pi(\lambda)}\\
                            \end{pmatrix},\hspace{0.5cm}\lambda\in[0,\lambda^{\ast}]\label{s:2}
\end{eqnarray}
where
\begin{equation*}
  \Pi(\lambda) = g_+(\lambda)-g_-(\lambda) = \frac{8}{3}\left(\frac{1}{2}-\lambda^2\right)^{\frac{3}{2}}>0;\hspace{0.5cm}\lambda\in\left(-\frac{1}{\sqrt{2}},\frac{1}{\sqrt{2}}\right).
\end{equation*}
For the $(11)$-entries in \eqref{s:1} and \eqref{s:2} we use \eqref{dbpara} and observe that
\begin{equation*}
	\varkappa-\Pi(\lambda)>0\ \ \Leftrightarrow\ \ |\lambda|>\lambda_0\equiv\frac{1}{\sqrt{2}}\left(1-\left(\frac{3\varkappa}{2\sqrt{2}}\right)^{\frac{2}{3}}\right)^{\frac{1}{2}}.
\end{equation*}
Hence the strict inequality for $\varkappa-\Pi(\lambda)$ will fail on the (shrinking) interval $[-\lambda_0,\lambda_0]$. We thus face the necessity to consider an additional model problem at the origin. Besides this new feature, we however still expect (and will prove) that the outer model problem is given by a problem in which we neglect the diagonal entries in \eqref{s:1},\eqref{s:2} and apply estimations \eqref{approx:3} and \eqref{approx:4} for the off-diagonal, i.e. on the full line segment $[-\lambda^{\ast},\lambda^{\ast}]$ we require a function $\Upsilon(\lambda)$ such that
\begin{equation}\label{idea}
	\Upsilon_+(\lambda)=\Upsilon_-(\lambda)\begin{pmatrix}
	0&\im\epsilon\\
	\im\epsilon & 0\\
	\end{pmatrix},\ \ \lambda\in\left(-\frac{1}{\sqrt{2}},\frac{1}{\sqrt{2}}\right).
\end{equation}
Furthermore, along the remaining infinite branches, i.e. along
\begin{equation*}
	\Sigma_{\varpi_{\infty}}=\hat{\gamma}_2\cup\hat{\gamma}_5\cup\tilde{\gamma}_2\cup\tilde{\gamma}_5\cup\bigcup\gamma_k,
\end{equation*}
the jumps in the $S$-RHP are once more of the form
\begin{equation*}
	\varpi_+(\lambda)=\varpi_-(\lambda)e^{-tg(\lambda)\sigma_3}G_T(\lambda)e^{tg(\lambda)\sigma_3},\ \ \lambda\in\Sigma_{\varpi_{\infty}}.
\end{equation*}
Hence by triangularity and the sign of $\Re(g(\lambda))$ these jumps are asymptotically close to the unit matrix away from the endpoints $\lambda=\pm\lambda^{\ast}$. We proceed now with the analysis of the relevant model problems.

\subsection{The outer parametrix}\label{outer:2} The outer problem consists in finding a $2\times 2$ piecewise analytic function $\Upsilon(\lambda)=\Upsilon(\lambda;\epsilon)$ such that
\begin{itemize}
	\item $\Upsilon(\lambda)$ is analytic for $\lambda\in\mathbb{C}\backslash[-\frac{1}{\sqrt{2}},\frac{1}{\sqrt{2}}]$
	\item If the cut is oriented from $-\frac{1}{\sqrt{2}}$ to $\frac{1}{\sqrt{2}}$, the boundary values of $\Upsilon(\lambda)$ are related via
	\begin{equation}\label{nout:1}
		\Upsilon_+(\lambda)=\Upsilon_-(\lambda)\begin{pmatrix}
		0 & \im\epsilon\\
		\im\epsilon & 0\\
		\end{pmatrix},\ \ \lambda\in\left(-\frac{1}{\sqrt{2}},\frac{1}{\sqrt{2}}\right)
	\end{equation}
	\item $\Upsilon(\lambda)$ is square integrable on the closed interval $[-\frac{1}{\sqrt{2}},\frac{1}{\sqrt{2}}]$
	\item As $\lambda\rightarrow\infty$,
	\begin{equation*}
		\Upsilon(\lambda)=I+\mathcal{O}\left(\lambda^{-1}\right)
	\end{equation*}
\end{itemize}
A solution to this {\it quasi-permutation} RHP appeared already in Proposition \ref{ou:match}, formula \eqref{nouter}. Note also that
\begin{equation}\label{nout:2}
	\Upsilon(\lambda)=I-\frac{\sqrt{2}\epsilon}{4\lambda}\sigma_1+\mathcal{O}\left(\lambda^{-2}\right),\ \ \lambda\rightarrow\infty.
\end{equation}
\subsection{Origin parametrix} Our construction is motived by the local expansion
\begin{equation}\label{oriidea}
	\varkappa-\Pi(\lambda)=\varkappa-\frac{2}{3}\sqrt{2}+2\sqrt{2}\lambda^2-\frac{\lambda^4}{2\sqrt{2}}+\mathcal{O}\left(\lambda^6\right),\ \ \lambda\rightarrow 0,
\end{equation}
and solves the following problem.
\begin{problem} Determine a $2\times 2$ piecewise analytic function $F(\lambda)$ such that
\begin{itemize}
	\item $F(\lambda)$ is analytic inside the disk $D(0,r)\backslash(-r,r)$ for some fixed $0<r<\frac{1}{\sqrt{2}}$
	\item Orienting the segment $(-r,r)$ from left to right, we have the jump condition
	\begin{equation}\label{orij}
		F_+(\lambda)=F_-(\lambda)\begin{pmatrix}
		e^{-t(\varkappa-\Pi(\lambda))} & \im\epsilon\\
		\im\epsilon & 0\\
		\end{pmatrix},\ \ \lambda\in\left(-r,r\right);\hspace{0.5cm}\Pi(\lambda)=\frac{8}{3}\left(\frac{1}{2}-\lambda^2\right)^{\frac{3}{2}}
	\end{equation}
	\item As $x\rightarrow-\infty,|s_1|\uparrow 1$ subject to \eqref{upperscale} we have
	\begin{equation*}
		F(\lambda)=\left(I+o(1)\right)\Upsilon(\lambda)
	\end{equation*}
	uniformly for $|\lambda|=r>0$. Here, $\Upsilon(\lambda)$ is the outer model function introduced in Section \ref{outer:2}.
\end{itemize}
\end{problem}
The difference between this problem and RHP \ref{nmodel} is only marginal: Recall that $\sigma=\frac{2}{3}\sqrt{2}-\varkappa$ and set
\begin{equation}\label{FRHmodel}
	F^{RH}(\z)=e^{\im\frac{\pi}{4}\epsilon\sigma_3}\begin{pmatrix}
	1 & \frac{e^{\sigma t}}{2\pi\im}\int_{\mathbb{R}}\frac{e^{-\mu^2}}{\z-\mu}\d\mu\\
	0 & 1\\
	\end{pmatrix}\begin{cases}
	\begin{pmatrix}
	0 & 1\\
	-1 & 0\\
	\end{pmatrix}e^{-\im\frac{\pi}{4}\epsilon\sigma_3},&\Im\,\z>0\\
	e^{-\im\frac{\pi}{4}\epsilon\sigma_3},&\Im\,\z<0.
	\end{cases}
\end{equation}
We require the following Lemma.
\begin{lem}\label{princip} As $\z\rightarrow\infty$, we have 
\begin{equation*}
	I(\z)=\frac{1}{2\pi\im}\int_{\mathbb{R}}\frac{e^{-\mu^2}}{\mu-\z}\,\d\mu=\begin{cases}
	-\frac{\sqrt{\pi}}{2\pi\im\z}\left(1+\mathcal{O}\left(\z^{-1}\right)\right),&\z\notin\mathbb{R}\\
	-\frac{\sqrt{\pi}}{2\pi\im\z}\left(1+\mathcal{O}\left(\z^{-1}\right)\right)+e^{-\z^2}\begin{cases}
	1,&\Im\z>0\\
	0,&\Im\z<0
	\end{cases},&\z\in\mathbb{R}
	\end{cases}
\end{equation*}

\end{lem}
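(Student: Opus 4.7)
The plan is to split naturally into two cases according to whether $\z$ is off or on the real axis. In either case the leading term $-\sqrt{\pi}/(2\pi\im\z)$ will be extracted by a short Taylor expansion of the Cauchy kernel together with the Gaussian identity $\int_\mathbb{R}e^{-\mu^2}\,\d\mu=\sqrt{\pi}$, while the exponentially small jump across $\mathbb{R}$ will be captured by the Sokhotski-Plemelj formula.

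For $\z\notin\mathbb{R}$, I would first write
\begin{equation*}
\frac{1}{\mu-\z}=-\frac{1}{\z}-\frac{\mu}{\z^2}+\frac{\mu^2}{\z^2(\mu-\z)},
\end{equation*}
substitute this identity into the defining integral, and use $\int_\mathbb{R}e^{-\mu^2}\,\d\mu=\sqrt{\pi}$ and $\int_\mathbb{R}\mu e^{-\mu^2}\,\d\mu=0$ to arrive at the exact equality
\begin{equation*}
I(\z)=-\frac{\sqrt{\pi}}{2\pi\im\,\z}+\frac{1}{2\pi\im\,\z^2}\int_\mathbb{R}\frac{\mu^2 e^{-\mu^2}}{\mu-\z}\,\d\mu.
\end{equation*}
For $\z$ bounded away from the real axis by $|\Im\z|\geq\eta>0$, the pointwise estimate $|\mu-\z|\geq\eta$ combined with $\int_\mathbb{R}\mu^2 e^{-\mu^2}\,\d\mu=\sqrt{\pi}/2$ dominates the residual integral by a constant multiple of $1/\eta$, which after division by $\z^2$ gives a correction well inside the claimed $\z^{-1}$ error. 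This settles the first case of the lemma.

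For $\z\in\mathbb{R}$, I would interpret $I(\z)$ through its non-tangential boundary values $I_\pm(\z)$ from the upper and lower half planes, matching the labels $\Im\z>0$ and $\Im\z<0$ appearing in the second case of the lemma. Sokhotski-Plemelj then supplies
\begin{equation*}
I_+(\z)-I_-(\z)=e^{-\z^2},\qquad I_+(\z)+I_-(\z)=\frac{1}{\pi\im}\,\mathrm{PV}\!\int_\mathbb{R}\frac{e^{-\mu^2}}{\mu-\z}\,\d\mu.
\end{equation*}
To estimate the principal value I would use $\mathrm{PV}\!\int_\mathbb{R}(\mu-\z)^{-1}\d\mu=0$ to rewrite it as $\int_\mathbb{R}(e^{-\mu^2}-e^{-\z^2})(\mu-\z)^{-1}\d\mu$; the integrand is then continuous in $\mu$, is exponentially small on $\{|\mu-\z|\leq 1\}$ (where $|\mu|$ is comparable to $|\z|$), and on its complement admits the one-term kernel expansion $(\mu-\z)^{-1}=-\z^{-1}+O(\mu\z^{-2})$, producing the leading $-\sqrt{\pi}/\z$ with an $O(\z^{-3})$ remainder. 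Putting these pieces together yields $I_\pm(\z)=-\frac{\sqrt{\pi}}{2\pi\im\,\z}(1+O(\z^{-2}))\pm\tfrac12 e^{-\z^2}$; since $\pm\tfrac12 e^{-\z^2}$ is beyond all algebraic orders of $\z^{-1}$ on $\mathbb{R}$, the asymmetric display $+e^{-\z^2}$ versus $+0$ in the lemma amounts to absorbing the discrepancy into the $O(\z^{-1})$ error, entirely consistent with the Plemelj jump $I_+-I_-=e^{-\z^2}$.

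The main subtlety is the loss of uniformity of the off-axis remainder estimate as $|\Im\z|\downarrow 0$: the pointwise bound $|\mu-\z|\geq|\Im\z|$ degenerates on the axis itself, so the first step cannot be applied directly to the boundary. Sokhotski-Plemelj circumvents this cleanly by peeling off the exponentially small jump explicitly and reducing the on-axis analysis to a principal value integral with continuous integrand, which then yields to the same Taylor expansion strategy used in the off-axis case.
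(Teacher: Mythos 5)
Your argument is correct and agrees with the paper on the off-axis case (geometric expansion of the Cauchy kernel against Gaussian moments) and on the initial use of the Plemelj formula on the axis, but diverges in how the principal-value integral is then controlled, which is the genuinely different ingredient. The paper deforms the contour of $\mathrm{v.p.}\!\int_{\mathbb R}$ to $\mathbb R+\im$, picking up the explicit semicircle contribution $-\im\pi e^{-\z^2}$; this reduces the on-axis estimate to the off-axis one on a contour where $|\mu-\z|\geq 1$ uniformly, so the kernel expansion applies verbatim. You instead stay on $\mathbb R$ and regularize by subtracting the constant $e^{-\z^2}$ from the numerator (licensed by $\mathrm{v.p.}\!\int_{\mathbb R}(\mu-\z)^{-1}d\mu=0$, interpreted with a symmetric cutoff at infinity, since the resulting integral is only conditionally convergent). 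Both routes reach the same place. The contour shift is a bit cleaner, because on $\mathbb R+\im$ the denominator never approaches zero and no case split is needed; your on-axis method requires one more decomposition than your sketch admits to, namely a split of $\{|\mu-\z|>1\}$ into $|\mu|\lesssim|\z|$ (where the one-term kernel expansion $(\mu-\z)^{-1}=-\z^{-1}+O(\mu\z^{-2})$ is valid and yields the $-\sqrt{\pi}/\z$ term) and $|\mu|\gtrsim|\z|$ (where that expansion fails but both $e^{-\mu^2}$ and $e^{-\z^2}$ are exponentially small, so the contribution is negligible). This is the only step I would insist be filled in; as written the phrase ``on its complement admits the one-term kernel expansion'' overreaches, because the expansion itself does not converge uniformly on all of $\{|\mu-\z|>1\}$. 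Your final observation that the display $+e^{-\z^2}$ versus $+0$ in the lemma, compared to your symmetric $\pm\tfrac12 e^{-\z^2}$, is a matter of bookkeeping beyond all algebraic orders is exactly right: the paper's contour deformation happens to add a second $\tfrac12 e^{-\z^2}$ to the Plemelj $\pm\tfrac12 e^{-\z^2}$, producing the asymmetric $(e^{-\z^2},\,0)$ display, and both displays reproduce the same jump $I_+-I_-=e^{-\z^2}$.
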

\begin{proof} Assume first that $\z\notin\mathbb{R}$, hence
\begin{equation*}
	I(\z)=-\frac{1}{2\pi\im\z}\int_{\mathbb{R}}\frac{e^{-\mu^2}}{1-\frac{\mu}{\z}}\,\d\z=-\frac{1}{2\pi\im\z}\left[\int_{\mathbb{R}}e^{-\mu^2}\,\d\mu+\mathcal{O}\left(\z^{-1}\right)\right]=-\frac{\sqrt{\pi}}{2\pi\im\z}\left(1+\mathcal{O}\left(\z^{-1}\right)\right),\ \ \z\rightarrow\infty.
\end{equation*}		
Secondly for $\z\in\mathbb{R}$ we apply Plemelj formula
\begin{equation*}
	I_{\pm}(\z)=\pm\frac{1}{2}e^{-\z^2}+\frac{1}{2\pi\im}\,\textnormal{v.p.}\int_{\mathbb{R}}\frac{e^{-\mu^2}}{\mu-\z}\d\mu.
\end{equation*}
The principal value integral can be simplified as follows: let $\gamma_{R,\epsilon}$ denote the contour shown in Figure \ref{box} with $0<\epsilon<R$.
\begin{figure}[tbh]
\begin{center}
\resizebox{0.6\textwidth}{!}{\includegraphics{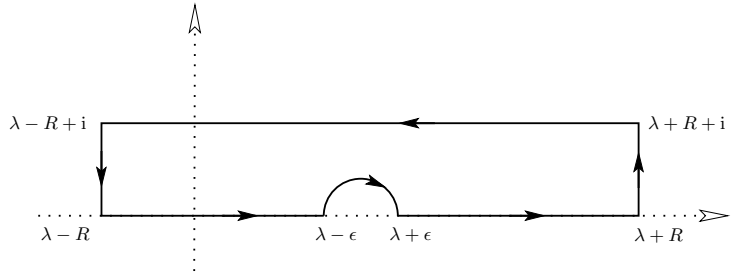}}
\caption{Integration contour $\gamma_{R,\epsilon}$ for a possible choice $\lambda>0$.}
\label{box}
\end{center}
\end{figure}

From Cauchy's theorem
\begin{equation*}
	\forall\, R>\epsilon>0:\ \ \oint_{\gamma_{R,\epsilon}}\frac{e^{-\mu^2}}{\mu-\z}\d\mu=0.
\end{equation*}
On the other hand
\begin{equation*}
	\int_{C_\epsilon}\frac{e^{-\mu^2}}{\mu-\z}\d\mu=-\im\int_0^{\pi}e^{-(\z+\epsilon e^{\im\theta})^2}\d\theta\stackrel{\epsilon\downarrow 0}{\longrightarrow} -\im\pi e^{-\z^2},
\end{equation*}
as well as
\begin{equation*}
	\left(\int_{\z+R}^{\z+R+\im}+\int_{\z-R+\im}^{\z-R}\right)\frac{e^{-\mu^2}}{\mu-\z}\d\mu=\im\int_0^1e^{-(\z+R+\im t)^2}\frac{\d t}{R+\im t}+\im\int_0^1e^{-(\z-R+\im t)^2}\frac{\d t}{R-\im t}
	\stackrel{R\rightarrow\infty}{\longrightarrow} 0,
\end{equation*}
and
\begin{equation*}
	\lim_{\substack{R\rightarrow\infty \\ \epsilon\downarrow 0}}\left(\int_{\z-R}^{\z-\epsilon}+\int_{\z+\epsilon}^{\z+R}\right)\frac{e^{-\mu^2}}{\mu-\z}\d\mu=\textnormal{v.p.}\,\int_{\mathbb{R}}\frac{e^{-\mu^2}}{\mu-\z}\d\mu.
\end{equation*}
Thus
\begin{equation*}
	I_{\pm}(\z)=\frac{1}{2\pi\im}\int_{\mathbb{R}+\im}\frac{e^{-\mu^2}}{\mu-\z}\d\mu+e^{-\z^2}\begin{cases}
	1,&\Im\z>0\\
	0,&\Im\z<0
	\end{cases}=-\frac{\sqrt{\pi}}{2\pi\im\z}\left(1+\mathcal{O}\left(\z^{-1}\right)\right)+e^{-\z^2}\begin{cases}
	1,&\Im\z>0\\
	0,&\Im\z<0
	\end{cases}
\end{equation*}
which completes the proof.		
\end{proof}
The solution to the model problem is now given by
\begin{equation*}
	F(\lambda)=B_0(\lambda)F^{RH}\big(\z(\lambda)\big),\ |\lambda|<r;\hspace{0.15cm}\z(\lambda)=\sqrt{t}\left(\sigma+\varkappa-\Pi(\lambda)\right)^{\frac{1}{2}}=(2\sqrt{2}\,t)^{\frac{1}{2}}\lambda\left(1-\frac{\lambda^2}{4}-\frac{11}{96}\lambda^4+\mathcal{O}\left(\lambda^6\right)\right)
\end{equation*}
where we introduce the (locally analytic) multiplier
\begin{equation*}
	B_0(\lambda)=\Upsilon(\lambda)e^{\im\frac{\pi}{4}\epsilon\sigma_3}\begin{cases}
		\begin{pmatrix}
		0 & -1\\
		1 & 0\\
		\end{pmatrix}e^{-\im\frac{\pi}{4}\epsilon\sigma_3},&\lambda\in D(0,r):\,\Im\lambda>0\\
		e^{-\im\frac{\pi}{4}\epsilon\sigma_3},&\lambda\in D(0,r):\,\Im\lambda<0
		\end{cases};\hspace{0.5cm}B_0(0)=\frac{1}{\sqrt{2}}\begin{pmatrix}
		1 & -\im\epsilon\\
		-\im\epsilon & 1\\
		\end{pmatrix}.
\end{equation*}
More detail on the local behavior of $B_0(\lambda)$ near $\lambda=0$ will be needed later on, first
\begin{eqnarray*}
	B_0(\lambda)&=&e^{\im\frac{\pi}{4}\epsilon\sigma_3}\bigg\{\frac{1}{\sqrt{2}}\begin{pmatrix}
	1 & -1\\
	1 & 1\\
	\end{pmatrix}+\frac{\im}{2}\begin{pmatrix}
	1 & 1\\
	-1 & 1\\
	\end{pmatrix}\lambda+\frac{\sqrt{2}}{8}\begin{pmatrix}
	1 & -1\\
	1 & 1\\
	\end{pmatrix}\lambda^2+\frac{3\im}{8}\begin{pmatrix}
	1 & 1\\
	-1 & 1\\
	\end{pmatrix}\lambda^3\\
	&&\ \ \ \ \ +\frac{11\sqrt{2}}{64}\begin{pmatrix}
	1 & -1\\
	1 & 1\\
	\end{pmatrix}\lambda^4+\frac{31\im}{64}\begin{pmatrix}
	1 & 1\\
	-1 & 1\\
	\end{pmatrix}\lambda^5+\mathcal{O}\left(\lambda^6\right)\bigg\}e^{-\im\frac{\pi}{4}\epsilon\sigma_3},\ \ \lambda\rightarrow 0,
\end{eqnarray*}
followed by
\begin{eqnarray*}
	\big(B_0(\lambda)\big)^{-1}&=&e^{\im\frac{\pi}{4}\epsilon\sigma_3}\bigg\{\frac{1}{\sqrt{2}}\begin{pmatrix}
	1 & 1\\
	-1 & 1\\
	\end{pmatrix}+\frac{\im}{2}\begin{pmatrix}
	1 & -1\\
	1 & 1\\
	\end{pmatrix}\lambda+\frac{\sqrt{2}}{8}\begin{pmatrix}
	1 & 1\\
	-1 & 1\\
	\end{pmatrix}\lambda^2+\frac{3\im}{8}\begin{pmatrix}
	1 & -1\\
	1 & 1\\
	\end{pmatrix}\lambda^3\\
	&&+\frac{11\sqrt{2}}{64}\begin{pmatrix}
	1 & 1\\
	-1 & 1\\
	\end{pmatrix}\lambda^4+\frac{31\im}{64}\begin{pmatrix}
	1 & -1\\
	1 & 1\\
	\end{pmatrix}\lambda^5+\mathcal{O}\left(\lambda^6\right)\bigg\}e^{-\im\frac{\pi}{4}\epsilon\sigma_3},\ \ \lambda\rightarrow 0,
\end{eqnarray*}
and thus
\begin{equation*}
	B_0(\lambda)\sigma_+\big(B_0(\lambda)\big)^{-1}=\frac{1}{2}\begin{pmatrix}
	\im\epsilon & 1\\
	1 & -\im\epsilon
	\end{pmatrix}-\frac{\sqrt{2}}{2}\sigma_2\lambda+\frac{\im}{2}\epsilon\sigma_3\lambda^2-\frac{\sqrt{2}}{2}\sigma_2\lambda^3+\frac{3\im}{4}\epsilon\sigma_3\lambda^4-\frac{3\sqrt{2}}{4}\sigma_2\lambda^5+\mathcal{O}\left(\lambda^6\right).
\end{equation*}
We directly verify the required jump condition \eqref{orij}, and as $t\rightarrow\infty$ (subject to \eqref{upperscale}),
\begin{align}
	F(\lambda)&=\left(I-\im\epsilon\, e^{\sigma t} I\big(\z(\lambda)\big)B_0(\lambda)\sigma_+\big(B_0(\lambda)\big)^{-1}\right)\Upsilon(\lambda)\nonumber\\
	&=\left(I+\frac{\epsilon}{2\pi}\frac{e^{\sigma t}}{\z(\lambda)}B_0(\lambda)\sigma_+B_0^{-1}(\lambda)+\mathcal{O}\left(t^{-1}\right)\right)\Upsilon(\lambda)
	=\left(I+\mathcal{O}\left(t^{-\frac{1}{2}}\right)\right)\Upsilon(\lambda),\ \ \ \ \sigma_+=\begin{pmatrix}
	0 & 1\\
	0 & 0
	\end{pmatrix}\label{orimatch}
\end{align}
uniformly for $0<r_1\leq|\lambda|\leq r_2<\frac{1}{\sqrt{2}}$ with $r_j>0$ fixed. 
\subsection{Edge point parametrices near $\lambda=\pm\lambda^{\ast}$} The construction is almost identical to the one presented in Section \ref{outpara}. More precisely for the right edge point we take
\begin{equation}\label{upedge1}
	\Delta^r(\lambda)=B_{r_3}(\lambda)\widehat{A}^{RH}\big(\z(\lambda)\big)e^{\frac{2}{3}\im\z^{\frac{3}{2}}(\lambda)\sigma_3}e^{-\im\frac{\pi}{4}\epsilon\sigma_3},\ \ |\lambda-\lambda^{\ast}|<r,
\end{equation}
where $\widehat{A}^{RH}(\z)$ was introduced in \eqref{bare:5} and we have
\begin{equation*}
	B_{r_3}(\lambda)=\Upsilon(\lambda)e^{\im\frac{\pi}{4}\epsilon\sigma_3}\begin{pmatrix}
	1 & -1\\
	-\im & -\im\\
	\end{pmatrix}\alpha(\lambda)^{\sigma_3}\left(\z(\lambda)\frac{\lambda+\frac{1}{\sqrt{2}}}{\lambda-\frac{1}{\sqrt{2}}}\right)^{\frac{1}{4}\sigma_3},\ \ |\lambda-\lambda^{\ast}|<r
\end{equation*}
with the locally conformal change of variables
\begin{equation*}
	\z(\lambda)=\left(\frac{3t}{2}e^{-\im\frac{\pi}{2}}g(\lambda)\right)^{\frac{2}{3}}\sim\sqrt{2}(2t)^{\frac{2}{3}}\left(\lambda-\frac{1}{\sqrt{2}}\right),\ \ |\lambda-\lambda^{\ast}|<r.
\end{equation*}
The reader easily checks that $\Delta^r(\lambda)$ defined in \eqref{upedge1} has the same jump behavior as shown in Figure \ref{figure10} with $M$ replaced by $\lambda^{\ast}$ and $g(\lambda)$ as in \eqref{newg}. Moreover
\begin{equation*}
	\Delta^r(\lambda)=\left[I+\Upsilon(\lambda)\left\{\frac{\im}{48\z^{\frac{3}{2}}}\begin{pmatrix}
	-1 & 6\epsilon\\
	-6\epsilon & 1\\
	\end{pmatrix}+\mathcal{O}\left(\z^{-\frac{6}{2}}\right)\right\}\big(\Upsilon(\lambda)\big)^{-1}\right]\Upsilon(\lambda),\ \ \z\rightarrow\infty
\end{equation*}
so that for $t\rightarrow+\infty,|s_1|\uparrow 1$ such that $\varkappa\geq\frac{2}{3}\sqrt{2}-\frac{f_2}{t}$,
\begin{equation}\label{upedgem:1}
	\Delta^r(\lambda)=\left(I+\mathcal{O}\left(t^{-1}\right)\right)\Upsilon(\lambda)
\end{equation}
uniformly for $0<r_1\leq|\lambda-\lambda^{\ast}|\leq r_2<\frac{1}{\sqrt{2}}$.\smallskip

Near the remaining edge point, we choose
\begin{equation}\label{upedge2}
	\Delta^{\ell}(\lambda)=B_{\ell_3}(\lambda)\bar{A}^{RH}\big(\z(\lambda)\big)\Big|_{\gamma=1}e^{\frac{2}{3}\z^{\frac{3}{2}}(\lambda)\sigma_3}e^{-\im\frac{\pi}{4}\epsilon\sigma_3},\ \ |\lambda+\lambda^{\ast}|<r
\end{equation}
with $\bar{A}(\z)$ as in \eqref{bare:7} and
\begin{equation*}
	B_{\ell_3}(\lambda) = \Upsilon(\lambda)e^{\im\frac{\pi}{4}\epsilon\sigma_3}\begin{pmatrix}
	1 & -1\\
	-\im & -\im\\
	\end{pmatrix}\alpha(\lambda)^{\sigma_3}\left(\z(\lambda)\frac{\lambda-\frac{1}{\sqrt{2}}}{\lambda+\frac{1}{\sqrt{2}}}\right)^{-\frac{1}{4}\sigma_3},\ \ |\lambda+\lambda^{\ast}|<r
\end{equation*}
where we use the change of variable
\begin{equation*}
	\z(\lambda)=\left(\frac{3t}{2}g(\lambda)\right)^{\frac{2}{3}}\sim\sqrt{2}(2t)^{\frac{2}{3}}\left(\lambda+\frac{1}{\sqrt{2}}\right),\ \ |\lambda+\lambda^{\ast}|<r.
\end{equation*}
The jump behavior of \eqref{upedge2} is depicted in Figure \ref{figure12} with $-M$ replaced by $-\lambda^{\ast}$ and the $g$-function as in \eqref{newg}. We note that
\begin{equation*}
	\Delta^{\ell}(\lambda)=\left[I+\Upsilon(\lambda)\left\{\frac{1}{48\z^{\frac{3}{2}}}\begin{pmatrix}
	1 & 6\epsilon\\
	-6\epsilon & -1\\
	\end{pmatrix}+\mathcal{O}\left(\z^{-\frac{6}{2}}\right)\right\}\big(\Upsilon(\lambda)\big)^{-1}\right]\Upsilon(\lambda),\ \ \z\rightarrow\infty,
\end{equation*}
hence, as $t\rightarrow+\infty,|s_1|\uparrow 1$ such that $\varkappa\geq\frac{2}{3}\sqrt{2}-\frac{f_2}{t}$,
\begin{equation}\label{upedgem:2}
	\Delta^{\ell}(\lambda)=\left(I+\mathcal{O}\left(t^{-1}\right)\right)\Upsilon(\lambda)
\end{equation}
uniformly for $0<r_1\leq|\lambda+\lambda^{\ast}|\leq r_2<\frac{1}{\sqrt{2}}$. This completes the parametrix construction and we can move ahead with the ratio transformation.
\subsection{Final transformation - ratio problem}\label{ratioeasy}
Define
\begin{equation}\label{ratio:2}
	\xi(\lambda)=\varpi(\lambda)\begin{cases}
		\big(\Delta^r(\lambda)\big)^{-1},& |\lambda-\frac{1}{\sqrt{2}}|<\bar{r}\\
		\big(\Delta^{\ell}(\lambda)\big)^{-1},&|\lambda+\frac{1}{\sqrt{2}}|<\bar{r}\\
		\big(F(\lambda)\big)^{-1},&|\lambda|<r\\
		\big(\Upsilon(\lambda)\big)^{-1},&|\lambda\pm\frac{1}{\sqrt{2}}|>\bar{r}
		\end{cases}
\end{equation}
where $0<\bar{r}<\frac{1}{\sqrt{2}}$ and $0<r<\frac{1}{4}$ remain fixed throughout.\,
This leads to the RHP shown in the Figure below.
\begin{figure}[tbh]
\begin{center}
\resizebox{0.5\textwidth}{!}{\includegraphics{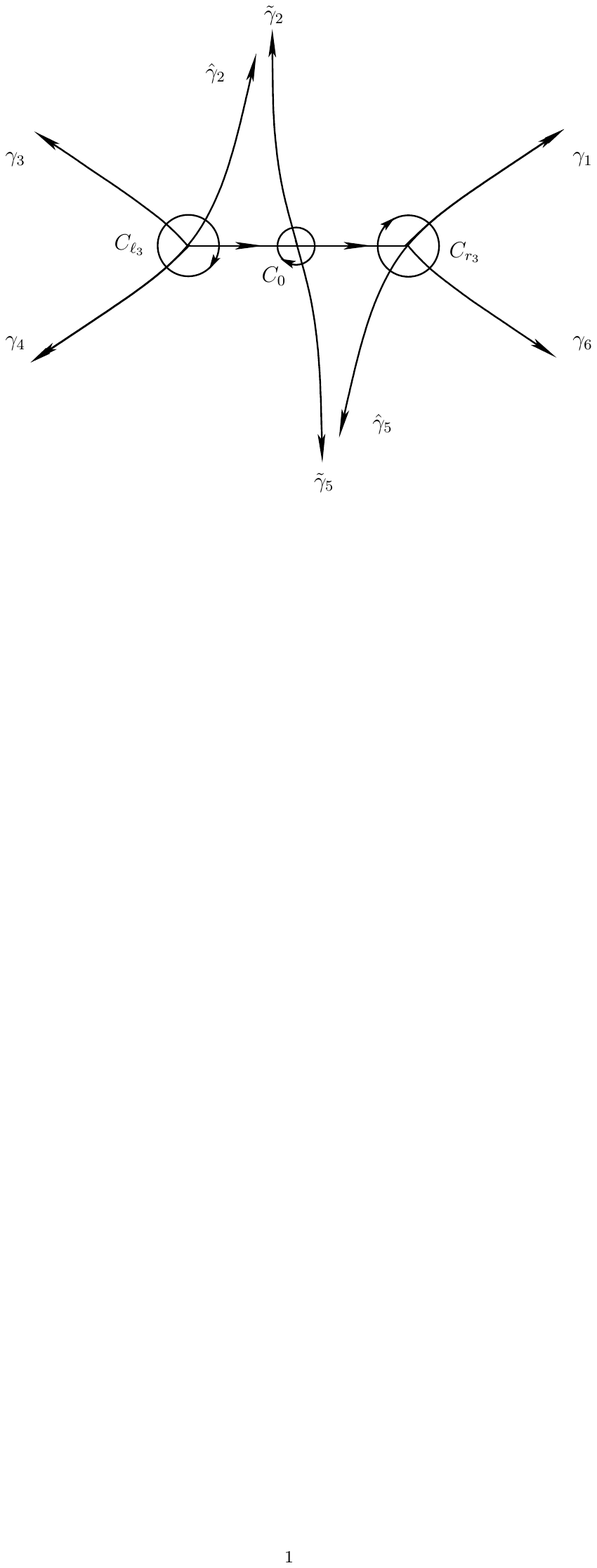}}
\caption{The jump contour $\Sigma_{\xi}$ of the ratio function $\xi(\lambda)$ defined in \eqref{ratio:2}.}
\label{ratio2}
\end{center}
\end{figure}
\begin{itemize}
	\item $\xi(\lambda)$ is analytic for $\lambda\in\mathbb{C}\backslash\Sigma_{\xi}$ where $\Sigma_{\xi}=[-\lambda^{\ast},\lambda^{\ast}]\cup C_0\cup C_{r_3}\cup C_{\ell_3}\cup\Sigma_{S_{\infty}}$
	\item The jumps are described as follows: Inside the two circles $C_{r_3}$ resp. $C_{\ell_3}$ we simply copy the corresponding expressions from Section \ref{ratiosec1}, namely the jumps inside $C_{r_2}$ resp. $C_{\ell_2}$ with the proper identification of the $g$-function \eqref{newg} and the replacement $\Omega(\lambda)\equiv 0$. The same applies to the infinite branches $\tilde{\gamma}_2\cup\tilde{\gamma}_5$ outside $C_0$ as well as for $\gamma_1\cup\gamma_3\cup\gamma_4\cup\gamma_6$, with the replacement $N(\lambda)\mapsto\Upsilon(\lambda)$ as determined through \eqref{nout:1}. On the line segments $(-\lambda^{\ast}+\bar{r},-r)\cup(r,\lambda^{\ast}-\bar{r})$, observe that
	\begin{align*}
		\xi_+(\lambda)&=\xi_-(\lambda)\Upsilon_-(\lambda)\begin{pmatrix}
		-\im\epsilon\,s_1(1+e^{-\varkappa t}) & -\im\epsilon\,e^{-t(\varkappa-\Pi(\lambda))}\\
		-\im\epsilon\,e^{-t(\varkappa+\Pi(\lambda))} & \im\epsilon\,s_3\\
		\end{pmatrix}\big(\Upsilon_-(\lambda)\big)^{-1},\ \ \lambda\in(-\lambda^{\ast}+\bar{r},-r),\\
		\xi_+(\lambda)&=\xi_-(\lambda)\Upsilon_-(\lambda)\begin{pmatrix}
		\im\epsilon\,s_3& -\im\epsilon\,e^{-t(\varkappa-\Pi(\lambda))}\\
		-\im\epsilon\,e^{-t(\varkappa+\Pi(\lambda))} & -\im\epsilon\,s_1(1+e^{-\varkappa t})
		\end{pmatrix}\big(\Upsilon_-(\lambda)\big)^{-1},\ \ \lambda\in(r,\lambda^{\ast}-\bar{r})
	\end{align*}
	and inside $C_0$, first for $\lambda\in(-r,0)$, secondly for $\lambda\in(0,r)$,
	\begin{align*}
		\xi_+(\lambda)&=\xi_-(\lambda)F_-(\lambda)\begin{pmatrix}
		-\im\epsilon\,s_1(1+e^{-\varkappa t}) & (s_1(1+e^{-\varkappa t})-\im\epsilon)e^{-t(\varkappa-\Pi(\lambda))}\\
		-\im\epsilon\,e^{-t(\varkappa+\Pi(\lambda))} & \im\epsilon\,s_3+e^{-2t\varkappa}\\
		\end{pmatrix}\big(F_-(\lambda)\big)^{-1}\\
		\xi_+(\lambda)&=\xi_-(\lambda)F_-(\lambda)\begin{pmatrix}
		\im\epsilon\,s_3 & (-\im\epsilon-s_3)e^{-t(\varkappa-\Pi(\lambda))}\\
		-\im\epsilon\,e^{-t(\varkappa+\Pi(\lambda))} & -\im\epsilon\,s_1(1+e^{-\varkappa t})+e^{-2t\varkappa}\\
		\end{pmatrix}\big(F_-(\lambda)\big)^{-1}.
	\end{align*}
	Notice that for $\lambda\in(-\lambda^{\ast}+\bar{r},-r)\cup(r,\lambda^{\ast}-\bar{r})$,
	\begin{equation*}
		t\big(\varkappa-\Pi(\lambda)\big)\geq t\big(\varkappa-\Pi(\pm r)\big)>-f_2+t\left(\frac{2}{3}\sqrt{2}-\frac{8}{3}\left(\frac{1}{2}-r^2\right)^{\frac{3}{2}}\right)>-f_2+d_8t,\ \ d_8>0
	\end{equation*}
	and for $\lambda\in(-r,r)$,
	\begin{equation*}
		\left|(-\im\epsilon-s_3)e^{-t(\varkappa-\Pi(\lambda))}\right|\leq d_9 e^{-\frac{2}{3}\sqrt{2}t},\ \ \left|(s_1(1+e^{-\varkappa t})-\im\epsilon)e^{-t(\varkappa-\Pi(\lambda))}\right|\leq d_{10}e^{-\frac{2}{3}\sqrt{2}t}
	\end{equation*}
	But since $\Upsilon_-(\lambda)$ is bounded on the full segment $[-\lambda^{\ast},\lambda^{\ast}]$ and $F_-(\lambda)$ inside $(-r,r)$, we obtain that
	\begin{equation}\label{newes:1}
		\|G_{\xi}(\cdot;t,|s_1|)-I\|_{L^2\cap L^{\infty}([-\lambda^{\ast},\lambda^{\ast}])}\leq d_{11}e^{-d_{12}t},\ \ t\rightarrow\infty,|s_1|\uparrow 1:\ \ \varkappa\geq \frac{2}{3}\sqrt{2}-\frac{f_2}{t},\ \ f_2\in\mathbb{R}.
	\end{equation}
	The jumps along $\tilde{\gamma}_2\cup\tilde{\gamma}_5$ inside $C_0$ are again exponentially close to the unit matrix, hence we now list the contributions on the circle boundaries $C_{r_3},C_{\ell_3}$ and $C_0$,
	\begin{equation*}
		\xi_+(\lambda)=\xi_-(\lambda)\Delta^r(\lambda)\big(\Upsilon(\lambda)\big)^{-1},\ \lambda\in C_{r_3};\hspace{0.5cm}\xi_+(\lambda)=\xi_-(\lambda)\Delta^{\ell}(\lambda)\big(\Upsilon(\lambda)\big)^{-1},\ \lambda\in C_{\ell_3}.
	\end{equation*}
	and
	\begin{equation*}
		\xi_+(\lambda)=\xi_-(\lambda)F(\lambda)\big(\Upsilon(\lambda)\big)^{-1},\ \ \lambda\in C_0.
	\end{equation*}
	From \eqref{upedgem:1},\eqref{upedgem:2} with fixed radius $\bar{r}$,
	\begin{equation}\label{newes:2}
		\lambda\in C_{r_3}\cup C_{\ell_3}:\ \ G_{\xi}(\lambda)=I+\mathcal{O}\left(t^{-1}\right),\ \ t\rightarrow\infty,|s_1|\uparrow 1:\ \ \varkappa\geq \frac{2}{3}\sqrt{2}-\frac{f_2}{t},\ f_2\in\mathbb{R}.
	\end{equation}
	On the other hand with \eqref{orimatch} and again fixed radius $r$,
	\begin{equation}\label{newes:3}
		\lambda\in C_0:\ \ G_{\xi}(\lambda)=I+\mathcal{O}\left(t^{-\frac{1}{2}}\right),\ \ t\rightarrow\infty,|s_1|\uparrow 1:\ \ \varkappa\geq\frac{2}{3}\sqrt{2}-\frac{f_2}{t},\ f_2\in\mathbb{R}.
	\end{equation}
	\item As $\lambda\rightarrow\infty$ we have that $\xi(\lambda)\rightarrow I$.
\end{itemize}
We solve the RHP for $\xi(\lambda)$ iteratively.
\subsection{Iterative solution}\label{iterativeeasy} The following estimation is obtained as in Section \ref{ratiosec1}, we only have to keep in mind \eqref{newes:1},\eqref{newes:2} and \eqref{newes:3},
\begin{equation*}
	\|G_{\xi}(\cdot;t,|s_1|)-I\|_{L^2\cap L^{\infty}(\Sigma_{\xi})}\leq d\,t^{-\frac{1}{2}},\ \ \ t\rightarrow+\infty,\ |s_1|\uparrow 1:\ \ \varkappa\geq\frac{2}{3}\sqrt{2}-\frac{f_2}{t},\ f_2\in\mathbb{R}.
\end{equation*}
Thus the ratio problem has a unique asymptotic solution in $L^2(\Sigma_{\xi})$ which satisfies
\begin{equation*}
	\|\xi_-(\cdot;t,|s_1|)-I\|_{L^2(\Sigma_{\xi})}\leq d\,t^{-\frac{1}{2}},\ \ \ d>0.
\end{equation*}
\subsection{Extraction of asymptotics - proof of Theorem \ref{res:4}}\label{extracteasy} Through the transformation sequence
\begin{equation*}
	Y(\lambda)\mapsto X(\lambda)\mapsto Z(\lambda)\mapsto T(\lambda)\mapsto \varpi(\lambda)\mapsto \xi(\lambda)
\end{equation*}
we obtain
\begin{equation*}
	u(x|s)=2\lim_{\lambda\rightarrow\infty}\left[\lambda\big(Y(\lambda;x)\big)_{12}\right]=2\sqrt{-x}\lim_{z\rightarrow\infty}\left[z\xi(z)\Upsilon(z)e^{-t(g(z)-\vartheta(z))\sigma_3}\right]_{12}.
\end{equation*}
Since
\begin{equation*}
	g(z)-\vartheta(z)=\frac{\im}{8z}+\mathcal{O}\left(z^{-3}\right),\ \ \ \ \ \Upsilon(z)=I-\frac{\sqrt{2}\epsilon}{4z}\sigma_1+\mathcal{O}\left(z^{-2}\right),\ \ z\rightarrow\infty
\end{equation*}
we deduce from the integral equation for $\xi(\lambda)$ the exact identity
\begin{equation*}
	u(x|s)=\sqrt{-x}\left[-\frac{\epsilon}{\sqrt{2}}+\mathcal{E}(\varkappa)\right],
\end{equation*}
where (as $t\rightarrow\infty,|s_1|\uparrow 1$ such that $\varkappa\geq\frac{2}{3}\sqrt{2}-\frac{f_2}{t},f_2\in\mathbb{R}$),
\begin{equation*}\label{nice:1}
	\mathcal{E}(\varkappa)=\frac{\im}{\pi}\int_{\Sigma_{\xi}}\big(\xi_-(w)\big(G_{\xi}(w)-I\big)\big)_{12}\,\d w=\frac{\im}{\pi}\oint_{C_0}\big(G_{\xi}(w)-I\big)_{12}\,\d w+\mathcal{O}\left(t^{-1}\right)=\frac{\epsilon}{2\pi}\frac{e^{\sigma t}}{(2\sqrt{2}\,t)^{\frac{1}{2}}}+\mathcal{O}\left(t^{-1}\right),
\end{equation*}
and the last contour integral was computed by residue theorem, compare \eqref{orimatch}. Hence, as $x\rightarrow-\infty,|s_1|\uparrow 1$,
\begin{equation}\label{finalup}
	u(x|s)=-\epsilon\sqrt{-\frac{x}{2}}\left(1-\frac{1}{2\pi}\frac{e^{\sigma t}}{(t\sqrt{2})^{\frac{1}{2}}}+J_2(x,s)\right),\ \ \sigma=\frac{2}{3}\sqrt{2}-\varkappa,\ \ t=(-x)^{\frac{3}{2}}
\end{equation}
with
\begin{prop}\label{imp:4} For any given $f_2\in\mathbb{R}$ there exist positive $t_0=t_0(f_2),v_0=v_0(f_2)$ and $c=c(f_2)$ such that
\begin{equation*}
	\big|J_2(x,s)\big|\leq ct^{-1}\ \ \ \ \forall\, t\geq t_0,\ v\geq v_0,\ \ v\geq\frac{2}{3}\sqrt{2}\,t-f_2.
\end{equation*}
\end{prop}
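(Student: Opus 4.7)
The plan is to read off $|J_2(x,s)|\le ct^{-1}$ as the quantitative remainder in the asymptotic extraction performed in Section~\ref{extracteasy}. The ratio problem for $\xi(\lambda)$ introduced in \eqref{ratio:2} admits a unique $L^2(\Sigma_\xi)$ solution with $\|\xi_--I\|_{L^2}\le dt^{-1/2}$ (Section~\ref{iterativeeasy}), and tracing back the transformation chain $Y\mapsto X\mapsto T\mapsto \varpi\mapsto\xi$ together with the exact expansion $\Upsilon(z)=I-\frac{\sqrt 2\,\epsilon}{4z}\sigma_1+O(z^{-2})$ (which supplies the leading $-\epsilon\sqrt{-x/2}$) and the fact that $e^{-t(g(z)-\vartheta(z))\sigma_3}$ contributes only to the diagonal yields the exact identity $u(x|s)=\sqrt{-x}\,[-\epsilon/\sqrt 2+\mathcal{E}(\varkappa)]$ with
\[
\mathcal{E}(\varkappa)=\frac{\im}{\pi}\int_{\Sigma_\xi}\bigl(\xi_-(w)(G_\xi(w)-I)\bigr)_{12}\,dw.
\]
The proposition is therefore equivalent to the claim $\mathcal{E}(\varkappa)=\frac{\epsilon}{2\pi}\frac{e^{\sigma t}}{(2\sqrt 2\,t)^{1/2}}+O(t^{-1})$ uniformly for $(t,v)$ satisfying $v\ge\tfrac23\sqrt 2\,t-f_2$.

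First, I would split the integrand via $\xi_-=I+(\xi_--I)$. The quadratic contribution $\int_{\Sigma_\xi}((\xi_--I)(G_\xi-I))_{12}\,dw$ is controlled by Cauchy--Schwarz: combining the iterative bound with the $L^2$-estimates \eqref{newes:1}--\eqref{newes:3} gives $\|\xi_--I\|_{L^2}\|G_\xi-I\|_{L^2}=O(t^{-1/2})\cdot O(t^{-1/2})=O(t^{-1})$. Thus only the linear contribution $\int_{\Sigma_\xi}(G_\xi-I)_{12}\,dw$ needs finer analysis.

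Next, this linear integral is dissected piece by piece. The segment portions $(-\lambda^\ast+\bar r,-r)\cup(r,\lambda^\ast-\bar r)$ and all infinite branches contribute $O(e^{-d\,t})$ from \eqref{newes:1}, while the edge-point circles $C_{r_3}\cup C_{\ell_3}$ contribute $O(t^{-1})$ by residue theorem applied to the explicit $\zeta^{-3/2}$-matchings \eqref{upedgem:1}--\eqref{upedgem:2} at fixed radii. All such contributions are collected into $J_2(x,s)$.

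The remaining and only non-$O(t^{-1})$ piece, arising from the circle $C_0$, is computed by residue theorem applied to \eqref{orimatch}. Since $\zeta(\lambda)=(2\sqrt 2\,t)^{1/2}\lambda\bigl(1+O(\lambda^2)\bigr)$ has a simple zero at the origin and $B_0(0)\sigma_+B_0^{-1}(0)=\tfrac12\bigl(\begin{smallmatrix}\im\epsilon&1\\1&-\im\epsilon\end{smallmatrix}\bigr)$, the residue of the leading term $\frac{\epsilon}{2\pi}e^{\sigma t}\zeta(\lambda)^{-1}B_0(\lambda)\sigma_+B_0^{-1}(\lambda)$ extracts exactly the $(12)$-entry contribution $\frac{\epsilon}{2\pi}\frac{e^{\sigma t}}{(2\sqrt 2\,t)^{1/2}}$, which produces the explicit correction stated in \eqref{theo:3}. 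The main obstacle is bookkeeping: one must verify that all higher-order corrections, coming both from the $\zeta^{-3}$-remainder in Lemma~\ref{princip} (each additional $\zeta^{-2}$ contributes a factor $t^{-1}$ after the pole-order considerations) and from the Taylor expansion of $B_0(\lambda)\sigma_+B_0^{-1}(\lambda)$ about $\lambda=0$, produce residues bounded by $O(t^{-1})$ even after multiplication by $e^{\sigma t}$. The uniformity in $\varkappa\ge\tfrac23\sqrt 2-f_2/t$ is immediate because this constraint yields $\sigma t\le f_2$ and hence $e^{\sigma t}\le e^{f_2}$, so no hidden blowup occurs.
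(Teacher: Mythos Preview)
Your proposal is correct and follows essentially the same approach as the paper's own derivation in Section~\ref{extracteasy}: split $\xi_-=I+(\xi_--I)$, bound the quadratic piece by Cauchy--Schwarz using the $L^2$ estimates from Section~\ref{iterativeeasy}, discard exponentially small and $O(t^{-1})$ pieces of the linear integral, and evaluate the sole surviving contribution on $C_0$ by residues via \eqref{orimatch}. The uniformity observation $e^{\sigma t}\le e^{f_2}$ is exactly what makes all corrections harmless, as you note.
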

This completes the proof of Theorem \ref{res:4} and at the same time allows us to derive Corollary \ref{nice}. 
\subsection{Proof of Corollary \ref{nice}} Suppose both $t$ and $v$ are sufficiently large such that
\begin{equation*}
	\varkappa=-\frac{\ln(1-|s_1|^2)}{(-x)^{\frac{3}{2}}}>\frac{2}{3}\sqrt{2}.
\end{equation*}
In this case $\sigma<0$ is bounded away from zero and hence \eqref{newes:3} can be improved to
\begin{equation*}
	G_{\xi}(\lambda)=I+\mathcal{O}\left(t^{-\frac{1}{2}}e^{\sigma t}\right)=I+\mathcal{O}\left(t^{-\infty}\right),\ \ \ \lambda\in C_0,
\end{equation*}
compare \eqref{orimatch}. Therefore,
\begin{equation*}
	\|G_{\xi}(\cdot;t,|s_1|)-I\|_{L^2\cap L^{\infty}(\Sigma_{\xi})}\leq \frac{d}{t},\ \ \ t\rightarrow+\infty,|s_1|\uparrow 1:\ \varkappa>\frac{2}{3}\sqrt{2}
\end{equation*}
and \eqref{nice:1} gets replaced by
\begin{equation*}
	\mathcal{E}(\varkappa) = \frac{u_1}{t}+\frac{u_2}{t^2}+\frac{u_3}{t^3}+\mathcal{O}\left(t^{-4}\right).
\end{equation*}
with $x$ and $\gamma$ independent coefficients $u_k$. These could in principle be computed by residue theorem in \eqref{nice:1} using \eqref{upedgem:1} and \eqref{upedgem:2}, but it is easier to work with the differential equation \eqref{PII} itself: As $x\rightarrow-\infty,|s_1|\uparrow 1$,
\begin{equation}\label{improv}
	u(x|s)=-\epsilon\sqrt{-\frac{x}{2}}\,\left(1+\frac{1}{8x^3}+\mathcal{O}\left(x^{-6}\right)\right),
\end{equation}
uniformly for $\varkappa>\frac{2}{3}\sqrt{2}$ and the asymptotic expansion can be differentiated with respect to $x$.
\begin{rem} The modulus $|s_1|$ will appear explicitly in \eqref{improv}, but only contained in exponentially small error terms. All power like terms are $|s_1|$-independent and in fact identical to the ones derived in \cite{DZ2}, Theorem $1.28$.
\end{rem}
We go back to the Fredholm determinant and note that for fixed $\gamma<1$,
\begin{equation*}
	\frac{\partial}{\partial x}\ln\det\big(I-\gamma K_{\textnormal{Ai}}\big)\Big|_{L^2(x,\infty)} = \int_x^{\infty}u^2_{\scriptscriptstyle{\textnormal{AS}}}(y;\gamma)\d y=\big(u_{\scriptscriptstyle{\textnormal{AS}}}'(x;\gamma)\big)^2-xu_{\scriptscriptstyle{\textnormal{AS}}}^2(x;\gamma)-u_{\scriptscriptstyle{\textnormal{AS}}}^4(x;\gamma),\ \ x\in\mathbb{R}.
\end{equation*}
Hence with Remark \ref{connect} and \eqref{improv} after integration,
\begin{equation}\label{nice:2}
	\ln\det\big(I-\gamma K_{\textnormal{Ai}}\big)\Big|_{L^2(x,\infty)} = \frac{x^3}{12}-\frac{1}{8}\ln|x| +c(\gamma) +r(x,\gamma);
\end{equation}
where $c(\gamma)$ is $x$ independent and there exist universal $t_0,v_0>0$ and $d>0$ such that
\begin{equation*}
	\big|r(x,\gamma)\big|\leq\frac{d}{|x|^3},\ \ \ \ \forall\,t=(-x)^{\frac{3}{2}}\geq t_0,\ -\ln(1-\gamma)\geq v_0,\ \ \varkappa>\frac{2}{3}\sqrt{2}.
\end{equation*}
Now keep $t\geq t_0$ fixed and let $\gamma\uparrow 1$ in \eqref{nice:2}, so that with \eqref{TW:asy1},
\begin{equation*}
	\lim_{\gamma\uparrow 1}c(\gamma)=\ln c_0=\frac{1}{24}\ln 2+\zeta'(-1).
\end{equation*}
This, together with \eqref{nice:2}, completes the proof of Corollary \ref{nice}.
\section{Appearance of  ``Stokes-lines" - proof of Theorem \ref{res:5}}\label{stokes} We assume that $t=(-x)^{\frac{3}{2}}$ and $v=-\ln(1-|s_1|^2)$ are sufficiently large such that
\begin{equation}\label{stoke:1}
	\varkappa\geq\frac{2}{3}\sqrt{2}-f_3\frac{\ln t}{t},\ \ f_3\in\mathbb{R}.
\end{equation}
Working with this scale, several steps in the nonlinear steepest descent analysis of Section \ref{near} can still be applied. In fact, the only difference appears in the matching relation \eqref{orimatch}. Subject to \eqref{stoke:1},  we have that
\begin{equation*}
	e^{\sigma t}\leq \max\left\{1,t^{f_3}\right\}
\end{equation*}
and thus the two model functions $F(\lambda)$ and $\Upsilon(\lambda)$ are in general no longer asymptotically close to each other on $\partial D(0,r)$. In order to deal with this feature, we choose $f_3\in I_q\equiv(-\infty,\frac{1}{2}(q+1))\subset\mathbb{R}$ with $q\in\mathbb{Z}_{\geq 0}$ and notice that for $\z\notin\mathbb{R}$,
\begin{align*}
	I(\z)=\frac{1}{2\pi\im}\int_{\mathbb{R}}\frac{e^{-\mu^2}}{\mu-\z}\,\d\mu &= -\frac{1}{2\pi\im\z}\sum_{n=0}^{q-1}\z^{-n}\int_{\mathbb{R}}\mu^ne^{-\mu^2}\,\d\mu+\frac{\z^{-q}}{2\pi\im}\int_{\mathbb{R}}\frac{\mu^qe^{-\mu^2}}{\mu-\z}\,\d\mu\\
	&=-\frac{1}{2\pi\im\z}\sum_{n=0}^{\lfloor\frac{q-1}{2}\rfloor}\z^{-2n}\,\Gamma\left(\frac{1}{2}+n\right)+\frac{\z^{-q}}{2\pi\im}\int_{\mathbb{R}}\frac{\mu^qe^{-\mu^2}}{\mu-\z}\,\d\mu,
\end{align*}
which implies with Lemma \ref{princip}, as $t\rightarrow\infty$ subject to \eqref{stoke:1},
\begin{equation*}
	e^{\sigma t}I\big(\z(\lambda)\big)=-\frac{e^{\sigma t}}{2\pi\im}\sum_{n=0}^{\lfloor\frac{q-1}{2}\rfloor}\z^{-2n-1}(\lambda)\,\Gamma\left(\frac{1}{2}+n\right)+\mathcal{O}\left(t^{\max\{f_3-\frac{1}{2}-\lfloor\frac{q+1}{2}\rfloor),-\frac{1}{2}-\lfloor\frac{q+1}{2}\rfloor\}}\right),
\end{equation*}
and the last term is small because of our assumption $f_3\in I_q$. We now factorize \eqref{orimatch},
\begin{align}
	F(\lambda)&=\left(I-\im\epsilon\,e^{\sigma t}I\big(\z(\lambda)\big)B_0(\lambda)\sigma_+\big(B_0(\lambda)\big)^{-1}\right)\Upsilon(\lambda)\nonumber\\
	&=B_0(\lambda)\left(I+\frac{\epsilon}{2\pi}e^{\sigma t}\sum_{n=0}^{\lfloor\frac{q-1}{2}\rfloor}\z^{-2n-1}(\lambda)\,\Gamma\left(\frac{1}{2}+n\right)\sigma_+\right)\big(B_0(\lambda)\big)^{-1}\label{tedious}\\
	&\times B_0(\lambda)\left(I-\frac{\epsilon}{2\pi}e^{\sigma t}\z^{-q}(\lambda)\int_{\mathbb{R}}\frac{\mu^qe^{-\mu^2}}{\mu-\z(\lambda)}\,\d\mu\, \sigma_+\right)\big(B_0(\lambda)\big)^{-1}\Upsilon(\lambda)
	\equiv \big(E(\lambda)\times \hat{F}(\lambda)\big)\Upsilon(\lambda)\nonumber
\end{align}
and go back to Section \ref{near}. After the explicit transformations
\begin{equation*}
	Y(\lambda)\mapsto X(\lambda)\mapsto T(\lambda)\mapsto\varpi(\lambda)\mapsto\xi(\lambda)
\end{equation*}
we use another: With \eqref{ratio:2} and Figure \ref{ratio2}, let
\begin{equation*}
	\Theta(\lambda)=\begin{cases}
		\xi(\lambda)E(\lambda),&|\lambda|<r\\
		\xi(\lambda),&|\lambda|>r
		\end{cases}
\end{equation*}
so that by definition, the function $\Theta(\lambda)$ solves the following (singular) RHP.
\begin{problem}\label{singRHP} Determine a $2\times 2$ matrix valued function $\Theta(\lambda)$ such that
\begin{itemize}
	\item $\Theta(\lambda)$ is analytic for $\lambda\in\mathbb{C}\backslash(\Sigma_{\xi}\cup\{0\})$
	\item Along $\Sigma_{\xi}$ we have jumps, $\Theta_+(\lambda)=\Theta_-(\lambda)G_{\Theta}(\lambda)$, with
	\begin{equation}\label{unjump}
		G_{\Theta}(\lambda)=\begin{cases}
			\hat{F}(\lambda),&|\lambda|=r\\
			G_{\xi}(\lambda),& \lambda\in C_{\ell_3}\cup C_{r_3}\cup \Sigma_{\varpi_{\infty}}
			\end{cases}
	\end{equation}
	\item $\Theta(\lambda)$ has a pole of order $2\lfloor\frac{q-1}{2}\rfloor+1$ at $\lambda=0$ provided $q\in\mathbb{Z}_{\geq 1}$. In order to be more precise about the singular structure, notice that for $q\in\mathbb{Z}_{\geq 1}$, 
	\begin{equation*}
		\xi(\lambda)=\Theta(\lambda)\big(E(\lambda)\big)^{-1}=\left[\sum_{k=1}^{n(q)}\frac{\Theta_{-k}}{\lambda^k}+\sum_{\ell=0}^{n(q)-1}\left(\lambda^{n(q)}\Theta(\lambda)\right)^{(n(q)+\ell)}\bigg|_{\lambda=0}\frac{\lambda^{\ell}}{(n(q)+\ell)!}\right]\big(E(\lambda)\big)^{-1},
	\end{equation*}
	with $n(q)=2\lfloor\frac{q-1}{2}\rfloor+1$, is analytic at $\lambda=0$. Thus with \eqref{tedious} and comparison, we can express the full singular part $\{\Theta_{-k}\}$ in terms of the first $n(q)$ terms of the regular part of $\Theta(\lambda)$. This is done explicitly for $q\in\{1,2\}$,
	\begin{equation}\label{resrel}
		\Theta_{-1}\equiv\res_{\lambda=0}\Theta(\lambda)=-\big(\lambda\Theta(\lambda)\big)'\Big|_{\lambda=0}\frac{1}{2}\begin{pmatrix}
		\im\epsilon & 1\\
		1 & -\im\epsilon\\
		\end{pmatrix}\frac{p}{1-\frac{\epsilon p}{\sqrt{2}}},\ \ \ p=-\frac{\epsilon}{2\pi}\sqrt{\frac{\pi}{2}}\frac{e^{\sigma t}}{(t\sqrt{2})^{\frac{1}{2}}},
	\end{equation}
	 but becomes much more involved in case $q\in\mathbb{Z}_{\geq 3}$; we shall postpone the general case $q\in\mathbb{Z}_{\geq 3}$ to a forthcoming publication.
	\item As $\lambda\rightarrow\infty$, we have $\Theta(\lambda)\rightarrow I$.
\end{itemize}
\end{problem}
The jump matrices in RHP \ref{singRHP} are asymptotically close to the unit matrix, however $\Theta(\lambda)$ has a pole at the origin. Such type of problems have been analyzed in nonlinear steepest descent literature, see e.g. \cite{BI,BI2}, at least for the case of a first order pole. Following loc. cit., we first note that \eqref{resrel} is equivalently stated as
\begin{equation*}
	\Theta(\lambda)=\widehat{\Theta}(\lambda)\begin{pmatrix}
	1 & -\frac{\kappa}{\lambda}\\
	0 & 1
	\end{pmatrix}T^{-1},\ \ \ |\lambda|<r;\hspace{1cm}T=\frac{1}{\sqrt{2}}\begin{pmatrix}
	1 & -\im\epsilon\\
	-\im\epsilon & 1
	\end{pmatrix},\ \ \ \kappa=\frac{p}{1-\frac{\epsilon p}{\sqrt{2}}}
\end{equation*} 
where $\widehat{\Theta}(\lambda)$ is analytic at $\lambda=0$. Hence, $\Theta(\lambda)$ is indeed characterized uniquely by the properties stated in RHP \ref{singRHP}. Secondly we can resolve the singular structure  for $q\in\mathbb{Z}_{\geq 1}$ via another transformation, the {\it dressing transformation},
\begin{equation}\label{stoke:2}
	\Theta(\lambda)=\left(\lambda^{2\lfloor\frac{q-1}{2}\rfloor+1}I+\lambda^{2\lfloor\frac{q-1}{2}\rfloor}B_{2\lfloor\frac{q-1}{2}\rfloor}+\ldots+\lambda B_1+B_0\right)\Xi(\lambda)\frac{1}{\lambda^{2\lfloor\frac{q-1}{2}\rfloor+1}},\ \ \lambda\in\mathbb{C}\backslash\Sigma_{\xi}
\end{equation}
where $\{B_j\}_{j=0}^{2\lfloor\frac{q-1}{2}\rfloor}$ are $\lambda$-independent. The function $\Xi(\lambda)$ is characterized as follows.
\begin{problem}\label{dressRHP} Determine a $2\times 2$ matrix valued function $\Xi(\lambda)$ subject to the following conditions
\begin{itemize}
	\item $\Xi(\lambda)$ is analytic for $\lambda\in\mathbb{C}\backslash\Sigma_{\xi}$
	\item We have that $\Xi_+(\lambda)=\Xi_-(\lambda)G_{\Theta}(\lambda)$ for $\lambda\in\Sigma_{\xi}$, with $G_{\Theta}(\lambda)$ given in \eqref{unjump}.
	\item As $\lambda\rightarrow\infty$, 
	\begin{equation*}
		\Xi(\lambda)=I+\mathcal{O}\left(\lambda^{-1}\right).
	\end{equation*}
\end{itemize}
\end{problem}
Notice that the unknowns $B_j$ are algebraically determined from the singular structure in the $\Theta$-RHP and \eqref{stoke:2}, i.e. for $q\in\{1,2\}$ we have
\begin{equation*}
	B_0=-\frac{\kappa}{2}\,\Xi(0)\begin{pmatrix}
	\im\epsilon & 1\\
	1 & -\im\epsilon\\
	\end{pmatrix}\left(\Xi(0)+\frac{\kappa}{2}\,\Xi'(0)\begin{pmatrix}
	\im\epsilon & 1\\
	1 & -\im\epsilon\\
	\end{pmatrix}\right)^{-1}
\end{equation*}
or for more general $q\in\mathbb{Z}_{\geq 3}$ equations for the remaining unknowns in term of $\Xi(0)$ and higher derivatives thereof.
Observe that all jumps in the $\Xi$-RHP are close to the identity since
\begin{equation*}
	\|G_{\Theta}(\cdot;t,|s_1|)-I\|_{L^2\cap L^{\infty}(\Sigma_{\xi})}\leq c\,t^{\max\{f_3-\frac{1}{2}-\lfloor\frac{q+1}{2}\rfloor,-\frac{1}{2}-\lfloor\frac{q+1}{2}\rfloor,-1\}},\ \ c>0
\end{equation*}
uniformly for $t\rightarrow\infty,|s_1|\uparrow 1$ such that $\varkappa\geq\frac{2}{3}\sqrt{2}-f_3\frac{\ln t}{t}$ and $-\infty<f_3<\frac{1}{2}(q+1)$. 
\subsection{Extraction of asymptotics} The RHP \ref{dressRHP} is iteratively solvable and we have for its solution
\begin{equation*}
	\|\Xi_-(\cdot;t,|s_1|)-I\|_{L^2(\Sigma_{\xi})}\leq c\,t^{\max\{f_3-\frac{1}{2}-\lfloor\frac{q+1}{2}\rfloor,-\frac{1}{2}-\lfloor\frac{q+1}{2}\rfloor,-1\}}.
\end{equation*}
Notice that from a direct computation, by tracing back the transformations,
\begin{equation*}
	Y(\lambda)\mapsto X(\lambda)\mapsto Z(\lambda)\mapsto T(\lambda)\mapsto\varpi(\lambda)\mapsto\xi(\lambda)\mapsto\Theta(\lambda)\mapsto\Xi(\lambda),
\end{equation*}
we get
\begin{equation*}
	u(x|s)=\sqrt{-x}\left[-\frac{\epsilon}{\sqrt{2}}+2B_{2\lfloor\frac{q-1}{2}\rfloor}^{12}+\frac{\im}{\pi}\int_{\Sigma_{\Theta}}\Big(\Xi_-(w)\big(G_{\Theta}(w)-I\big)\Big)_{12}\d w\right],\ \ q\in\mathbb{Z}_{\geq 0}:\ \ B_{-2}\equiv 0.
\end{equation*}
We will now continue with the special case $q\in\{1,2\}$, i.e. we compute
\begin{equation}\label{inter:1}
	u(x|s)=\sqrt{-x}\left[-\frac{\epsilon}{\sqrt{2}}+2B_0^{12}-\frac{3p}{16\sqrt{2}\,t}+\frac{\epsilon}{8\sqrt{2}\,t^2}+\mathcal{O}\left(t^{\max\{2f_3-3,f_3-\frac{5}{2},-3\}}\right)\right].
\end{equation}
Notice that
\begin{equation*}
	\Xi(0)=I+\mathcal{O}\left(t^{\max\{f_3-\frac{3}{2},-1\}}\right),\ \ \ \Xi'(0)=\mathcal{O}\left(t^{\max\{f_3-\frac{3}{2},-1\}}\right),
\end{equation*}
and thus
\begin{equation}\label{B0value}
	B_0=-\frac{\kappa}{2}\left[\begin{pmatrix}
	\im\epsilon & 1\\
	1 & -\im\epsilon
	\end{pmatrix}+\mathcal{O}\left(t^{\max\{f_3-\frac{3}{2},-1\}}\right)\right].
\end{equation}
From this it follows in \eqref{inter:1} that
\begin{equation*}
	u(x|s)=-\epsilon\sqrt{-\frac{x}{2}}\,\left(\frac{1+\frac{\epsilon p}{\sqrt{2}}}{1-\frac{\epsilon p}{\sqrt{2}}}\right) +J_3(x,s),\ \ \ \frac{\epsilon p}{\sqrt{2}}=-\frac{1}{2\pi}\sqrt{\frac{\pi}{2}}\frac{e^{\sigma t}}{(2\sqrt{2}\,t)^{\frac{1}{2}}},\ \ \ \sigma=\frac{2}{3}\sqrt{2}-\varkappa
\end{equation*}
and
\begin{prop} For any given $f_3\in(-\infty,\frac{7}{6})$ there exist positive $t_0=t_0(f_3),v_0=v_0(f_3)$ and $c=c(f_3)$ such that
\begin{equation*}
	\big|J_3(x,s)\big|\leq ct^{-\min\{\frac{7}{6}-f_3,\frac{2}{3}\}},\ \ \ \forall\, t\geq t_0,\ v\geq v_0,\ \ \ v\geq\frac{2}{3}\sqrt{2}\,t-f_3\ln t.
\end{equation*}
\end{prop}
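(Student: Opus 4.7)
The plan is to modify the nonlinear steepest descent analysis of Section~\ref{near} (used for Theorem~\ref{res:4}) by isolating the degeneration of the origin parametrix along the scale $\varkappa\ge \frac{2}{3}\sqrt 2-f_3\frac{\ln t}{t}$. Subject to this constraint, one has $e^{\sigma t}\leq\max\{1,t^{f_3}\}$, so the matching estimate $F(\lambda)=(I+\mathcal{O}(t^{-1/2}))\Upsilon(\lambda)$ in \eqref{orimatch} is no longer automatically small: the factor $\frac{e^{\sigma t}}{\zeta(\lambda)}B_0(\lambda)\sigma_+B_0^{-1}(\lambda)$ near $\lambda=0$ may be of size $t^{f_3-1/2}$, which is not negligible once $f_3>\tfrac12$. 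The remedy is to keep this factor, together with finitely many terms from the asymptotic expansion of the Cauchy integral $I(\zeta)$ of Lemma~\ref{princip}, inside the approximate solution, and to treat only the genuinely small remainder by a small-norm argument.

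Concretely, I would fix $q\in\mathbb{Z}_{\geq 0}$ with $f_3<\tfrac{q+1}{2}$ and, using the asymptotic expansion
\[
I(\zeta)=-\frac{1}{2\pi\im\zeta}\sum_{n=0}^{\lfloor (q-1)/2\rfloor}\zeta^{-2n}\,\Gamma(\tfrac12+n)+\frac{\zeta^{-q}}{2\pi\im}\int_{\mathbb{R}}\frac{\mu^q e^{-\mu^2}}{\mu-\zeta}\,\d\mu,
\]
factor $F(\lambda)=E(\lambda)\hat F(\lambda)\Upsilon(\lambda)$ where $E(\lambda)$ absorbs the polar part and $\hat F(\lambda)=I+\mathcal{O}(t^{f_3-\frac12-\lfloor(q+1)/2\rfloor})$. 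I then perform one further transformation $\Theta(\lambda)=\xi(\lambda)E(\lambda)$ for $|\lambda|<r$ and $\Theta(\lambda)=\xi(\lambda)$ otherwise. The resulting $\Theta$-RHP has jumps $G_\Theta$ that are $L^2\cap L^\infty$-close to the identity (uniformly in the admissible range of $f_3$), but $\Theta$ has a pole at $\lambda=0$ of order $n(q)=2\lfloor(q-1)/2\rfloor+1$ whose singular part is algebraically locked to the regular part via the structure of $E$.

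To disentangle this singular structure I would apply the dressing transformation
\[
\Theta(\lambda)=\Big(\lambda^{n(q)}I+\lambda^{n(q)-1}B_{n(q)-1}+\cdots+B_0\Big)\Xi(\lambda)\,\lambda^{-n(q)},
\]
where the coefficients $B_j$ are determined algebraically from $\Xi(0)$ and its first $n(q)-1$ derivatives so that $\Theta$ has the prescribed polar behavior. The function $\Xi(\lambda)$ then satisfies an ordinary small-norm RHP with the same jumps as $\Theta$, and general small-norm theory \cite{DZ1} (as used in Section~\ref{iterativeeasy}) yields
\[
\|\Xi_-(\cdot;t,|s_1|)-I\|_{L^2(\Sigma_\xi)}\leq c\,t^{\max\{f_3-\frac12-\lfloor(q+1)/2\rfloor,\,-1\}}.
\]
In particular $\Xi(0)=I+\mathcal{O}(t^{\max\{f_3-3/2,-1\}})$, $\Xi'(0)=\mathcal{O}(t^{\max\{f_3-3/2,-1\}})$, from which $B_0$ can be extracted explicitly; for $q\in\{1,2\}$ this gives the closed form recorded in \eqref{resrel}--\eqref{B0value}. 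Tracing back through $Y\mapsto X\mapsto Z\mapsto T\mapsto\varpi\mapsto\xi\mapsto\Theta\mapsto\Xi$ and applying \eqref{FN} yields $u(x|s)=\sqrt{-x}\big[-\tfrac{\epsilon}{\sqrt 2}+2B_{n(q)-1}^{12}+\mathcal{O}(\cdot)\big]$, and the stated formula with the rational factor $(1+\epsilon p/\sqrt 2)/(1-\epsilon p/\sqrt 2)$ comes out upon summing the geometric contribution encoded in $\kappa=p/(1-\epsilon p/\sqrt 2)$.

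The main obstacle, and the reason I would restrict to $q\in\{1,2\}$ (hence $f_3<\tfrac76$), is the algebraic bookkeeping in the dressing transformation: expressing $B_0,\dots,B_{n(q)-1}$ in closed form in terms of $\Xi^{(k)}(0)$ becomes quickly cumbersome as $n(q)$ grows, and in particular the $\kappa$-geometric resummation that produces the denominator $1-\epsilon p/\sqrt 2$ must be verified to survive the higher-order corrections. A secondary technical point is that one must verify $\Theta$ is uniquely determined by its jump data together with the polar constraint — i.e.\ that the dressing ansatz does not introduce spurious solutions — which follows by observing that the singular part of $\Theta$ can be written in the triangular form $\Theta(\lambda)=\widehat\Theta(\lambda)\big(\begin{smallmatrix} 1 & -\kappa/\lambda\\ 0 & 1\end{smallmatrix}\big)T^{-1}$ near $\lambda=0$ with $\widehat\Theta$ analytic, so that the problem admits the standard vanishing-lemma uniqueness argument. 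Once uniqueness and solvability are in hand, the estimate for $J_3(x,s)$ stated in the theorem is a direct consequence of $\|\Xi_--I\|_{L^2}\leq ct^{\max\{f_3-7/6,-2/3\}}$ combined with the residue computations for the circle integrals.
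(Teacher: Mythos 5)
Your proposal mirrors the paper's proof in Section~\ref{stokes} essentially verbatim: the $q$-truncation of $I(\z)$ from Lemma~\ref{princip}, the factorization $F=E\hat F\,\Upsilon$, the passage to the singular $\Theta$-RHP, the dressing transformation producing the small-norm $\Xi$-RHP, the uniqueness argument via the triangular normalization at $\lambda=0$, and the restriction to $q\in\{1,2\}$ are all exactly the paper's steps. The only small slip is attributing the exponent $\max\{f_3-\frac{7}{6},-\frac{2}{3}\}$ directly to $\|\Xi_--I\|_{L^2}$; the paper's $L^2$ bound is $\mathcal O(t^{\max\{f_3-3/2,\,-1\}})$, and the $-\frac{7}{6}$, $-\frac{2}{3}$ exponents only arise after multiplying by the overall $\sqrt{-x}=t^{1/3}$ factor in the recovery of $u(x|s)$.
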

If we were to choose $f_3\in(-\infty,\frac{1}{6})$, then $p=\mathcal{O}(t^{f_3-\frac{1}{2}})=o(1)$ and we restore the leading order Hastings-McLeod asymptotics,
\begin{equation*}
	u(x|s)=-\epsilon\sqrt{-\frac{x}{2}}+J_3(x,s),\hspace{1cm}\big|J_3(x,s)\big|\leq ct^{-\min\{\frac{1}{6}-f_3,\frac{2}{3}\}}.
\end{equation*}
In other words,
\begin{equation*}
	\mathcal{S}_1:\ \ v=\frac{2}{3}\sqrt{2}\,t-\frac{1}{6}\ln t
\end{equation*}
forms the first Stokes line and we have completed the proof of Theorem \ref{res:5}.
\subsection{Proof of Corollary \ref{eig0}}\label{eigproof}
Observe that \eqref{inter:1} and \eqref{B0value} allow us to derive a slightly better expansion than the one given in Theorem \ref{res:5}, as $x\rightarrow-\infty,|s_1|\uparrow 1$,
\begin{equation}\label{tstok}
	u(x|s)=-\epsilon\sqrt{-\frac{x}{2}}\left[\left(\frac{1+\frac{\epsilon p}{\sqrt{2}}}{1-\frac{\epsilon p}{\sqrt{2}}}\right)+\frac{1}{8x^3}+f(x,\gamma)\right]
\end{equation}
where, uniformly for $\varkappa\geq\frac{2}{3}\sqrt{2}-f_3\frac{\ln t}{t}, f_3\in(-\infty,\frac{7}{6})$,
\begin{equation}\label{tstok2}
	\big|f(x,\gamma)\big|\leq c\begin{cases}
		\max\{t^{-2}e^{2\sigma t},t^{-3}\},&f_3\in(-\infty,\frac{1}{2}]\\
		\max\{t^{-\frac{3}{2}}e^{\sigma t},t^{-3}\},&f_3\in(\frac{1}{2},\frac{7}{6}).
		\end{cases}
\end{equation}
Next, we use identity \eqref{Meh} and combine it with \eqref{tstok}, 
\begin{equation*}
	-\frac{\partial^2}{\partial x^2}\ln\det\big(I-\gamma K_{\textnormal{Ai}}\big)\Big|_{L^2(x,\infty)} = u^2_{\scriptscriptstyle{\textnormal{AS}}}(x;\gamma) = -\frac{x}{2}\left(\frac{1+\frac{\epsilon p}{\sqrt{2}}}{1-\frac{\epsilon p}{\sqrt{2}}}\right)^2-\frac{1}{8x^2}+R_1(x,\gamma)
\end{equation*}
and from \eqref{tstok2}, noting that $e^{\sigma t}=e^{-v}e^{\frac{2}{3}\sqrt{2}t}$ with $v$ fixed,
\begin{equation}\label{tstok3}
	\iint\big|R_1(x,\gamma)\big|\d^2 x=\mathcal{O}\left(t^{-\min\{\frac{2}{3},\frac{3}{2}-f_3\}}\right),\ \ \ x\rightarrow-\infty,\ \ |s_1|\uparrow 1,\ \ \varkappa\geq\frac{2}{3}\sqrt{2}-f_3\frac{\ln t}{t},\ \ \  f_3\in\left(-\infty,\frac{7}{6}\right).
\end{equation}
However, by straightforward computation,
\begin{equation*}
	\frac{\partial^2}{\partial x^2}\ln\left(1-\frac{\epsilon p}{\sqrt{2}}\right)=\frac{x}{2}\left(\frac{1+\frac{\epsilon p}{\sqrt{2}}}{1-\frac{\epsilon p}{\sqrt{2}}}\right)^2-\frac{x}{2}+R_2(x,\gamma)
\end{equation*}
and $R_2(x,\gamma)$ satisfies also \eqref{tstok3}. Hence, we have shown that for $x\rightarrow-\infty,|s_1|\uparrow 1$,
\begin{equation*}
	\ln\det\big(I-\gamma K_{\textnormal{Ai}}\big)\Big|_{L^2(x,\infty)}=\frac{x^3}{12}-\frac{1}{8}\ln|x|+ax+b+\ln\left(1-\frac{\epsilon p}{\sqrt{2}}\right)+\mathcal{O}\left(t^{-\min\{\frac{2}{3},\frac{3}{2}-f_3\}}\right),
\end{equation*}
uniformly for $\varkappa\geq\frac{2}{3}\sqrt{2}-f_3\frac{\ln t}{t}, f_3\in(-\infty,\frac{7}{6})$ with some $x$-independent constants $a=a(\gamma),b=b(\gamma)$. These can be computed as in the proof of Corollary \ref{nice}, i.e. we use
\begin{equation*}
	\frac{\partial}{\partial x}\ln\det\big(I-\gamma K_{\textnormal{Ai}}\big)\Big|_{L^2(x,\infty)} = \int_x^{\infty}u^2_{\scriptscriptstyle{\textnormal{AS}}}(y;\gamma)\d y=\big(u_{\scriptscriptstyle{\textnormal{AS}}}'(x;\gamma)\big)^2-xu_{\scriptscriptstyle{\textnormal{AS}}}^2(x;\gamma)-u_{\scriptscriptstyle{\textnormal{AS}}}^4(x;\gamma),\ \ x\in\mathbb{R}.
\end{equation*}
together with \eqref{tstok} to conclude $a(\gamma)=0$ and the limit $\gamma\uparrow 1$, with $t\geq t_0$ fixed, yields
\begin{equation*}
	\lim_{\gamma\uparrow 1}b(\gamma)=\ln c_0=\frac{1}{24}\ln 2+\zeta'(-1).
\end{equation*}

\section{Singular transition analysis for $\varkappa\in[\delta,\frac{2}{3}\sqrt{2}-\delta]$ with $0<\delta<\frac{1}{3}\sqrt{2}$ fixed}\label{singsec:1}

Several steps of the analysis carried out below have their natural counterpart in the regular analysis. One major difference however lies in the construction of a new outer parametrix which naturally involves the exceptional set $\mathcal{Z}_n$ as defined in \eqref{exceptset} below. First we let 
\begin{equation*}
	\varkappa=\frac{v}{t}\in(0,\infty);\ \ \ \ \ t=(-x)^{\frac{3}{2}}>0,\ \ 0<v=-\ln\big(|s_1|^2-1\big)\equiv-2\pi\widehat{\beta},\ \ \ 1<|s_1|<\sqrt{2}
\end{equation*}
and keep $\varkappa\in[\delta,\frac{2}{3}\sqrt{2}-\delta]$ with $0<\delta<\frac{1}{3}\sqrt{2}$ fixed. Also here, the following expansions will prove useful.
\subsection{Preliminary expansions}
Note that
\begin{equation*}
	\Re(s_1)=-\frac{s_2}{2}e^{-\varkappa t},\hspace{1cm} |s_1|=1+\frac{1}{2}e^{-\varkappa t}+\mathcal{O}\left(e^{-2\varkappa t}\right)
\end{equation*}
and
\begin{equation*}
	s_1=e^{\im\textnormal{arg}(s_1)}\left(1+\frac{1}{2}e^{-\varkappa t}+\mathcal{O}\left(e^{-2\varkappa t}\right)\right)=\bar{s}_3,\ \ 
	s_1+s_1(1-s_1s_3)=e^{\im\textnormal{arg}(s_1)}\left(1-\frac{1}{2}e^{-\varkappa t}+\mathcal{O}\left(e^{-2\varkappa t}\right)\right).
\end{equation*}
Moreover,
\begin{equation*}
	\textnormal{arg}(s_1)=\frac{\epsilon\pi}{2}\left(1+\frac{s_2}{\pi}e^{-\varkappa t}+\mathcal{O}\left(e^{-2\varkappa t}\right)\right)
\end{equation*}
so that
\begin{prop} As $t\rightarrow\infty,|s_1|\downarrow 1$ with $e^{\im\frac{\pi}{2}\epsilon}=\im\epsilon$,
\begin{equation*}
	s_1=\im\epsilon\left(1+\frac{1}{2}\left(1+\im\epsilon s_2\right)e^{-\varkappa t}+\mathcal{O}\left(e^{-2\varkappa t}\right)\right),\ \ s_3=-\im\epsilon\left(1+\frac{1}{2}\left(1-\im\epsilon s_2\right)e^{-\varkappa t}+\mathcal{O}\left(e^{-2\varkappa t}\right)\right),
\end{equation*}
\begin{equation*}
	s_1+s_1(1-s_1s_3)=\im\epsilon\left(1-\frac{1}{2}\left(1-\im\epsilon s_2\right)e^{-\varkappa t}+\mathcal{O}\left(e^{-2\varkappa t}\right)\right).
\end{equation*}
uniformly for $\varkappa\in[\delta,\frac{2}{3}\sqrt{2}-\delta]$ with $0<\delta<\frac{1}{3}\sqrt{2}$ fixed.
\end{prop}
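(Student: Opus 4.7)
The statement is an elementary consequence of the cyclic constraints \eqref{cyclic} combined with the defining relation $|s_1|^2-1=e^{-\varkappa t}$, and my plan is to reduce everything to one Taylor expansion and one algebraic identity.

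First, I would exploit $s_1=\bar s_3$ and $s_2\in\mathbb{R}$ to rewrite the cyclic relation $s_1-s_2+s_3+s_1s_2s_3=0$ as $2\Re(s_1)=s_2(1-|s_1|^2)=-s_2\,e^{-\varkappa t}$, which gives $\Re(s_1)=-\frac{s_2}{2}e^{-\varkappa t}$. Simultaneously, $|s_1|=\sqrt{1+e^{-\varkappa t}}=1+\frac12e^{-\varkappa t}+\mathcal{O}(e^{-2\varkappa t})$ uniformly on the compact range $\varkappa\in[\delta,\frac{2}{3}\sqrt{2}-\delta]$, since $e^{-\varkappa t}\leq e^{-\delta t}$.

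Next I would expand the argument. Writing $s_1=|s_1|e^{\im\,\textnormal{arg}(s_1)}$ with $\textnormal{arg}(s_1)=\epsilon\arccos(\Re(s_1)/|s_1|)$, and using $\arccos(y)=\frac{\pi}{2}-y+\mathcal{O}(y^3)$ together with $\Re(s_1)/|s_1|=-\frac{s_2}{2}e^{-\varkappa t}+\mathcal{O}(e^{-2\varkappa t})$, I obtain
\begin{equation*}
    \textnormal{arg}(s_1)=\frac{\epsilon\pi}{2}+\epsilon\frac{s_2}{2}e^{-\varkappa t}+\mathcal{O}\!\left(e^{-2\varkappa t}\right),
\end{equation*}
uniformly on $\varkappa\in[\delta,\frac{2}{3}\sqrt{2}-\delta]$. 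Using $e^{\im\frac{\pi}{2}\epsilon}=\im\epsilon$ and expanding the exponential,
\begin{equation*}
    e^{\im\,\textnormal{arg}(s_1)}=\im\epsilon\left(1+\im\epsilon\frac{s_2}{2}e^{-\varkappa t}+\mathcal{O}\!\left(e^{-2\varkappa t}\right)\right),
\end{equation*}
and multiplying with the expansion of $|s_1|$ yields
\begin{equation*}
    s_1=\im\epsilon\left(1+\tfrac{1}{2}(1+\im\epsilon s_2)e^{-\varkappa t}+\mathcal{O}\!\left(e^{-2\varkappa t}\right)\right),
\end{equation*}
and $s_3=\overline{s_1}$ produces the companion formula.

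Finally, for the third expansion I would use the algebraic identity $s_1+s_1(1-s_1s_3)=s_1(2-|s_1|^2)=s_1(1-e^{-\varkappa t})$, obtained directly from $s_1s_3=|s_1|^2=1+e^{-\varkappa t}$. Substituting the expansion for $s_1$ already derived and collecting powers of $e^{-\varkappa t}$ gives the stated formula, with the factor $(1-\frac{1}{2}(1-\im\epsilon s_2)e^{-\varkappa t})$ emerging from the cancellation between $+\frac{1}{2}(1+\im\epsilon s_2)e^{-\varkappa t}$ (from $s_1$) and $-e^{-\varkappa t}$ (from the $(1-e^{-\varkappa t})$ factor). There is no real obstacle here; the only care needed is that all $\mathcal{O}(e^{-2\varkappa t})$ error terms are uniform in $\varkappa\in[\delta,\frac{2}{3}\sqrt{2}-\delta]$, which follows because $\varkappa t\geq \delta t\to\infty$ controls every higher order term in the Taylor series uniformly.
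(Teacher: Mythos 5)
Your proof is correct and follows essentially the same route as the paper's: both derive $\Re(s_1)=-\tfrac{s_2}{2}e^{-\varkappa t}$ from the cyclic constraint, expand $|s_1|=\sqrt{1+e^{-\varkappa t}}$ and $\arg(s_1)$, combine to get $s_1$, take conjugates for $s_3$, and reduce the third expression via $s_1+s_1(1-s_1s_3)=s_1(2-|s_1|^2)=s_1(1-e^{-\varkappa t})$. The uniformity observation ($\varkappa t\geq\delta t\to\infty$) is exactly what is needed; no gap.
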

\subsection{g-function transformation}  The bijective correspondence between $\varkappa\in(0,\frac{2}{3}\sqrt{2})$ and $\k\in(0,1)$ has already been established in Proposition \ref{modex}, and all local expansions in Corollary \ref{cor1} can directly be applied to the singular transition analysis. With this in mind we define (formally just as in \eqref{gf:1}),
\begin{equation*}
	g(z)=4\im\int_M^z\big((\mu^2-M^2)(\mu^2-m^2)\big)^{\frac{1}{2}}\,\d\mu,\ \ \ z\in\mathbb{C}\backslash[-M,M]
\end{equation*}
where
\begin{equation*}
	0<m=\frac{1}{\sqrt{2}}\frac{\k}{\sqrt{1+\k^2}}<\frac{1}{2}<M=\frac{1}{\sqrt{2}}\frac{1}{\sqrt{1+\k^2}}<\frac{1}{\sqrt{2}},
\end{equation*}
and $\k=\k(\varkappa)$ is determined through \eqref{inteq:1}. The relevant analytical properties of $g(z)$ are summarized in Proposition \ref{gprop} and we choose in the $T$-RHP \eqref{Tfunc} (as before),
\begin{equation*}
	\lambda^{\ast}=M.
\end{equation*}
After that, introduce,
\begin{equation*}
	S(\lambda)=e^{-t\ell\sigma_3}T(\lambda)e^{t(g(\lambda)-\vartheta(\lambda))\sigma_3},\ \ \ \lambda\in\mathbb{C}\backslash\Sigma_T
\end{equation*}
where we shall use the same notation $S(\lambda)$ for the transformed function as in the regular transition analysis. Observe that, for $\lambda\in(m,M)$,
\begin{equation*}
	S_+(\lambda)=S_-(\lambda)\begin{pmatrix}
	-e^{-t(\varkappa-\Pi(\lambda))} & -s_3\\
	s_1(1-e^{-\varkappa t}) & -e^{-t(\varkappa+\Pi(\lambda))}\\
	\end{pmatrix},\ \ \varkappa-\Pi(\lambda)=8\int_m^{\lambda}\sqrt{(M^2-\mu^2)(\mu^2-m^2)}\,\d\mu>0,
\end{equation*}
so that (as before)
\begin{equation*}
	G_S(\lambda)\begin{pmatrix}	
	0&-e^{-\im\frac{\pi}{2}\epsilon} \\
	 e^{\im\frac{\pi}{2}\epsilon}& 0\\
	 \end{pmatrix}^{-1}\rightarrow I,\ \ x\rightarrow-\infty,\ |s_1|\downarrow 1:\ \ \varkappa\in\left[\delta,\frac{2}{3}\sqrt{2}-\delta\right],\ \delta>0.
\end{equation*}
Similarly, for $\lambda\in(-M,-m)$,
\begin{equation*}
	S_+(\lambda) = S_-(\lambda)\begin{pmatrix}
	-e^{-t(\varkappa-\Pi(\lambda))} & s_1(1-e^{-\varkappa t})e^{\im t\Omega(\lambda)}\\
	-s_3e^{-\im t\Omega(\lambda)} & -e^{-t(\varkappa+\Pi(\lambda))}\\
	\end{pmatrix},\ \ \varkappa-\Pi(\lambda)=8\int_{\lambda}^{-m}\sqrt{(M^2-\mu^2)(\mu^2-m^2)}\,\d\mu>0,
\end{equation*}
hence (as before)
\begin{equation*}
	G_S(\lambda)\begin{pmatrix}
	0 & e^{\im\frac{\pi}{2}\epsilon+\im t\Omega(\lambda)}\\
	-e^{-\im\frac{\pi}{2}\epsilon-\im t\Omega(\lambda)} & 0\\
	\end{pmatrix}^{-1}\rightarrow I,\ \ x\rightarrow-\infty,\ |s_1|\downarrow 1:\ \ \varkappa\in\left[\delta,\frac{2}{3}\sqrt{2}-\delta\right],\ \delta>0.
\end{equation*}
The only difference occurs in the {\it gap}, first for $\lambda\in(0,m)$,
\begin{eqnarray*}
	S_+(\lambda)&=&S_-(\lambda)\begin{pmatrix}
	-1 & -s_3e^{\im t\Omega(\lambda)}\\
	(s_1+s_1(1-s_1s_3))e^{-\im t\Omega(\lambda)} & -(s_1s_3-1)^2\\
	\end{pmatrix}\\
	&=&S_-(\lambda)\begin{pmatrix}
	1 & 0\\
	-(s_1+s_1(1-s_1s_3))e^{-\im t\Omega(\lambda)} & 1\\
	\end{pmatrix}\begin{pmatrix}
	-1 & 0\\
	0 & -1\\
	\end{pmatrix}\begin{pmatrix}
	1 & s_3e^{\im t\Omega(\lambda)}\\
	0 & 1\\
	\end{pmatrix}\\
	&\equiv& S_-(\lambda)e^{\im\frac{\pi}{2}\sigma_3}S_{L_1}(\lambda)e^{-\im\frac{\pi}{2}\sigma_3}S_D(\lambda)e^{\im\frac{\pi}{2}\sigma_3}S_{U_1}(\lambda)e^{-\im\frac{\pi}{2}\sigma_3},\ \ \lambda\in(0,m)
\end{eqnarray*}
and secondly for $\lambda\in(-m,0)$,
\begin{eqnarray*}
	S_+(\lambda)&=&S_-(\lambda)\begin{pmatrix}
	-1 & (s_1+s_1(1-s_1s_3))e^{\im t\Omega(\lambda)}\\
	-s_3e^{-\im t\Omega(\lambda)} & -(s_1s_3-1)^2\\
	\end{pmatrix}\\
	&=&S_-(\lambda)\begin{pmatrix}
	1 & 0\\
	s_3e^{-\im t\Omega(\lambda)} & 1\\
	\end{pmatrix}\begin{pmatrix}
	-1 & 0\\
	0 & -1\\
	\end{pmatrix}\begin{pmatrix}
	1 & -(s_1+s_1(1-s_1s_3))e^{\im t\Omega(\lambda)}\\
	0 & 1\\
	\end{pmatrix}\\
	&=&S_-(\lambda)e^{\im\frac{\pi}{2}\sigma_3}S_{L_2}(\lambda)e^{-\im\frac{\pi}{2}\sigma_3}S_D(\lambda)e^{\im\frac{\pi}{2}\sigma_3}S_{U_2}(\lambda)e^{-\im\frac{\pi}{2}\sigma_3},\ \ \lambda\in(-m,0).
\end{eqnarray*}	
Here, $S_{L_j}(\lambda)$ and $S_{U_j}(\lambda)$ have appeared in \eqref{ob2}, \eqref{ob3} and $S_D(\lambda)=-I$. Hence, we can use Proposition \ref{openup} in the given context as well and open lens: Compare Figure \ref{figure6} and define
\begin{equation*}
	L(\lambda)=\begin{cases}
	S(\lambda)e^{\im\frac{\pi}{2}\sigma_3}S_{U_1}^{-1}(\lambda)e^{-\im\frac{\pi}{2}\sigma_3},&\lambda\in\mathcal{L}_1^+\\
	S(\lambda)e^{\im\frac{\pi}{2}\sigma_3}S_{L_1}(\lambda)e^{-\im\frac{\pi}{2}\sigma_3},&\lambda\in\mathcal{L}_1^-\\
	S(\lambda)e^{\im\frac{\pi}{2}\sigma_3}S_{U_2}^{-1}(\lambda)e^{-\im\frac{\pi}{2}\sigma_3},&\lambda\in\mathcal{L}_2^+\\
	S(\lambda)e^{\im\frac{\pi}{2}\sigma_3}S_{L_2}(\lambda)e^{-\im\frac{\pi}{2}\sigma_3},&\lambda\in\mathcal{L}_2^-\\
	S(\lambda),&\textnormal{otherwise}.
	\end{cases}
\end{equation*}
This leads to the following RHP
\begin{problem} Determine a $2\times 2$ matrix valued function $L(\lambda)$ such that
\begin{itemize}
	\item $L(\lambda)$ is analytic for $\lambda\in\mathbb{C}\backslash\big([-M,-m]\cup[-m,m]\cup[m,M]\cup\gamma_1^{\pm}\cup\gamma_2^{\pm}\cup\Sigma_{S_{\infty}}\big)$
	\item The jumps are as follows, see Figure \ref{figure6} for orientations,
	\begin{equation}\label{newLjump}
		L_+(\lambda)=L_-(\lambda)\begin{cases}
		G_S(\lambda),&\lambda\in(-M,-m)\cup(m,M)\cup\Sigma_{S_{\infty}}\\
		e^{\im\frac{\pi}{2}\sigma_3}S_{U_1}(\lambda)e^{-\im\frac{\pi}{2}\sigma_3},&\lambda\in\gamma_1^+\\
		e^{\im\frac{\pi}{2}\sigma_3}S_{L_1}(\lambda)e^{-\im\frac{\pi}{2}\sigma_3},&\lambda\in\gamma_1^-\\
		e^{\im\frac{\pi}{2}\sigma_3}S_{U_2}(\lambda)e^{-\im\frac{\pi}{2}\sigma_3},&\lambda\in\gamma_2^+\\
		e^{\im\frac{\pi}{2}\sigma_3}S_{L_2}(\lambda)e^{-\im\frac{\pi}{2}\sigma_3},&\lambda\in\gamma_2^-\\
		S_D(\lambda),&\lambda\in(-m,m)
		\end{cases}
	\end{equation}
	\item As $\lambda\rightarrow\infty$,
	\begin{equation*}
		L(\lambda)=I+\mathcal{O}\left(\lambda^{-1}\right).
	\end{equation*}
\end{itemize}
\end{problem}
The jumps in this problem along the lens boundaries $\gamma_j^{\pm}$ and along the infinite contours are exponentially close to the unit matrix in the (singular) double scaling limit as long as we stay away from the branch points $\lambda=\pm m,\pm M$ and $\lambda=0$. Compared to the $L$-RHP in the regular transition analysis, the outer parametrix will now also have a jump in the gap $(-m,m)$.
\subsection{New outer parametrix} In case of the singular transition analysis the outer model problem is posed on the full segment $(-M,M)$ and reads as follows
\begin{problem} Determine a $2\times 2$ matrix-valued piecewise analytic function $N(\lambda)=N(\lambda;t,\epsilon)$ such that
\begin{itemize}
	\item $N(\lambda)$ is analytic for $\lambda\in\mathbb{C}\backslash[-M,M]$
	\item We have jumps
	\begin{eqnarray}
		N_+(\lambda)&=&N_-(\lambda)\begin{pmatrix}
		0 & e^{\im\frac{\pi}{2}\epsilon+\im t\Omega(\lambda)}\\
		-e^{-\im\frac{\pi}{2}\epsilon-\im t\Omega(\lambda)} & 0\\
		\end{pmatrix},\ \ \lambda\in(-M,-m)\label{ouup:1}\\
		N_+(\lambda)&=&N_-(\lambda)e^{\im\pi\sigma_3},\ \ \lambda\in(-m,m)\label{ouup:2}\\
		N_+(\lambda)&=&N_-(\lambda)\begin{pmatrix}
		0 & -e^{-\im\frac{\pi}{2}\epsilon}\\
		e^{\im\frac{\pi}{2}\epsilon} & 0\\
		\end{pmatrix},\ \ \lambda\in(m,M)\label{ouup:3}
	\end{eqnarray}
	\item $N(\lambda)$ has $L^2$ boundary values on $[M,M]$
	\item As $\lambda\rightarrow\infty$,
	\begin{equation*}
		N(\lambda)=I+\mathcal{O}\left(\lambda^{-1}\right)
	\end{equation*}
\end{itemize}
\end{problem}
Notice that the function
\begin{equation*}
	\widehat{N}(\lambda)=e^{-\im\frac{\pi}{4}\epsilon\sigma_3}N(\lambda)e^{\im\frac{\pi}{4}\epsilon\sigma_3},\ \ \ \lambda\in\mathbb{C}\backslash[-M,M]
\end{equation*}
has jumps on $(-M,M)$ described as follows
\begin{eqnarray*}
	\widehat{N}_+(\lambda)&=&\widehat{N}_-(\lambda)\begin{pmatrix}
	0 & e^{\im t\Omega(\lambda)}\\
	-e^{-\im t\Omega(\lambda)} & 0\\
	\end{pmatrix},\ \ \ \lambda\in(-M,-m)\\
	\widehat{N}_+(\lambda)&=&\widehat{N}_-(\lambda)e^{\im\pi\sigma_3},\ \ \ \lambda\in(-m,m)\\
	\widehat{N}_+(\lambda)&=&\widehat{N}_-(\lambda)\begin{pmatrix}
	0 & 1\\
	-1 & 0\\
	\end{pmatrix},\ \ \ \lambda\in(m,M).
\end{eqnarray*}
Up to the presence of the diagonal jump in the gap $(-m,m)$, the jumps are thus identical to the ones which we generated in Proposition \ref{pout:3}. To compensate for the diagonal part, we slightly modify our construction of Section \ref{outer1para}: Let $u(z)$ denote again the Abel (type) map
\begin{equation*}
	u:\mathbb{CP}^1\backslash[-M,M]\rightarrow\mathbb{C},\ \ z\mapsto u(z)=\int_M^z\omega = u(\infty)-\frac{c}{z}+\mathcal{O}\left(z^{-3}\right),\ \ z\rightarrow\infty;\ \ c=c(\varkappa)=\frac{\im}{2}\frac{M}{K'}
\end{equation*}
for which we summarized certain properties in Proposition \ref{pout:1}. We require
\begin{equation}\label{gap:1}
	\widehat{N}^{(\pm)}(z)=\left(\frac{\theta\big(u(z)+tV+\frac{\tau}{2}\pm d\big)}{\theta\big(u(z)\pm d\big)},\frac{\theta(-u(z)+tV+\frac{\tau}{2}\pm d)}{\theta(-u(z)\pm d)}\right)\equiv\left(\widehat{N}_1^{(\pm)}(z),\widehat{N}_2^{(\pm)}(z)\right),
\end{equation}
where (formally as before, see also \eqref{Vfancy}),
\begin{equation*}
	V(\varkappa)\equiv V=-\frac{4}{\pi}\int_{-m}^m\sqrt{(M^2-\mu^2)(m^2-\mu^2)}\,\d\mu\equiv\frac{1}{2\pi}\Omega(z),\ \ z\in(-M,-m);\hspace{1cm}d=-\frac{\tau}{4}
\end{equation*}
and 
\begin{equation*}
	\theta(z|\tau)\equiv \theta_3(z|\tau)=\sum_{k\in\mathbb{Z}}\exp\left[\im\pi k^2\tau+2\pi\im kz\right],\ \ \ z\in\mathbb{C}
\end{equation*}
is the third Jacobi theta function. With this, we have the following analogue to Proposition \ref{pout:3} for the current situation \eqref{ouup:1}-\eqref{ouup:3}.
\begin{prop}\label{pout:4} The function
\begin{equation}\label{singouter}
	N(\lambda) = e^{\im\frac{\pi}{4}\epsilon\sigma_3}\frac{\theta(0)}{\theta(\frac{\tau}{2}+tV)}e^{-\im\pi u(\infty)\sigma_3}\begin{pmatrix}
	\widehat{N}_1^{(+)}(\lambda)\phi(\lambda) & \widehat{N}_2^{(+)}(\lambda)\hat{\phi}(\lambda)\\
	-\widehat{N}_1^{(-)}(\lambda)\hat{\phi}(\lambda)&\widehat{N}_2^{(-)}(\lambda)\phi(\lambda)\\
	\end{pmatrix}e^{\im\pi u(\lambda)\sigma_3}e^{-\im\frac{\pi}{4}\epsilon\sigma_3}
\end{equation}
is single-valued and analytic in $\mathbb{CP}^1\backslash[-M,M]$. Its jumps are stated in \eqref{ouup:1},\eqref{ouup:2} and \eqref{ouup:3}, furthermore, as $\lambda\rightarrow\infty$,
\begin{align*}
	N(\lambda)&=I+\frac{1}{\lambda}\begin{pmatrix}
	-c\frac{\theta'(\frac{\tau}{2}+tV)}{\theta(\frac{\tau}{2}+tV)}-\im\pi c & -\frac{\theta(0)}{\theta(\frac{\tau}{2}+tV)}\frac{\theta(u(\infty)-\frac{\tau}{2}-tV-d)}{\theta(u(\infty)-d)}\frac{M-m}{2\im}e^{-2\pi\im u(\infty)}e^{\im\frac{\pi}{2}\epsilon}\\
	\frac{\theta(0)}{\theta(\frac{\tau}{2}+tV)}\frac{\theta(u(\infty)+\frac{\tau}{2}+tV-d)}{\theta(u(\infty)-d)}\frac{M-m}{2\im}e^{2\pi\im u(\infty)}e^{-\im\frac{\pi}{2}\epsilon} & c\frac{\theta'(\frac{\tau}{2}+tV)}{\theta(\frac{\tau}{2}+tV)}+\im\pi c\\
	\end{pmatrix}\\
	&+\mathcal{O}\left(\lambda^{-2}\right),
\end{align*}
provided $(x,s_1)$ are uniformly bounded away from the discrete set
\begin{equation}\label{exceptset}
	\mathcal{Z}_n=\Big\{(x,s_1):\ \ 2tV(\varkappa)=n\in\mathbb{Z}\backslash\{0\}\Big\}.
\end{equation}
\end{prop}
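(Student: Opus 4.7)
\textbf{Plan for proving Proposition \ref{pout:4}.} The construction is parallel to Proposition \ref{pout:3}; the two genuinely new ingredients are the additional diagonal jump \eqref{ouup:2} on the gap $(-m,m)$ and the extra $\im\pi c$ summand appearing on the diagonal of the coefficient of $\lambda^{-1}$. My plan is to verify, in order, single-valuedness and analyticity of the middle matrix factor
\[
M(\lambda):=\begin{pmatrix}\widehat{N}_1^{(+)}(\lambda)\phi(\lambda) & \widehat{N}_2^{(+)}(\lambda)\hat{\phi}(\lambda)\\ -\widehat{N}_1^{(-)}(\lambda)\hat{\phi}(\lambda)&\widehat{N}_2^{(-)}(\lambda)\phi(\lambda)\end{pmatrix},
\]
then each of the three jump conditions \eqref{ouup:1}--\eqref{ouup:3}, and finally the expansion at infinity.

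Single-valuedness of $M(\lambda)$ on $\mathbb{CP}^1\setminus[-M,M]$ follows verbatim from the argument in Proposition \ref{pout:3}: transporting $\lambda$ along a homologically nontrivial loop shifts $u(\lambda)$ by a period $1$ or $\tau$, under which the quasi-periodicity multipliers of $\theta$ cancel identically in each ratio $\widehat N_j^{(\pm)}$, since numerator and denominator have the same linear argument. The additional shift by $\tau/2$ in the numerator is merely a translation and does not affect this cancellation. Poles of $\widehat N_j^{(\pm)}$ at zeros of the denominator $\theta(u(\lambda)\pm d)$ are removed by the square-root vanishing of $\phi,\hat\phi$ at the four branch points, exactly as in Proposition \ref{pout:3}.

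For the jumps: on $(m,M)$ the conjugating factor $e^{\im\pi u(\lambda)\sigma_3}$ is continuous (since $u_+=u_-$ there), and the computation reduces to the verbatim copy of \eqref{ou:2} in Proposition \ref{pout:3}. On $(-M,-m)$ the identity $u_+(\lambda)+u_-(\lambda)=\tau$ combined with the quasi-periodicity $\theta(z+\tau|\tau)=e^{-\im\pi\tau-2\pi\im z}\theta(z|\tau)$ produces the oscillatory off-diagonal factor $e^{\pm\im t\Omega(\lambda)}$, again as in \eqref{ou:1}; since $u_+-u_-=0$ on this segment, the conjugating exponential is continuous and contributes nothing new. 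The novelty is on $(-m,m)$: here $u_+(\lambda)-u_-(\lambda)=-1$, under which $\theta$ is invariant so that $M_+(\lambda)=M_-(\lambda)$ (using also the analyticity of $\phi,\hat\phi$ across $(-m,m)$ for the conventional branch cut $[-M,-m]\cup[m,M]$), while the conjugating exponential produces
\[
e^{\im\pi u_+(\lambda)\sigma_3}\bigl(e^{\im\pi u_-(\lambda)\sigma_3}\bigr)^{-1}=e^{-\im\pi\sigma_3}=e^{\im\pi\sigma_3},
\]
which is precisely \eqref{ouup:2}.

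For the expansion at infinity I would expand $u(\lambda)=u(\infty)-c/\lambda+\mathcal{O}(\lambda^{-3})$, use the elementary identities $u(\infty)=\tau/4$ and $d=-\tau/4$ (so that the theta arguments simplify to $\pm tV,\pm(\tau/2+tV),0,\tau/2$), and employ $\phi(\lambda)=1+\mathcal{O}(\lambda^{-2})$, $\hat\phi(\lambda)=\tfrac{M-m}{2\im\lambda}+\mathcal{O}(\lambda^{-2})$ together with $\theta_3'(0)=0$. The diagonal entry of the middle product becomes $1-\frac{c}{\lambda}\frac{\theta'(\tau/2+tV)}{\theta(\tau/2+tV)}+\mathcal{O}(\lambda^{-2})$, and expanding $e^{\pm\im\pi u(\lambda)\sigma_3}=e^{\pm\im\pi u(\infty)\sigma_3}(I\mp\im\pi c\sigma_3/\lambda+\mathcal{O}(\lambda^{-2}))$ generates the additive $\mp\im\pi c$ correction matching the stated $(1,1)$ and $(2,2)$ entries. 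The off-diagonal entries absorb the two surviving constant phases $e^{\mp 2\pi\im u(\infty)}$ that arise when the outer $e^{\pm\im\pi u(\infty)\sigma_3}$ and the leading value of $e^{\mp\im\pi u(\lambda)\sigma_3}$ multiply on off-diagonal positions, and this reproduces the displayed formula for $N_1$.

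The main obstacle, I expect, is the bookkeeping of the constant phases $e^{\pm\im\pi u(\infty)\sigma_3}$ and $e^{\pm\im\frac{\pi}{4}\epsilon\sigma_3}$ through the three distinct jump verifications and the asymptotic calculation: these commute with diagonal matrices but flip off-diagonal multipliers, so the cancellation patterns for the $(1,1),(2,2)$ entries differ from those for $(1,2),(2,1)$. The exceptional set $\mathcal{Z}_n$ enters through the overall scalar prefactor $\theta(0)/\theta(\tau/2+tV)$: the zeros of $\theta_3$ at $\tfrac{1}{2}+\tfrac{\tau}{2}+\mathbb{Z}+\tau\mathbb{Z}$, intersected with the real line of values $tV$ and the purely imaginary $\tau$, produce singularities of $N$; these are enclosed in the conservatively defined $\mathcal{Z}_n$, ensuring that staying uniformly bounded away from $\mathcal Z_n$ keeps all theta-quotients in the formula uniformly bounded.
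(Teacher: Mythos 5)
Your overall strategy is right, and your treatment of the $(-m,m)$ jump, of single-valuedness, of the diagonal $\lambda^{-1}$ coefficient, and of the exceptional set are all sound. However, the boundary-value identities you cite for the Abel map on the two band cuts are wrong, and they are load-bearing. You claim $u_+=u_-$ on $(m,M)$ and $u_+-u_-=0$ on $(-M,-m)$. Neither holds: from Proposition \ref{pout:1} what is constant on the band cuts is the \emph{sum}, $u_++u_-=0$ on $(m,M)$ and $u_++u_-=\tau$ on $(-M,-m)$, whereas $u_+-u_-$ varies continuously along each cut, from $0$ at $\pm M$ to $\mp1$ at $\pm m$. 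If $u_+$ really equaled $u_-$ on $(m,M)$, the sandwich $e^{-\im\pi u_-\sigma_3}\bigl(\begin{smallmatrix}0&1\\-1&0\end{smallmatrix}\bigr)e^{\im\pi u_+\sigma_3}$ would produce $\lambda$-dependent off-diagonal factors $e^{\mp 2\im\pi u_-}$ and the jump would not reproduce \eqref{ouup:3}. What actually saves the computation is that for an anti-diagonal jump matrix the sandwich $e^{-\im\pi u_-\sigma_3}(\,\cdot\,)e^{\im\pi u_+\sigma_3}$ multiplies the $(1,2)$ and $(2,1)$ entries by $e^{\mp\im\pi(u_++u_-)}$; this is $1$ on $(m,M)$ because the \emph{sum} vanishes, not because the boundary values coincide.

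The error is more serious on $(-M,-m)$: there the same sandwich produces $e^{\mp\im\pi\tau}$, which is not the identity, and this factor is exactly what is needed. Pushing the quasi-periodicity $\theta(z+\tau|\tau)=e^{-\im\pi\tau-2\pi\im z}\theta(z|\tau)$ through the half-period shift $\tau/2$ carried by the numerators of $\widehat N_j^{(\pm)}$ produces not only the oscillatory phase $e^{\pm\im t\Omega}$ you identify but also an additional constant phase $e^{\pm\im\pi\tau}$. Your assertion that the conjugating exponential "contributes nothing new" on $(-M,-m)$ discards the only factor that cancels this extra phase; taken literally, your argument yields a jump with entries proportional to $e^{\pm\im t\Omega}e^{\pm\im\pi\tau}$ rather than \eqref{ouup:1}. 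One further slip worth flagging: with $\omega>0$ for $\lambda>M$ one has $\hat\phi(\lambda)=-\frac{M-m}{2\im\lambda}+\mathcal{O}(\lambda^{-2})$, not $+\frac{M-m}{2\im\lambda}$; this minus sign is what produces the overall $-$ in the stated $(1,2)$ entry of $N_1$.
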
 
\begin{proof} The stated jump behavior follows from a direct computation using Proposition \ref{pout:1} and the properties of Jacobi theta functions. For the normalization, we take into account the shift by the half period appearing in the numerators in \eqref{gap:1}, hence we have to guarantee that
\begin{equation*}
	\theta\left(\frac{\tau}{2}+tV\right)\neq 0
\end{equation*}
which is ensured for the values $(x,s_1)$ away from the set $\mathcal{Z}_n$ given in \eqref{exceptset} as the simple roots of $\theta(z)\equiv\theta_3(z|\tau)$ are located at $z\equiv\frac{1}{2}+\frac{\tau}{2}\mod \mathbb{Z}+\tau\mathbb{Z}$.
\end{proof}
\begin{remark} The latter Proposition is in sharp contrast to Proposition \ref{pout:3}. There we do not require the half-period shift and thus we always have that $\theta(tV)\neq 0$ since $tV(\varkappa)\in\mathbb{R}$, i.e. there is no exceptional set. From now on we shall always assume that $(x,s_1)$ is bounded away from the set $\mathcal{Z}_n$.
\end{remark}	
\subsection{Local parametrices}
The required model functions near $\lambda=0$ and $\lambda=\pm m$ slightly differ from the ones used in the case of the regular transition. Near $\lambda=0$ we analyze
\begin{problem} Determine a $2\times 2$ piecewise analytic function $H(\lambda)$ such that
\begin{itemize}
	\item $H(\lambda)$ is analytic for $\lambda\in\big(\mathbb{C}\cap D(0,r)\big)\backslash\big(\gamma_1^{\pm}\cup\gamma_2^{\pm}\cup(-r,r)\big)$ with $0<r<\frac{m}{2}$.
	\item For the boundary values,
	\begin{eqnarray*}
		H_+(\lambda)&=&H_-(\lambda)\begin{pmatrix}
		1 & -\im\epsilon\,e^{\im t\Omega(\lambda)}\\
		0 & 1\\
		\end{pmatrix},\hspace{0.5cm}\lambda\in D(0,r)\cap(\gamma_1^+\cup\gamma_2^+)\\
		H_+(\lambda)&=&H_-(\lambda)\begin{pmatrix}
		1 & 0\\
		-\im\epsilon\,e^{-\im t\Omega(\lambda)} & 1\\
		\end{pmatrix},\hspace{0.5cm}\lambda\in D(0,r)\cap(\gamma_1^-\cup\gamma_2^-)\\
		H_+(\lambda)&=&-H_-(\lambda),\hspace{0.5cm}\lambda\in(-r,r)
	\end{eqnarray*}
	\item  As $t\rightarrow\infty,|s_1|\downarrow 1$ such that $\varkappa\in[\delta,\frac{2}{3}\sqrt{2}-\delta]$, uniformly for $\lambda\in\partial D(0,r)$,
	\begin{equation*}
		H(\lambda)=\big(I+o(1)\big)N(\lambda).
	\end{equation*}
\end{itemize}
\end{problem}
A solution to the latter problem is derived in two steps. First define
\begin{equation*}
	H^o(\z) = \begin{cases}
		e^{\im\z\sigma_3},&\textnormal{arg}\,\z\in(0,\frac{\pi}{4})\cup(\frac{3\pi}{4},\pi)\\
		-e^{\im\z\sigma_3},&\textnormal{arg}\,\z\in(-\frac{\pi}{4},0)\cup(\pi,\frac{5\pi}{4})\\
		e^{\im\z\sigma_3}\begin{pmatrix}
		1 & -\im\epsilon\,e^{\im t\Omega(0)}\\
		0 & 1\\
		\end{pmatrix},&\textnormal{arg}\,\z\in(\frac{\pi}{4},\frac{3\pi}{4})\\
		-e^{\im\z\sigma_3}\begin{pmatrix}
		1 & 0\\
		\im\epsilon\,e^{-\im t\Omega(0)} & 1\\
		\end{pmatrix},&\textnormal{arg}\,\z\in(-\frac{3\pi}{4},-\frac{\pi}{4})
		\end{cases}
\end{equation*}
with $\Omega(0)=\pi V(\varkappa)$ from \eqref{Vfancy} and where we orient all six rays from zero to infinity. Second, the required parametrix itself: With $N(\lambda)$ from \eqref{singouter},
\begin{equation}\label{singorigin}
	H(\lambda)=N(\lambda)\begin{cases}
	+1,&\Im\lambda>0\\
	-1,&\Im\lambda<0
	\end{cases}\,\times\,H^o\big(\z(\lambda)\big)e^{-\im\z(\lambda)\sigma_3},\ \ \ \ \ \lambda\in D(0,r),
\end{equation}
where
\begin{equation*}
	\z(\lambda)=\frac{t}{2}\big(\Omega(\lambda)-\Omega(0)\big)=8tMm\lambda\left(1+\mathcal{O}\left(\lambda^2\right)\right),\ \ \lambda\in D(0,r).
\end{equation*}
The required jump behavior follows from a direct computation using a local contour deformation, moreover we have
\begin{equation*}
	H(\lambda)=\left(I+\mathcal{O}\left(t^{-\infty}\right)\right)N(\lambda)
\end{equation*}
as $t\rightarrow+\infty,|s_1|\downarrow 1$ such that $\varkappa\in[\delta,\frac{2}{3}\sqrt{2}-\delta]$ uniformly for $0<r_1\leq|\lambda|\leq r_2<\frac{m}{2}$.\bigskip

Near $\lambda=m$ we have to modify \eqref{bare:1}. Consider
\begin{equation*}
	A^{RH}(\z)=e^{\im\frac{\pi}{2}\sigma_3}A_0(\z)e^{-\im\frac{\pi}{2}\sigma_3}\begin{cases}
	I,&\textnormal{arg}\,\z\in(0,\frac{2\pi}{3})\\
	\begin{pmatrix}
	1 & 1\\
	0 & 1\\
	\end{pmatrix},&\textnormal{arg}\,\z\in(\frac{2\pi}{3},\pi)\smallskip\\
	-\begin{pmatrix}
	1 & 1\\
	0 & 1\\
	\end{pmatrix},&\textnormal{arg}\,\z\in(\pi,\frac{4\pi}{3})\smallskip\\
	-\begin{pmatrix}
	1 & 1\\
	0 & 1\\
	\end{pmatrix}\begin{pmatrix}
	1 & 0\\
	-1 & 1\\
	\end{pmatrix},&\textnormal{arg}\,\z\in(\frac{4\pi}{3},2\pi)
	\end{cases}
\end{equation*}
with $A_0(\z)$ as in \eqref{bareAiry}. This leads to the following bare RHP
\begin{itemize}
	\item $A^{RH}(\z)$ is analytic for $\z\in\mathbb{C}\backslash\{\textnormal{arg}\,\z=0,\frac{2\pi}{3},\pi,\frac{4\pi}{3}\}$
	\item We have jumps on the contours sketched in Figure \ref{Airy1sing},
	\begin{eqnarray*}
		A^{RH}_+(\z)&=&A^{RH}_-(\z)\begin{pmatrix}
		1 & -1\\
		0 & 1\\
		\end{pmatrix},\ \ \textnormal{arg}\,\z=\frac{2\pi}{3}\\
		A^{RH}_+(\z)&=&A^{RH}_-(\z)\begin{pmatrix}
		-1 & 0\\
		0 & -1\\
		\end{pmatrix},\ \ \textnormal{arg}\,\z=\pi\\
		A^{RH}_+(\z)&=&A^{RH}_-(\z)\begin{pmatrix}
		1& 0\\
		1 & 1\\
		\end{pmatrix},\ \ \textnormal{arg}\,\z=\frac{4\pi}{3}\\
		A^{RH}_+(\z)&=&A^{RH}_-(\z)\begin{pmatrix}
		-1 & 1\\
		-1 & 0\\
		\end{pmatrix},\ \ \textnormal{arg}\,\z=0
	\end{eqnarray*}
	\item As $\z\rightarrow\infty$, 
	\begin{equation*}
		A^{RH}(\z)=\begin{cases}
		+1,&\Im\z>0\\
		-1,&\Im\z<0
		\end{cases}\,\times\,\z^{-\frac{1}{4}\sigma_3}\frac{\im}{2}\begin{pmatrix}
		1 & \im\\
		1 & -\im\\
		\end{pmatrix}\left[I+\frac{1}{48\z^{\frac{3}{2}}}\begin{pmatrix}
		-1 & -6\im\\
		-6\im & 1\\
		\end{pmatrix}+\mathcal{O}\left(\z^{-\frac{6}{2}}\right)\right]e^{\frac{2}{3}\z^{\frac{3}{2}}\sigma_3}
	\end{equation*}
\end{itemize}
\begin{figure}
\begin{minipage}{0.35\textwidth} 
\begin{center}
\resizebox{1.1\textwidth}{!}{\includegraphics{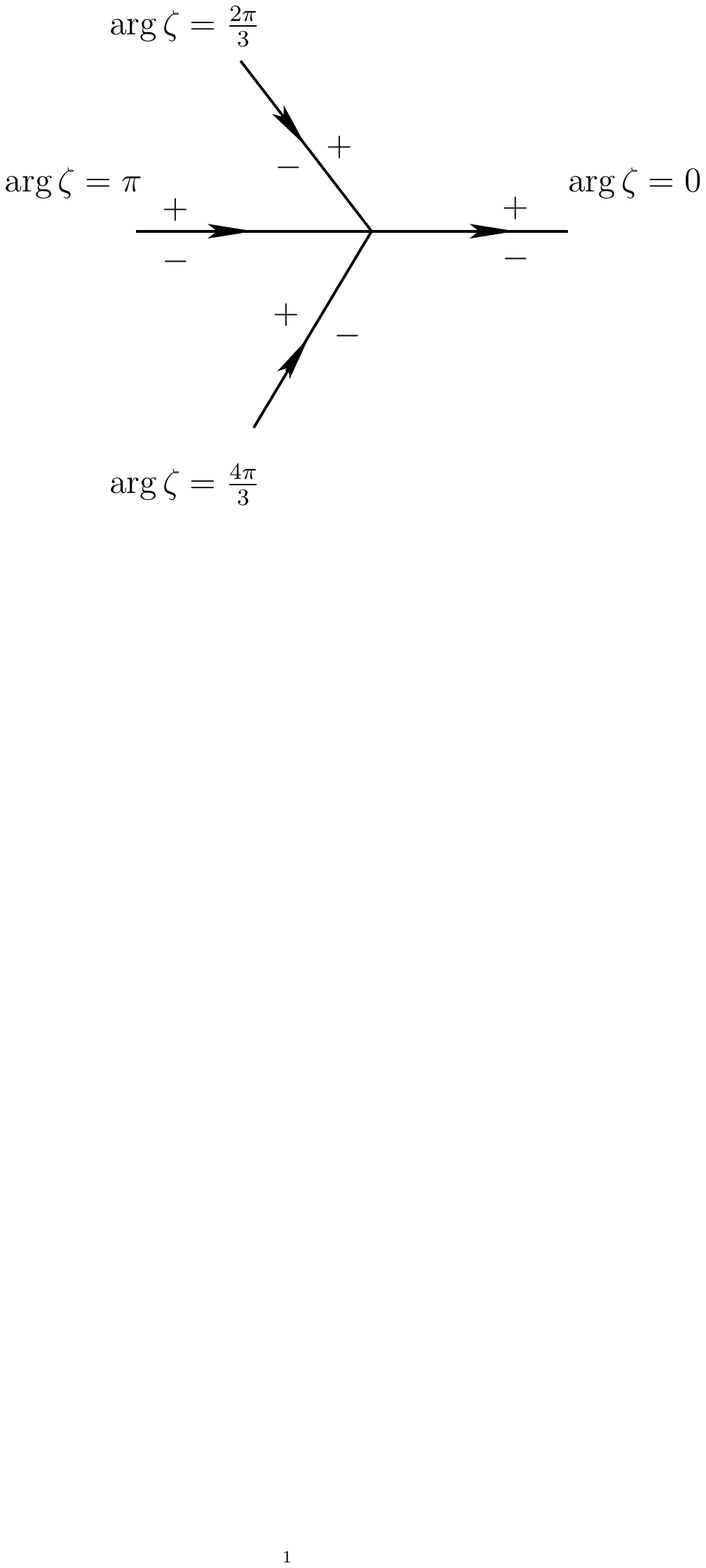}}
\caption{Jump contour for the bare parametrix $A^{RH}(\zeta)$.}
\label{Airy1sing}
\end{center}
\end{minipage}
\ \ \ \ \begin{minipage}{0.5\textwidth}
\begin{center}
\resizebox{0.92\textwidth}{!}{\includegraphics{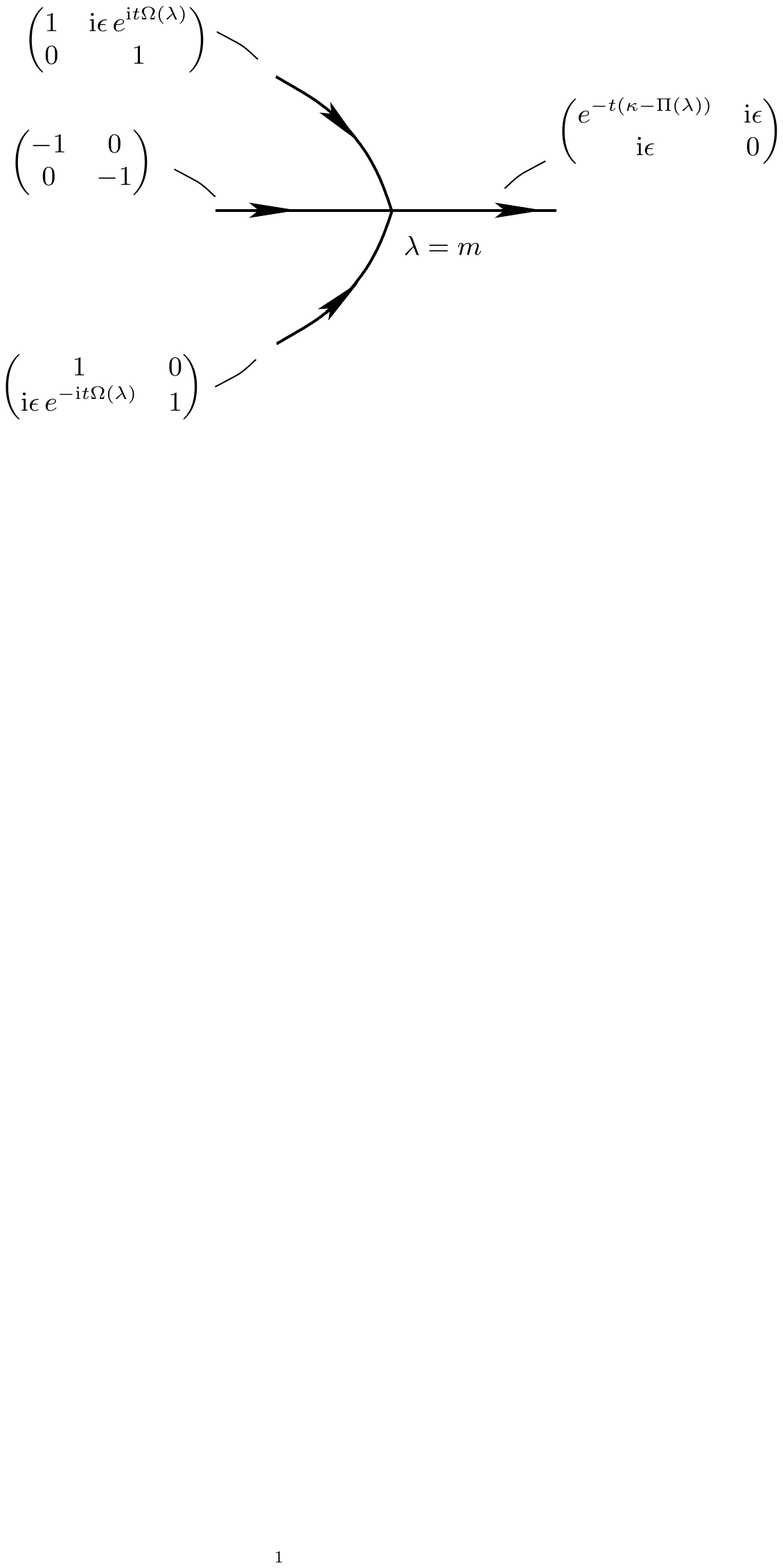}}
\caption{Jumps of the model function $U(\lambda)$ near $\lambda=m$.}
\label{figure7sing}
\end{center}
\end{minipage}
\end{figure}

Now define
\begin{equation}\label{singm}
	U(\lambda)=B_{r_3}(\lambda)A^{RH}\big(\z(\lambda)\big)e^{-\frac{2}{3}\z^{\frac{3}{2}}(\lambda)\sigma_3}e^{-\im\frac{\pi}{4}\epsilon\sigma_3},\ \ \lambda\in D(m,r)
\end{equation}
where (as before)
\begin{equation*}
	\z(\lambda)=\left[6te^{\im\frac{\pi}{2}}\int_m^{\lambda}\left(\left(M^2-\mu^2\right)\left(m^2-\mu^2\right)\right)^{\frac{1}{2}}\,\d\mu\right]^{\frac{2}{3}}\sim\left(4t\sqrt{2m(M^2-m^2)}\right)^{\frac{2}{3}}(\lambda-m),\ \ |\lambda-m|<r
\end{equation*}
and
\begin{equation*}
	B_{r_3}(\lambda)=N(\lambda)\begin{cases}
	+1,&\Im\lambda>0\\
	-1,&\Im\lambda<0
	\end{cases}\,\times e^{\im\frac{\pi}{4}\epsilon\sigma_3}\begin{pmatrix}
	-\im & -\im\\
	-1 & 1\\
	\end{pmatrix}\delta(\lambda)^{\sigma_3}\left(\z(\lambda)\frac{\lambda-M}{\lambda-m}\right)^{\frac{1}{4}\sigma_3},\ \ \delta(\lambda)=\left(\frac{\lambda-m}{\lambda-M}\right)^{\frac{1}{4}}
\end{equation*}
is analytic near $\lambda=m$. This leads to the jump behavior of $U(\lambda)$ as indicated in Figure \ref{figure7sing}, moreover we have the matching relation,
	\begin{equation*}
		U(\lambda)=\left[I+N(\lambda)\left\{\frac{1}{48\z^{\frac{3}{2}}}\begin{pmatrix}
		-1 & 6\epsilon\\
		-6\epsilon & 1\\
		\end{pmatrix} +\mathcal{O}\left(\z^{-\frac{6}{2}}\right)\right\}\big(N(\lambda)\big)^{-1}\right]N(\lambda)
	\end{equation*}
	and thus, as $t\rightarrow+\infty,|s_1|\downarrow 1$ such that $\varkappa\in[\delta,\frac{2}{3}\sqrt{2}-\delta]$ with $0<\delta<\frac{1}{3}\sqrt{2}$ fixed,
	\begin{equation*}
		U(\lambda)=\left(I+o(1)\right)N(\lambda)
	\end{equation*}
	uniformly for $0<r_1\leq|\lambda-m|\leq r_2<\min\{\frac{m}{2},\frac{1}{2}(M-m)\}$.\smallskip

For the remaining endpoint $\lambda=-m$, we put
\begin{equation*}
	\widetilde{A}^{RH}(\z)=e^{\im\frac{\pi}{2}\sigma_3}\widetilde{A}_0(\z)e^{-\im\frac{\pi}{2}\sigma_3}\begin{cases}
	-I,&\textnormal{arg}\,\z\in(-\pi,-\frac{\pi}{3})\\
	-\begin{pmatrix}
	1 & 0\\
	e^{-\im\pi(1-\gamma)} & 1\\
	\end{pmatrix},&\textnormal{arg}\,\z\in(-\frac{\pi}{3},0)\smallskip\\
	\begin{pmatrix}
	1 & 0\\
	e^{-\im\pi(1-\gamma)} & 1\\
	\end{pmatrix},&\textnormal{arg}\,\z\in(0,\frac{\pi}{3})\smallskip\\
	\begin{pmatrix}
	1 & 0\\
	e^{-\im\pi(1-\gamma)} & 1\\
	\end{pmatrix}\begin{pmatrix}
	1 & -e^{\im\pi(1-\gamma)}\\
	0 & 1\\
	\end{pmatrix},&\textnormal{arg}\,\z\in(\frac{\pi}{3},\pi)
	\end{cases}
\end{equation*}
with $\widetilde{A}_0(\z)$ as in \eqref{bareAiry2}. We obtain that
\begin{itemize}
	\item $\widetilde{A}^{RH}(\z)$ is analytic for $\z\in\mathbb{C}\backslash\{\textnormal{arg}\,\z=-\frac{\pi}{3},0,\frac{\pi}{3},\pi\}$
	\item	 Along the curves shown in Figure \ref{Airy2sing},
	\begin{eqnarray*}
		\widetilde{A}^{RH}_+(\z)&=&\widetilde{A}^{RH}_-(\z)\begin{pmatrix}
		1 & 0\\
		e^{-\im\pi(1-\gamma)} & 1\\
		\end{pmatrix},\ \ \ \textnormal{arg}\,\z=-\frac{\pi}{3}\\
		\widetilde{A}^{RH}_+(\z)&=&\widetilde{A}^{RH}_-(\z)\begin{pmatrix}
		-1 & 0\\
		0 & -1\\
		\end{pmatrix},\ \ \ \textnormal{arg}\,\z=0\\
		\widetilde{A}^{RH}_+(\z)&=&\widetilde{A}^{RH}_-(\z)\begin{pmatrix}
		1 & -e^{\im\pi(1-\gamma)}\\
		0 & 1\\
		\end{pmatrix},\ \ \ \textnormal{arg}\,\z=\frac{\pi}{3}\\
		\widetilde{A}^{RH}_+(\z)&=&\widetilde{A}^{RH}_-(\z)\begin{pmatrix}
		-1 & e^{\im\pi(1-\gamma)}\\
		-e^{-\im\pi(1-\gamma)} & 0\\
		\end{pmatrix},\ \ \ \textnormal{arg}\,\z=\pi
	\end{eqnarray*}
	\item As $\z\rightarrow\infty$,
	\begin{align*}
		\widetilde{A}^{RH}(\z)&=\begin{cases}
		+1,&\Im\z>0\\
		-1,&\Im\z<0
		\end{cases}\,\times\,\z^{\frac{1}{4}\sigma_3}\frac{\im}{2}e^{-\im\pi(1-\gamma)}\begin{pmatrix}
		1 & \im e^{\im\pi(1-\gamma)}\\
		1 & -\im e^{\im\pi(1-\gamma)}\\
		\end{pmatrix}\bigg[I+\frac{\im}{48\z^{\frac{3}{2}}}\begin{pmatrix}
		1 & -6\im\,e^{\im\pi(1-\gamma)}\\
		-6\im\,e^{-\im\pi(1-\gamma)} & -1\\
		\end{pmatrix}\\
		&+\mathcal{O}\left(\z^{-\frac{6}{2}}\right)\bigg]e^{\frac{2}{3}\im\z^{\frac{3}{2}}\sigma_3}.
	\end{align*}
\end{itemize}
\begin{figure}
\begin{minipage}{0.35\textwidth} 
\begin{center}
\resizebox{1\textwidth}{!}{\includegraphics{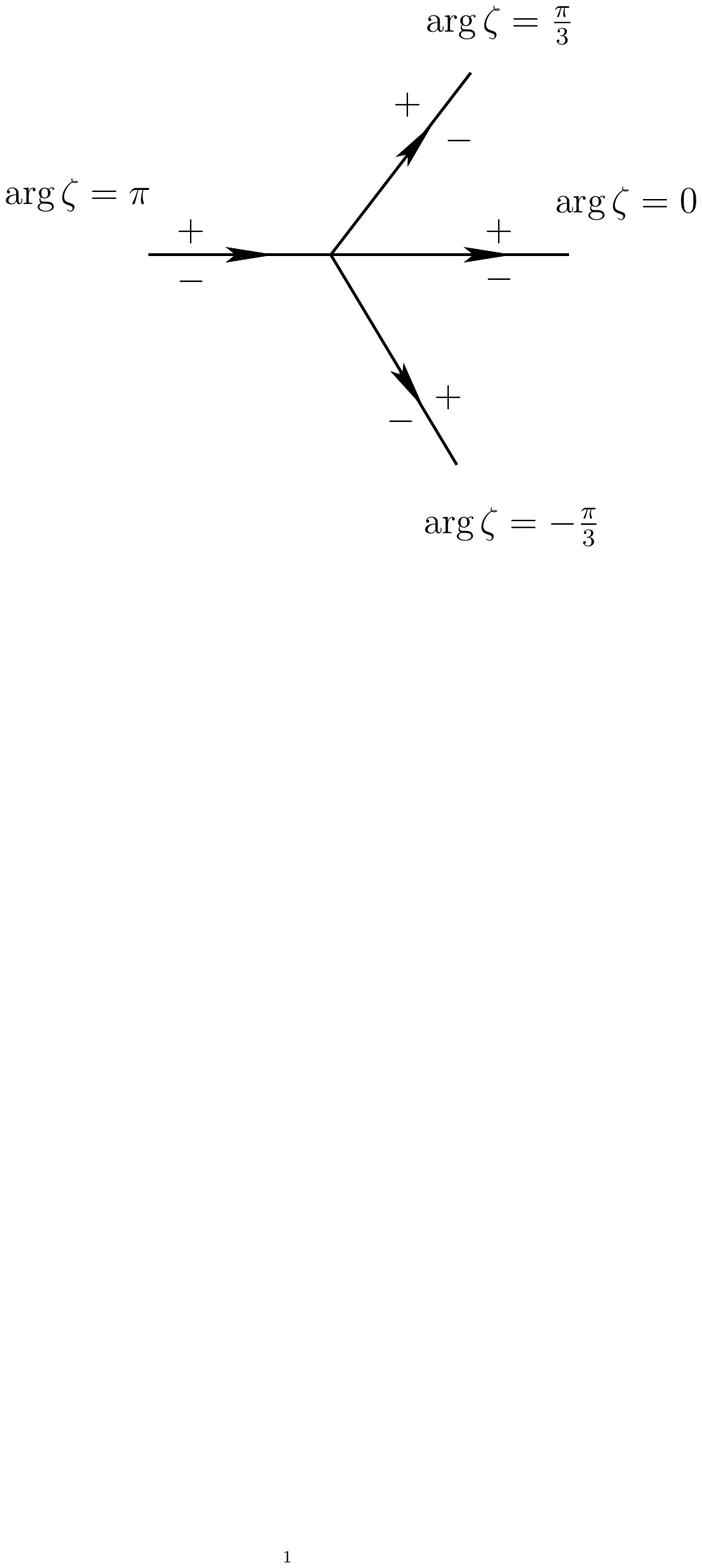}}
\caption{Jump contour for the bare parametrix $\widetilde{A}^{RH}(\zeta)$.}
\label{Airy2sing}
\end{center}
\end{minipage}
\ \ \ \ \begin{minipage}{0.5\textwidth}
\begin{center}
\resizebox{0.95\textwidth}{!}{\includegraphics{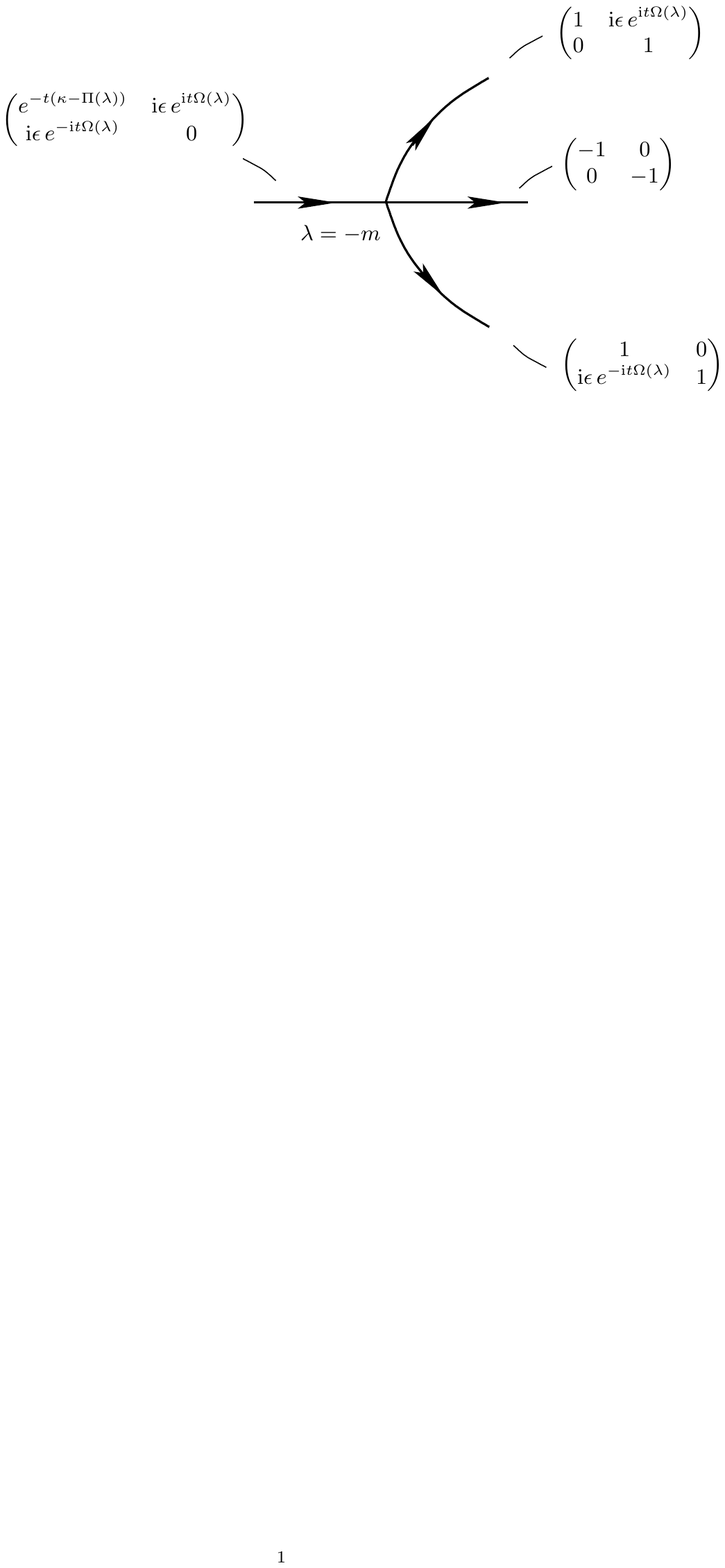}}
\caption{Jumps of the model function $V(\lambda)$ near $\lambda=-m$.}
\label{figure8sing}
\end{center}
\end{minipage}
\end{figure}

For the parametrix itself, we put
\begin{equation}\label{singnm}
	V(\lambda)=B_{\ell_3}(\lambda)\widetilde{A}^{RH}\big(\z(\lambda)\big)e^{-\frac{2}{3}\im\z^{\frac{3}{2}}(\lambda)\sigma_3}e^{-\im\frac{\pi}{4}\epsilon\sigma_3},\ \ \lambda\in D(-m,r)
\end{equation}
with
\begin{equation*}
	\z(\lambda)=\left[6t\int_{-m}^{\lambda}\left((M^2-\mu^2)(m^2-\mu^2)\right)^{\frac{1}{2}}\,\d\mu\right]^{\frac{2}{3}}\sim\left(4t\sqrt{2m(M^2-m^2)}\right)^{\frac{2}{3}}(\lambda+m),\ \ |\lambda+m|<r
\end{equation*}
and the locally analytic multiplier
\begin{equation*}
	B_{\ell_3}(\lambda)=N(\lambda)\begin{cases}
	+1,&\Im\lambda>0\\
	-1,&\Im\lambda<0
	\end{cases}\times\,e^{\im\frac{\pi}{4}\epsilon\sigma_3}\begin{pmatrix}
	-\im e^{\im\pi(1-\gamma)}&-\im e^{\im\pi(1-\gamma)}\\
	-1 & 1\\
	\end{pmatrix}\widetilde{\delta}(\lambda)^{-\sigma_3}\left(\z(\lambda)\frac{\lambda+M}{\lambda+m}\right)^{-\frac{1}{4}\sigma_3}
\end{equation*}
	\begin{equation*}
		V(\lambda)=\left[I+N(\lambda)\left\{\frac{\im}{48\z^{\frac{3}{2}}}\begin{pmatrix}
		1 & 6\epsilon\,e^{\im\pi(1-\gamma)}\\
		-6\epsilon\,e^{-\im\pi(1-\gamma)} & -1\\
		\end{pmatrix}+\mathcal{O}\left(\z^{-\frac{6}{2}}\right)\right\}\big(N(\lambda)\big)^{-1}\right]N(\lambda)
	\end{equation*}
	so that as $t\rightarrow\infty,|s_1|\downarrow 1$ such that $\varkappa\in[\delta,\frac{2}{3}\sqrt{2}-\delta]$,
	\begin{equation*}
		V(\lambda)=\left(I+o(1)\right)N(\lambda)
	\end{equation*}
	uniformly for $0<r_1\leq|\lambda+m|\leq r_2<\min\{\frac{m}{2},\frac{1}{2}(M-m)\}$.
This completes the construction of the relevant local model functions. For $\lambda$ close to $\pm M$, we can simply work with the same functions $P(\lambda)$ and $Q(\lambda)$ as in \eqref{parao} and \eqref{paraom}.
\subsection{Ratio transformation} We work with the ratio function $R(\lambda)$ defined in \eqref{ratio:1} but with $H(\lambda)$ from \eqref{singorigin}, $U(\lambda)$ from \eqref{singm}, $V(\lambda)$ from \eqref{singnm} and $N(\lambda)$ as in \eqref{singouter}. By construction, $R(\lambda)$ is characterized by the following properties.
\begin{itemize}
	\item $R(\lambda)$ is analytic for $\lambda\in\mathbb{C}\backslash\Sigma_R$ where the contour $\Sigma_R$ is shown in Figure \ref{figure13}. This follows directly from our construction of $N(\lambda)$ and $H(\lambda)$ as both functions precisely match the jump behavior of $L(\lambda)$ in the gap $(-m,m)$, compare \eqref{newLjump}.
	\item The jumps of $R(\lambda)$ are only slightly different from the ones listed in Section \ref{ratiosec1}. We choose not to list all expressions explicitly, instead only collect the important estimation
	\begin{equation}\label{singDZesti}
		R_+(\lambda)=R_-(\lambda)G_R(\lambda;t,|s_1|),\hspace{1cm}\|G_R(\cdot;t,|s_1|)-I\|_{L^2\cap L^{\infty}(\Sigma_R)}\leq\frac{c}{t},\ \ c>0
	\end{equation}
	as $t\rightarrow+\infty,|s_1|\downarrow 1$ such that $\varkappa\in[\delta,\frac{2}{3}\sqrt{2}-\delta]$ for $0<\delta<\frac{1}{3}\sqrt{2}$ fixed, provided we stay away from the singular points $(x,s_1)\in\mathcal{Z}_n$.
	\item As $\lambda\rightarrow\infty$, we have that $R(\lambda)\rightarrow I$.
\end{itemize}
With the help of estimation \eqref{singDZesti} the RHP for $R(\lambda)$ can be solved asymptotically in $L^2(\Sigma_R)$ and we have for its unique solution
\begin{equation*}
	\|R_-(\cdot;t,|s_1|)-I\|_{L^2(\Sigma_R)}\leq\frac{c}{t}.
\end{equation*}
\subsection{Singular transition asymptotics} Just as in the regular case, we have the exact identity
\begin{equation*}
	u(x|s)=2\sqrt{-x}\lim_{z\rightarrow\infty}\left[ze^{t\ell\sigma_3}N(z)R(z)e^{-t(g(z)-\vartheta(z))\sigma_3}\right]_{12}=2\sqrt{-x}\,e^{2t\ell}\left[\big(N_1\big)_{12}+\frac{1}{2}\tilde{\mathcal{E}}(\varkappa )\right]
\end{equation*}
where
\begin{equation*}
	\tilde{\mathcal{E}}(\varkappa)=\frac{\im}{\pi}\int_{\Sigma_R}\Big(R_-(w)\big(G_R(w)-I\big)\Big)_{12}\d w=\mathcal{O}\left(t^{-1}\right).
\end{equation*}
Notice that with \eqref{Abel:1} and identities of theta functions,
\begin{equation*}
	\big(N_1\big)_{12}=-\frac{\theta(0)}{\theta(\frac{\tau}{2}+tV)}\frac{\theta(u(\infty)-\frac{\tau}{2}-tV-d)}{\theta(u(\infty)-d)}\frac{M-m}{2\im}e^{-2\pi\im u(\infty)}e^{\im\frac{\pi}{2}\epsilon} = \frac{\theta_3(0|\tau)}{\theta_2(0|\tau)}\frac{\theta_3(tV|\tau)}{\theta_2(tV|\tau)}e^{\im\pi tV}\frac{M-m}{2\im}e^{\im\frac{\pi}{2}\epsilon}
\end{equation*}
and hence we obtain
\begin{equation}\label{singf1}
	u(x|s)=\sqrt{-x}\,e^{2t\ell}\left[-\epsilon(M-m)\frac{\theta_3(0|\tau)}{\theta_2(0|\tau)}\frac{\theta_3(tV|\tau)}{\theta_2(tV|\tau)}e^{\im\pi tV}+\tilde{\mathcal{E}}(\varkappa)\right].
\end{equation}
Now using Proposition \ref{helpful}, combining it with \eqref{Jell:2} and the identities
\begin{equation*}
	M-m=\frac{1}{\sqrt{2}}\frac{1-\k}{\sqrt{1+\k^2}},\ \ \ \ \left(\frac{\theta_2(0|\tau)}{\theta_3(0|\tau)}\right)^2=\frac{1-\k}{1+\k},
\end{equation*}
one deduces
\begin{equation*}
	u(x|s)=-\epsilon\sqrt{-\frac{x}{2}}\,\frac{1+\k}{\sqrt{1+\k^2}}\,\textnormal{dc}\left(2tVK\left(\frac{1-\k}{1+\k}\right),\frac{1-\k}{1+\k}\right)+J_4(x,s),
\end{equation*}
and
\begin{prop} For any given $\delta\in(0,\frac{1}{3}\sqrt{2})$ there exist positive constants $t_0=t_0(\delta),v_0=v_0(\delta)$ and $c=c(\delta)$ such that
\begin{equation*}
	\big|J_4(x,s)\big|\leq ct^{-\frac{2}{3}},\ \ \ \forall\,t\geq t_0,\ v\geq v_0:\ \ t\delta\leq v\leq t\left(\frac{2}{3}\sqrt{2}-\delta\right)
\end{equation*}
provided $(x,s_1)$ are uniformly bounded away from the exceptional set $\mathcal{Z}_n$ introduced in \eqref{exceptset}.
\end{prop}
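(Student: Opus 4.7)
The plan is to follow closely the argument carried out in Sections \ref{transasy:1}--\ref{rloweresti} for the regular counterpart Proposition \ref{error:1}, modulo the modifications forced on us by the new outer parametrix \eqref{singouter} and the half-period shift it carries.

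First, I would establish rigorously the small-norm bound \eqref{singDZesti}
\begin{equation*}
	\|G_R(\cdot;t,|s_1|)-I\|_{L^2\cap L^\infty(\Sigma_R)}\leq\frac{c}{t}
\end{equation*}
by revisiting the five categories of jumps from Section \ref{ratiosec1}: (i) the contributions inside the five disks $D_j$, (ii) the finite line segments $(-M+\bar r,-m-\hat r)\cup(m+\hat r,M-\bar r)$, (iii) the lens boundaries, (iv) the five circle boundaries, (v) the infinite branches. The estimates \eqref{DZ:1}--\eqref{DZ:5} carry over after three replacements: the Stokes expansions \eqref{approx:3},\eqref{approx:4} are swapped for their singular counterparts from Section \ref{singsec:1} (unchanged in shape, only in sign); the matching relations for $H(\lambda)$, $U(\lambda)$, $V(\lambda)$ retain the rate $O(t^{-1})$ (with $O(t^{-\infty})$ near the origin); and the outer model $N(\lambda)$ now carries the shift $\tfrac{\tau}{2}+tV$ through its theta quotients. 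The crucial ingredient is uniform boundedness of $N(\lambda)$ and $(N(\lambda))^{-1}$ on the parts of $\Sigma_R$ lying outside $\bigcup D_j$, and this boundedness hinges on $\theta(\tfrac{\tau}{2}+tV|\tau)$ staying bounded away from zero, which is precisely the assumption that $(x,s_1)$ remains a fixed positive distance from the discrete set $\mathcal{Z}_n$ in \eqref{exceptset}.

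Second, with the small-norm estimate in hand, standard Deift--Zhou theory \cite{DZ1} applied to the singular integral equation equivalent to the $R$-RHP yields unique solvability in $L^2(\Sigma_R)$ and the bound
\begin{equation*}
	\|R_-(\cdot;t,|s_1|)-I\|_{L^2(\Sigma_R)}\leq\frac{c}{t},
\end{equation*}
from which $\tilde{\mathcal{E}}(\varkappa)=O(t^{-1})$ follows directly by Cauchy--Schwarz applied to its defining integral. Third, tracing back the transformations exactly as in the derivation of \eqref{singf1} and noting that $\ell\in\im\mathbb{R}$ so that $|e^{2t\ell}|=1$ (the proof of Proposition \ref{helpful} is independent of the regular/singular distinction), we obtain
\begin{equation*}
	|J_4(x,s)|=\tfrac{1}{2}\sqrt{-x}\,|\tilde{\mathcal{E}}(\varkappa)|\leq c\sqrt{-x}\cdot t^{-1}=c\,t^{\tfrac{1}{3}-1}=c\,t^{-\tfrac{2}{3}},
\end{equation*}
uniformly for $t\delta\leq v\leq t(\tfrac{2}{3}\sqrt{2}-\delta)$ and $(x,s_1)$ away from $\mathcal{Z}_n$.

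The main technical obstacle will be step (i) above, the uniform control of $N_-(\lambda)$ on the circle boundaries and of the matchup $H(\lambda)N^{-1}(\lambda)$ on $\partial D(0,r)$. Unlike in the regular case, where $\theta(tV|\tau)$ is always nonzero because $tV\in\mathbb{R}$, here the shifted argument $\tfrac{\tau}{2}+tV$ crosses the zero locus of $\theta_3$ precisely on $\mathcal{Z}_n$; the implicit constants in the bound therefore depend on the distance of $(x,s_1)$ from $\mathcal{Z}_n$, and this is the unavoidable source of the exceptional-set hypothesis. All other estimations transfer mutatis mutandis from Sections \ref{sec:reg}--\ref{rloweresti}.
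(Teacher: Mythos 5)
Your proposal is correct and mirrors the paper's own proof: establish the small-norm bound \eqref{singDZesti} via the same case analysis as in Section \ref{ratiosec1}, using the $\mathcal{Z}_n$-hypothesis to keep the shifted theta quotient $\theta\left(\tfrac{\tau}{2}+tV\right)$ bounded away from zero so that $N$ and $N^{-1}$ stay bounded, then pass through Deift--Zhou small-norm theory to get $\tilde{\mathcal{E}}(\varkappa)=\mathcal{O}(t^{-1})$, and finally multiply by $\sqrt{-x}=t^{1/3}$ to get $t^{-2/3}$. The only slip is a harmless factor: from the identity $u=2\sqrt{-x}\,e^{2t\ell}\bigl[(N_1)_{12}+\tfrac{1}{2}\tilde{\mathcal{E}}\bigr]$ one gets $J_4=\sqrt{-x}\,e^{2t\ell}\tilde{\mathcal{E}}$, not $\tfrac{1}{2}\sqrt{-x}\,\tilde{\mathcal{E}}$, which does not affect the order.
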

\section{Extension at the lower end of singular transition}\label{singsec:2}
The relaxation of the lower bound for $\varkappa$ is handled via the same approach as in Section \ref{relax}, i.e. for $t=(-x)^{\frac{3}{2}}$ and $v=-\ln(|s_1|^2-1)$ sufficiently large such that
\begin{equation*}
	0<t^{-\eta}\leq\varkappa\leq \frac{2}{3}\sqrt{2}-\delta,\ \ \ \textnormal{with fixed}\ \ \ 0<\eta<1,
\end{equation*}
we obtain with the help of contracting radii the following analogue of \eqref{final2}
\begin{equation}\label{singres}
	u(x|s)=-\epsilon\sqrt{-\frac{x}{2}}\,\frac{1+\k}{\sqrt{1+\k^2}}\,\textnormal{dc}\left(2(-x)^{\frac{3}{2}}VK\left(\frac{1-\k}{1+\k}\right),\frac{1-\k}{1+\k}\right)+\mathcal{O}\left((-x)^{-1+\frac{3}{2}\eta}\right)
\end{equation}
uniformly as $x\rightarrow-\infty,|s_1|\downarrow 1$ such that $0<t^{-\eta}\leq\varkappa\leq\frac{2}{3}\sqrt{2}-\delta$ for any $0<\eta<\frac{2}{3}$ away from the exceptional set \eqref{exceptset}. After that we would use the analogue of \eqref{Ek:1}, respectively Remark \ref{bettererror}, for the singular analysis and this combined with improved $L^2$ estimations gives
\begin{equation*}
	\tilde{\mathcal{E}}(\varkappa)=\mathcal{O}\left(t^{-2+\frac{3}{2}\eta}\right)+\mathcal{O}\left(t^{-1}\right).
\end{equation*}
We summarize,
\begin{prop}\label{singimp1} For any given $\delta\in(0,\frac{2}{3}\sqrt{2}),\eta\in(0,1)$ there exist positive constants $t_0=t_0(\delta,\eta),v_0=v_0(\delta,\eta)$ and $c=c(\delta,\eta)$ such that
\begin{equation*}
	\big|J_4(x,s)\big|\leq ct^{-\min\{\frac{2}{3},\frac{5}{3}-\frac{3}{2}\eta\}},\ \ \forall\,t\geq t_0,\ \ v\geq v_0:\ \ t^{1-\eta}\leq v\leq t\left(\frac{2}{3}\sqrt{2}-\delta\right),
\end{equation*}
provided $(x,s_1)$ are bounded away from the exceptional set \eqref{exceptset}.
\end{prop}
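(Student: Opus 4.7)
The plan is to run the nonlinear steepest descent scheme of Section~\ref{relax} in the singular setting, adapting it to the new outer parametrix~\eqref{singouter}. I would first shrink the disc radii in the ratio transformation~\eqref{ratio:1} to $\hat r=c_1t^{-\eta/2}$, $\bar r=c_2t^{-\eta/2}$ as in~\eqref{shrink1}, which is forced by the branch-point collapse $M-m=\mathcal{O}(t^{-\eta/2})$ read off from~\eqref{I:1} on the scale $t^{-\eta}\leq\varkappa\leq\frac{2}{3}\sqrt{2}-\delta$. The next task is to bound the singular outer parametrix~\eqref{singouter} and its inverse on $\Sigma_R\setminus\bigcup D_j$. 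Since Corollary~\ref{cor1} yields $-\im\tau\to+\infty$ as $\varkappa\downarrow 0$, the nome $q=e^{\im\pi\tau}$ tends to zero, so, away from the degeneration locus, the theta quotients appearing in~\eqref{singouter} stay uniformly close to $1$, while the Szeg\H{o}-type factors $\phi,\hat\phi$ and the exponentials $e^{\pm\im\pi u(\lambda)\sigma_3}$ remain uniformly $\mathcal{O}(1)$ on all disc boundaries and on the remaining finite and infinite pieces outside $\bigcup D_j$. Together these produce $N_-(\lambda)=\mathcal{O}(1)$ and $N_-^{-1}(\lambda)=\mathcal{O}(1)$ uniformly along the jump contour of $R(\lambda)$.

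With these bounds in hand, substituting~\eqref{shrink0}, \eqref{shrink1}, and~\eqref{I:1} into the singular analogues of~\eqref{DZ:1}--\eqref{DZ:5} built in Section~\ref{singsec:1} yields
\begin{equation*}
\big\|G_R(\cdot\,;t,|s_1|)-I\big\|_{L^2\cap L^{\infty}(\Sigma_R)}\leq\frac{d}{t^{1-\eta}}.
\end{equation*}
By the Deift--Zhou small-norm theorem~\cite{DZ1} the singular $R$-RHP is uniquely solvable with $\|R_-(\cdot\,;t,|s_1|)-I\|_{L^2(\Sigma_R)}\leq d\,t^{-(1-\eta)}$, and tracing back the transformations through~\eqref{singf1} already produces~\eqref{singres}. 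To sharpen this to $t^{-\min\{2/3,\,5/3-3\eta/2\}}$, I would split
\begin{equation*}
\tilde{\mathcal{E}}(\varkappa)=\frac{\im}{\pi}\sum_{j}\oint_{C_j}\big(G_R(w)-I\big)_{12}\,\d w+\frac{\im}{\pi}\int_{\Sigma_R}\Big(\big(G_R(w)-I\big)\big(R_-(w)-I\big)\Big)_{12}\,\d w
\end{equation*}
up to exponentially small contributions from the far branches, and evaluate each contour integral by residues using the explicit matching expansions of $H,U,V,P,Q$ against $N$ produced in Section~\ref{singsec:1}. As in Remark~\ref{bettererror}, each residue produces an asymptotic series in nonnegative integer powers of $t^{-1}$ whose leading contribution is $\mathcal{O}(t^{-1})$ uniformly on the scale. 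A slightly improved $L^2$ estimate $\|G_R-I\|_{L^2}\leq d\,t^{-(1-3\eta/4)}$, and hence $\|R_--I\|_{L^2}\leq d\,t^{-(1-3\eta/4)}$, reduces the quadratic integral to $\mathcal{O}(t^{-2+3\eta/2})$. Multiplying by the prefactor $\sqrt{-x}\,e^{2t\ell}=\mathcal{O}(t^{1/3})$ in~\eqref{singf1} then delivers the claimed bound on $J_4$.

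The main obstacle is the uniform control of $N(\lambda)^{\pm 1}$: the denominator $\theta_3\big(\frac{\tau}{2}+tV\,\big|\,\tau\big)$ in~\eqref{singouter} degenerates exactly on the set $\mathcal{Z}_n$ of~\eqref{exceptset}, and as $\varkappa$ varies over the long interval $[t^{-\eta},\frac{2}{3}\sqrt{2}-\delta]$ one must convert a fixed positive distance of $(x,s_1)$ from $\mathcal{Z}_n$ into a quantitative lower bound $\big|\theta_3\big(\frac{\tau}{2}+tV\,\big|\,\tau\big)\big|\geq c>0$ independent of $\varkappa$. The cleanest route is to apply Jacobi's imaginary transformation from Appendix~\ref{app:theta} together with the expansions~\eqref{l:1} to reduce the bound to a distance estimate for $tV\bmod(\mathbb{Z}+\tau\mathbb{Z})$ from the half-period lattice, which is precisely the geometric content of the hypothesis $(x,s_1)\notin\mathcal{Z}_n$.
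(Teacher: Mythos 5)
Your overall plan coincides with the paper's own (very terse) proof: contracting radii forced by $M-m=\mathcal{O}(t^{-\eta/2})$, the small-norm estimate $\|G_R-I\|\leq dt^{-(1-\eta)}$, the decomposition of $\tilde{\mathcal{E}}(\varkappa)$ into residue contributions of order $t^{-1}$ plus a quadratic piece controlled by an improved $L^2$ bound, and the final multiplication by the $\mathcal{O}(t^{1/3})$ prefactor. The error accounting ($t^{1/3}\cdot\mathcal{O}(t^{-1})$ and $t^{1/3}\cdot\mathcal{O}(t^{-2+3\eta/2})$) is exactly the paper's, and you correctly identify the exceptional set $\mathcal{Z}_n$ as the genuine obstruction that has no regular-case counterpart.

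However, the step where you justify $N_-^{\pm1}(\lambda)=\mathcal{O}(1)$ contains concrete false claims. Unlike the regular parametrix~\eqref{ou:3}, the singular parametrix~\eqref{singouter} carries the two-sided conjugation $e^{-\im\pi u(\infty)\sigma_3}(\cdots)e^{\im\pi u(\lambda)\sigma_3}$ with $u(\infty)=\tau/4$, and for $\lambda$ on the jump contour one has $\Im u(\lambda)\sim\Im\tau/4\sim\frac{1}{4\pi}|\ln\varkappa|\to\infty$ as $\varkappa\downarrow 0$; hence one entry of $e^{\pm\im\pi u(\lambda)\sigma_3}$ has modulus $e^{\pi\Im u(\lambda)}\to\infty$ and is emphatically not $\mathcal{O}(1)$. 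Similarly, because the theta-quotients $\widehat{N}_j^{(\pm)}$ carry the half-period shift $\tau/2$ and the argument $u(\lambda)+tV+\tau/4$ has $\Im$-part of size $\Im\tau/2$, the $k=-1$ term in the theta series does not vanish as $q\to 0$ and these quotients tend to $1+e^{2\pi\im(\delta(\lambda)-tV)}$ with $\delta(\lambda)=u(\infty)-u(\lambda)$, not to $1$. The correct argument keeps the divergent pieces grouped: apply $\theta_3(z+\tau/2)=e^{-\im\pi\tau/4-\im\pi z}\theta_2(z)$ to the numerator of each $\widehat{N}_j^{(\pm)}$ and to the prefactor $\theta_3(\tau/2+tV)$; the factors of $e^{-\im\pi\tau/4}$ then cancel exactly the $e^{-\im\pi u(\infty)}$, and what survives involves only the bounded combination $\delta(\lambda)$ together with $\theta_2(tV)^{-1}$, which is controlled away from $\mathcal{Z}_n$. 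Your last paragraph does gesture at the right mechanism (modular transformation plus a lattice-distance estimate), but as written the middle of your argument would mislead someone into thinking the bound is obtained term by term, which it is not; this is precisely the extra work that distinguishes the singular from the regular extension.
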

In order to obtain estimation \eqref{AS:mat} in Corollary \ref{singmat}, we would choose $t\geq t_0$ such that $0<\varkappa\leq t^{-\frac{4}{5}}$ and again $(x,s_1)$ stay away from \eqref{exceptset}. Then, by Corollary \ref{cor1} we have
\begin{equation}\label{simple}
	-\epsilon\sqrt{-\frac{x}{2}}\,\frac{1+\k}{\sqrt{1+\k^2}}\,\textnormal{dc}\left(2(-x)^{\frac{3}{2}}VK\left(\frac{1-\k}{1+\k}\right),\frac{1-\k}{1+\k}\right)=-\frac{\epsilon\sqrt{-x}}{\cos\big(\pi tV(\varkappa)\big)}+\mathcal{O}\left((-x)^{-\frac{1}{10}}\right).
\end{equation}
Next, for $0<\varkappa\leq t^{-\frac{4}{5}}$ with \eqref{l:2}, (compare \eqref{m:2}),
\begin{equation*}
	\pi tV(\varkappa)=-\frac{2}{3}(-x)^{\frac{3}{2}}-\widehat{\beta}\ln\left(8(-x)^{\frac{3}{2}}\right)+\widehat{\beta}\ln\big|\ln\big(|s_1|^2-1\big)\big|-\widehat{\beta}(1+\ln 2\pi)+\mathcal{O}\left(t^{-\frac{3}{5}}\right)
\end{equation*}
and from Stirling's formula
\begin{equation*}
	\textnormal{arg}\,\Gamma\left(\frac{1}{2}+\im\hat{\beta}\right)=\textnormal{arg}\,\Gamma\left(\frac{1}{2}-\frac{\im}{2\pi}\varkappa t\right)=\widehat{\beta}\ln\big|\ln\big(|s_1|^2-1\big)\big|-\widehat{\beta}(1+\ln 2\pi)+\mathcal{O}\left(t^{-\frac{1}{5}}\right)
\end{equation*}
as $\varkappa t=\mathcal{O}\big(t^{\frac{1}{5}}\big)\rightarrow+\infty$. Together
\begin{equation}\label{simpler}
	\pi tV(\varkappa)=-\frac{2}{3}(-x)^{\frac{3}{2}}-\widehat{\beta}\ln\left(8(-x)^{\frac{3}{2}}\right)-\varphi-\frac{\epsilon\pi}{2}+\mathcal{O}\left(t^{-\frac{1}{5}}\right)
\end{equation}
where (compare Theorem \ref{known:3}),
\begin{equation*}
	\varphi=-\textnormal{arg}\,\Gamma\left(\frac{1}{2}+\im\widehat{\beta}\right)-\textnormal{arg}\,s_1.
\end{equation*}
We substitute \eqref{simpler} into \eqref{simple}, use the addition theorem for cosine as well as the identity 
\begin{equation*}
	\cos\left(z+\frac{\epsilon\pi}{2}\right)=-\epsilon\sin z,\ \ \ \ z\in\mathbb{C}
\end{equation*}
to obtain, as $x\rightarrow-\infty$,
\begin{align}
	-\epsilon\sqrt{-\frac{x}{2}}\,\frac{1+\k}{\sqrt{1+\k^2}}\,\textnormal{dc}\left(2(-x)^{\frac{3}{2}}VK\left(\frac{1-\k}{1+\k}\right),\frac{1-\k}{1+\k}\right)=&\frac{\sqrt{-x}}{\sin\left(\frac{2}{3}(-x)^{\frac{3}{2}}+\widehat{\beta}\ln\big(8(-x)^{\frac{3}{2}}\big)+\varphi\right)+\mathcal{O}\left((-x)^{-\frac{3}{10}}\right)}\nonumber\\
	&+\mathcal{O}\left((-x)^{-\frac{1}{10}}\right),\label{connect1}
\end{align}
uniformly for $0<\varkappa\leq t^{-\frac{4}{5}}$ and away from the zeros of the trigonometric function appearing in the last denominator. Hence, the Jacobi elliptic function leading term reproduces also here the known singular oscillatory leading order of \eqref{known:3} for $(t,v)$ such that $t\geq t_0$ and $0<\varkappa\leq t^{-\frac{4}{5}}$. Next we will also obtain control over the error $J_4(x,s)$ in the same region.

\section{Further extension at the lower end for singular transition}\label{singsec:3}
We go back to Section \ref{rloweresti} and assume throughout that both $t$ and $v=-\ln(|s_1|^2-1)$ are sufficiently large such that
\begin{equation}\label{singscale}
	t\geq v^{k+1}>0,\ \ t\geq t_0,\ \ \ k\in\mathbb{Z}_{\geq 3}.
\end{equation}
The difference between regular and singular transition asymptotics can be read off from \eqref{crucialpara},
\begin{equation}\label{easytrick}
	\nu=-\frac{1}{2\pi\im}\ln\big(1-|s_1|^2\big)=-\frac{1}{2\pi\im}\ln\big(|s_1|^2-1\big)-\frac{1}{2}\equiv\nu_0-\frac{1}{2}.
\end{equation}
Hence we can still use most of the nonlinear steepest analysis worked out in Section \ref{flower}, however a crucial differences occurs in the matching relations \eqref{mup:1} and \eqref{mup:2}, compare \cite{BI} where the same approach was first used in the derivation of \eqref{known:3} for fixed $|s_1|>1$. We shall factorize \eqref{mup:1},
\begin{align*}
	\Psi^r(\lambda)&\sim E_r(\lambda)\bigg[I+\sigma_-\big(\beta(\lambda)\big)^{-2}\left(-\frac{h_1}{s_3}\right)\sum_{m=1}^{\infty}\big((\nu+1)_{2m}-(\nu)_{2m}\big)\frac{\z(\lambda)^{-2m-1}}{m!\,2^m}+\sigma_+\big(\beta(\lambda)\big)^2\left(-\frac{s_3}{h_1}\right)\\
	&\times\,\sum_{m=1}^{\infty}(-)^m\big((-\nu)_{2m}-(-\nu-1)_{2m}\big)\frac{\z(\lambda)^{-2m+1}}{m!\,2^m}+\sum_{m=1}^{\infty}\begin{pmatrix}
	(\nu)_{2m} & 0\\
	0 & (-)^m(-\nu-1)_{2m}\\
	\end{pmatrix}\frac{\z(\lambda)^{-2m}}{m!\,2^m}\bigg]\Psi^D(\lambda)
\end{align*}
with
\begin{equation*}
	E_r(\lambda)=\begin{pmatrix}
	1 & 0\\
	-\frac{h_1}{s_3}\frac{\beta^{-2}(\lambda)}{\z(\lambda)} & 1\\
	\end{pmatrix}.
\end{equation*}
This factorization is necessary since with \eqref{easytrick}
\begin{equation*}
	\frac{\beta^{-2}(\lambda)}{\z(\lambda)} = \left(\z(\lambda)\frac{\lambda+\lambda^{\ast}}{\lambda-\lambda^{\ast}}\right)^{-2\nu_0}\left(\frac{\lambda+\lambda^{\ast}}{\lambda-\lambda^{\ast}}\right)e^{2t\vartheta(\lambda^{\ast})} = O(1),
\end{equation*}
i.e. the multiplier $E_r(\lambda)$ is not close to the unit matrix. Similarly,
\begin{align*}
	\Psi^{\ell}(\lambda)&\sim E_{\ell}(\lambda)\bigg[I+\sigma_+\big(\beta(-\lambda)\big)^{-2}\frac{h_1}{s_3}\sum_{m=1}^{\infty}\big((\nu+1)_{2m}-(\nu)_{2m}\big)\frac{\z(-\lambda)^{-2m-1}}{m!\,2^m}+\sigma_-\big(\beta(-\lambda)\big)^2\frac{s_3}{h_1}\\
	&\times\,\sum_{m=1}^{\infty}(-)^m\big((-\nu)_{2m}-(-\nu-1)_{2m}\big)\frac{\z(-\lambda)^{-2m+1}}{m!\,2^m}+\sum_{m=1}^{\infty}\begin{pmatrix}
	(-)^m(-\nu-1)_{2m} & 0\\
	0 & (\nu)_{2m}\\
	\end{pmatrix}\frac{\z(-\lambda)^{-2m}}{m!\,2^m}\bigg]\Psi^D(\lambda)
\end{align*}
with
\begin{equation*}
	E_{\ell}(\lambda)=\sigma_2 E_r(-\lambda)\sigma_2.
\end{equation*}
This means that the ratio problem \eqref{ratiolower} for $\chi(\lambda)$ cannot be solved iteratively at this point and we have to use further transformations. First, compare \cite{BI},  equation $(38)$, we let
\begin{equation*}
	\Phi^u(\lambda)=\begin{cases}
	\chi(\lambda)E_r(\lambda),&|\lambda-\lambda^{\ast}|<r\\
	\chi(\lambda)E_{\ell}(\lambda),&|\lambda+\lambda^{\ast}|<r\\
	\chi(\lambda),&|\lambda\mp\lambda^{\ast}|>r
	\end{cases}
\end{equation*}
where we use (this is in fact the only difference to \cite{BI}) a shrinking radius
\begin{equation*}
	\frac{1}{4}>r=\frac{1}{v}\geq t^{-\frac{1}{k+1}}.
\end{equation*}
The Riemann-Hilbert problem for $\Phi^u(\lambda)$ was analyzed in \cite{BI}, its jump contour is identically $\Sigma_{\chi}$ as shown in Figure \ref{ratiolowerfig} but with additional first order pole singularities at $\lambda=\pm\lambda^{\ast}$. We choose not to reproduce the analysis here, compare equations $(40)-(42)$ in \cite{BI}. Subsequently the singular structure was resolved by another transformation
\begin{equation*}
	\Phi^u(\lambda)=(\lambda I+B)\Phi^d(\lambda)\begin{pmatrix}
	(\lambda-\lambda^{\ast})^{-1} & 0\\
	0 & (\lambda+\lambda^{\ast})^{-1}
	\end{pmatrix},\ \ \lambda\in\mathbb{C}\backslash\Sigma_{\chi}
\end{equation*}
where the $\lambda$ independent matrix $B$ is determined algebraically as in equation $(46)$ in \cite{BI}. It is well defined for all values $(x,s_1)$ bounded away from the discrete set
\begin{equation}\label{exceptknown}
	\frac{2}{3}t+\widehat{\beta}\ln(8t)+\varphi=n\pi,
\end{equation}
compare again \cite{BI}, shortly before Remark $3$. If we agree to stay away from the latter exceptional points, the jumps for $\Phi^d(\lambda)$ have the following structure
\begin{equation*}
	\Phi^d_+(\lambda)=\Phi^d_-(\lambda)\begin{pmatrix}
	(\lambda-\lambda^{\ast})^{-1} & 0\\
	0 & (\lambda+\lambda^{\ast})^{-1}
	\end{pmatrix}\Psi^D(\lambda)\tilde{S}_j\big(\Psi^D(\lambda)\big)^{-1}\begin{pmatrix}
	\lambda-\lambda^{\ast} & 0\\
	0 & \lambda+\lambda^{\ast}
	\end{pmatrix},\ \lambda\in\tilde{\gamma}_j,\ j=1,\ldots,8
\end{equation*}
and
\begin{eqnarray*}
	\Phi^d_+(\lambda)&=&\Phi^d_-(\lambda)\begin{pmatrix}
	(\lambda-\lambda^{\ast})^{-1} & 0\\
	0 & (\lambda+\lambda^{\ast})^{-1}
	\end{pmatrix}E_r^{-1}(\lambda)\Psi^r(\lambda)\big(\Psi^D(\lambda)\big)^{-1}\begin{pmatrix}
	\lambda-\lambda^{\ast} & 0\\
	0 & \lambda+\lambda^{\ast}
	\end{pmatrix},\ \ \lambda\in C_r\\
	\Phi^d_+(\lambda)&=&\Phi^d_-(\lambda)\begin{pmatrix}
	(\lambda-\lambda^{\ast})^{-1} & 0\\
	0 & (\lambda+\lambda^{\ast})^{-1}
	\end{pmatrix}E_{\ell}^{-1}(\lambda)\Psi^{\ell}(\lambda)\big(\Psi^D(\lambda)\big)^{-1}\begin{pmatrix}
	\lambda-\lambda^{\ast} & 0\\
	0 & \lambda+\lambda^{\ast}
	\end{pmatrix},\ \ \lambda\in C_{\ell}.
\end{eqnarray*}
Hence we have
\begin{equation*}
	G_{\Phi^d}(\lambda)=I+\mathcal{O}\left(e^{d_1\varkappa t-d_2tr^2}\right),\ \ \lambda\in\tilde{\gamma}_j,\ \ j=1,\ldots,8
\end{equation*}
so that (just as before in \eqref{loesti:1}),
\begin{equation}\label{adhoc1}
	\|G_{\Phi^d}(\cdot;t,|s_1|)-I\|_{L^2\cap L^{\infty}(\cup\tilde{\gamma}_j)}\leq d_3e^{-d_4 t^{\frac{k-1}{k+1}}},\ \ k\in\mathbb{Z}_{\geq 3}
\end{equation}
For $\lambda\in C_r$ we estimate
\begin{equation*}
	\left|G_{\Phi^d}(\lambda;t,|s_1|)-I\right|\leq c\left\{\frac{(\varkappa t)^2}{rt}\begin{pmatrix}
	1 & e^{-\frac{3}{4}\varkappa t}e^{\widehat{\varphi}(r)} (\varkappa t)^{-\frac{3}{2}}\\
	e^{\frac{3}{4}\varkappa t}e^{-\widehat{\varphi}(r)}(\varkappa t)^{-\frac{1}{2}} & 1\\
	\end{pmatrix}+\bar{\mathcal{E}}(\lambda;t,|s_1|)\right\}
\end{equation*}
with
\begin{equation*}
	\widehat{\varphi}(r)=\frac{\varkappa t}{\pi}\textnormal{arg}\left(\z(\lambda)\frac{\lambda+\lambda^{\ast}}{\lambda-\lambda^{\ast}}\right).
\end{equation*}
Thus, using again bounds for the error term $\bar{\mathcal{E}}$ from \cite{N}, we obtain that with the contracting radius
\begin{equation}\label{adhoc2}
	\|G_{\Phi^d}(\cdot;t,|s_1|)-I\|_{L^2\cap L^{\infty}(C_r\cup C_{\ell})}\leq d_3t^{-\frac{k-2}{k+1}},\ \ \forall\,t\geq v^{k+1}>0,\ \ t\geq t_0,\ \ k\in\mathbb{Z}_{\geq 3}.
\end{equation}
Combining estimations \eqref{adhoc1} and \eqref{adhoc2} we can thus solve the $\Phi^d$-RHP asymptotically subject to the scale \eqref{singscale} as long as we stay away from the discrete set of points $(x,s_1)$ defined via \eqref{exceptknown}. The derivation of the leading order follows now as in \cite{BI}, we only have to adjust the error terms according to \eqref{adhoc2}: from equation $(51)$ in \cite{BI},
\begin{equation*}
	u(x|s)=2\sqrt{-x}\,B_{12}+\mathcal{O}\left((-x)^{-\frac{2k-7}{2(k+1)}}\right),\ \ \ k\in\mathbb{Z}_{\geq 4},
\end{equation*}
where we used \eqref{adhoc2} and the a-priori estimation for the unique solution of the $\Phi^d$-RHP. The matrix entry $B_{12}$ is then computed explicitly in \cite{BI} and we only have to adjust the error terms. Hence,
\begin{equation*}
	u(x|s)=\frac{\sqrt{-x}}{\sin\left(\frac{2}{3}(-x)^{\frac{3}{2}}+\widehat{\beta}\ln\left(8(-x)^{\frac{3}{2}}\right)+\varphi\right)+\mathcal{O}\left((-x)^{-\frac{3(k-2)}{2(k+1)}}\right)}+\mathcal{O}\left((-x)^{-\frac{2k-7}{2(k+1)}}\right)
\end{equation*}
as $x\rightarrow-\infty$ subject to the scale \eqref{singscale} for $k\in\mathbb{Z}_{\geq 4}$ provided we stay away from the points $(x,s_1)$ defined implicitly via \eqref{exceptknown}. The latter estimation, with $k=4$, appears in Corollary \ref{singmat}, expansion \eqref{AS:mat}. Moreover, with \eqref{connect1}, we have now shown that
\begin{equation*}
	u(x|s)=-\epsilon\sqrt{-\frac{x}{2}}\,\frac{1+\k}{\sqrt{1+\k^2}}\,\textnormal{dc}\left(2(-x)^{\frac{3}{2}}VK\left(\frac{1-\k}{1+\k}\right),\frac{1-\k}{1+\k}\right)+J_4(x,s),
\end{equation*}
and
\begin{prop}\label{singimp2} There exist positive constants $t_0$ and $c$ such that
\begin{equation*}
	\big|J_4(x,s)\big|\leq ct^{-\frac{1}{15}},\ \ \ \forall\,t\geq t_0,\ \ 0<v\leq t^{\frac{1}{5}}
\end{equation*}
provided $(x,s_1)$ are uniformly bounded away from the set \eqref{exceptset}.
\end{prop}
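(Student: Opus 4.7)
The plan is to establish the bound by comparing two independent asymptotic representations of $u(x|s)$ valid in the regime $0<v\leq t^{1/5}$ (equivalently $0<\varkappa\leq t^{-4/5}$), in direct parallel with the regular-case derivation of Proposition \ref{imp:2}. The first representation comes from the dedicated nonlinear steepest descent analysis of Section \ref{singsec:3}, while the second is obtained by expanding the Boutroux leading term of \eqref{singf} using Corollary \ref{cor1}.

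First, I would invoke the Section \ref{singsec:3} analysis with $k=4$ in \eqref{singscale}. This requires the parabolic-cylinder model functions $\Psi^r(\lambda),\Psi^\ell(\lambda)$ together with the multipliers $E_r(\lambda)$ and $E_\ell(\lambda)=\sigma_2 E_r(-\lambda)\sigma_2$ that compensate for the offset \eqref{easytrick}, followed by the dressing transformation $\Phi^u(\lambda)=(\lambda I+B)\Phi^d(\lambda)\,\textnormal{diag}((\lambda-\lambda^\ast)^{-1},(\lambda+\lambda^\ast)^{-1})$ that resolves the pole singularities at $\lambda=\pm\lambda^\ast$. The estimates \eqref{adhoc1}-\eqref{adhoc2} furnish the iterative solvability of the $\Phi^d$-RHP, and explicit evaluation of $B_{12}$ (as in equation (51) of \cite{BI}, with the adjusted error rates dictated by the shrinking radius $r=1/v$) produces, as $x\rightarrow-\infty$ subject to $v\leq t^{1/5}$,
\begin{equation*}
u(x|s)=\frac{\sqrt{-x}}{\sin\!\left(\tfrac{2}{3}(-x)^{3/2}+\widehat{\beta}\ln(8(-x)^{3/2})+\varphi\right)+\mathcal{O}((-x)^{-3/5})}+\mathcal{O}((-x)^{-1/10}),
\end{equation*}
uniformly away from the zeros of the sine (the discrete set in \eqref{exceptknown}).

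Second, in the same double-scaling region I would expand the Boutroux leading term of \eqref{singf}. Via \eqref{I:1} one has $\tfrac{1+\k}{\sqrt{1+\k^2}}=\sqrt{2}(1+\mathcal{O}(\varkappa))$, while \eqref{l:1} gives $-\im\tau\rightarrow+\infty$ so that the nome $q\rightarrow 0$ and the $q$-series of $\theta_2,\theta_3$ reduce $\textnormal{dc}$ to $1/\cos(\pi tV)$ up to exponentially small corrections. Combining with Stirling for $\textnormal{arg}\,\Gamma(\tfrac12+\im\widehat{\beta})$ and \eqref{simpler}, this yields precisely \eqref{connect1}, namely the identical oscillatory representation with errors $\mathcal{O}((-x)^{-3/10})$ inside the sine and $\mathcal{O}((-x)^{-1/10})$ outside. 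Subtracting the two representations, the matching leading oscillatory factor cancels exactly, and away from $\mathcal{Z}_n$ the sine denominators are bounded below by a positive constant; the standard algebraic identity $(\sin+E_1)^{-1}-(\sin+E_2)^{-1}=(E_2-E_1)/[(\sin+E_1)(\sin+E_2)]$ then gives
\begin{equation*}
|J_4(x,s)|\leq c\,\bigl[\sqrt{-x}\cdot\mathcal{O}((-x)^{-3/5})+\mathcal{O}((-x)^{-1/10})\bigr]=\mathcal{O}(t^{-1/15}),
\end{equation*}
as required.

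\textbf{Main obstacle.} The crux of the argument is that the amplification factor $\sqrt{-x}=t^{1/3}$ in front of the inside-sine error must be exactly absorbed by the decay rate supplied by the Section \ref{singsec:3} analysis. This balance is achieved only at the precise choice $k=4$: the inside rate $(-x)^{-3(k-2)/(2(k+1))}$ becomes $(-x)^{-3/5}$, so that $\sqrt{-x}\cdot(-x)^{-3/5}=(-x)^{-1/10}=t^{-1/15}$. Any coarser rate would produce a growing error, and this is what forces the hypothesis $v\leq t^{1/5}$. A secondary difficulty is that the $\Phi^d$-analysis formally fails on the exceptional set \eqref{exceptknown} rather than on $\mathcal{Z}_n$; one must verify, via \eqref{simpler} and $\cos(\pi tV)=-\epsilon\sin(\tfrac{2}{3}t+\widehat{\beta}\ln(8t)+\varphi+\textnormal{small})$, that uniform boundedness of $(x,s_1)$ away from $\mathcal{Z}_n$ translates into uniform boundedness of the Kapaev phase from $\pi\mathbb{Z}$, ensuring both representations are simultaneously valid and their denominators bounded below on the same region.
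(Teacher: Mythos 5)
Your plan — compare the Section \ref{singsec:3} parabolic-cylinder representation with the Corollary \ref{cor1}-based expansion of the Boutroux leading term, matching the singular analogue of the regular case proof of Proposition \ref{imp:2} — is indeed the route the paper takes, and you correctly handle the setup: the $\nu_0-1/2$ offset, the $E_r,E_\ell$ multipliers, the dressing transformation, the shrinking radius, and the translation between the exceptional set \eqref{exceptknown} and $\mathcal{Z}_n$. However, the final arithmetic step contains a genuine error that breaks the proof as written.

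You record that the Section \ref{singsec:3} representation carries an inside-sine error $E_1=\mathcal{O}((-x)^{-3/5})$ (at $k=4$), and that the Boutroux/$\textnormal{dc}$ side via \eqref{simpler}--\eqref{connect1} carries inside error $E_2=\mathcal{O}((-x)^{-3/10})$. After invoking $(\sin+E_1)^{-1}-(\sin+E_2)^{-1}=(E_2-E_1)/[(\sin+E_1)(\sin+E_2)]$, you then bound the difference by $\sqrt{-x}\cdot\mathcal{O}((-x)^{-3/5})$. But $E_2-E_1$ is controlled by the \emph{coarser} of the two rates, namely $\mathcal{O}((-x)^{-3/10})$ from the Stirling step in \eqref{simpler}, not by the finer $\mathcal{O}((-x)^{-3/5})$ from the parabolic-cylinder RHP. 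Since $\sqrt{-x}\cdot(-x)^{-3/10}=(-x)^{1/5}=t^{2/15}\rightarrow\infty$, the subtraction as you have arranged it does not produce a decaying bound, and in particular does not give $\mathcal{O}(t^{-1/15})$. In other words, the ``balance'' you highlight as the main obstacle is achieved only against $E_1$; it is not achieved against the dominant $E_2$, and the $\sqrt{-x}$ amplification is \emph{not} absorbed.

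To close the gap one must either sharpen the Boutroux-side estimate so that the discrepancy between the exact frequency $\pi tV(\varkappa)$ and the Kapaev phase $\tfrac{2}{3}t+\widehat{\beta}\ln(8t)+\varphi$ is $\mathcal{O}((-x)^{-3/5})$ rather than $\mathcal{O}((-x)^{-3/10})$ (a higher-order Stirling expansion of $\textnormal{arg}\,\Gamma(\tfrac12+\im\widehat{\beta})$ together with a sharper version of \eqref{simpler}), or bypass the Kapaev intermediate form altogether and compare $2\sqrt{-x}\,B_{12}$ from the dressing transformation directly with the Boutroux $\textnormal{dc}$ expression, so that the $\sqrt{-x}$ prefactor is never paired with the lossy $\mathcal{O}(v^{-1})$ Stirling term. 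Note also that the Stirling error underlying \eqref{connect1} is really $\mathcal{O}(v^{-1})$, which is $\mathcal{O}(1)$ when $v$ stays bounded; the Proposition is claimed uniformly for $0<v\leq t^{1/5}$, so some additional argument is needed in the bounded-$v$ range, where \eqref{connect1} as stated gives no useful inside-sine control.
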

This concludes the proof of estimation \eqref{sing:est1} in Theorem \ref{bet:2}, we just have to combine Propositions \ref{singimp1} (with $\eta=\frac{4}{5}$) and \ref{singimp2}.
\section{Extension at the upper end for singular transition}\label{singsec:4} We choose $t\geq t_0$ and $v=-\ln(|s_1|^2-1)$ sufficiently large such that
\begin{equation}\label{singupext}
	\frac{2}{3}\sqrt{2}-\frac{f_1}{t}\leq\varkappa<\frac{2}{3}\sqrt{2}\ \ \ \ \Leftrightarrow\ \ \ \ 0<\sigma\equiv\frac{2}{3}\sqrt{2}-\varkappa\leq\frac{f_1}{t},\ \ f_1>0.
\end{equation}
This implies with \eqref{l:2}, 
\begin{equation*}
	0<\big|2tV(\varkappa)\big|\leq\frac{cf_4}{\ln t},\ \ \ c>0\ \ \textnormal{universal}.
\end{equation*}
In other words, we can always guarantee that for sufficiently large $t$ and $-\ln(|s_1|^2-1)$ subject to \eqref{singupext} the exceptional set $\mathcal{Z}_n$ is empty, and thus $u(x|s)$ is asymptotically pole-free. This is important as we can now go back to Section \ref{upextension} and simply repeat our line of reasoning, modulo minor adjustments. First, from \eqref{singouter},
\begin{equation*}
	N(\lambda)=e^{\im\frac{\pi}{4}\epsilon\sigma_3}\frac{\theta(0|\tau')}{\theta(\tau'(\frac{\tau}{2}+tV)|\tau')}e^{-2\pi\im\tau'u(\infty)tV\sigma_3}\begin{pmatrix}
	\widehat{N}_1'^{(+)}(\lambda)\phi(\lambda) & \widehat{N}_2'^{(+)}(\lambda)\hat{\phi}(\lambda)\\
	-\widehat{N}_1'^{(-)}(\lambda)\hat{\phi}(\lambda) & \widehat{N}_2'^{(-)}(\lambda)\phi(\lambda)
	\end{pmatrix}e^{2\pi\im\tau'u(\lambda)tV\sigma_3}e^{-\im\frac{\pi}{4}\epsilon\sigma_3},
\end{equation*}
valid for any $\lambda\in\mathbb{C}\backslash[-M,M]$ and where
\begin{equation*}
	\left(\widehat{N}_1'^{(\pm)}(z),\widehat{N}_2'^{(\pm)}(z)\right)=\left(\frac{\theta(\tau'(u(z)+tV+\frac{\tau}{2}\pm d)|\tau')}{\theta(\tau'(u(z)\pm d)|\tau')},\frac{\theta(\tau'(-u(z)+tV+\frac{\tau}{2}\pm d)|\tau')}{\theta(\tau'(-u(z)\pm d)|\tau')}\right).
\end{equation*}
But
\begin{equation*}
	\theta\left(\tau'\left(u(z)+tV+\frac{\tau}{2}\pm d\right)|\tau'\right)=\theta_4\big(\tau'\left(u(z)+tV\pm d\right)|\tau'\big),\ \ z\in\mathbb{CP}^1\backslash[-M,M],
\end{equation*}
and thus with \eqref{e:1s} and \eqref{e:2s}, subject to \eqref{singupext},
\begin{equation*}
	\widehat{N}_1'^{(\pm)}(\lambda)=1+\mathcal{O}\left(\sqrt{\frac{\sigma}{|\ln\sigma|}}\right)=\widehat{N}_2'^{(\pm)}(\lambda).
\end{equation*}
Summarizing, Proposition \ref{ou:match} also applies to the singular transition $x\rightarrow-\infty,|s_1|\downarrow 1$ subject to \eqref{singupext}. Now following the logic of Section \ref{upextension}, we would employ the sequence of transformations
\begin{equation*}
	Y(\lambda)\mapsto X(\lambda)\mapsto Z(\lambda)\mapsto T(\lambda)\mapsto S(\lambda)
\end{equation*}
but not open lens, compare \eqref{newLjump}. Instead, just as in Section \ref{upextension}, a new parametrix $\widehat{J}(\lambda)$ is introduced on the contracting segment $(-\hat{r},\hat{r})\subset(-M,M)$ with $\hat{r}$ as in \eqref{ridea}. The difference between $\widehat{J}(\lambda)$ and $J(\lambda)$ as considered in RHP \ref{nmodel} is minor, we would simply enforce
\begin{equation*}
	\widehat{J}_+(\lambda)=\widehat{J}_-(\lambda)\begin{pmatrix}
	-e^{-t(\varkappa-\Pi(\lambda))} & \im\epsilon\\
	\im\epsilon & 0
	\end{pmatrix}, \ \ \lambda\in(-\hat{r},\hat{r}).
\end{equation*}
This difference amounts to an additional sign in the $(12)$-entry of \eqref{ou:idea} which contains the Cauchy transform. However, subsequently all steps are identical to Section \ref{upextension} and we obtain
\begin{equation*}
	u(x|s)=-\epsilon\sqrt{-\frac{x}{2}}\,\frac{1+\k}{\sqrt{1+\k^2}}\textnormal{dc}\left(2(-x)^{\frac{3}{2}}VK\left(\frac{1-\k}{1+\k}\right),\frac{1-\k}{1+\k}\right)+J_4(x,s)
\end{equation*}
with
\begin{prop} For any given $f_1>0$ there exist positive constants $t_0=t_0(f_1),v_0=v_0(f_1)$ and $c=c(f_1)$ such that
\begin{equation*}
	\big|J_4(x,s)\big|\leq\frac{c}{\ln t},\ \ \ \ \forall\,t\geq t_0,\ v\geq v_0:\ \ \frac{2}{3}\sqrt{2}\,t-f_4\leq v<\frac{2}{3}\sqrt{2}\,t.
\end{equation*}
\end{prop}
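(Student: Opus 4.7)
The plan is to mimic the upper-end analysis of Section \ref{upextension} almost verbatim, accounting for the singular-transition sign changes in the jumps \eqref{newLjump} relative to \eqref{ou:1}--\eqref{ou:2}. First I would verify that, under the scale \eqref{singupext}, the exceptional set $\mathcal{Z}_n$ of Proposition \ref{pout:4} is asymptotically empty: the local expansion \eqref{l:2} yields $|2tV(\varkappa)|\leq cf_1/|\ln\sigma|\leq c'f_1/\ln t$, so no integer value $n\neq 0$ is hit for $t$ large, and the $\theta(\frac{\tau}{2}+tV)$ in the normalization of $N(\lambda)$ is uniformly bounded away from zero. This removes the pole obstruction and allows $u(x|s)$ to be studied directly.

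Next I would apply the modular identities $\theta_3(z|\tau)=\sqrt{\im/\tau}\,e^{\im\pi\tau'z^2}\theta_3(z\tau'|\tau')$ and $\theta_2(z|\tau)=\sqrt{\im/\tau}\,e^{\im\pi\tau'z^2}\theta_4(z\tau'|\tau')$ to the outer parametrix \eqref{singouter}, noting that the half-period shift $\frac{\tau}{2}$ transforms $\theta_3$ into $\theta_4$ in the denominator. Combined with \eqref{e:1s}--\eqref{e:2s} (which apply verbatim since the nomes $q=e^{\im\pi\tau'}$ are the same) and the local estimates on the Abel map $u(\lambda)$, this produces the singular-transition analogue of Proposition \ref{ou:match}: namely, $N(\lambda)=(I+\mathcal{O}(1/\ln t))\Upsilon(\lambda)$ uniformly on $\partial D(0,\hat{r})$, with the same $\Upsilon(\lambda)$ from \eqref{nouter}.

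With this matchup in hand, I would run the transformation sequence $Y\mapsto X\mapsto Z\mapsto T\mapsto S$ of Section \ref{sec:3} \emph{without opening lens} in the gap $(-m,m)$, and replace RHP \ref{nmodel} by its singular version: determine $\widehat{J}(\lambda)$, analytic on $D(0,\hat{r})\setminus(-\hat{r},\hat{r})$ with jump
\begin{equation*}
\widehat{J}_+(\lambda)=\widehat{J}_-(\lambda)\begin{pmatrix} -e^{-t(\varkappa-\Pi(\lambda))} & \im\epsilon\\ \im\epsilon & 0\end{pmatrix},\qquad \lambda\in(-\hat{r},\hat{r}),
\end{equation*}
matching $\Upsilon(\lambda)$ on $\partial D(0,\hat{r})$ (where $\hat{r}=m+dt^{-1/3}$ as in \eqref{ridea}). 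The extra minus sign in the $(11)$-entry, coming from the singular factorization $S_D(\lambda)=-I$ rather than $1$ on the gap, forces a single sign flip in the Cauchy-type construction \eqref{ou:idea}: I would take the same representation as there but with the $(12)$-integrand $e^{-t(\varkappa-\Pi(\mu))}$ replaced by $-e^{-t(\varkappa-\Pi(\mu))}$. The Laplace-type analysis of this integral on $[-M,M]$ goes through unchanged, yielding $\widehat{J}(\lambda)=(I+\mathcal{O}(t^{-1/6}\ln^{-1/2}t))\Upsilon(\lambda)$ on $\partial D(0,\hat{r})$, as in \eqref{nmodelmat}.

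Finally I would set up the ratio problem analogous to \eqref{ratio:3}, with $\widehat{J}$ replacing $J$, and with $P(\lambda),Q(\lambda)$ from Section \ref{outpara} at the outer branch points. The bookkeeping of jumps on $\partial D(0,\hat{r})$ and on the shrinking segment $(-\hat{r},\hat{r})$ is identical to Section \ref{upextension}: using $t\Omega(\lambda)=\mathcal{O}(1/\ln t)$ on $[-\hat{r},\hat{r}]$ and the $L^2\cap L^\infty$ estimates \eqref{upperesti:1}--\eqref{upperesti:2} one obtains $\|G_{R_u}-I\|_{L^\infty}=\mathcal{O}(1/\ln t)$ and a small-norm iterative solution with the same rate. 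Extraction of $u(x|s)$ from $(N_1)_{12}$ and the residual integral $\widetilde{\mathcal{E}}(\varkappa)$ as in \eqref{singf1}, followed by the residue evaluations on $\partial D(0,\hat{r})$ (which again contribute $\mathcal{O}(1/(t^{1/3}\ln t))$), produces exactly the bound $|J_4(x,s)|\leq c/\ln t$. The main obstacle is verifying that the sign flip in $\widehat{J}$ does not destroy the Laplace asymptotics of the Cauchy integral near the shrinking branch points $\pm m$; once one checks that the boundary contributions at $\lambda=\pm m$ still cancel to the required order, the rest is routine.
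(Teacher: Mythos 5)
Your proposal is correct and follows essentially the same route as the paper: verify that the exceptional set $\mathcal{Z}_n$ is empty under \eqref{singupext}, apply modular transformations to the shifted theta functions in the singular outer parametrix (so that Proposition \ref{ou:match} carries over with the half-period shift becoming a $\theta_4$), then replace the origin model function by a version $\widehat{J}$ whose jump carries the extra minus sign in the $(11)$-entry from $S_D=-I$, which is absorbed by a single sign flip of the Cauchy-transform entry in \eqref{ou:idea}. The concern you raise at the end is not actually an obstacle: the sign flip is an overall factor of $-1$ on the Cauchy integral, so all magnitude (Laplace-type) estimates and the resulting small-norm bounds coincide verbatim with Section \ref{upextension}, which is precisely why the paper simply remarks that ``all steps are identical.''
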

This completes the proof of Theorem \ref{bet:2} and expansion \eqref{HS:mat} in Corollary \ref{singmat} is derived as in the regular transition case.

\section{Singular transition analysis near and above the separating line $\varkappa=\frac{2}{3}\sqrt{2}$}\label{singsec:5}
In order to derive Theorem \ref{res:9} we would choose identical steps as those in Section \ref{near}: We work with the same outer parametrix $\Upsilon(\lambda)$ as in \eqref{nout:1} and the same endpoint parametrices $\Delta^r(\lambda)$ in \eqref{upedge1} and $\Delta^{\ell}(\lambda)$ in \eqref{upedge2}. The only (minor) difference occurs in the definition of $F(\lambda)$, i.e. in the definition of the origin parametrix. For the singular transition we require the jump behavior
\begin{equation*}
	\widehat{F}_+(\lambda)=\widehat{F}_-(\lambda)\begin{pmatrix}
	-e^{-t(\varkappa-\Pi(\lambda))} & \im\epsilon\\
	\im\epsilon & 0\\
	\end{pmatrix},\ \ \ \lambda\in(-r,r);\hspace{0.5cm}\Pi(\lambda)=\frac{8}{3}\left(\frac{1}{2}-\lambda^2\right)^{\frac{3}{2}}
\end{equation*}
instead of \eqref{orij}. This difference is resolved by an additional sign in the $(12)$-entry containing the Cauchy transform in \eqref{FRHmodel}. After that we have to replace the matching \eqref{orimatch} through
\begin{equation*}
	\widehat{F}(\lambda)=\left(I-\frac{\epsilon}{2\pi}\frac{e^{\sigma t}}{\z(\lambda)}B_0(\lambda)\sigma_+B_0^{-1}(\lambda)+\mathcal{O}\left(t^{-1}\right)\right)\Upsilon(\lambda)
\end{equation*}
and then simply copy all subsequent steps carried out in Section \ref{near}. This leads us to
\begin{equation*}
	u(x|s)=-\epsilon\sqrt{-\frac{x}{2}}\,\left(1+\frac{1}{2\pi}\frac{e^{\sigma t}}{(t\sqrt{2})^{\frac{1}{2}}}+J_5(x,s)\right),\ \ \sigma=\frac{2}{3}\sqrt{2}-\varkappa,\ \ t=(-x)^{\frac{3}{2}}
\end{equation*}
with
\begin{prop} For any given $f_2\in\mathbb{R}$ there exist positive $t_0=t_0(f_2),v_0=v_0(f_2)$ and $c=c(f_2)$ such that
\begin{equation*}
	\big|J_5(x,s)\big|\leq ct^{-1}\ \ \ \ \forall\,t\geq t_0,\ \ v\geq v_0,\ \ \ v\geq\frac{2}{3}\sqrt{2}\,t-f_2,
\end{equation*}
\end{prop}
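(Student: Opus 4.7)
The plan is to mirror the proof of Theorem \ref{res:4} carried out in Section \ref{near}, adjusting only the points where the analysis is sensitive to the replacement $1-|s_1|^2 \mapsto -(|s_1|^2-1)$. First I would apply the same transformation chain $Y(\lambda)\mapsto X(\lambda)\mapsto T(\lambda)\mapsto \varpi(\lambda)$ with endpoint $\lambda^\ast = \frac{1}{\sqrt 2}$ and the same $g$-function \eqref{newg}. The jumps of $\varpi(\lambda)$ on $[-\lambda^\ast,\lambda^\ast]$ will again have off-diagonal entries asymptotic to $\im\epsilon$ (via the singular-side expansions of $s_1, s_3$ in Section \ref{singsec:1}), so the same outer parametrix $\Upsilon(\lambda)$ of Section \ref{outer:2} and the same edge-point parametrices $\Delta^r(\lambda),\Delta^\ell(\lambda)$ from \eqref{upedge1}, \eqref{upedge2} can be used without modification.

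The only place where the singular case truly differs is the construction of the origin parametrix. On $(-r,r)$ the $(1,1)$-entry of the jump for $\varpi(\lambda)$ equals $-e^{-t(\varkappa-\Pi(\lambda))}$ rather than $+e^{-t(\varkappa-\Pi(\lambda))}$, since in the singular regime $1-s_1s_3 = 1-|s_1|^2 < 0$. Accordingly I would define a new bare function $\widehat F^{RH}(\zeta)$ identical to \eqref{FRHmodel} except with a sign flip in front of the Cauchy-type integral in the $(1,2)$-entry, and dress it with the same analytic multiplier $B_0(\lambda)$. Lemma \ref{princip} then yields the matching
\begin{equation*}
	\widehat F(\lambda) = \left(I - \frac{\epsilon}{2\pi}\frac{e^{\sigma t}}{\zeta(\lambda)} B_0(\lambda)\sigma_+ \bigl(B_0(\lambda)\bigr)^{-1} + \mathcal{O}\bigl(t^{-1}\bigr)\right)\Upsilon(\lambda),
\end{equation*}
on $\partial D(0,r)$, i.e.\ the singular counterpart of \eqref{orimatch} with the opposite sign in the $\mathcal{O}(e^{\sigma t}/\sqrt t)$ contribution.

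With $\widehat F(\lambda)$ in place I would set up the ratio problem exactly as in \eqref{ratio:2}, with estimations \eqref{newes:1}--\eqref{newes:3} valid verbatim (using that $\sigma t \le f_2$ is bounded above). General small-norm theory of \cite{DZ1} then produces a unique $L^2$-solution satisfying $\|\xi_- - I\|_{L^2(\Sigma_\xi)} \le ct^{-1/2}$, as in Section \ref{iterativeeasy}. Extracting the $(1,2)$-entry asymptotics from the integral equation, exactly as in Section \ref{extracteasy}, the only quantity that changes is the residue contribution at $\lambda=0$, whose sign is now reversed:
\begin{equation*}
	\mathcal{E}(\varkappa) = -\frac{\epsilon}{2\pi}\frac{e^{\sigma t}}{(2\sqrt 2\, t)^{1/2}} + \mathcal{O}\bigl(t^{-1}\bigr).
\end{equation*}
Substituting this into $u(x|s) = \sqrt{-x}\bigl[-\frac{\epsilon}{\sqrt 2}+\mathcal{E}(\varkappa)\bigr]$ produces the stated asymptotic \eqref{theo:3} with the $+$ sign in front of the $\frac{1}{2\pi}\frac{e^{\sigma t}}{(t\sqrt 2)^{1/2}}$ term and with $J_5$ controlled by $ct^{-1}$, uniformly for $t\ge t_0,\ v\ge v_0,\ v\ge \frac{2}{3}\sqrt 2\, t-f_2$.

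The only nontrivial point is verifying that Lemma \ref{princip} and the subsequent matching proceed identically for $\widehat F^{RH}$; however, since the sign change enters only as an overall factor in front of the integral $I(\zeta)$, no additional estimation is needed, and the residue computation via Cauchy's theorem is performed by inspection. Consequently no new estimates beyond those of Sections \ref{near}--\ref{extracteasy} are required and the singular analogue follows in a straightforward manner.
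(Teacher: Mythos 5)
Your proposal is correct and matches the paper's own argument essentially verbatim: reuse the transformation chain, $g$-function, outer parametrix $\Upsilon$, and edge parametrices from Section~\ref{near}, change only the sign of the $(1,1)$-entry in the origin jump (hence a sign flip in the Cauchy-transform $(1,2)$-entry of the bare model), propagate this sign through the matching, residue computation, and small-norm estimates, and arrive at the conclusion of Theorem~\ref{res:9} with the error bound $\mathcal{O}(t^{-1})$.
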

and thus completes the proof of Theorem \ref{res:9}.

\begin{appendix}
\section{Expansions of complete elliptic integrals}\label{appa}
We frequently use the elliptic integrals $E(\k)$ and $K(\k)$ defined as
\begin{equation*}
	E(\k)=\int_0^1\sqrt{\frac{1-\k^2t^2}{1-t^2}}\,\d t,\hspace{1cm} K(\k)=\int_0^1\frac{\d t}{\sqrt{(1-\k^2t^2)(1-t^2)}},\ \ \k\in(0,1),\ \ \k'=\sqrt{1-\k^2}.
\end{equation*}
For $\k\downarrow 0$ or $\k\uparrow 1$, the expansions of $E(\k),K(\k)$ are well known, compare for instance \cite{N}:
\begin{eqnarray*}
	K(\k)&=&\sum_{m=0}^{\infty}\frac{(\frac{1}{2})_m(\frac{1}{2})_m}{m!\,m!}(1-\k^2)^m\left(-\frac{1}{2}\ln(1-\k^2)+d(m)\right),\ \ 0<\sqrt{1-\k^2}<1;\\
	E(\k)&=&1+\frac{1}{2}\sum_{m=0}^{\infty}\frac{(\frac{1}{2})_m(\frac{3}{2})_m}{(2)_mm!}(1-\k^2)^{m+1}\left(-\frac{1}{2}\ln(1-\k^2)+d(m)-\frac{1}{(2m+1)(2m+2)}\right),
\end{eqnarray*}
with
\begin{equation*}
	d(m)=\psi(1+m)-\psi\left(\frac{1}{2}+m\right),\hspace{0.5cm}d(0)=2\ln 2;\hspace{0.5cm}(a)_n=\frac{\Gamma(a+n)}{\Gamma(a)},\ \ a\neq 0,-1,-2,\ldots
\end{equation*}
and $\psi(x)$ denoting the digamma function. Furthermore,
\begin{equation*}
	K(\k)=\frac{\pi}{2}\sum_{m=0}^{\infty}\frac{(\frac{1}{2})_m(\frac{1}{2})_m}{m!\,m!}\k^{2m},\hspace{0.7cm}
	E(\k)=\frac{\pi}{2}\sum_{m=0}^{\infty}\frac{(-\frac{1}{2})_m(\frac{1}{2})_m}{m!\,m!}\k^{2m},\ \  \ |\k|<1.
\end{equation*}

\section{Jacobi theta and elliptic functions used in the text}\label{app:theta}

The Jacobi theta functions $\theta_2(z),\theta_3(z),\theta_4(z)$ used in the text are defined by the formul\ae\, (see e.g. \cite{N}) where $\Im\tau>0$,
\begin{eqnarray*}
	\theta_3(z|\tau) &\equiv& \theta(z|\tau) = 1+2\sum_{k=1}^{\infty}e^{\im\pi k^2\tau}\cos(2\pi kz),\\
	\theta_4(z|\tau) &=&\theta_3\left(z+\frac{1}{2}\bigg|\,\tau\right) = 1+2\sum_{k=1}^{\infty}(-1)^ke^{\im\pi k^2\tau}\cos\big(2\pi kz\big),\\
	\theta_2(z|\tau) &=& e^{\im\frac{\pi}{4}\tau+\im\pi z}\theta_3\left(z+\frac{\tau}{2}\bigg|\,\tau\right)=2\sum_{k=0}^{\infty}e^{\im\pi(k+\frac{1}{2})^2\tau}\cos\big((2k+1)\pi z\big).
\end{eqnarray*}
The zeros of the theta functions are located at
\begin{equation*}
	\theta_4\left(\frac{\tau}{2}\right)=0,\hspace{0.5cm}\theta_2\left(\frac{1}{2}\right)=0,\hspace{0.5cm}\theta_3\left(\frac{1}{2}+\frac{\tau}{2}\right)=0,
\end{equation*}
up to shifts by vectors of the lattice $\Lambda=\mathbb{Z}+\tau\mathbb{Z}$; all zeros are simple. Now set $q=e^{\im\pi\tau}$ and denote $\theta_j(z|\tau)\equiv\theta_j(z,q)$. With
\begin{equation*}
	q=\exp\left[-\pi\frac{K'(\k)}{K(\k)}\right],\ \ \ \k=\left(\frac{\theta_2(0,q)}{\theta_3(0,q)}\right)^2\in(0,1),
\end{equation*}
the Jacobi elliptic functions $\textnormal{cd}(z,\k)$ and $\textnormal{dc}(z,\k)$ are given by the formul\ae
\begin{equation*}
	\textnormal{cd}(z,\k)=\frac{\theta_3(0,q)}{\theta_2(0,q)}\frac{\theta_2(\z,q)}{\theta_3(\z,q)} = \frac{1}{\textnormal{dc}(z,\k)};\hspace{0.75cm}\z=\frac{z}{2K(\k)}
\end{equation*}
for all values $z\in\mathbb{C}$ away from the zeros of $\theta_3$, resp. $\theta_2$ appearing in the denominators. Finally we note that by an application of a Landen transformation,
\begin{equation*}
	2\frac{K(\k)}{K'(\k)} = \frac{K'(\lambda)}{K(\lambda)},\ \ \ \lambda=\frac{1-\k}{1+\k},\ \ \ \lambda'=\sqrt{1-\lambda^2}.
\end{equation*}

\end{appendix}


\begin{thebibliography}{100}

\bibitem{AS1} M. Ablowitz, H. Segur, Asymptotic solutions of the Korteweg de Vries equation, {\it Stud. Appl. Math.} {\bf 571}, 1977, pp. 13-44

\bibitem{AS2} M. Ablowitz, H. Segur, Asymptotic solutions of nonlinear evolution equations and a Painlev\'e transcendent, {\it Physica D}, {\bf 3}, 165-184 (1981)

\bibitem{ASM} M. Adler, T. Shiota, P. van Moerbeke, Random matrices, vertex operators and the Virasoro algebra, {\it Phys. Lett. A} {\bf 208}, 67-78 (1995)

\bibitem{BBdF} J. Baik, R. Buckingham, J. DiFranco, Asymptotics of Tracy-Widom distributions and the total integral of a Painlev\'e II function, {\it Commun. Math. Phys.} {\bf 280}, 463-497 (2008)

\bibitem{BIK} A. Bolibruch, A. Its, A. Kapaev, On the Riemann-Hilbert-Birkhoff inverse monodromy problem and the Painleve equations, {\it Algebra Anal.} {\bf 16}, 121-162 (2004) 

\bibitem{BEMN} G. Borot, B. Eynard, S. N. Majumdar, C. Nadal, Large deviations of the maximal eigenvalue of random matrices, {\it J. Stat. Mech. Theory Exp.} {\bf 11}, P11024, 56pp. (2011)

\bibitem{BDIK} T. Bothner, P. Deift, A. Its, I. Krasovsky, On the asymptotic behavior of a log gas in the bulk scaling limit in the presence of a varying external potential I. preprint arXiv:1407.2910v2

\bibitem{BI} T. Bothner, A. Its, The nonlinear steepest descent approach to the singular asymptotics of the second Painlev\'e transcendent, {\it Physica D} {\bf 241} (2012), 2204-2225

\bibitem{BI2} T. Bothner, A. Its, Asymptotics of a cubic sine kernel determinant, {\it St. Petersburg Math. J.} {\bf 26} (2014), 22-92

\bibitem{CM} P. Clarkson, J. McLeod, A connection formula for the second Painlev\'e transcendent, {\it Arch. Ration. Mech. Anal.} {\bf 103}, 97-138 (1988)

\bibitem{DIK} P. Deift, A. Its, I. Krasovsky, Asymptotics of the Airy-kernel determinant, {\it Commun. Math. Phys.} {\bf 278}, 643-678 (2008)

\bibitem{DZ1} P. Deift, X. Zhou, A steepest descent method for oscillatory Riemann-Hilbert problems. Asymptotics for the MKdV equation, {\it Ann. of Math.} {\bf 137}, 296-368 (1993)

\bibitem{DZ2} P. Deift, Z. Zhou, Asymptotics for the Painlev\'e II equation, {\it Comm. Pure Appl. Math.} {\bf 48}, 277-337 (1995)

\bibitem{DVZ} P. Deift, S. Venakides, X. Zhou, New results in small dispersion KdV by an extension of the steepest descent method for Riemann-Hilbert problems, {\it Int. Math. Res. Notices} {\bf 6} (1997), 286-299

\bibitem{Dy1} F. Dyson, The Coulomb fluid and the fifth Painleve transendent, appearing in {\it Chen Ning Yang: A Great Physicist of the Twentieth Century}, eds. C. S. Liu and S.-T. Yau, International Press, Cambridge, MA, (1995), 131-146

\bibitem{FK} H. M. Farkas, I. Kra, {\it Riemann Surfaces}, Springer-Verlag, New York, 1980

\bibitem{FN} H. Flaschka, A. Newell, Monodromy- and spectrum-preserving deformations I, {\it Commun. Math. Phys.} {\bf 76}, 65-116 (1980)

\bibitem{FW} P. Forrester, N. Witte, Application of the $\tau$-function theory of Painlev\'e equations to random matrices: PIV, PII and the GUE, {\it Commun. Math. Phys.} {\bf 219}, 357-398 (2001)

\bibitem{FIKN} A. Fokas, A. Its, A. Kapaev, V. Novokshenov, {\it Painlev\'e Transcendents. The Riemann-Hilbert Approach}, AMS Series: Mathematical Surveys and Monographs, vol. 128,
2006

\bibitem{G} I. Gradshteyn, I. Ryzhik, {\it Table of Integrals, Series, and Products}, Academic Press, 2007

\bibitem{HM} S. Hastings, J. McLeod, A boundary value problem associated with the second Painlev\'e transcendent and the Korteweg-de Vries equation, {\it Arch. Ration. Mech. Anal.} {\bf 73}, 31-51 (1980)

\bibitem{I} E. Ince, {\it Ordinary Differential Equations}, Dover, New York, 1956

\bibitem{IK} A. Its, A. Kapaev, The method of isomonodromy deformations and connection formulas for the second Painlev\'e transcendent, {\it Izv. Aka. Sauk. SSR Ser. Mat.} {\bf 51}, pp. 78 (1987) (Russian); {\it Math. USSR-Izv.} {\bf 31}, pp. 193-207 (1988) (English)

\bibitem{JMU} M. Jimbo, T. Miwa and K. Ueno, Monodromy preserving deformation of linear ordinary differential equations with rational coefficients, {\it Physica 2D} (1981) 306-352

\bibitem{JMMS} M. Jimbo, T. Miwa, Y. Mori, M. Sato, Density matrix of an impenetrable Bose gas and the fifth Painlev\'e transcendent, {\it Physica} {\bf 1D}, 80-158 (1980)

\bibitem{K} A. Kapaev, Global asymptotics of the second Painlev\'e transcendent, {\it Phys. Lett. A} {\bf 167}, 356-362 (1992)

\bibitem{Ki} A. Kitaev, The justification of asymptotic formulas that can be obtained by the method of isomonodromic deformations, {\it Zap. Nation. Sem. LOMI}, {\bf 179}, pp. 101-109 (1989) (Russian);  {\it J. Soviet Math.} {\bf 57} no. 3, pp. 3131-3135 (1991) (English)

\bibitem{M} Madan L. Mehta, {\it Random Matrices}, Third Edition, Academic Press, Boston (2004)

\bibitem{N} NIST Digital Library of Mathematical Functions, http://dlmf.nist.gov/

\bibitem{S} B. Suleimanov, The connection between the asymptotics at the different infinities for the solutions of the second Painlev\'e equation, {\it Differentsialnya Uravneniya} {\bf 23}, pp. 834-842 (1987) (Russian); {\it Differential Equations} {\bf 23}, pp. 569-576 (1987) (English)

\bibitem{TW1} C. Tracy, H. Widom, Level-spacing distributions and the Airy kernel, {\it Commun. Math. Phys.} {\bf 159}, 151-174 (1994)

\bibitem{TW2} C. Tracy, H. Widom, Fredholm determinants, differential equations and matrix models, {\it Commun. Math. Phys.} {\bf 163}, 33-72 (1994)


\end{thebibliography}
\end{document}